\newtheorem{theorem}{Theorem}
\newtheorem{lemma}{Lemma}
\newtheorem{proposition}{Proposition}
\newtheorem{remark}{Remark}
\setlist[enumerate]{leftmargin=*}
\newcommand{\RN}[1]{%
  \textup{\uppercase\expandafter{\romannumeral#1}}%
}
\title{\vspace*{-1.5cm}\huge{Derivation of an effective plate theory for parallelogram origami from bar and hinge elasticity}}
\author[1]{Hu Xu}
\author[2]{Ian Tobasco\footnote{i.tobasco@rutgers.edu}}
\author[1]{Paul Plucinsky\footnote{plucinsk@usc.edu}}
\affil[1]{\small{Aerospace and Mechanical Engineering, University of Southern California, Los Angeles, California 90089, USA}}
\affil[2]{\small{Mathematics, Rutgers University, New Brunswick, New Jersey  08854, USA}}
\date{}
\begin{document}

\maketitle

\begin{abstract} Periodic origami patterns made with repeating unit cells of creases and panels bend and twist in complex ways. In principle, such soft modes of deformation admit a simplified asymptotic description in the limit of a large number of cells. Starting from a bar and hinge model for the elastic energy of a generic four parallelogram panel origami pattern, we derive a complete set of geometric compatibility conditions identifying the pattern's soft modes in this limit. The compatibility equations form a system of partial differential equations constraining the actuation of the origami's creases (a scalar angle field) and the relative rotations of its unit cells (a pair of skew tensor fields). We show that every solution of the compatibility equations is the limit of a sequence of soft modes ---  origami deformations with finite bending energy and negligible stretching. Using these sequences, we derive a plate-like theory for parallelogram origami patterns with an explicit coarse-grained quadratic energy depending on the gradient of the crease-actuation and the relative rotations of the cells. Finally, we illustrate our theory in the context of two well-known origami designs: the Miura and Eggbox patterns. Though these patterns are distinguished in their anticlastic and synclastic bending responses, they show a universal twisting response. General soft modes captured by our theory involve a rich nonlinear interplay between actuation, bending and twisting, determined by the underlying crease geometry.

\end{abstract}

\section{Introduction}
Mechanical metamaterials are many-body elastic systems that derive their bulk elastic properties primarily from the complex geometry and arrangement of their building blocks. Origami-inspired variants, composed of patterns of stiff  panels and flexible folds,  achieve a wide range of nonlinear mechanical properties. A well-designed folding pattern can produce an origami metamaterial that deploys from flat to folded flat \cite{feng2020designs,koryo1985method,lang2018rigidly,tachi2009generalization}, is bistable \cite{faber2018bioinspired,feng2020helical,kresling2012origami,liu2019invariant}, is soft in one configuration but stiff in another \cite{filipov2015origami}, exhibits tunable stiffness \cite{silverberg2014using,zhai2018origami} and/or achieves target shapes upon folding \cite{dang2022inverse,dieleman2020jigsaw,dudte2016programming}.  These alluring properties have made origami a popular design motif in applications spanning soft robotics \cite{kim2018printing,rafsanjani2019programming,wu2021stretchable}, biomedical devices \cite{kuribayashi2006self,velvaluri2021origami} and space structures and habitats \cite{melancon2021multistable,pehrson2020self,zirbel2013accommodating}. However, they also raise many fundamental questions in mechanics: what is the interplay between geometry and elasticity in these materials? How should we characterize this interplay? How do we model it?

This paper answers these questions for parallelogram origami, a class of metamaterials composed of repeating unit cells of folds and parallelogram panels.  Fig.\;\ref{Fig:IntroFigure} shows the most well-known example --- the Miura origami \cite{koryo1985method}, whose corrugated pattern is built from a single parallelogram. Fig.\;\ref{Fig:IntroFigure}(a) illustrates a key kinematic property: the Miura can fold from flat to folded flat  through a continuous family of  rigid-body motions of the panels coordinated by the flexible folds called a mechanism \cite{pellegrino1986matrix} or floppy mode \cite{lubensky2015phonons}. Like the Miura, every parallelogram orgami pattern possesses a single degree-of-freedom (DOF) mechanism. This property makes these origami  soft  to a wide range of loads, and enables deformations that are far from intuitive. For instance, the  Miura origami in Fig.\;\ref{Fig:IntroFigure} makes a saddle shape  when bent along its centerline (Fig.\;\ref{Fig:IntroFigure}(b)); it makes a bow-tie  shape when pinched (Fig.\;\ref{Fig:IntroFigure}(c-d)). Importantly, these deformations are not mechanisms. Instead, we  call them \textit{soft modes}. In a soft mode, a mechanical metamaterial can have a small amount of strain, with energy that is sub-volumetric rather than having classical bulk scaling. 

\begin{figure}[t!]
\centering
\includegraphics[width=0.85\textwidth]{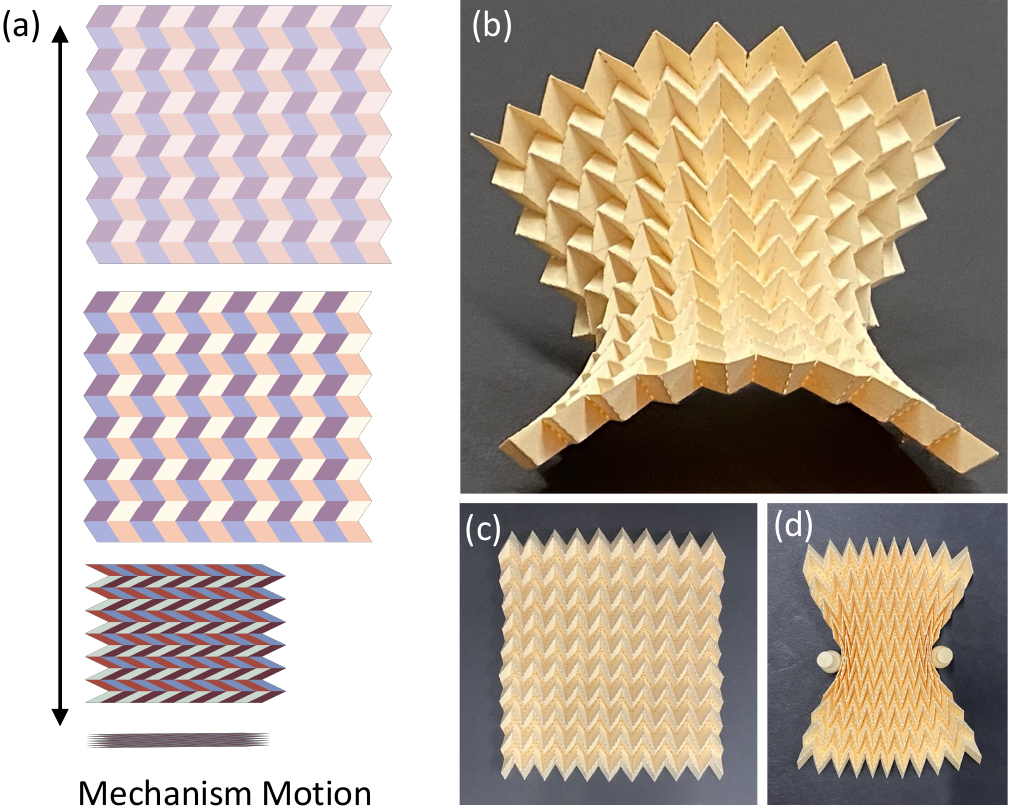} 
\caption{Miura origami  --- the canonical example in the family of parallelogram origami patterns. (a) The Miura origami can fold from flat to folded flat by a mechanism motion. (b-d) A paper model exemplifying the Miura's non-mechanistic soft modes of deformation.}
\label{Fig:IntroFigure}
\end{figure}

From an energetic point of view, it is clear why soft modes occur in parallelogram origami \textemdash\;the panels can bend when the pattern is generically deformed, and bending is energetically cheap as compared with stretching. The puzzle lies with understanding how the bending of many individual panels leads to a particular gross shape. The explanation lies in the interplay between the geometry of the crease pattern and the elasticity of the panels.  In the literature, this interplay is typically captured  through bar and hinge simulations \cite{filipov2017bar,liu2017nonlinear,schenk2011origami} or related spring methods \cite{deng2020characterization,zhou2023low}. The bar and hinge approach replaces the origami with assemblies of hinges and bars that reflect the creases and panels. Elasticity is accounted for by modeling bars/hinges as linear/torsional springs.  This approach has proven successful for simulating particular soft modes, but it leaves much to be desired theoretically. It does not, for instance, explain why the Miura origami makes a saddle shape upon bending, nor does it identify the class of all possible soft modes.  

The first theoretical discussions of soft modes in origami metamaterials are due to  Schenk and Guest \cite{schenk2013geometry} and  Wei et al.\;\cite{wei2013geometric} in their investigations of Miura origami.  By enriching the origami's rigid kinematics with  ``bending" creases along the panel diagonals, both sets of authors demonstrated  a connection between panel bending and crease folding  through a local compatibility argument involving slightly bent Miura  cells. Strikingly, this local argument shows that the Miura's in-plane and out-of-plane Poisson's ratios are equal and opposite. Since the Miura is auxetic with a negative in-plane Poisson's ratio, this identity is consistent with the anticlastic saddle-shape in Fig.\;\ref{Fig:IntroFigure}(b).

Recent lines of research have advanced this idea.  Nassar et al.\;\cite{nassar2017curvature} demonstrated the same Poisson's ratio link for Eggbox origami and postulated  that this link, when appropriately coupled to the Gauss and Codazzi--Mainardi equations from differential geometry \cite{do2016differential}, yields a set of partial differential equations (PDE) for the \emph{effective} or cell-averaged deformation of the origami. They later expanded on this idea for general parallelogram origami \cite{lebee2018fitting,nassar2022strain}, showing by a local fitting argument that a suitable interpretation of the Poisson's ratio link holds in general, and that the aforementioned coupling to classical differential geometry appears robust. McInerney et al.\;\cite{mcinerney2022discrete} also established the Poisson's ratio link for general parallelogram origami using a symmetry analysis applied to infinitesimal isometries. They too appear to be building towards an effective description of soft modes   \cite{coulais2018characteristic,czajkowski2022conformal,czajkowski2023orisometry,czajkowski2022duality}. 

What is missing from this literature is a \emph{self-consistent} coarse-graining rule --- one that uses averaging to derive a continuum field theory for the effective deformations of parallelogram origami, and then justifies that rule by producing soft modes corresponding to every possible solution of the continuum equations. 
Our previous work \cite{zheng2022continuum} produced such a rule in the simpler setting of planar kirigami. There, we linked an angle field quantifying the actuation of the kirigami's panels and slits to its overall deformation through a metric constraint.  We  justified this rule by showing that for any effective deformation and angle field satisfying this metric constraint,  there is a corresponding sequence of planar kirigami deformations converging to the given effective deformation, with elastic energy far less than bulk. We take a similar approach here. 

This paper goes beyond the metric constraint of planar kirigami to link the kinematics of slightly bent origami cells to  canonical tensor fields from surface geometry. Rather than dealing with first and second fundamental forms and applying the Gauss and Codazzi--Maindari equations, we prefer to use Cartan's elegant (and equivalent) method of moving frames \cite{cartan2001riemannian,ciarlet2008new,mardare2003fundamental,mardare2007systems}. We find Cartan's method to be ready-made for coarse graining origami. The reason is simple: when trying to fit together a  neighborhood of slightly bent origami cells, one must rotate each cell relative to the other by a small amount; these relative infinitesimal rotations are skew tensors in Cartan's theory (they are not fundamental forms). Upon coarse graining, we derive a coupled set of partial differential and algebraic equations (Eq.\;(\ref{eq:SurfaceTheory}) below) involving an angle field quantifying the cell-wise actuation of the origami's creases, along with two skew tensor fields we parameterize as vectors. When combined with a metric constraint similar to the kirgiami setting, this yields an effective surface theory for parallelogram origami. Formulas for the first and second fundamental forms follow quickly in the discussion section (see Eq.\;(\ref{eq:bendingTwistingModes})).

Our starting point is a more or less standard bar and hinge model for the elastic energy of a generic parallelogram origami pattern, which has a fully corrugated (non-planar) stress-free configuration. The total energy is the sum of panel stretching and bending energies, as well as a crease folding energy, all of which are formulated to allow for any reasonable material and geometric nonlinearity. We coarse grain this energy in the finely patterned limit where the number of cells goes to infinity while the extent of the overall pattern is fixed, assuming the following scale separation  of stiffnesses:
\begin{align*}
\text{the bulk stiffness of the panels  $\gg$  the bending stiffness of the panels $\gg$  the stiffness of the folds}. 
\end{align*}
Thus, panel stretching is severely limited, while panel bending is allowed; in comparison, the folds can actuate freely. Asymptotic analysis of the stretching and bending strains in the model leads to our effective surface theory. To justify it, we produce a globally-defined  ``recovery sequence'' associated to each of its solutions, i.e., a sequence of finite-bending and nearly-zero stretching origami deformations that converges to the given effective deformation, along with an asymptotic formula for its energy. This is the most delicate part of our work, as it requires ensuring that the stretching of our sequence is actually asymptotically negligible compared to its bending. 
In the end, we derive an effective plate theory for parallelogram origami, with a continuum bending energy that is quadratic in the gradient of the angle field and the skew tensor fields described above. 
Theorem~\ref{MainTheorem} collects the precise statements of our results; see also Eq.\;(\ref{eq:finalEnergyForm}) and the surrounding text for further discussion.

The mathematically inclined reader may recognize our terminology of recovery sequences as half of the definition of $\Gamma$-convergence \cite{braides2002gamma,dal2012introduction}, a widely used method for producing rigorous coarse-graining or homogenization results for non-convex energies. We take inspiration from prior work on small-displacement homogenization of metamaterial structures, such as the truss beam structures with pentographic sub-structures considered by Seppecher and others \cite{alibert2003truss}. Their work showed by  $\Gamma$-convergence that these structures homogenize to strain gradient and higher order elastic theories, assuming a linear  response.  There has since been a systematic effort to coarse grain the linear response of a myriad of truss analogs to mechanism-based metamaterials \cite{abdoul2018strain,dell2019pantographic,durand2022predictive,seppecher2011linear}. 
A key distinction in our work is \textit{nonlinearity} --- we address the physically relevant, fully nonlinear problem of origami deformations with finite panel rotations. While other such research is starting to emerge, for instance, on two-dimensional Kagome and checkerboard-type metmaterials \cite{dull2023variational, li2023some}, there is still much to be discovered. Rigorous analysis that embraces the nonlinearity of mechanism-based metamaterials can be a powerful tool for deriving predictive  continuum theories  with minimal ad hoc fitting parameters. We hope this paper inspires new lines of research in this direction.

We close this introduction by noting that our effective continuum theory is not of the strain gradient type; instead, and presumably as a consequence of nonlinearity, it is better seen as a  ``mechanism gradient" theory, namely, a type of \textit{generalized elastic continuum} that associates elasticity to fields parameterizing spatial variations of a locally mechanistic response. As a classical subject codified by Eringen in the 1960s--70s \cite{eringen2012microcontinuum}, generalized elasticity introduces auxiliary fields beyond deformations as a constitutive hypothesis aimed at capturing the macroscopic effect of microscale rearrangements. 
Recently, this approach has been applied to mechanical metamaterials including origami and kirigami \cite{kadic20193d,lakes2022extremal, nassar2020microtwist,saremi2020topological,sarhil2023modeling,sarhil2023size,sun2020continuum}. However, this literature has focused almost exclusively on the linear/small displacement response regime, which is not suitable for dealing with the large, macroscopic shape changes achievable by these systems. 
Our work on parallelogram origami and our related work on kirigami \cite{zheng2022continuum, zheng2023modelling} add to the discussion by deriving from first principles new generalized elastic theories that fully express the nonlinear coupling between design, deformation and actuation. We discuss generalized models further at the end.

The rest of this paper is organized as follows. Section \ref{sec:Overview} introduces the general class of parallelogram origami patterns, describes our bar and hinge energies, and states our main coarse-graining result in Theorem \ref{MainTheorem}. Sections \ref{sec:LinAlgebraSec}-\ref{sec:DerivationSec} and Appendix \ref{sec:ExistencePDE} provide the technical work needed to prove Theorem \ref{MainTheorem}. Section \ref{sec:Examples} concludes with examples showing predictions of our theory for the Miura and Eggbox origami patterns, along with a discussion of our results in the context of other continuum theories and a brief outlook for future research.

\section{Problem formulation and main results}\label{sec:Overview}

\subsection{Parallelogram origami designs and mechanism kinematics}\label{ssec:DesignKin}
We study a class of origami patterns made by tessellating a unit cell composed of four parallelograms and eight creases.  Besides the usual global Euclidean motions, generic such parallelogram origami patterns possess a continuous one parameter family of mechanism deformations, made up of rigid panel motions that fold the creases.  This section defines this class of patterns and describes their mechanism kinematics. 

Each four-panel parallelogram origami pattern has a reference unit cell  parameterized by four design vectors $\mathbf{t}_i^r\in\mathbb{R}^3,$ $i=1,2,3,4$. These vectors label the creases in a counterclockwise fashion around the central vertex of the cell as in Fig.\;\ref{Fig:idepat}(a), and they also label the boundaries of the cell. The grey panel in Fig.\;\ref{Fig:idepat}(a), for instance, has two parallel sides described by $\mathbf{t}_1^r$ and two other sides described by $\mathbf{t}_4^r$. 
Additional restrictions on the design vectors ensure the existence of a well-defined mechanism motion. We assume that the creases have a well-defined mountain-valley assignment, achieved by either partially folding a flat reference pattern or by considering a non-Euclidean pattern (one whose sector angles do not sum to $2\pi$). Specifically, we impose the nondegeneracy conditions
\begin{equation}
    \begin{aligned}\label{eq:tangentsConstraints}
      \mathbf{t}_1^r \cdot ( \mathbf{t}_2^r \times \mathbf{t}_3^r) \neq 0, \quad   \mathbf{t}_2^r \cdot (\mathbf{t}_3^r \times \mathbf{t}_4^r) \neq 0, \quad \mathbf{t}_3^r \cdot (\mathbf{t}_4^r \times \mathbf{t}_1^r) \neq 0, \quad \mathbf{t}_4^r \cdot (\mathbf{t}_1^r \times \mathbf{t}_2^r) \neq 0.
    \end{aligned}
\end{equation}
In addition, we exclude self-intersecting panels through the restrictions 
\begin{equation}
    \begin{aligned}\label{eq:selfConstraint}
    \text{conv}(\{\mathbf{0}, \mathbf{t}_1^r, \mathbf{t}_2^r\}) \cap \text{conv}(\{\mathbf{0}, \mathbf{t}_3^r, \mathbf{t}_4^r\}) = \{ \mathbf{0} \} , \quad \text{conv}(\{\mathbf{0}, \mathbf{t}_1^r, \mathbf{t}_4^r\}) \cap \text{conv}(\{\mathbf{0}, \mathbf{t}_2^r, \mathbf{t}_3^r\}) = \{ \mathbf{0} \} ,
    \end{aligned}
\end{equation}
where $\text{conv}(\cdot)$ is the convex hull of a set of points. 
Finally, to fix a reference frame, we take $\mathbf{u}_0 = \mathbf{t}_1^r - \mathbf{t}_3^r$ and $\mathbf{v}_0 = \mathbf{t}_2^r - \mathbf{t}_4^r$ to span the $\mathbf{e}_1, \mathbf{e}_2$-plane and require $\{\mathbf{u}_0, \mathbf{v}_0, \mathbf{e}_3\}$ to be a right-handed basis of $\mathbb{R}^3$: 
\begin{equation}
    \begin{aligned}\label{eq:bravaisConstraints}
    \mathbf{u}_0 \cdot \mathbf{e}_3 = 0, \quad \mathbf{v}_0 \cdot \mathbf{e}_3 = 0, \quad  \mathbf{e}_3 \cdot (\mathbf{u}_0 \times \mathbf{v}_0) > 0. 
    \end{aligned}
\end{equation}

Tessellating the resulting unit cell along the Bravais lattice vectors $\mathbf{u}_0$ and $\mathbf{v}_0$ produces a perfectly connected periodic pattern of parallelogram panels. This defines the reference origami pattern we use throughout our analysis (see Fig.\;\ref{Fig:idepat}(c)). 
Its panels and unit cell are given by
\begin{equation}
    \begin{aligned}\label{eq:OmegaCell}
      &(\text{the panels:})  && \mathcal{P}_i = \text{conv}(\{ \mathbf{0}, \mathbf{t}_i^r, \mathbf{t}_{\sigma(i)}^r ,  \mathbf{t}_i^r + \mathbf{t}_{\sigma(i)}^r \}), \quad i = 1,\ldots,4, \\
      &(\text{the unit cell:})  & &\Omega_{\text{cell}} = \mathcal{P}_1 \cup \mathcal{P}_2 \cup \mathcal{P}_3 \cup \mathcal{P}_4,
    \end{aligned}
\end{equation}
where $\sigma(\cdot)$ is a cyclic permutation of the set $\{ 1,2,3,4\}$. 
The overall pattern is
\begin{equation}
    \begin{aligned}\label{eq:tessellation}
      \mathcal{T}_{\text{ori}} = \{ \Omega_{\text{cell}} + i \mathbf{u}_0 + j \mathbf{v}_0 \colon (i, j) \in \mathbb{Z}^2\}.
    \end{aligned}
\end{equation}

\begin{figure}[t]
\centering
\includegraphics[width=0.7\textwidth]{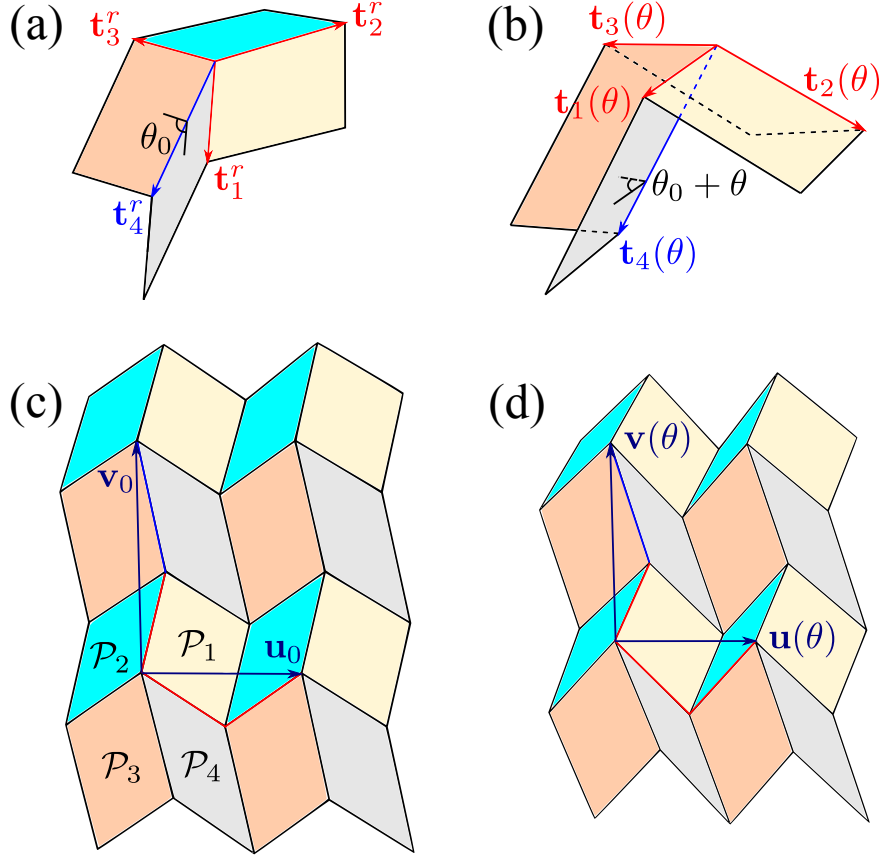} 
\caption{Labeling of the design and ideal mechanism kinematics of parallelogram origami.  (a) The unit cell in  its reference configuration  and (b)  after a mechanism deformation. (c) The pattern in  its reference configuration  and (d) after a mechanism deformation.  }
\label{Fig:idepat}
\end{figure}

We come now to the mechanism motion of parallelogram origami. Section \ref{sec:LinAlgebraSec} develops the mechanism kinematics in detail; we report the main points here as they are needed to state our coarse-graining result (Theorem \ref{MainTheorem}).
Let  $\theta_0$ denote the dihedral angle of the $\mathbf{t}_4^r$-crease in the reference unit cell, shown in Fig.\;\ref{Fig:idepat}(a). This angle belongs to either $(0,\pi)$ or $(\pi, 2 \pi)$ depending on the mountain-valley assignments, but it cannot take a value of $0$ or $\pi$ due to Eq.\;(\ref{eq:tangentsConstraints}).  In Section \ref{ssec:MechKin}, we verify that all mechanism deformations of this cell that preserve its mountain-valley assignment  are  parameterized completely up to rigid motion by folding this crease to a value $\theta_0 + \theta$ for an interval
\begin{equation}
    \begin{aligned}
      \theta \in (\theta^-, \theta^+), \quad  \theta^- = \theta^{-}(\mathbf{t}_1^r, \ldots, \mathbf{t}_4^r) < 0, \quad \theta^+ = \theta^{+}(\mathbf{t}_1^r, \ldots, \mathbf{t}_4^r) > 0.     \end{aligned}
\end{equation}
Fig.\ \ref{Fig:idepat}(b) shows the deformed tangents under this actuation. Each tangent is set by the folded $\mathbf{t}_4^r$-crease, and thus is parameterized in a mechanism by $\theta$ up to an overall rigid rotation. We denote the deformed tangents by $\mathbf{t}_i(\theta)$, $i = 1,\ldots, 4$. The interval  $(\theta^-, \theta^+)$  is  the largest open interval that preserves the reference mountain-valley assignments, per the restrictions
\begin{equation}
    \begin{aligned}\label{eq:MVPreserve}
      [\mathbf{t}_i^r \cdot ( \mathbf{t}_j^r \times \mathbf{t}_k^r)][\mathbf{t}_i(\theta) \cdot ( \mathbf{t}_j(\theta) \times \mathbf{t}_k(\theta))] >0, \quad ijk \in \{123,234,341,412\} \quad \text{for all }   \theta \in (\theta^-, \theta^+).
    \end{aligned}
\end{equation}
While it is often possible to include mechanism kinematics that change the mountain-valley assignment, we do not pursue this here, other than to point out that it would have implications for coarse-grained elastic moduli: a physical sample can easily ``pop" in and out of two different mountain-valley assignments when one of the creases is close to flat \cite{hanna2014waterbomb,silverberg2014using}. This feature is ruled out by  Eq.\;(\ref{eq:MVPreserve}).

Under a mechanism, the lattice vectors $\mathbf{u}_0 = \mathbf{t}_1^r - \mathbf{t}_3^r$ and $\mathbf{v}_0 = \mathbf{t}_2^r - \mathbf{t}_4^r$ transform to new lattice vectors through a map  $(\mathbf{u}_0, \mathbf{v}_0) \mapsto (\mathbf{u}(\theta), \mathbf{v}(\theta))$ given by
\begin{equation}
    \begin{aligned}
      \mathbf{u}(\theta) = \mathbf{t}_1(\theta) - \mathbf{t}_3(\theta), \quad \mathbf{v}(\theta) = \mathbf{t}_2(\theta) - \mathbf{t}_4(\theta).
    \end{aligned}
\end{equation}
The deformed cell remains a two-by-two array of connected and non-intersecting parallelograms, and tessellating it along $\mathbf{u}(\theta)$ and $\mathbf{v}(\theta)$  produces a new periodic pattern of parallelogram panels.  Passing through $\theta \in (\theta^{-}, \theta^+)$  yields the mechanism motion of the overall pattern (see Proposition \ref{MechProp} for a formal statement; similar observations appear in \cite{lang2017twists,mcinerney2022discrete,nassar2022strain, pratapa2019geometric}). Fig.\;\ref{Fig:idepat}(c-d) illustrates this transformation. 

For later use, observe that 
\begin{equation}
    \begin{aligned}
      \mathbf{u}(\theta) \cdot \mathbf{v}(\theta)& = \mathbf{u}_0 \cdot \mathbf{v}_0 \quad\text{for all } \theta \in (\theta^{-}, \theta^+),
    \end{aligned}
\end{equation} 
since the sector angles and lengths of each parallelogram panel are preserved by rigid deformations (i.e., $\mathbf{t}_i(\theta) \cdot \mathbf{t}_{j}(\theta) = \mathbf{t}_i^r \cdot \mathbf{t}_{j}^r$ for all $ij \in \{12,23,34,41\}$). With more effort, it is possible to find smooth and explicit parameterizations for the squared lengths  $|\mathbf{u}(\theta)|^2$ and $|\mathbf{v}(\theta)|^2$, though we will not need such parameterizations here. (For Euclidean parallelogram origami, one can apply ideas from \cite{feng2020designs, huffman1976curvature,lang2018rigidly,tachi2009generalization}; see  \cite{foschi2022explicit} for a discussion of Euclidean and non-Euclidean formulas.)  The scalar functions $\mathbf{u}(\theta) \cdot \mathbf{v}(\theta)$,  $|\mathbf{u}(\theta)|^2$ and $|\mathbf{v}(\theta)|^2$  are intrinsic quantities describing the distortion of the unit cell under a mechanism.  We are free to choose certain extrinsic quantities, like a frame of reference for the deformed tangents $\mathbf{t}_1(\theta), \ldots, \mathbf{t}_4(\theta)$. As in Eq.\;(\ref{eq:bravaisConstraints}), we require that
\begin{equation}
    \begin{aligned}
      \mathbf{u}(\theta) \cdot \mathbf{e}_3 = 0, \quad \mathbf{v}(\theta) \cdot \mathbf{e}_3 = 0, \quad \mathbf{e}_3 \cdot (\mathbf{u}(\theta) \times \mathbf{v}(\theta)) > 0  \quad  \text{for all }  \theta \in (\theta^{-}, \theta^+).
    \end{aligned}
\end{equation}
This emphasizes the effectively planar nature of the mechanism motion.

Finally, we define the shape tensor and Poisson's ratio of a given parallelogram origami design. Let $\tilde{\mathbf{u}}_0$ and $\tilde{\mathbf{v}}_0 \in \mathbb{R}^2$ denote the usual (orthogonal) projection of $\mathbf{u}_0$ and $\mathbf{v}_0$ to $\mathbb{R}^2$, with $\mathbf{u}_0=(\tilde{\mathbf{u}}_0,0)$ and $\mathbf{v}_0=(\tilde{\mathbf{v}}_0,0)$. The \emph{shape tensor} $\mathbf{A}_{\text{eff}}(\theta) \in \mathbb{R}^{3 \times 2}$ is the unique linear transformation such that
\begin{equation}
    \begin{aligned}\label{eq:shapeTensor}
      \mathbf{A}_{\text{eff}}(\theta)\tilde{\mathbf{u}}_0 = \mathbf{u}(\theta), \quad   \mathbf{A}_{\text{eff}}(\theta)\tilde{\mathbf{v}}_0 = \mathbf{v}(\theta).
    \end{aligned}
\end{equation}
For the Poisson's ratio, introduce the lattice-direction strain measures
\begin{equation}
\begin{aligned}
\varepsilon_{\mathbf{u}_0} = \frac{| \mathbf{u}(\theta+\delta \theta)| - |\mathbf{u}(\theta)|}{|\mathbf{u}(\theta)|}, \quad \varepsilon_{\mathbf{v}_0} = \frac{| \mathbf{v}(\theta+\delta \theta)| - |\mathbf{v}(\theta)|}{|\mathbf{v}(\theta)|}
\end{aligned}
\end{equation}
associated to a small perturbation $\delta \theta$, and define 
\begin{equation}
    \begin{aligned}\label{eq:PoissonsRatioDesign}
\nu(\theta) := \lim_{\delta \theta \rightarrow 0 }- \frac{\varepsilon_{\mathbf{u}_0}}{\varepsilon_{\mathbf{v}_0}} = -\Big(\frac{|\mathbf{v}(\theta)|^2}{|\mathbf{u}(\theta)|^2} \Big) \Big( \frac{\mathbf{u}'(\theta) \cdot \mathbf{u}(\theta)}{\mathbf{v}'(\theta) \cdot \mathbf{v}(\theta)}\Big).
    \end{aligned}
\end{equation}
As we verify later on in Lemma \ref{SignsLemma}, the sign of this Poisson-like ratio is  independent of $\theta$ and depends only on the design vectors of the unit cell. We use these definitions when stating our main result in Theorem \ref{MainTheorem}.

\subsection{Bar and hinge models for parallelogram origami}\label{ssec:BarHinge}


We now account for the elasticity of a general parallelogram origami pattern. 
First, to fix a reference domain, let $\Omega\subset\mathbb{R}^2$ be a simply connected planar reference domain with a smooth boundary, and assume after non-dimensionalization that $\Omega$  has characteristic length $\sim 1$ and that the undeformed configuration of the origami is a graph over this set with unit cells of length $\ell \ll 1$. Building off of the tessellation in Eq.\;(\ref{eq:tessellation}), we define the origami reference domain  as
\begin{equation}
    \begin{aligned}\label{eq:OriRef}
      \Omega_{\text{ori}}^{(\ell)} :=  \big\{ \ell \Omega_{\text{cell}} + \ell ( i \mathbf{u}_0 + j \mathbf{v}_0) \colon \ell ( i \mathbf{u}_0 + j \mathbf{v}_0) \in \Omega   , i, j \in \mathbb{Z} \big\}. 
    \end{aligned}
\end{equation}
Fig.\;\ref{Fig:oridom} shows a stylized example.

Next, we label the vertices of the origami.  Given $\Omega_{\text{ori}}^{(\ell)}$ as above, fix a connected subset $\mathcal{I}^{(\ell)}$ of $\mathbb{Z}^2$ such that $\{ \mathbf{x}_{i,j}^{(\ell)} \in \mathbb{R}^3 \colon (i,j) \in \mathcal{I}^{(\ell)} \}$ bijectively labels the vertices of the reference origami, and whose nearest neighbor relationships match those of the origami (see Fig.\;\ref{Fig:oridom}).  Such a set $\mathcal{I}^{(\ell)}$ exists given the topology of the crease pattern. We often write $\{ \mathbf{x}_{i,j}^{(\ell)}\} \equiv \{ \mathbf{x}_{i,j}^{(\ell)} \in \mathbb{R}^3 \colon (i,j) \in \mathcal{I}^{(\ell)} \}$ for short. 
With this labeling, elastic deformations of the origami are maps
\begin{equation}
    \begin{aligned}
      \mathbf{x}_{i,j}^{(\ell)} \mapsto \mathbf{y}_{i,j}^{(\ell)}
    \end{aligned}
\end{equation}
defined for $(i,j) \in \mathcal{I}^{(\ell)}$. The deformed origami pattern is called $\{\mathbf{y}_{i,j}^{(\ell)}\}$ for short.

\begin{figure}[t!]
\centering
\includegraphics[width=1\textwidth]{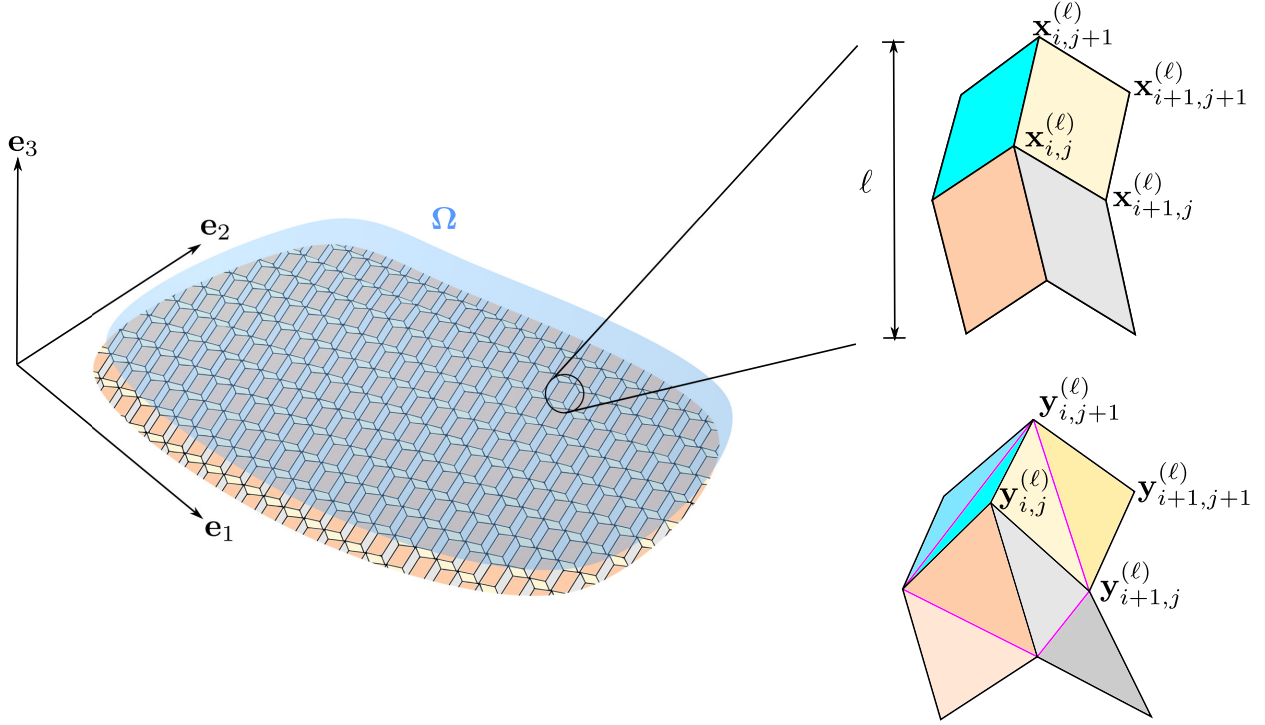} 
\caption{Reference domain and labeling for the bar and hinge model.  The  parallelogram origami  is finely patterned with unit cell of characteristic length $\sim \ell$. It covers the  effective domain $\Omega$ in the $\mathbf{e}_1$,$\mathbf{e}_2$-plane. The reference and deformed vertices are labeled by $\mathbf{x}_{i,j}^{(\ell)}$ and $\mathbf{y}_{i,j}^{(\ell)}$ with the nearest neighbor description shown.}
\label{Fig:oridom}
\end{figure}

We are now ready to define the general class of bar and hinge energies we coarse-grain. As is standard practice in the engineering origami literature \cite{filipov2017bar,liu2017nonlinear,schenk2011origami}, we consider three stored energy functions to account for the stretching, bending and folding of the origami panels and creases: 
\begin{equation}
\begin{aligned}\label{eq:TotalBarHingeEnergy}
  \mathcal{E}^{(\ell)}_{\text{tot}}(\{ \mathbf{y}_{i,j}^{(\ell)} \} ) = \mathcal{E}^{(\ell)}_{\text{str}}(\{ \mathbf{y}_{i,j}^{(\ell)} \} ) + \mathcal{E}^{(\ell)}_{\text{bend}}(\{ \mathbf{y}_{i,j}^{(\ell)} \} )+  \mathcal{E}^{(\ell)}_{\text{fold}}(\{ \mathbf{y}_{i,j}^{(\ell)} \} ).  
\end{aligned}
\end{equation}
We define each of these below.

The stretching energy $\mathcal{E}^{(\ell)}_{\text{str}}$ is generically a function of the panel strains. For the $(i,j)$-vertex in the pattern, we have four panel strains $\varepsilon_{i,j,k}^{(\ell)}$, $k=1,\dots,4$, given by
\begin{equation}
    \begin{aligned}\label{eq:PanelStrain}
      &\varepsilon^{(\ell)}_{i,j,1}  = \frac{|\mathbf{y}^{(\ell)}_{i+1,j}- \mathbf{y}^{(\ell)}_{i,j}| - |\mathbf{x}_{i+1,j}^{(\ell)} - \mathbf{x}^{(\ell)}_{i,j}|}{|\mathbf{x}_{i+1,j}^{(\ell)} - \mathbf{x}^{(\ell)}_{i,j}|}, && \varepsilon^{(\ell)}_{i,j,2} = \frac{|\mathbf{y}^{(\ell)}_{i,j+1}- \mathbf{y}^{(\ell)}_{i,j}| - |\mathbf{x}_{i,j+1}^{(\ell)} - \mathbf{x}^{(\ell)}_{i,j}|}{|\mathbf{x}_{i,j+1}^{(\ell)} - \mathbf{x}^{(\ell)}_{i,j}|},  \\
       &\varepsilon^{(\ell)}_{i,j,3}  = \frac{|\mathbf{y}^{(\ell)}_{i+1,j+1}- \mathbf{y}^{(\ell)}_{i,j}| - |\mathbf{x}_{i+1,j+1}^{(\ell)} - \mathbf{x}^{(\ell)}_{i,j}|}{|\mathbf{x}_{i+1,j+1}^{(\ell)} - \mathbf{x}^{(\ell)}_{i,j}|},  &&  \varepsilon^{(\ell)}_{i,j,4} = \frac{|\mathbf{y}^{(\ell)}_{i,j+1}- \mathbf{y}^{(\ell)}_{i+1,j}| - |\mathbf{x}_{i,j+1}^{(\ell)} - \mathbf{x}^{(\ell)}_{i+1,j}|}{|\mathbf{x}_{i,j+1}^{(\ell)} - \mathbf{x}^{(\ell)}_{i+1,j}|}.
     \end{aligned}
\end{equation}
We also use $(i,j)$ to label the panel with $\mathbf{x}_{i,j}^{(\ell)}$ as its lower left corner point. 
The stretching energy is then
\begin{equation}
    \begin{aligned}
      \mathcal{E}_{\text{str}}^{(\ell)}( \{ \mathbf{y}_{i,j}^{(\ell)}\}) =   \sum_{(i,j) \in \mathcal{I}^{(\ell)}}  A_{i,j}^{(\ell)}   \ell^{\alpha_s}   \Phi^{\text{str}}_{i,j}( \varepsilon^{(\ell)}_{i,j,1}, \varepsilon^{(\ell)}_{i,j,2},\varepsilon^{(\ell)}_{i,j,3},\varepsilon^{(\ell)}_{i,j,4})
    \end{aligned}
\end{equation}
where $\ell^{\alpha_{s}} \Phi^{\text{str}}_{i,j}(\varepsilon_1,\ldots, \varepsilon_4)$ is the stretching energy per unit reference area of the $(i,j)$-panel. Its area is $A_{i,j}^{(\ell)}$ before deformation. The functions $\Phi^{\text{str}}_{i,j}$ are smooth, non-negative and equal zero if and only if $(\varepsilon_1, \ldots, \varepsilon_4) = \mathbf{0}$. They also have the periodicity property $\Phi^{\text{str}}_{i+2,j} = \Phi^{\text{str}}_{i,j+2} = \Phi^{\text{str}}_{i,j}$, consistent with the pattern. Likewise, the reference areas satisfy $A_{i,j}^{(\ell)} = A_{i+2,j}^{(\ell)} = A^{(\ell)}_{i,j+2}$. Finally, the dimensionless ``stretching modulus" $\ell^{\alpha_s}$ quantifies the relative stiffness of stretching as compared to bending and folding.  We explain how we select the exponent $\alpha_s$ after introducing the remaining energies.

Next, we define the bending energy $\mathcal{E}^{(\ell)}_{\text{bend}}$. When the four vertices $\mathbf{x}_{i,j}^{(\ell)}$, $\mathbf{x}_{i+1,j}^{(\ell)}$, $\mathbf{x}_{i,j+1}^{(\ell)}$ and $\mathbf{x}_{i+1,j+1}^{(\ell)}$ of the $(i,j)$-panel deform to $\mathbf{y}_{i,j}^{(\ell)}$, $\mathbf{y}_{i+1,j}^{(\ell)}$, $\mathbf{y}_{i,j+1}^{(\ell)}$ and $\mathbf{y}_{i+1,j+1}^{(\ell)}$, they need not belong to a single plane. We can measure this deviation by calculating the angle between the deformed normals 
\begin{equation}
    \begin{aligned}\label{eq:getNormals}
      \mathbf{n}_{i,j,1}^{(\ell)} = \frac{(\mathbf{y}^{(\ell)}_{i+1,j} - \mathbf{y}^{(\ell)}_{i,j}) \times  (\mathbf{y}^{(\ell)}_{i,j+1} - \mathbf{y}^{(\ell)}_{i,j})}{|(\mathbf{y}^{(\ell)}_{i+1,j} - \mathbf{y}^{(\ell)}_{i,j}) \times  (\mathbf{y}^{(\ell)}_{i,j+1} - \mathbf{y}^{(\ell)}_{i,j})|} , \quad \mathbf{m}_{i,j,1}^{(\ell)} =  \frac{(\mathbf{y}^{(\ell)}_{i,j+1} - \mathbf{y}^{(\ell)}_{i+1,j+1}) \times  (\mathbf{y}^{(\ell)}_{i+1,j} - \mathbf{y}^{(\ell)}_{i+1,j+1})}{|(\mathbf{y}^{(\ell)}_{i,j+1} - \mathbf{y}^{(\ell)}_{i+1,j+1}) \times  (\mathbf{y}^{(\ell)}_{i+1,j} - \mathbf{y}^{(\ell)}_{i+1,j+1})|},
    \end{aligned}
\end{equation}
which is
\begin{equation}
    \begin{aligned}\label{eq:getBendAngles}
      \psi_{i,j,1}^{(\ell)} = \arcsin\bigg[ \frac{(\mathbf{y}^{(\ell)}_{i,j+1} - \mathbf{y}^{(\ell)}_{i+1,j})}{|\mathbf{y}^{(\ell)}_{i,j+1} - \mathbf{y}^{(\ell)}_{i+1,j}|} \cdot \Big( \mathbf{n}_{i,j,1}^{(\ell)} \times \mathbf{m}_{i,j,1}^{(\ell)}\Big) \bigg].
    \end{aligned}
\end{equation}
Alternatively, the deviation from a plane can be measured using the normals 
\begin{equation}
    \begin{aligned}
     \mathbf{n}_{i,j,2}^{(\ell)} = \frac{(\mathbf{y}^{(\ell)}_{i+1,j+1} - \mathbf{y}^{(\ell)}_{i+1,j}) \times  (\mathbf{y}^{(\ell)}_{i,j} - \mathbf{y}^{(\ell)}_{i+1,j})}{|(\mathbf{y}^{(\ell)}_{i+1,j+1} - \mathbf{y}^{(\ell)}_{i+1,j}) \times  (\mathbf{y}^{(\ell)}_{i,j} - \mathbf{y}^{(\ell)}_{i+1,j})|} , \quad \mathbf{m}_{i,j,2}^{(\ell)} =  \frac{(\mathbf{y}^{(\ell)}_{i,j} - \mathbf{y}^{(\ell)}_{i,j+1}) \times  (\mathbf{y}^{(\ell)}_{i+1,j+1} - \mathbf{y}^{(\ell)}_{i,j+1})}{|(\mathbf{y}^{(\ell)}_{i,j} - \mathbf{y}^{(\ell)}_{i,j+1}) \times  (\mathbf{y}^{(\ell)}_{i+1,j+1} - \mathbf{y}^{(\ell)}_{i,j+1})|}
    \end{aligned}
\end{equation}
and the angle 
\begin{equation}
    \begin{aligned}\label{eq:getBendAngles2}
      \psi_{i,j,2}^{(\ell)} = \arcsin\bigg[ \frac{(\mathbf{y}^{(\ell)}_{i+1,j+1} - \mathbf{y}^{(\ell)}_{i,j})}{|\mathbf{y}^{(\ell)}_{i+1,j+1} - \mathbf{y}^{(\ell)}_{i,j}|} \cdot \Big( \mathbf{n}_{i,j,2}^{(\ell)} \times \mathbf{m}_{i,j,2}^{(\ell)}\Big) \bigg].
    \end{aligned}
\end{equation}
The engineering origami literature often uses only one of the angles   $\psi_{i,j,1}^{(\ell)}$ or $\psi_{i,j,2}^{(\ell)}$ to measure panel bending, and interprets the bending kinematics of the panel in terms of an additional fold added to the panel diagonal (like in Fig.\;\ref{Fig:oridom}). For our  analysis, however, we prefer to not assume a bias towards one or another diagonal, and so we use both angles in our bending energy (a similarly unbiased calculation appears in \cite{wei2013geometric}). Our bending energy is
\begin{equation}
    \begin{aligned}\label{eq:eBend}
      \mathcal{E}_{\text{bend}}^{(\ell)}(\{ \mathbf{y}_{i,j}^{(\ell)} \} ) =  \sum_{(i,j) \in \mathcal{I}^{(\ell)}}  A_{i,j}^{(\ell)}   \ell^{\alpha_b}   \Phi^{\text{bend}}_{i,j}(\psi_{i,j,1}^{(\ell)}, \psi_{i,j,2}^{(\ell)})
    \end{aligned}
\end{equation}
where $\ell^{\alpha_b} \Phi^{\text{bend}}_{i,j}(\psi_1, \psi_2)$ is the bending energy per unit area of the $(i,j)$-panel. As with the stretching energy, the functions $\Phi_{i,j}^{\text{bend}}$ are smooth, non-negative and vanish  if and only if $(\psi_1, \psi_2)= \mathbf{0}$; they also satisfy the periodicity conditions $\Phi^{\text{bend}}_{i,j} = \Phi^{\text{bend}}_{i+2,j} = \Phi^{\text{bend}}_{i,j+2}$.  Finally, $\ell^{\alpha_b}$ quantifies the relative stiffness of bending. The exponent is specified below. 

The final term $\mathcal{E}^{(\ell)}_{\text{fold}}$ in Eq.\;(\ref{eq:TotalBarHingeEnergy}) models the energetic cost of folding the creases.  To define it, we must track how each dihedral angle changes when the pattern deforms. Again, we measure angles between deformed normals. Since there is a clear definition of dihedral angle, we may use the first set of deformed normals $\mathbf{n}_{i,j,1}^{(\ell)}$ in Eq.\;(\ref{eq:getNormals}) and their reference normals
\begin{equation}
    \begin{aligned}
      \boldsymbol{\nu}_{i,j,1}^{(\ell)} = \frac{(\mathbf{x}^{(\ell)}_{i+1,j} - \mathbf{x}^{(\ell)}_{i,j}) \times  (\mathbf{x}^{(\ell)}_{i,j+1} - \mathbf{x}^{(\ell)}_{i,j})}{|(\mathbf{x}^{(\ell)}_{i+1,j} - \mathbf{x}^{(\ell)}_{i,j}) \times  (\mathbf{x}^{(\ell)}_{i,j+1} - \mathbf{x}^{(\ell)}_{i,j})|}.
    \end{aligned}
\end{equation}
The reference and deformed horizontal folding angles at the $(i,j)$-vertex are then
\begin{equation}
    \begin{aligned}\label{eq:getBetas}
      \beta_{i,j,0} = \arcsin\bigg[\frac{(\mathbf{x}^{(\ell)}_{i+1,j} - \mathbf{x}_{i,j}^{(\ell)})}{|\mathbf{x}^{(\ell)}_{i+1,j} - \mathbf{x}_{i,j}^{(\ell)}|} \cdot  \Big(\boldsymbol{\nu}_{i,j,1}^{(\ell)} \times \boldsymbol{\nu}_{i,j-1,1}^{(\ell)} \Big)\bigg], \quad \beta_{i,j}^{(\ell)} = \arcsin\bigg[\frac{(\mathbf{y}^{(\ell)}_{i+1,j} - \mathbf{y}_{i,j}^{(\ell)})}{|\mathbf{y}^{(\ell)}_{i+1,j} - \mathbf{y}_{i,j}^{(\ell)}|} \cdot  \Big(\mathbf{n}_{i,j,1}^{(\ell)} \times \mathbf{n}_{i,j-1,1}^{(\ell)} \Big)\bigg].
    \end{aligned}
\end{equation}
Likewise, the reference and deformed vertical folding angles at the $(i,j)$-vertex are 
\begin{equation}
    \begin{aligned}\label{eq:getGammas}
      \gamma_{i,j,0}=\arcsin\bigg[\frac{(\mathbf{x}^{(\ell)}_{i,j+1} - \mathbf{x}_{i,j}^{(\ell)})}{|\mathbf{x}^{(\ell)}_{i,j+1} - \mathbf{x}_{i,j}^{(\ell)}|} \cdot  \Big(\boldsymbol{\nu}_{i,j,1}^{(\ell)} \times \boldsymbol{\nu}_{i-1,j,1}^{(\ell)} \Big)\bigg], \quad \gamma_{i,j}^{(\ell)}=\arcsin\bigg[\frac{(\mathbf{y}^{(\ell)}_{i,j+1} - \mathbf{y}_{i,j}^{(\ell)})}{|\mathbf{y}^{(\ell)}_{i,j+1} - \mathbf{y}_{i,j}^{(\ell)}|} \cdot  \Big(\mathbf{n}_{i,j,1}^{(\ell)} \times \mathbf{n}_{i-1,j,1}^{(\ell)} \Big)\bigg].
    \end{aligned}
\end{equation}
Note the reference folding angles are independent of $\ell$.
The total folding energy is 
\begin{equation}
    \begin{aligned}\label{eq:foldBarHinge}
      \mathcal{E}_{\text{fold}}^{(\ell)}( \{ \mathbf{y}_{i,j}^{(\ell)} \}) = \sum_{(i,j) \in \mathcal{I}^{(\ell)}}  A_{i,j}^{(\ell)}  \ell^{\alpha_f}  \Phi_{i,j}^{\text{fold}} \big( \beta_{i,j}^{(\ell)} - \beta_{i,j,0}, \gamma_{i,j}^{(\ell)} -\gamma_{i,j,0} \big).
    \end{aligned}
\end{equation}
The terms $\ell^{\alpha_f} \Phi^{\text{fold}}_{i,j}(\beta - \beta_0, \gamma - \gamma_0)$ give the folding energy per unit reference area of the panels.  The functions $\Phi_{i,j}^{\text{fold}}$ are smooth, non-negative and vanish if and only if $\beta = \beta_0$ and $\gamma = \gamma_0$. They satisfy the periodicity conditions $\Phi^{\text{fold}}_{i,j} = \Phi^{\text{fold}}_{i+2,j} = \Phi^{\text{fold}}_{i,j+2}$. The reference folding angles also satisfy the periodicity conditions $\beta_{i,j,0} = \beta_{i,j+2,0}  = \beta_{i+2,j,0}$ and $\gamma_{i,j,0} = \gamma_{i+2,j,0} = \gamma_{i,j+2,0}$. They encode a fully corrugated, non-planar stress-free reference configuration.
 
At this point, we have defined our general class of bar and hinge models. Since we intend to perform an asymptotic analysis of the above energies, we must now fix a scaling relationship between the stretching, bending and folding moduli $\ell^{\alpha_s}$, $\ell^{\alpha_b}$ and $\ell^{\alpha_f}$. We do so to enforce the scale separation of stiffnesses mentioned in the introduction, which manifests here via the asymptotic relationships
 \begin{equation}
\begin{aligned}\label{eq:scalingAssumptions}
  \ell^{\alpha_{s}} \gg \ell^{\alpha_b} \gg \ell^{\alpha_f} \quad \text{ when } \ell \ll 1 \quad (\text{i.e., for finely-patterned origami}).
\end{aligned}
\end{equation}
From an engineering point of view, these scalings are natural: stretching a panel should require much more energy than bending it,  which in turn should require more energy than folding a crease. 
These relationships are ensured by the requirement that
 \begin{equation}
     \begin{aligned}\label{eq:alphaAssumptions}
       \alpha_s \in (-4,-2), \quad \alpha_b = -2, \quad \alpha_f > 0.
     \end{aligned}
 \end{equation}
In fact, Eq.\;(\ref{eq:alphaAssumptions}) does more. 
For reasons that will become clear only after the statement of Theorem \ref{MainTheorem}, we shall be concerned with constructing a general family of soft origami deformations $\{ \mathbf{x}_{i,j}^{(\ell)}\}\mapsto\{ \mathbf{y}_{i,j}^{(\ell)}\}$ with
 \begin{equation}
 \begin{aligned}\label{eq:bendingDefs}
 &(\text{negligible panel strain:}) && \varepsilon_{i,j,1}^{(\ell)}, \ldots, \varepsilon_{i,j,4}^{(\ell)} \sim \ell^2, \\ 
 &(\text{slight panel bending:}) && \psi_{i,j1}^{(\ell)}, \psi_{i,j,2}^{(\ell)} \sim \ell, \\
 &(\text{large fold actuation:}) &&  \beta_{i,j}^{(\ell)} - \beta_{i,j,0}^{(\ell)},   \gamma_{i,j}^{(\ell)} -\gamma_{i,j,0}^{(\ell)} \sim 1
 \end{aligned}
 \end{equation}
independently of  $(i,j)$. The stretching, bending and folding energies of such deformations  scale as 
\begin{equation}
\begin{aligned}\label{eq:softScaling}
&\mathcal{E}_{\text{str}}^{(\ell)} (\{\mathbf{y}_{i,j}^{(\ell)} \} ) \sim \ell^{4 + \alpha_s}  , \quad \mathcal{E}_{\text{bend}}^{(\ell)} (\{\mathbf{y}_{i,j}^{(\ell)} \} ) \sim 1, \quad \mathcal{E}_{\text{fold}}^{(\ell)}(\{\mathbf{y}_{i,j}^{(\ell)} \} ) \sim \ell^{\alpha_f}.
\end{aligned}
\end{equation}
 Eq.\;(\ref{eq:alphaAssumptions}) implies that the stretching and folding energies vanish in the limit $\ell \rightarrow 0$, resulting  in an effective plate energy for parallelogram origami governed by the bending of its panels. We state this energy as part of Theorem \ref{MainTheorem}.


\subsection{The main coarse-graining result}\label{ssec:MainResult}

Our main results are as follows: first, we identify a surface theory that captures the effective origami's soft modes in the parameter regime in Eq.\;(\ref{eq:scalingAssumptions}-\ref{eq:alphaAssumptions}), leading to the first half of Theorem \ref{MainTheorem}. Then, in the second half of Theorem \ref{MainTheorem}, we produce a plate-like energy by coarse graining the bar and hinge model. Our main technical tool is an ability to construct general origami soft modes corresponding to solutions of the surface theory. We go into more detail now.

We begin with a question of local kinematics. Fig.\;\ref{Fig:pertcell} shows four slightly bent unit cells, where the panel diagonals (the purple lines in Fig.\;\ref{Fig:pertcell}(a)) are shown as stiff creases to help visualize bending of the panels. We assume that the lower-left cell in Fig.\;\ref{Fig:pertcell}(c) has been actuated by taking the dihedral angle $\theta_0$ (of the $\mathbf{t}_4^r$-crease in Fig.\;\ref{Fig:idepat}(a)) to a value $\theta_d = \theta_0 + \theta$. We also assume that its panels are bent along their diagonals through the folding angles $\ell \kappa_i$, $i =1,\ldots, 4$, shown in Fig.\;\ref{Fig:pertcell}(b). With this setup, we seek an understanding of how to fit together this deformed two-by-two set of unit cells based on the following physical heuristic: neighboring cells should not deviate too much from each other when they belong to a pattern of many such cells. In other words, like creases and like panels of neighboring cells should fold and bend in similar ways. By this heuristic, the neighboring cell in the $\mathbf{u}_0$-direction should be actuated by taking its $\theta_0$ dihedral angle to an angle $\approx \theta_d + \ell \delta \theta_{\mathbf{u}_0} $, and its panels should bend through folding angles $\approx \ell \kappa_i$. The cell in the $\mathbf{v}_0$-direction should be perturbed analogously, which introduces an additional actuation angle $\ell \delta \theta_{\mathbf{v}_0}$. 

\begin{figure}[t!]
\centering
\includegraphics[width=0.8\textwidth]{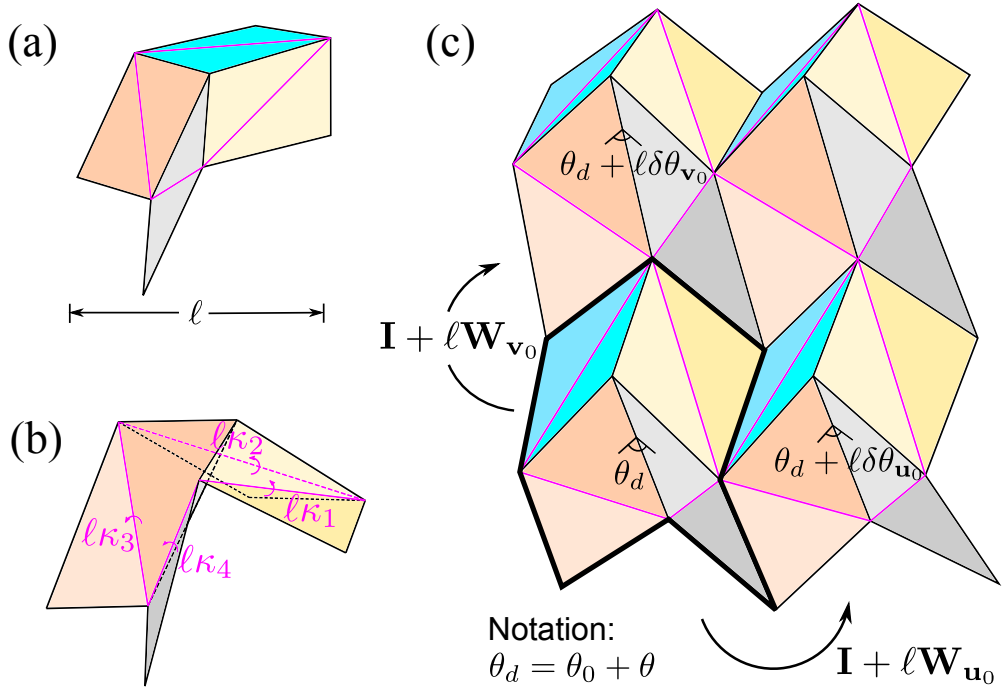} 
\caption{Local bending kinematics. (a) Stiff creases along the panel diagonals in purple provide a model for panel bending. (b) Folding angles $\ell \kappa_i$ at the stiff creases denote slight bending of the cell. (c) Notation for the local fitting problem of four neighboring cells with slightly bent kinematics. }
\label{Fig:pertcell}
\end{figure}

A basic question of compatibility emerges:  to what extent can such a two-by-two set of slightly bent origami cells fit together? The answer is given in Section \ref{sec:LinAlgebraSec}. Summarizing the results, we find that the cells need to be rotated slightly relative to each other via rotations $\approx \mathbf{I} + \ell \mathbf{W}_{\mathbf{u}_0}$ and $\approx \mathbf{I} + \ell \mathbf{W}_{\mathbf{v}_0}$ to match their deformed boundaries, as in Fig.\;\ref{Fig:pertcell}(c), and furthermore that these rotations are linked to the internal cell-based DOFs $\theta, \delta \theta_{\mathbf{u}_0} , \delta \theta_{\mathbf{v}_0} , \kappa_1, \ldots, \kappa_4$ introduced above. Note that $\mathbf{W}_{\mathbf{u}_0}$ and $\mathbf{W}_{\mathbf{v}_0}$  are $3\times3$ skew tensors, and so are completely parameterized by vectors  $\boldsymbol{\omega}_{\mathbf{u}_0}$ and $\boldsymbol{\omega}_{\mathbf{v}_0} \in \mathbb{R}^3$ via 
\begin{equation}
\begin{aligned}
\mathbf{W}_{\mathbf{u}_0} = (\boldsymbol{\omega}_{\mathbf{u}_0} \times ), \quad \mathbf{W}_{\mathbf{v}_0} = (\boldsymbol{\omega}_{\mathbf{v}_0} \times ),
\end{aligned}
\end{equation}
where $(\mathbf{a} \times)$ is defined by $(\mathbf{a} \times ) \mathbf{b} = \mathbf{a} \times \mathbf{b}$ for all $\mathbf{b} \in \mathbb{R}^3$.  Thus, there are 13 DOFs in the two-by-two set of bent cells, plus the obvious translations. Fitting the boundaries of the deformed cells together supplies eight constraints.   We show that the cells fit together with negligible stretching at leading order in $\ell$ (precisely, with panel strains $\lesssim \ell^2$) if and only if  $\theta$, $\boldsymbol{\omega}_{\mathbf{u}_0}$, $\boldsymbol{\omega}_{\mathbf{v}_0}$,  $\delta \theta_{\mathbf{u}_0} $ and $\delta \theta_{\mathbf{v}_0} $ satisfy   
\begin{equation}
\begin{aligned}\label{eq:micon}
& \bm{\omega}_{\mathbf{u}_0}\times\mathbf{v}(\theta)+\delta\theta_{\mathbf{u}_0}\mathbf{v}'(\theta)= \bm{\omega}_{\mathbf{v}_0}\times\mathbf{u}(\theta)+ \delta\theta_{\mathbf{v}_0}\mathbf{u}'(\theta), \\
& \bm{\omega}_{\mathbf{u}_0}\cdot\mathbf{v}'(\theta)= \bm{\omega}_{\mathbf{v}_0}\cdot\mathbf{u}'(\theta),
\end{aligned}
\end{equation}
and $\kappa_1, \ldots, \kappa_4$ are given linearly as functions of $\boldsymbol{\omega}_{\mathbf{u}_0}$, $\boldsymbol{\omega}_{\mathbf{v}_0}$, $\delta \theta_{\mathbf{u}_0}$ and $\delta \theta_{\mathbf{v}_0}$ in an explicit dependence with coefficients parameterized by $\theta$. Note $\mathbf{u}(\theta)$ and $\mathbf{v}(\theta)$ are the lattice vectors from Section \ref{ssec:DesignKin}. See Proposition~\ref{LocalBendProp} for the formal statement of this result. 

Eq.\;(\ref{eq:micon}) solves the local fitting problem for slightly bent parallelogram origami cells. It also plays a crucial role for coarse graining general soft modes. A basic observation is that the pattern's soft modes of deformation  are \textit{locally mechanistic}: at the scale of each unit cell they look like a mechanism, the features of which can vary from cell to cell. 
Much like in our work on  planar kirigami \cite{zheng2022continuum}, we capture this behavior by linking the first fundamental form of the effective (cell-averaged) deformation of the origami to the actuation of its unit cells, i.e., the cell-wise change in a dihedral angle $\theta$ parameterizing the mechanism motion (see Fig.\;\ref{Fig:idepat}(a-b)).  Mathematically, the effective deformation and actuation should be smooth and generally heterogeneous continuum fields $\mathbf{y}_{\text{eff}} \colon \Omega \rightarrow \mathbb{R}^3$ and $\theta \colon \Omega \rightarrow (\theta^{-}, \theta^+)$ that satisfy the metric constraint
\begin{equation}
\begin{aligned}\label{eq:firstFund}
\big(\nabla \mathbf{y}_{\text{eff}}(\mathbf{x}) \big)^T \nabla \mathbf{y}_{\text{eff}}(\mathbf{x})  = \mathbf{A}_{\text{eff}}^T(\theta(\mathbf{x})) \mathbf{A}_{\text{eff}} (\theta(\mathbf{x})) 
\end{aligned}
\end{equation}
for the shape tensor $\mathbf{A}_{\text{eff}}(\theta)$ in Eq.\;(\ref{eq:shapeTensor}). At the level of the panel strains,  Eq.\;(\ref{eq:firstFund}) enforces the requirement that these strains  vanish as $\ell\to 0$.  Even so, it is only part of the description of soft modes, since we have yet to incorporate the constraints from Eq.\;(\ref{eq:micon}) on the local kinematics of bent origami cells. For this purpose, we rewrite the metric constraint.

\begin{figure}[t!]
\centering
\includegraphics[width=1\textwidth]{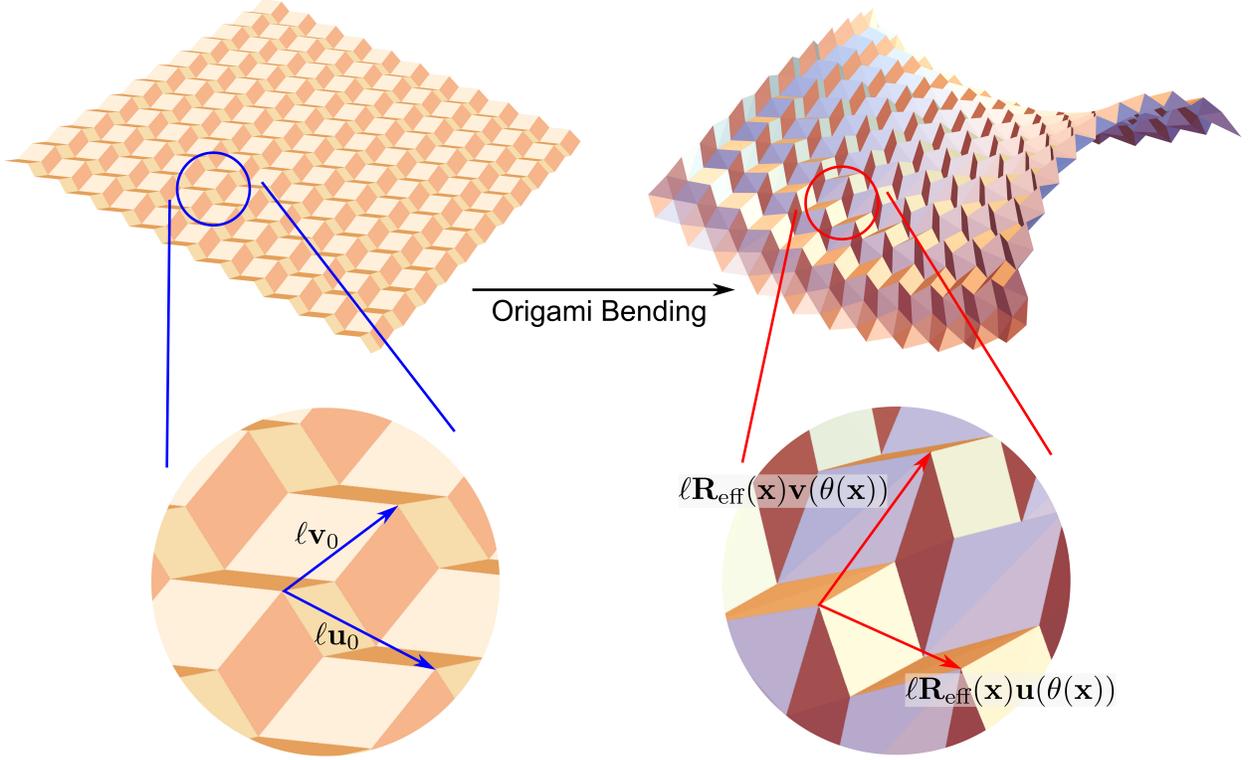} 
\caption{Illustration of a locally mechanistic origami deformation and its effective description. The reference Bravais lattice vectors $\mathbf{u}_0$ and $\mathbf{v}_0$ distort according to the local mechanism as $\mathbf{u}_0 \mapsto \mathbf{R}_{\text{eff}}(\mathbf{x}) \mathbf{u}(\theta(\mathbf{x}))$ and $\mathbf{v}_0 \mapsto \mathbf{R}_{\text{eff}}(\mathbf{x}) \mathbf{v}(\theta(\mathbf{x}))$ for a  rotation field $\mathbf{R}_{\text{eff}}(\mathbf{x})$ and spatially varying actuation field $\theta(\mathbf{x})$.  }
\label{Fig:RefDef}
\end{figure}

 By linear algebra, Eq.\;(\ref{eq:firstFund}) is equivalent to the system 
\begin{equation}
\begin{aligned}\label{eq:effectiveDecomp}
&\nabla \mathbf{y}_{\text{eff}}(\mathbf{x}) = \mathbf{R}_{\text{eff}}(\mathbf{x})  \mathbf{A}_{\text{eff}} (\theta(\mathbf{x})), \quad \text{equivalently:} \quad \begin{cases}
\partial_{\mathbf{u}_0} \mathbf{y}_{\text{eff}}(\mathbf{x}) = \mathbf{R}_{\text{eff}}(\mathbf{x}) \mathbf{u}(\theta(\mathbf{x}))  \\
 \partial_{\mathbf{v}_0} \mathbf{y}_{\text{eff}}(\mathbf{x}) = \mathbf{R}_{\text{eff}}(\mathbf{x}) \mathbf{v}(\theta(\mathbf{x}))
\end{cases}
\end{aligned}
\end{equation}
for a rotation field $\mathbf{R}_{\text{eff}}:\Omega\to SO(3)$. Note  $\partial_{\mathbf{u}_0} (\cdot) (\mathbf{x}) := \lim_{\epsilon \rightarrow 0} \epsilon^{-1} \big[(\cdot)(\mathbf{x} + \epsilon \tilde{\mathbf{u}}_0) - (\cdot)(\mathbf{x})\big]$ and $\partial_{\mathbf{v}_0} (\cdot) (\mathbf{x}) := \lim_{\epsilon \rightarrow 0}\epsilon^{-1} \big[(\cdot)(\mathbf{x} + \epsilon \tilde{\mathbf{v}}_0) - (\cdot)(\mathbf{x})\big]$ are the directional derivatives along $\mathbf{u}_0$ and $\mathbf{v}_0$, respectively (see Fig.\;\ref{Fig:RefDef}). Applying these derivatives to  $\mathbf{R}_{\text{eff}}(\mathbf{x})$ defines two skew tensor fields parameterized as vectors $\boldsymbol{\omega}_{\mathbf{u}_0},\boldsymbol{\omega}_{\mathbf{v}_0} \colon \Omega \rightarrow \mathbb{R}^3$ that measure the relative rotations of the deformed origami cells:
\begin{equation}
\begin{aligned}\label{eq:rotDerivatives}
\partial_{\mathbf{u}_0} \mathbf{R}_{\text{eff}}(\mathbf{x}) = \mathbf{R}_{\text{eff}}(\mathbf{x}) \big(\boldsymbol{\omega}_{\mathbf{u}_0}(\mathbf{x}) \times \big), \quad \partial_{\mathbf{v}_0} \mathbf{R}_{\text{eff}}(\mathbf{x}) = \mathbf{R}_{\text{eff}}(\mathbf{x}) \big(\boldsymbol{\omega}_{\mathbf{v}_0}(\mathbf{x}) \times \big).
\end{aligned}
\end{equation} 
Compatibility links the vector fields $ \boldsymbol{\omega}_{\mathbf{u}_0}(\mathbf{x})$ and 
$\boldsymbol{\omega}_{\mathbf{v}_0} (\mathbf{x})$ to the actuation angle $\theta(\mathbf{x})$. In particular,  $\partial_{\mathbf{u}_0} \partial_{\mathbf{v}_0} \mathbf{y}_{\text{eff}}(\mathbf{x}) =\partial_{\mathbf{v}_0} \partial_{\mathbf{u}_0} \mathbf{y}_{\text{eff}}(\mathbf{x})$ implies  that 
\begin{equation}
\begin{aligned}\label{eq:CompatIntro}
 \bm{\omega}_{\mathbf{u}_0}(\mathbf{x}) \times\mathbf{v}(\theta(\mathbf{x}))+\partial_{\mathbf{u}_0}  \theta(\mathbf{x}) \mathbf{v}'(\theta(\mathbf{x}))= \bm{\omega}_{\mathbf{v}_0} (\mathbf{x})\times\mathbf{u}(\theta(\mathbf{x}))+ \partial_{\mathbf{v}_0} \theta(\mathbf{x}) \mathbf{u}'(\theta(\mathbf{x})) 
\end{aligned}
\end{equation}
and $\partial_{\mathbf{u}_0} \partial_{\mathbf{v}_0} \mathbf{R}_{\text{eff}}(\mathbf{x}) =\partial_{\mathbf{v}_0} \partial_{\mathbf{u}_0} \mathbf{R}_{\text{eff}}(\mathbf{x})$ implies that 
\begin{equation}
\begin{aligned}\label{eq:CompatFinal}
  \partial_{\mathbf{u}_0} \bm{\omega}_{\mathbf{v}_0}(\mathbf{x})-\partial_{\mathbf{v}_0} \bm{\omega}_{\mathbf{u}_0}(\mathbf{x})= \bm{\omega}_{\mathbf{v}_0}(\mathbf{x})\times \bm{\omega}_{\mathbf{u}_0}(\mathbf{x}).
\end{aligned}
\end{equation}
In summary, Eqs.\;(\ref{eq:effectiveDecomp}-\ref{eq:CompatFinal}) are implied when a deformation $\mathbf{y}_\text{eff}(\mathbf{x})$ is linked to an auxiliary field (here, $\theta(\mathbf{x})$) through a constraint on its first fundamental form (Eq.\;(\ref{eq:firstFund})). In fact, these conditions are not only necessary but are also sufficient for the existence of $\mathbf{y}_\text{eff}(\mathbf{x})$ satisfying Eq.\;(\ref{eq:firstFund}), allowing us to take $\theta(\mathbf{x})$, $\boldsymbol{\omega}_{\mathbf{u}_0}(\mathbf{x})$ and $\boldsymbol{\omega}_{\mathbf{v}_0}(\mathbf{x})$ as our basic coarse-grained fields. (This result is more or less standard in Cartan's method of moving frames; for completness, we provide a proof in Proposition \ref{firstProp} of  Section \ref{ssec:EffSurfacesSection}).

We now return to the local fitting conditions in Eq.\;(\ref{eq:micon}). Eq.\;(\ref{eq:CompatIntro}) is a continuous version of the first condition in Eq.\;(\ref{eq:micon}), under the replacement $(\delta \theta_{\mathbf{u}_0},\delta \theta_{\mathbf{v}_0}) \mapsto (\partial_{\mathbf{u}_0} \theta(\mathbf{x}),\partial_{\mathbf{v}_0} \theta(\mathbf{x}))$. This is not surprising, as the statement that neighboring cells  ``fit together'' is a discrete analog of the statement that the second partial derivatives of the effective deformation commute. 
However, there is additional rigidity present due to the use of an origami-based microstructure, and indeed the second condition in Eq.\;(\ref{eq:micon}) cannot be deduced from manipulations on the metric constraint in Eq.\;(\ref{eq:firstFund}). On coarse graining, this constraint requires that
\begin{equation}
\begin{aligned}\label{eq:localBendIdent}
 \bm{\omega}_{\mathbf{u}_0}(\mathbf{x}) \cdot\mathbf{v}'(\theta(\mathbf{x}))= \bm{\omega}_{\mathbf{v}_0} (\mathbf{x}) \cdot\mathbf{u}'(\theta(\mathbf{x}) ).
\end{aligned}
\end{equation}
To be clear, we are not the first write down such a constraint; recent literature \cite{lebee2018fitting,mcinerney2022discrete, nassar2017curvature, nassar2022strain, schenk2013geometry,wei2013geometric} has linked normal curvatures of the effective surfaces described by parallelogram origami to a Poisson-like ratio of their unit cells.  However, to our knowledge we are the first to recognize the crucial link between Eq.\;(\ref{eq:localBendIdent}) and the order of magnitude of the panel strains: given Eq.\;(\ref{eq:localBendIdent}), the panel strains in a soft mode can be made $\lesssim \ell^2$, and this smallness is necessary to achieve a bending-dominated response. See Section \ref{ssec:discussSec-modes} for further discussion of the literature, and the paragraphs following Theorem \ref{MainTheorem} for more on the panel strains. 

At this point, we have produced a full set of constraints for the fields $\theta(\mathbf{x})$, $\bm{\omega}_{\mathbf{u}_0}(\mathbf{x})$ and 
$\bm{\omega}_{\mathbf{v}_0}(\mathbf{x})$ governing the macroscopic kinematics of parallelogram origami:
\begin{equation}
\begin{aligned}\label{eq:SurfaceTheory}
&\underline{\text{Effective surface theory for parallelogram origami:}}\\
& \begin{cases}
\theta(\mathbf{x}) \in (\theta^{-} , \theta^+) \\
 \bm{\omega}_{\mathbf{u}_0}(\mathbf{x}) \cdot\mathbf{v}'(\theta(\mathbf{x}))= \bm{\omega}_{\mathbf{v}_0} (\mathbf{x}) \cdot\mathbf{u}'(\theta(\mathbf{x}) ) \\ 
 \bm{\omega}_{\mathbf{u}_0}(\mathbf{x}) \times\mathbf{v}(\theta(\mathbf{x}))+\partial_{\mathbf{u}_0}  \theta(\mathbf{x}) \mathbf{v}'(\theta(\mathbf{x}))= \bm{\omega}_{\mathbf{v}_0} (\mathbf{x})\times\mathbf{u}(\theta(\mathbf{x}))+ \partial_{\mathbf{v}_0} \theta(\mathbf{x}) \mathbf{u}'(\theta(\mathbf{x}))  \\ 
 \partial_{\mathbf{v}_0} \bm{\omega}_{\mathbf{u}_0}(\mathbf{x})-\partial_{\mathbf{u}_0} \bm{\omega}_{\mathbf{v}_0}(\mathbf{x})= \bm{\omega}_{\mathbf{u}_0}(\mathbf{x})\times \bm{\omega}_{\mathbf{v}_0}(\mathbf{x}).
\end{cases}
\end{aligned}
\end{equation}
Our main result shows that for each solution of Eq.\;(\ref{eq:SurfaceTheory}), there is a unique (up to rigid body motion) parameterized surface, along with a sequence of deformed parallelogram origami patterns converging to this surface in the limit $\ell \to 0$; in other words, we construct a general family of soft modes. We also provide a formula for the asymptotic bending energy of these soft modes. 
The precise statement is as follows. Note  $\overline{\Omega}$ is the closure of the planar reference domain $\Omega$, $\langle \mathbf{x} \rangle := \frac{1}{|\Omega|} \int_{\Omega} \mathbf{x} dA$ and $\hat{\mathbf{x}}_{i,j}^{(\ell)} := (\mathbf{x}_{i,j}^{(\ell)} \cdot \mathbf{e}_1, \mathbf{x}_{i,j}^{(\ell)} \cdot \mathbf{e}_2)$.
\begin{theorem}\label{MainTheorem}
Consider any parallelogram origami design  composed of a repeating unit cell of four parallelogram panels joined at folds, and its corresponding bar and hinge energy from Sections \ref{ssec:DesignKin} and \ref{ssec:BarHinge}. Let  $\theta(\mathbf{x})$, $\bm{\omega}_{\mathbf{u}_0} (\mathbf{x})$ and $\bm{\omega}_{\mathbf{v}_0}(\mathbf{x})$ be smoothly defined on a neighborhood of $\overline{\Omega}$,  and assume that $\theta(\mathbf{x})\in(\theta^-,\theta^+)$ on this domain.
Assume also that these fields are analytic in the case that the Poisson ratio $\nu(\theta)$ in Eq.\;(\ref{eq:PoissonsRatioDesign}) is negative.  Finally, assume that $\theta(\mathbf{x})$, $\bm{\omega}_{\mathbf{u}_0} (\mathbf{x})$ and  $\bm{\omega}_{\mathbf{v}_0}(\mathbf{x})$ solve Eq.\;(\ref{eq:SurfaceTheory}) on $\Omega$. The following statements hold:

\noindent \textbf{\emph{(Surface theory.)}} There exists a unique and smooth rotation field $\mathbf{R}_{\emph{eff}} \colon \Omega \rightarrow SO(3)$ solving 
\begin{equation}
\begin{aligned}\label{eq:firstSurface}
\partial_{\mathbf{u}_0} \mathbf{R}_{\emph{eff}}(\mathbf{x}) = \mathbf{R}_{\emph{eff}}( \mathbf{x}) \big( \bm{\omega}_{\mathbf{u}_0}(\mathbf{x}) \times \big) , \quad \partial_{\mathbf{v}_0} \mathbf{R}_{\emph{eff}}(\mathbf{x}) = \mathbf{R}_{\emph{eff}}( \mathbf{x}) \big( \bm{\omega}_{\mathbf{v}_0}(\mathbf{x}) \times \big)
\end{aligned}
\end{equation}
with $\mathbf{R}_{\emph{eff}}(\langle \mathbf{x} \rangle) = \mathbf{I}$. Also, there exists a unique and smooth effective deformation $\mathbf{y}_{\emph{eff}} \colon \Omega \rightarrow \mathbb{R}^3$ satisfying
\begin{equation}
\begin{aligned}\label{eq:secondSurface}
\nabla \mathbf{y}_{\emph{eff}}(\mathbf{x}) = \mathbf{R}_{\emph{eff}}(\mathbf{x}) \mathbf{A}_{\emph{eff}}(\theta(\mathbf{x}))
\end{aligned}
\end{equation}
and $\mathbf{y}_{\emph{eff}} ( \langle \mathbf{x} \rangle) = \mathbf{0}$.

\noindent \textbf{\emph{(Origami soft modes.)}} For $\mathbf{y}_{\emph{eff}}(\mathbf{x})$ as above and any sufficiently small $ \ell$, there is a deformation $\{ \mathbf{y}_{i,j}^{(\ell)}\}$ of the design $\{ \mathbf{x}_{i,j}^{(\ell)}\}$ such that 
\begin{equation}
\begin{aligned}\label{eq:MainEsts}
&(\text{it has negligible panel strains:}) &&  |\varepsilon^{(\ell)}_{i,j,1}|, \ldots, |\varepsilon_{i,j,4}^{(\ell)}|  = O(\ell^2),  \\
&(\text{it has slight panel bending:}) && |\psi_{i,j,1}^{(\ell)}|, |\psi_{i,j,2}^{(\ell)}| = O(\ell),  \\
&(\text{it recovers the effective deformation:}) &&  | \mathbf{y}_{i,j}^{(\ell)} -\mathbf{y}_{\emph{eff}}(\hat{\mathbf{x}}_{i,j}^{(\ell)}) |  = O(\ell)  \\
\end{aligned}
\end{equation}
for all  $(i,j) \in \mathcal{I}^{(\ell)}$, i.e., for all origami vertices. 

\noindent \textbf{\emph{(Effective plate energy.)}}  The bar and hinge energy of the above origami deformations obeys
\begin{equation}
\begin{aligned}\label{eq:MainLim}
\lim_{\ell \rightarrow 0}\, \mathcal{E}_{\emph{tot}}^{(\ell)} (  \{ \mathbf{y}_{i,j}^{(\ell)} \}) =   \int_{\Omega}  \bigg\{\begin{pmatrix} \boldsymbol{\omega}_{\mathbf{u}_0}(\mathbf{x}) \\ \partial_{\mathbf{u}_0} \theta(\mathbf{x}) \end{pmatrix} \cdot  \mathbf{K}_{\mathbf{u}_{0}}(\theta(\mathbf{x}) ) \begin{pmatrix} \boldsymbol{\omega}_{\mathbf{u}_0}(\mathbf{x}) \\ \partial_{\mathbf{u}_0} \theta(\mathbf{x}) \end{pmatrix}  +  \begin{pmatrix} \boldsymbol{\omega}_{\mathbf{v}_0}(\mathbf{x}) \\ \partial_{\mathbf{v}_0} \theta(\mathbf{x}) \end{pmatrix} \cdot  \mathbf{K}_{\mathbf{v}_{0}}(\theta(\mathbf{x}) ) \begin{pmatrix} \boldsymbol{\omega}_{\mathbf{v}_0}(\mathbf{x}) \\ \partial_{\mathbf{v}_0} \theta(\mathbf{x}) \end{pmatrix} \bigg\}\, dA 
\end{aligned}
\end{equation}
where the symmetric tensors $ \mathbf{K}_{\mathbf{u}_{0}}(\theta )$ and $\mathbf{K}_{\mathbf{v}_0}(\theta) \in \mathbb{R}^{4\times4}$ are given by
\begin{equation}
\begin{aligned}\label{eq:stiffnessTensor}
 \mathbf{K}_{\mathbf{u}_{0}}(\theta ) &:= \frac{b_1}{(V_{123}(\theta))^2}  \begin{pmatrix} \mathbf{t}_2(\theta) \times \mathbf{t}_3(\theta) \\ \mathbf{t}_3(\theta) \cdot \mathbf{t}_2'(\theta) \end{pmatrix}   \otimes \begin{pmatrix} \mathbf{t}_2(\theta) \times \mathbf{t}_3(\theta) \\ \mathbf{t}_3(\theta) \cdot \mathbf{t}_2'(\theta) \end{pmatrix}  +  \frac{b_3}{(V_{341}(\theta))^2}  \begin{pmatrix} \mathbf{t}_1(\theta) \times \mathbf{t}_4(\theta) \\ -\mathbf{t}_1(\theta) \cdot \mathbf{t}_4'(\theta) \end{pmatrix}   \otimes \begin{pmatrix} \mathbf{t}_1(\theta) \times \mathbf{t}_4(\theta) \\ -\mathbf{t}_1(\theta) \cdot \mathbf{t}_4'(\theta) \end{pmatrix}, \\
 \mathbf{K}_{\mathbf{v}_{0}}(\theta ) &:= \frac{b_2}{(V_{234}(\theta))^2}  \begin{pmatrix} \mathbf{t}_3(\theta) \times \mathbf{t}_4(\theta) \\ \mathbf{t}_3'(\theta) \cdot \mathbf{t}_4(\theta) \end{pmatrix}   \otimes \begin{pmatrix} \mathbf{t}_3(\theta) \times \mathbf{t}_4(\theta) \\ \mathbf{t}_3'(\theta) \cdot \mathbf{t}_4(\theta) \end{pmatrix} +  \frac{b_4}{(V_{412}(\theta))^2}  \begin{pmatrix} \mathbf{t}_2(\theta) \times \mathbf{t}_1(\theta) \\ -\mathbf{t}_1'(\theta) \cdot \mathbf{t}_2(\theta) \end{pmatrix}   \otimes \begin{pmatrix} \mathbf{t}_2(\theta) \times \mathbf{t}_1(\theta) \\ -\mathbf{t}_1'(\theta) \cdot \mathbf{t}_2(\theta) \end{pmatrix},
\end{aligned}
\end{equation}
with $V_{ijk}(\theta) : = \mathbf{t}_i(\theta) \cdot (\mathbf{t}_j(\theta) \times \mathbf{t}_k(\theta))$. The bending moduli $b_1,\ldots,b_4$  satisfy 
\begin{equation}
\begin{aligned}\label{eq:bendingModuli} 
b_i :=  \frac{1}{2|\mathbf{u}_0 \times \mathbf{v}_0|}   \begin{pmatrix} \frac{| (\mathbf{t}_i^r   - \mathbf{t}_j^{r})  \times ( \mathbf{t}_i^r \times \mathbf{t}_j^r)|^2}{|(\mathbf{t}_i^r   - \mathbf{t}_j^{r})||( \mathbf{t}_i^r \times \mathbf{t}_j^r)|} \\ \frac{| (\mathbf{t}_i^r   + \mathbf{t}_j^{r})  \times ( \mathbf{t}_i^r \times \mathbf{t}_j^r)|^2}{|(\mathbf{t}_i^r   + \mathbf{t}_j^{r})||( \mathbf{t}_i^r \times \mathbf{t}_j^r)|} \end{pmatrix}  \cdot DD \Phi_{\mathcal{P}_i}^{\emph{bend}}( \mathbf{0} )\begin{pmatrix} \frac{| (\mathbf{t}_i^r   - \mathbf{t}_j^{r})  \times ( \mathbf{t}_i^r \times \mathbf{t}_j^r)|^2}{|(\mathbf{t}_i^r   - \mathbf{t}_j^{r})||( \mathbf{t}_i^r \times \mathbf{t}_j^r)|} \\ \frac{| (\mathbf{t}_i^r   + \mathbf{t}_j^{r})  \times ( \mathbf{t}_i^r \times \mathbf{t}_j^r)|^2}{|(\mathbf{t}_i^r   + \mathbf{t}_j^{r})||( \mathbf{t}_i^r \times \mathbf{t}_j^r)|} \end{pmatrix} , \quad ij \in \{12, 34\},  \\
b_i :=  \frac{1}{2|\mathbf{u}_0 \times \mathbf{v}_0|}   \begin{pmatrix} \frac{| (\mathbf{t}_i^r   + \mathbf{t}_j^{r})  \times ( \mathbf{t}_i^r \times \mathbf{t}_j^r)|^2}{|(\mathbf{t}_i^r   + \mathbf{t}_j^{r})||( \mathbf{t}_i^r \times \mathbf{t}_j^r)|} \\ \frac{| (\mathbf{t}_i^r   - \mathbf{t}_j^{r})  \times ( \mathbf{t}_i^r \times \mathbf{t}_j^r)|^2}{|(\mathbf{t}_i^r   - \mathbf{t}_j^{r})||( \mathbf{t}_i^r \times \mathbf{t}_j^r)|} \end{pmatrix}  \cdot DD \Phi_{\mathcal{P}_i}^{\emph{bend}}( \mathbf{0} ) \begin{pmatrix} \frac{| (\mathbf{t}_i^r   + \mathbf{t}_j^{r})  \times ( \mathbf{t}_i^r \times \mathbf{t}_j^r)|^2}{|(\mathbf{t}_i^r   + \mathbf{t}_j^{r})||( \mathbf{t}_i^r \times \mathbf{t}_j^r)|} \\ \frac{| (\mathbf{t}_i^r   - \mathbf{t}_j^{r})  \times ( \mathbf{t}_i^r \times \mathbf{t}_j^r)|^2}{|(\mathbf{t}_i^r   - \mathbf{t}_j^{r})||( \mathbf{t}_i^r \times \mathbf{t}_j^r)|} \end{pmatrix} ,  \quad ij \in \{23, 41\}.
\end{aligned}
\end{equation}
\end{theorem}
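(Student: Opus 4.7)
My plan is to prove the theorem in the three stages matching its three parts: first the surface existence, then the construction of recovery sequences, and finally the passage to the limit in the bar-and-hinge energy.

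The surface theory part is a standard Frobenius / Poincar\'e argument within Cartan's moving-frames framework; I would defer the technical statement to a proposition like Proposition~\ref{firstProp}. Viewing Eq.~(\ref{eq:firstSurface}) as an overdetermined linear system for $\mathbf{R}_{\text{eff}}$, its integrability condition is exactly the last line of Eq.~(\ref{eq:SurfaceTheory}), so on the simply connected $\Omega$ a unique smooth $\mathbf{R}_{\text{eff}}$ with $\mathbf{R}_{\text{eff}}(\langle\mathbf{x}\rangle)=\mathbf{I}$ exists. Integrating the one-form $\mathbf{R}_{\text{eff}}(\mathbf{x})\mathbf{A}_{\text{eff}}(\theta(\mathbf{x}))$, whose exactness amounts to the third line of Eq.~(\ref{eq:SurfaceTheory}), then yields the unique $\mathbf{y}_{\text{eff}}$ with $\mathbf{y}_{\text{eff}}(\langle\mathbf{x}\rangle)=\mathbf{0}$.

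The construction of the soft modes is the heart of the theorem, and where I expect the main obstacle to lie. Cell by cell, at each reference position $\ell(i\mathbf{u}_0+j\mathbf{v}_0)$ I would place a rigidly rotated, slightly bent unit cell: actuate by $\theta(\ell(i\mathbf{u}_0+j\mathbf{v}_0))$, rotate by $\mathbf{R}_{\text{eff}}(\ell(i\mathbf{u}_0+j\mathbf{v}_0))$, apply panel-diagonal bends $\ell\kappa_{i,j,k}^{(\ell)}$ determined from Proposition~\ref{LocalBendProp} with $(\delta\theta_{\mathbf{u}_0},\delta\theta_{\mathbf{v}_0})$ and $(\boldsymbol{\omega}_{\mathbf{u}_0},\boldsymbol{\omega}_{\mathbf{v}_0})$ replaced by their continuum values, and translate so one distinguished vertex sits at $\mathbf{y}_{\text{eff}}(\ell(i\mathbf{u}_0+j\mathbf{v}_0))$. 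The delicate point is that a naive cell placement matches neighboring cells only to $O(\ell^2)$ per shared edge, which yields panel strains of order $\ell$ and a blowing-up stretching energy $\sim \ell^{\alpha_s+2}$. To push the strains down to the required $O(\ell^2)$ I would add an $O(\ell)$ corrector to $(\theta,\boldsymbol{\omega}_{\mathbf{u}_0},\boldsymbol{\omega}_{\mathbf{v}_0})$ solving a linearization of Eq.~(\ref{eq:SurfaceTheory}) around the given continuum fields. This is where the sign of $\nu(\theta)$ enters, via Eq.~(\ref{eq:localBendIdent}): the principal symbol of the corrector system changes type with $\mathrm{sgn}\,\nu$, elliptic-like for $\nu>0$ (smooth data suffices) and hyperbolic of Cauchy--Kovalevskaya type for $\nu<0$ (analytic data required), consistent with the theorem's hypotheses. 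The existence of the corrector is the content of Appendix~\ref{sec:ExistencePDE}. Granted it, a Taylor expansion in $\ell$ of the involved fields, combined with Proposition~\ref{LocalBendProp}, verifies the three estimates in Eq.~(\ref{eq:MainEsts}) directly.

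For the effective plate energy, the stretching and folding terms vanish: by Eq.~(\ref{eq:softScaling}) and Eq.~(\ref{eq:alphaAssumptions}) they are $O(\ell^{\alpha_s+4})$ and $O(\ell^{\alpha_f})$, both tending to zero. The bending energy survives at order one. Substituting $\psi_{i,j,k}^{(\ell)}=\ell\kappa_{i,j,k}^{(\ell)}+o(\ell)$ into $\Phi^{\text{bend}}_{i,j}$, Taylor-expanding around the origin (where $\Phi^{\text{bend}}_{i,j}$ and $D\Phi^{\text{bend}}_{i,j}$ vanish), and using Proposition~\ref{LocalBendProp} to express each $\kappa_{i,j,k}^{(\ell)}$ as an explicit linear functional of either $(\boldsymbol{\omega}_{\mathbf{u}_0},\partial_{\mathbf{u}_0}\theta)$ or $(\boldsymbol{\omega}_{\mathbf{v}_0},\partial_{\mathbf{v}_0}\theta)$, the four per-panel contributions regroup by lattice direction into the $\mathbf{K}_{\mathbf{u}_0}$ and $\mathbf{K}_{\mathbf{v}_0}$ quadratic forms of Eq.~(\ref{eq:MainLim}). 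The rank-two structure in Eq.~(\ref{eq:stiffnessTensor}) is inherited from Cramer-type solutions of the local fitting constraints, and the moduli in Eq.~(\ref{eq:bendingModuli}) from contracting the panel Hessian $DD\Phi^{\text{bend}}_{\mathcal{P}_i}(\mathbf{0})$ against the two diagonal-fold vectors of each reference panel. A Riemann sum based on uniform smoothness of the continuum fields completes the passage to the integral.
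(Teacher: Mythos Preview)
Your three-stage plan matches the paper's architecture, and the surface-theory and energy-limit parts are essentially right. But your description of the corrector step contains a genuine conceptual error that would derail the construction.

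You have the PDE types backwards. After a Helmholtz--Hodge reduction (Appendix~\ref{sec:ExistencePDE}), the auxiliary system reduces to a scalar second-order operator with principal part $-\partial_1^2 - \sigma(\theta)\partial_2^2$, where $\sigma(\theta)=\big(\mathbf{u}(\theta)\cdot\mathbf{u}'(\theta)\big)/\big(\mathbf{v}(\theta)\cdot\mathbf{v}'(\theta)\big)$ has sign \emph{opposite} to $\nu(\theta)$. Hence $\nu<0$ (auxetic, Miura-like) is the \emph{elliptic} case and $\nu>0$ (Eggbox-like) is the \emph{hyperbolic} one. More importantly, analyticity is not needed for Cauchy--Kovalevskaya on a hyperbolic problem: the hyperbolic case is posed as an initial/boundary value problem and solved by standard Galerkin energy methods with merely smooth data. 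Analyticity enters in the elliptic case, where the Dirichlet problem is solved via the Fredholm alternative and one must choose boundary data orthogonal to the adjoint kernel; the paper uses Holmgren's unique continuation theorem (which requires analytic coefficients) to show the relevant boundary traces are linearly independent. Your proposal would have you invoking Cauchy--Kovalevskaya on the wrong case.

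A second, smaller imprecision: the corrector is not an $O(\ell)$ perturbation of $(\theta,\boldsymbol{\omega}_{\mathbf{u}_0},\boldsymbol{\omega}_{\mathbf{v}_0})$ solving a linearization of Eq.\;(\ref{eq:SurfaceTheory}). The paper enriches the cell-wise ansatz by three auxiliary fields $(\mathbf{d}(\mathbf{x}),\boldsymbol{\omega}(\mathbf{x}),\xi(\mathbf{x}))$, a cell-level translation, infinitesimal rotation, and mechanism perturbation, entering as $\mathbf{y}_{\text{eff}}(\bar{\mathbf{x}})+\ell\mathbf{d}(\bar{\mathbf{x}})$ and $\mathbf{R}_{\text{eff}}(\bar{\mathbf{x}})\big(\mathbf{I}+\ell(\boldsymbol{\omega}(\bar{\mathbf{x}})\times)\big)\mathbf{t}_i\big(\theta(\bar{\mathbf{x}})+\ell\xi(\bar{\mathbf{x}})\big)$. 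The PDE for $(\boldsymbol{\omega},\xi)$ in Eq.\;(\ref{eq:theAuxPDE0}) is obtained by computing the intercell gaps at $O(\ell^2)$ and setting them to zero (Lemmas~\ref{LemmaCell1}--\ref{LemmaCell3}); $\mathbf{d}$ is then recovered by a Poincar\'e-lemma integration. This is not the linearized surface theory, though its principal symbol inherits the same $\sigma(\theta)$.
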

\begin{remark}
The energy densities  $\Phi_{\mathcal{P}_i}^{\emph{bend}}(\psi_1, \psi_2)$, $i=1,\dots,4$,  above are the four periodically repeating bending strain energies from Eq.\;(\ref{eq:eBend}), labeled by the panel variant in Eq.\;(\ref{eq:OmegaCell}). $DD \Phi_{\mathcal{P}_i}^{\emph{bend}}(\mathbf{0})\in \mathbb{R}^{2\times2}$ are their Hessian matrices evaluated at $(\psi_1,\psi_2)=\mathbf{0}$. 
\end{remark}
\begin{remark}\label{IntervalRemark}
Per the assumptions, there is an $\epsilon_\theta > 0$ such that $\theta(\mathbf{x}) \in [\theta^{-} + \epsilon_{\theta}, \theta^+ - \epsilon_{\theta}]$ on $\overline{\Omega}$. In other words, the actuation is bounded away from degenerate states where the origami contains unfolded or fully-folded creases. 
\end{remark}
\begin{remark}
In the theorem, the folding energy becomes negligible in the limit because we assumed a bending dominated scaling through the prescription $\alpha_f >0, \alpha_b = -2, \alpha_s \in (-4,-2)$ (recall Eqs.\;(\ref{eq:scalingAssumptions}-\ref{eq:softScaling})). If instead $\alpha_f= 0$, the limit energy becomes the sum of the bending energy in Eq.\;(\ref{eq:MainLim}) and a folding energy of the form $\int_{\Omega} W_{\emph{fold}}(\theta(\mathbf{x})) \,dA$ for an energy density $W_{\emph{fold}}(\theta) \geq 0$ that vanishes if and only if $\theta =0$. 
\end{remark}


The proof of Theorem \ref{MainTheorem} is given in Sections \ref{sec:LinAlgebraSec}-\ref{sec:DerivationSec}, with some technical details in Appendix \ref{sec:MechProofs} and \ref{sec:ExistencePDE}. Given a solution of the proposed effective surface theory in Eq.\;(\ref{eq:SurfaceTheory}) and its corresponding effective deformation, our task is to construct a locally mechanistic ansatz for the deformation of the origami, and to ensure it has negligible stretching energy and known bending energy. Writing an asymptotic expansion for the panel strains $\varepsilon_{i,j,k}^{(\ell)} = \varepsilon_{i,j,k}^{(0)} + \ell \varepsilon_{i,j,k}^{(1)} + \ell^2 \varepsilon_{i,j,k}^{(2)} +O(\ell^3)$ with $(i,j)$ indexing the panel and $k$ indexing one of four panel strains, the key is to find a way to set the first two terms in the expansion to zero: $\varepsilon_{i,j,k}^{(0)} = \varepsilon_{i,j,k}^{(1)} = 0$. To understand why, consider each panel as a plate of thickness $h \ll \ell$ and modulus $E$. The  elastic energy per panel is of the form $\mathcal{E}_{\text{plate}} \sim E   \int_{\mathcal{P}^{(\ell)}}(  h \varepsilon^2 + h^3 \kappa^2 ) dA$, where $\mathcal{P}^{(\ell)}$ is the panel and $\varepsilon$ and $\kappa$ denote stretching and curvature. Since we are after a bending-dominated limit, we take $\kappa \sim \psi_{i,j,k}^{(\ell)}/ \ell \sim 1$. If we can achieve $\varepsilon \sim \varepsilon_{i,j,k}^{(\ell)} \sim \ell^2$, it follows that  $\mathcal{E}_{\text{plate}} \sim E h \ell^6 + E h^3 \ell^2$. Bending ($Eh^3 \ell^2$) dominates over stretching ($E h \ell^6$) when $\ell^2 \ll h$, a situation that is feasible as $h \ll \ell$. (Presumably, the alternative scaling $\varepsilon \sim \ell$ leads to an effective membrane theory, but deriving such a theory is not the subject of this paper.)

Actually achieving  $\varepsilon_{i,j,k}^{(0)} = \varepsilon_{i,j,k}^{(1)} = 0$  is a delicate matter. The construction divides into three main steps. First, we deform the origami using a cell-wise sampling of the effective fields $\mathbf{y}_{\text{eff}}(\mathbf{x})$, $\mathbf{R}_{\text{eff}}(\mathbf{x})$, $\theta(\mathbf{x})$, $\boldsymbol{\omega}_{\mathbf{u}_0}(\mathbf{x})$ and $\boldsymbol{\omega}_{\mathbf{v}_0}(\mathbf{x})$. In this first step, the zeroth order strains $\varepsilon_{i,j,k}^{(0)}$ vanish  due to the metric constraint in Eq.\;(\ref{eq:secondSurface}). However, the strains at first order $\varepsilon_{i,j,k}^{(1)}$ fail to vanish, leading us to enrich the ansatz through local perturbations. We introduce two auxiliary fields --- a vector field $\boldsymbol{\omega}(\mathbf{x})$ and an angle field $\xi(\mathbf{x})$ --- corresponding to linearized rotations and linearized mechanisms of the cells. Leveraging the remaining constraints in Eqs.\;(\ref{eq:SurfaceTheory}-\ref{eq:secondSurface}), we show that the enriched construction achieves $\varepsilon_{i,j,k}^{(1)} = 0$ if and only if the auxiliary fields solve a linear PDE:
\begin{equation}
\begin{aligned}\label{eq:LPDEIntro}
\mathcal{L}(\mathbf{x}) \begin{pmatrix} \boldsymbol{\omega}(\mathbf{x}) \\ \xi(\mathbf{x})   \end{pmatrix} = \mathbf{f}(\mathbf{x}).
\end{aligned}
\end{equation}
The operator $\mathcal{L}(\mathbf{x})$  and right-hand side  $\mathbf{f}(\mathbf{x}) \in \mathbb{R}^3$ depend smoothly and explicitly on the fields $\mathbf{R}_{\text{eff}}(\mathbf{x})$, $\theta(\mathbf{x})$, $\boldsymbol{\omega}_{\mathbf{u}_0}(\mathbf{x})$ and $\boldsymbol{\omega}_{\mathbf{v}_0}(\mathbf{x})$, as well as on the design parameters of the origami unit cell. In the final step of our construction, we prove that there is always a solution to Eq.\;(\ref{eq:LPDEIntro}), and thus a  choice of  $\boldsymbol{\omega}(\mathbf{x})$ and $\xi(\mathbf{x})$ that eliminates the first order strains. This yields the desired origami soft modes, along with the plate energy in the theorem above.  


\section{Rigid and bending kinematics of parallelogram origami}\label{sec:LinAlgebraSec}

We begin our analysis by characterizing the full mechanism kinematics of an arbitrary parallelogram origami design. Then, we solve the local fitting problem for slightly bent origami unit cells.

\subsection{Mechanism kinematics}\label{ssec:MechKin}

\begin{figure}[t!]
\centering
\includegraphics[width=1\textwidth]{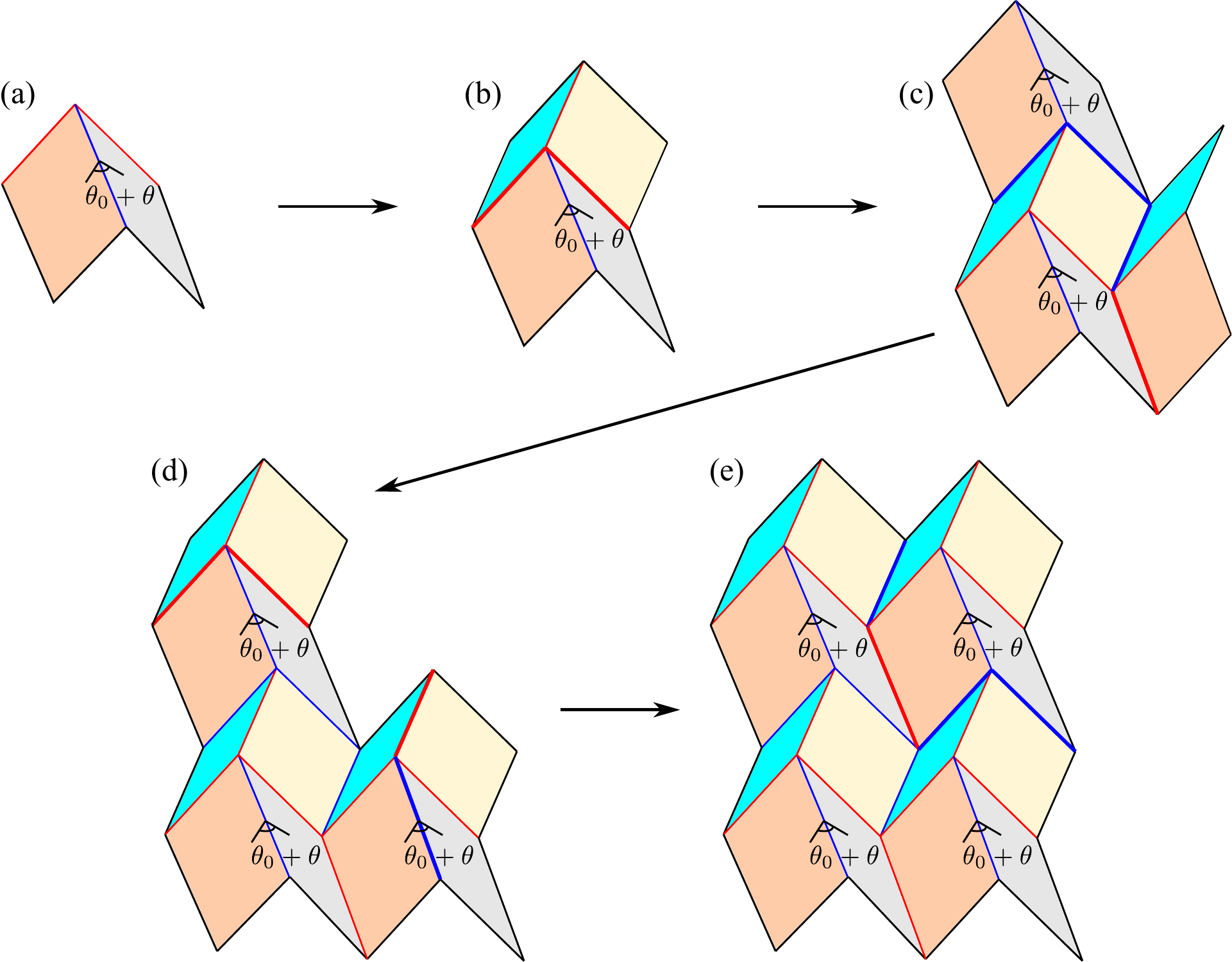} 
\caption{Uniqueness of the mechanism motion. (a-b) Actuating a single crease  determines the actuation of the opposite crease.  (c-d) This argument is repeated to construct the  motion of the adjacent cells. (e) The mechanism motion of the fourth cell is compatible with its neighbors.  }
\label{Fig:Marching}
\end{figure}

Fix a set of four design vectors $\mathbf{t}_{1}^r, \ldots, \mathbf{t}_4^{r}$ that satisfy Eq.\;(\ref{eq:tangentsConstraints}-\ref{eq:selfConstraint}) and form the creases of the reference cell, as in Fig.\;\ref{Fig:idepat}(a). The cell itself is the union of  four parallelogram panels $\Omega_{\text{cell}} = \mathcal{P}_1 \cup \ldots \cup \mathcal{P}_4$ that depend on these design vectors through Eq.\;(\ref{eq:OmegaCell}). To deform the cell rigidly, we  deform the crease vectors $\mathbf{t}_{1}^r, \ldots, \mathbf{t}_4^{r}$ to  $\mathbf{t}_{1}^d, \ldots, \mathbf{t}_4^{d}$ in such a way that
\begin{equation}
\begin{aligned}\label{eq:compatCreasesZero}
&(\text{the lengths of the creases are preserved:}) &&  |\mathbf{t}_i^d| = |\mathbf{t}_i^r|, \quad i = 1,\ldots,4 ,\\
&(\text{the angles between adjacent creases are preserved:}) &&  \mathbf{t}_i^d \cdot \mathbf{t}_{j}^d = \mathbf{t}_i^r \cdot \mathbf{t}_{j}^r, \quad ij \in \{12,23,34,41\},  \\ 
&(\text{the mountain-valley assignment is preserved:}) &&  [\mathbf{t}_i^d \cdot (\mathbf{t}_{j}^d \times \mathbf{t}_{k}^d)] [\mathbf{t}_i^r \cdot (\mathbf{t}_{j}^r \times \mathbf{t}_{k}^r)] >0, \\
& && \quad ijk \in \{123,234,341,412\} .
\end{aligned}
\end{equation}
The collection of all deformed creases 
\begin{equation}
\begin{aligned}
\mathcal{M} :=  \big\{ (\mathbf{t}_1^d, \mathbf{t}_2^d, \mathbf{t}_3^d, \mathbf{t}_4^d) \in \mathbb{R}^3 \times \mathbb{R}^3 \times \mathbb{R}^3 \times \mathbb{R}^3 \text{ subject to  Eq.\;(\ref{eq:compatCreasesZero})} \big\} 
\end{aligned}
\end{equation}
describes the mechanism kinematics of the cell consistent with the mountain-valley assignment of its reference configuration.  

A tedious but straightforward application of the implicit function theorem proves that $\mathcal{M}$ is parameterized by an analytic single DOF motion, up to an overall rigid motion. See Appendix \ref{ssec:MechProofs1} for the details.
\begin{proposition}\label{mechKinCellProp}
There is an interval $(\theta^-, \theta^+)$ containing zero and analytic functions $\mathbf{t}_i \colon (\theta^-, \theta^+) \rightarrow \mathbb{R}^3$, $i = 1,\ldots, 4$ that parameterize the mechanism set as 
\begin{equation}
\begin{aligned}\label{eq:Mmanifold}
\mathcal{M} = \big\{(  \mathbf{R} \mathbf{t}_1(\theta), \mathbf{R} \mathbf{t}_2(\theta),  \mathbf{R}\mathbf{t}_3(\theta),  \mathbf{R}\mathbf{t}_4(\theta)) \colon \mathbf{R} \in SO(3), \theta \in (\theta^{-}, \theta^+) \}.
\end{aligned}
\end{equation}
These functions also satisfy  the initial conditions $\mathbf{t}_i(0) = \mathbf{t}_i^r,$  $i = 1,\ldots, 4$, and the planarity conditions
\begin{equation}
\begin{aligned}
&\mathbf{e}_3 \cdot (\mathbf{t}_1(\theta)  - \mathbf{t}_3(\theta)) = 0, \quad  \mathbf{e}_3 \cdot (\mathbf{t}_2(\theta) - \mathbf{t}_4(\theta)) = 0, \quad \mathbf{e}_3 \cdot (  (\mathbf{t}_1(\theta)  - \mathbf{t}_3(\theta)) \times (\mathbf{t}_2(\theta)  - \mathbf{t}_4(\theta)) > 0
\end{aligned}
\end{equation}
for all $\theta \in (\theta^{-}, \theta^+)$. 
\end{proposition}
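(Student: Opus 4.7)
The plan is to apply the analytic implicit function theorem (IFT) to the constraint system in Eq.\;(\ref{eq:compatCreasesZero}) after using the $SO(3)$ freedom to fix a canonical gauge. A convenient gauge is to require the planarity equalities $\mathbf{e}_3 \cdot (\mathbf{t}_1 - \mathbf{t}_3) = 0$ and $\mathbf{e}_3 \cdot (\mathbf{t}_2 - \mathbf{t}_4) = 0$, together with one additional condition (say, fixing the azimuthal orientation of $\mathbf{t}_1$ within the $\mathbf{e}_1,\mathbf{e}_2$-plane) to eliminate the residual $SO(2)$ rotation about $\mathbf{e}_3$. Any element of $\mathcal{M}$ can be brought into this gauge by applying a suitable rotation, so it is enough to parameterize the gauge-fixed configurations and then sweep out the full set $\mathcal{M}$ by acting with $SO(3)$.

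After gauge fixing, there are $12 - 3 = 9$ free real parameters subject to the $8$ polynomial constraints consisting of the four length equalities $|\mathbf{t}_i|^2 = |\mathbf{t}_i^r|^2$ and the four adjacent-angle equalities $\mathbf{t}_i \cdot \mathbf{t}_j = \mathbf{t}_i^r \cdot \mathbf{t}_j^r$. The core of the proof is to verify that the Jacobian $DF$ of these $8$ constraints has rank $8$ at the reference $(\mathbf{t}_1^r,\ldots,\mathbf{t}_4^r)$, so that the solution set is locally a smooth analytic $1$-parameter family. This is where the four triple-product inequalities in Eq.\;(\ref{eq:tangentsConstraints}) play their role: each condition $\mathbf{t}_i^r \cdot (\mathbf{t}_j^r \times \mathbf{t}_k^r) \neq 0$ encodes non-degeneracy of the ``spherical triangle'' formed by a triple of adjacent tangents, which I expect will supply precisely the non-vanishing minors required to rule out linear dependence among the gradients of the eight constraints. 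Once the rank condition is secured, the analytic IFT yields unique analytic functions $\mathbf{t}_i(\theta)$ on some open interval around $\theta = 0$ with $\mathbf{t}_i(0) = \mathbf{t}_i^r$, where $\theta$ can be taken as the remaining coordinate, measuring the increment of a specific dihedral angle (e.g., at the $\mathbf{t}_4^r$-crease).

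The local curve then extends to a maximal open interval $(\theta^-, \theta^+)$ by analytic continuation. The continuation persists as long as the Jacobian maintains full rank and the mountain-valley inequalities in Eq.\;(\ref{eq:compatCreasesZero}) hold; since these inequalities are strict at $\theta = 0$, continuity of $\mathbf{t}_i(\theta)$ guarantees persistence on a neighborhood, and the endpoints $\theta^\pm$ are defined by the first value at which one of the triple-products $\mathbf{t}_i(\theta) \cdot (\mathbf{t}_j(\theta) \times \mathbf{t}_k(\theta))$ vanishes. The planarity equalities hold for all $\theta$ by construction of the gauge, while the orientation inequality $\mathbf{e}_3 \cdot (\mathbf{u}(\theta) \times \mathbf{v}(\theta)) > 0$ holds at $\theta = 0$ by Eq.\;(\ref{eq:bravaisConstraints}) and persists by continuity. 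Finally, to obtain the parameterization Eq.\;(\ref{eq:Mmanifold}), I would show that every element of $\mathcal{M}$ arises as $\mathbf{R} \cdot (\mathbf{t}_1(\theta),\ldots,\mathbf{t}_4(\theta))$ for some $\mathbf{R} \in SO(3)$ and $\theta \in (\theta^-, \theta^+)$: rotating the element into the canonical gauge and invoking the uniqueness of the IFT solution along the canonical curve forces it to coincide with $\mathbf{t}_i(\theta)$ for a unique $\theta$.

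The main obstacle is the rank-$8$ calculation for the Jacobian. Its $8 \times 9$ entries are built from dots and crosses of the reference tangents, and a direct minor expansion does not obviously factor through the four triple-product conditions. The expected route is to choose coordinates adapted to the four corner ``spherical triangles'' of the cell, so that each triple-product condition is tied to an independent Jacobian minor; alternatively, one can pair the constraints with a dual basis of infinitesimal deformation modes and apply Cramer's rule, exhibiting the four triple-products as the principal nonzero factors of the relevant determinants. Once this linear-algebra identity is unpacked, the remainder of the argument is formal.
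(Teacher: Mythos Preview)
Your approach is viable in principle but takes a harder route than the paper's, and the difficulty you flag---the rank-$8$ Jacobian computation---is precisely what the paper sidesteps with a cleverer gauge choice.

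Instead of imposing the planarity equalities as gauge conditions and then confronting eight constraints in nine unknowns, the paper fixes the two \emph{adjacent} crease vectors $\mathbf{t}_1^d = \mathbf{t}_1^r$ and $\mathbf{t}_4^d = \mathbf{t}_4^r$ that share the panel $\mathcal{P}_4$. This kills the full $SO(3)$ freedom at once and, crucially, already satisfies seven of the eight rigidity constraints: the four lengths are preserved because the remaining creases are written as $\mathbf{t}_2^d(\gamma) = \mathbf{R}_{\mathbf{t}_1^r}(\gamma)\mathbf{t}_2^r$ and $\mathbf{t}_3^d(\theta) = \mathbf{R}_{\mathbf{t}_4^r}(\theta)\mathbf{t}_3^r$, and three of the four adjacent-angle constraints ($\mathbf{t}_1\cdot\mathbf{t}_2$, $\mathbf{t}_3\cdot\mathbf{t}_4$, $\mathbf{t}_4\cdot\mathbf{t}_1$) are automatic since each such pair is rotated rigidly together. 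Only the single scalar equation $f_{\text{comp}}(\gamma,\theta) := \mathbf{t}_2^d(\gamma)\cdot\mathbf{t}_3^d(\theta) - \mathbf{t}_2^r\cdot\mathbf{t}_3^r = 0$ remains, in two unknowns $(\gamma,\theta)$. The analytic IFT applies directly: $\partial_\gamma f_{\text{comp}}(0,0) = |\mathbf{t}_1^r|^{-1}\,(\mathbf{t}_1^r\times\mathbf{t}_2^r)\cdot\mathbf{t}_3^r \neq 0$ is exactly one of the hypotheses in Eq.\;(\ref{eq:tangentsConstraints}), yielding an analytic $\gamma(\theta)$. The planarity conditions are imposed \emph{afterward} by a post-hoc analytic rotation $\mathbf{R}_0(\theta)$ of this wedge-gauge solution.

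So your $8\times 9$ rank obstacle is replaced by a single partial derivative that is nonzero by assumption, and only one of the four triple products is needed for the IFT hypothesis (the others control the mountain-valley inequalities and the extent of the maximal interval). Your outline is not wrong, but the rank-$8$ verification you identify as the ``main obstacle'' is left undone, and carrying it out would effectively amount to rediscovering this reduction in a less transparent form; it is far simpler to change gauge at the outset.
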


In the course of the proof, we establish an important property of the Bravais lattice vectors $\mathbf{u}(\theta) :=  \mathbf{t}_1(\theta) - \mathbf{t}_3(\theta)$ and $\mathbf{v}(\theta) := \mathbf{t}_2(\theta) -\mathbf{t}_4(\theta)$ introduced in Section \ref{ssec:DesignKin}, namely, that $|\mathbf{u}(\theta)|$ and $|\mathbf{v}(\theta)|$ depend in a strictly monotonic fashion on $\theta \in (\theta^{-}, \theta^+)$. We use this later in the coarse-graining argument.
\begin{lemma}\label{SignsLemma}
$\mathbf{u}'(\theta) \cdot \mathbf{u}(\theta)$ and $\mathbf{v}'(\theta) \cdot \mathbf{v}(\theta)$ remain either strictly negative or strictly positive on $(\theta^{-}, \theta^+)$.  Hence, $|\mathbf{u}(\theta)|$ and $|\mathbf{v}(\theta)|$ are strictly monotonic on $(\theta^{-}, \theta^+)$.
\end{lemma}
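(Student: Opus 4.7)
The plan is to reduce strict monotonicity of $|\mathbf{u}(\theta)|^2$ and $|\mathbf{v}(\theta)|^2$ to strict monotonicity of the two ``diagonal'' inner products $\mathbf{t}_1\cdot\mathbf{t}_3$ and $\mathbf{t}_2\cdot\mathbf{t}_4$, and then to control these via the spherical law of cosines on the spherical $4$-gon traced by the unit crease vectors $\hat{\mathbf{t}}_i(\theta):=\mathbf{t}_i(\theta)/|\mathbf{t}_i^r|$.

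First I would note that length preservation along a mechanism gives $|\mathbf{u}(\theta)|^2=|\mathbf{t}_1^r|^2+|\mathbf{t}_3^r|^2-2\,\mathbf{t}_1(\theta)\cdot\mathbf{t}_3(\theta)$, so
\begin{equation*}
\mathbf{u}'(\theta)\cdot\mathbf{u}(\theta)=-\tfrac{d}{d\theta}\bigl(\mathbf{t}_1(\theta)\cdot\mathbf{t}_3(\theta)\bigr),
\qquad
\mathbf{v}'(\theta)\cdot\mathbf{v}(\theta)=-\tfrac{d}{d\theta}\bigl(\mathbf{t}_2(\theta)\cdot\mathbf{t}_4(\theta)\bigr).
\end{equation*}
Since the $\mathbf{t}_i(\theta)$ are analytic on $(\theta^-,\theta^+)$ by Proposition~\ref{mechKinCellProp}, both right-hand sides are analytic; it then suffices to show they never vanish, since constant sign will follow by continuity and strict monotonicity of $|\mathbf{u}|$ and $|\mathbf{v}|$ will follow from positivity of these functions (guaranteed by the planarity and handedness conditions on $\mathbf{u}(\theta),\mathbf{v}(\theta)$).

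To rule out zeros, let $\alpha_{ij}:=\arccos\bigl(\mathbf{t}_i^r\cdot\mathbf{t}_j^r/(|\mathbf{t}_i^r||\mathbf{t}_j^r|)\bigr)$ for $ij\in\{12,23,34,41\}$ be the sector angles (fixed by Eq.\;(\ref{eq:compatCreasesZero})), and let $\rho_j(\theta)$ denote the cell's dihedral angle at the $\mathbf{t}_j$-crease, with $\rho_4(\theta)=\theta_0+\theta$ by the choice of actuation parameter. Triangulating the spherical $4$-gon $\hat{\mathbf{t}}_1\hat{\mathbf{t}}_2\hat{\mathbf{t}}_3\hat{\mathbf{t}}_4$ along the $\hat{\mathbf{t}}_1\hat{\mathbf{t}}_3$ diagonal and applying the spherical law of cosines to each half gives
\begin{equation*}
\hat{\mathbf{t}}_1(\theta)\cdot\hat{\mathbf{t}}_3(\theta)
=\cos\alpha_{12}\cos\alpha_{23}+\sin\alpha_{12}\sin\alpha_{23}\cos\rho_2(\theta)
=\cos\alpha_{34}\cos\alpha_{41}+\sin\alpha_{34}\sin\alpha_{41}\cos\rho_4(\theta),
\end{equation*}
so $\mathbf{t}_1\cdot\mathbf{t}_3=|\mathbf{t}_1^r||\mathbf{t}_3^r|\,\hat{\mathbf{t}}_1\cdot\hat{\mathbf{t}}_3$ is affine in both $\cos\rho_2$ and $\cos\rho_4$. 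Differentiating the second equality and using $\rho_4'\equiv 1$ yields
\begin{equation*}
\sin\alpha_{12}\sin\alpha_{23}\,\sin\rho_2(\theta)\,\rho_2'(\theta)=\sin\alpha_{34}\sin\alpha_{41}\,\sin(\theta_0+\theta).
\end{equation*}
The nondegeneracy~(\ref{eq:tangentsConstraints}) forces $\sin\alpha_{ij}\neq 0$ for every adjacent pair, while the MV-preservation~(\ref{eq:MVPreserve}) — expressing that each signed triple product $\mathbf{t}_i\cdot(\mathbf{t}_j\times\mathbf{t}_k)$ keeps its reference sign — pins each $\rho_j(\theta)$ strictly inside $(0,\pi)\cup(\pi,2\pi)$ on $(\theta^-,\theta^+)$, making $\sin\rho_2$ and $\sin\rho_4=\sin(\theta_0+\theta)$ nonzero throughout. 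Hence $\rho_2'(\theta)\neq 0$, and consequently $\tfrac{d}{d\theta}(\mathbf{t}_1\cdot\mathbf{t}_3)=-|\mathbf{t}_1^r||\mathbf{t}_3^r|\sin\alpha_{12}\sin\alpha_{23}\sin\rho_2\,\rho_2'\neq 0$ on the whole interval. The identical argument along the $\hat{\mathbf{t}}_2\hat{\mathbf{t}}_4$ diagonal, linking $\mathbf{t}_2\cdot\mathbf{t}_4$ to $\cos\rho_3$ (or $\cos\rho_1$), handles $|\mathbf{v}|$.

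The main obstacle I anticipate is the bookkeeping step: justifying that the paper-cell dihedral at crease $\mathbf{t}_j$ really does equal the spherical vertex angle of the $4$-gon at $\hat{\mathbf{t}}_j$, and that~(\ref{eq:MVPreserve}) translates cleanly into $\rho_j(\theta)\notin\{0,\pi\}$ throughout $(\theta^-,\theta^+)$ rather than into some weaker separation. Once these identifications are in hand, every remaining step is a single chain-rule application to the spherical law of cosines, which gives the lemma.
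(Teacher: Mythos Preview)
Your reduction to $\tfrac{d}{d\theta}(\mathbf{t}_1\cdot\mathbf{t}_3)$ and $\tfrac{d}{d\theta}(\mathbf{t}_2\cdot\mathbf{t}_4)$ matches the paper's first step, and for $|\mathbf{u}|$ your spherical law of cosines is a repackaging of the paper's direct computation in the frame of Appendix~\ref{ssec:MechProofs1}: there, with $\mathbf{t}_1,\mathbf{t}_4$ held fixed and $\mathbf{t}_3=\mathbf{R}_{\mathbf{t}_4^r}(\theta)\mathbf{t}_3^r$, one line gives $\mathbf{u}'\cdot\mathbf{u}=-|\mathbf{t}_4^r|^{-1}\,\mathbf{t}_1\cdot(\mathbf{t}_4\times\mathbf{t}_3)$, which is nonzero by~(\ref{eq:MVPreserve}); your factor $\sin\alpha_{34}\sin\alpha_{41}\sin\rho_4$ is exactly this triple product after normalization. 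The bookkeeping you flag is routine --- the fold dihedral at $\mathbf{t}_j$ \emph{is} the spherical vertex angle at $\hat{\mathbf{t}}_j$, and each condition in~(\ref{eq:MVPreserve}) is equivalent to $\sin\rho_j\neq0$ at the corresponding crease.

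The $|\mathbf{v}|$ step, however, is not ``identical.'' Along the $\hat{\mathbf{t}}_2\hat{\mathbf{t}}_4$ diagonal the two law-of-cosines expressions involve $\rho_1$ and $\rho_3$, \emph{neither} of which is the actuation parameter; differentiating their equality yields only $\sin\alpha_{23}\sin\alpha_{34}\sin\rho_3\,\rho_3'=\sin\alpha_{41}\sin\alpha_{12}\sin\rho_1\,\rho_1'$, which ties the two derivatives together without showing either is nonzero. The missing ingredient is precisely the nonvanishing of $\gamma'(\theta)$ from Lemma~\ref{ImplicitLemma}, Eq.~(\ref{eq:gammaPrime}): in that frame $\gamma$ is, up to sign, the change in dihedral at the $\mathbf{t}_1$ crease, so $\rho_1'=\pm\gamma'$, and~(\ref{eq:gammaPrime}) exhibits $\gamma'$ as a ratio of two triple products that are nonzero by~(\ref{eq:MVPreserve}). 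The paper's own proof uses this directly, writing $\mathbf{v}'\cdot\mathbf{v}=-\gamma'(\theta)\,|\mathbf{t}_1^r|^{-1}\,\mathbf{t}_4\cdot(\mathbf{t}_1\times\mathbf{t}_2)$ and invoking~(\ref{eq:gammaPrime}) and~(\ref{eq:MVPreserve}). You can close your gap the same way; the second diagonal relation alone does not suffice.
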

\begin{remark}
The signs of $\mathbf{u}(\theta) \cdot \mathbf{u}'(\theta)$ and $\mathbf{v}'(\theta) \cdot \mathbf{v}'(\theta)$ need not agree. 
\end{remark}
\noindent See Appendix \ref{ssec:MechProofs1} for the proof.

We now construct the mechanism motion of a given parallelogram origami pattern under the assumption that the motion contains the reference pattern and does not change the mountain-valley assignment. Start with the fact that the cell's mechanism motion is given by deforming the creases via $(\mathbf{t}_1^r, \ldots, \mathbf{t}_4^r) \mapsto (\mathbf{t}_1(\theta), \ldots, \mathbf{t}_4(\theta))$ from Proposition \ref{mechKinCellProp}.   Observe that there are unique  rotation fields $\mathbf{R}_i \colon (\theta^{-}, \theta^+) \rightarrow SO(3)$ such that 
\begin{equation}
\begin{aligned}\label{eq:MechConstruct3}
\mathbf{R}_i(\theta)  \mathbf{t}_i^r = \mathbf{t}_i(\theta), \quad \mathbf{R}_{i}(\theta) \mathbf{t}_{\sigma(i)}^r = \mathbf{t}_{\sigma(i)}(\theta), \quad \mathbf{R}_{i}(\theta) ( \mathbf{t}_i^r \times \mathbf{t}_{\sigma(i)}^r) = \mathbf{t}_i(\theta) \times \mathbf{t}_{\sigma(i)}(\theta),
\end{aligned}
\end{equation}
where $\sigma(i)$ is a cyclic permutation of $\{ 1,2,3,4\}$.   The mechanism motion of the unit cell satisfies
\begin{equation}
\begin{aligned}
\Omega^{\theta}_{\text{cell}} := \mathbf{R}_1(\theta) \mathcal{P}_1 \cup \mathbf{R}_2(\theta) \mathcal{P}_2 \cup \mathbf{R}_3(\theta) \mathcal{P}_3 \cup \mathbf{R}_4(\theta) \mathcal{P}_4, \quad \theta \in (\theta^{-}, \theta^+).
\end{aligned}
\end{equation}
In turn,  the mechanism motion of the overall pattern repeats the motion of the unit cell via 
\begin{equation}
\begin{aligned}\label{eq:MechConstructFinal}
\mathcal{T}_{\text{ori}}^{\theta}:= \big\{ \Omega^{\theta}_{\text{cell}}  + i \mathbf{u}(\theta) + j \mathbf{v}(\theta) \colon i, j \in \mathbb{Z} \big\} , \quad \theta \in (\theta^{-} , \theta^+)
\end{aligned}
\end{equation}
for the lattice vectors $\mathbf{u}(\theta)$ and $\mathbf{v}(\theta)$.
\begin{proposition}\label{MechProp}
For each $\theta \in (\theta^{-}, \theta^+)$, there is a continuous and rigid deformation $\mathbf{y}_{\theta} \colon \mathcal{T}_{\emph{ori}} \rightarrow \mathbb{R}^3$ that satisfies $\mathbf{y}_{\theta}( \mathcal{T}_{\emph{ori}} ) = \mathcal{T}_{\emph{ori}}^{\theta}$. Furthermore, up to an overall rigid motion,   $\mathbf{y}_{\theta}( \mathcal{T}_{\emph{ori}} ) $, $\theta \in (\theta^{-}, \theta^+),$ is the unique mechanism motion of $\mathcal{T}_{\emph{ori}}$ that contains $\mathcal{T}_{\emph{ori}}$ and does not change the mountain-valley assignment.  Finally, $\mathbf{y}_0(\mathcal{T}_{\emph{ori}})  = \mathcal{T}_{\emph{ori}}$ and the parameterization is analytic in $\theta$. 
\end{proposition}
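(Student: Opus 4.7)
The plan is to build $\mathbf{y}_\theta$ panel-by-panel from the cellular data of Proposition~\ref{mechKinCellProp} and then verify continuity, uniqueness, and analyticity. On each reference panel $\mathcal{P}_i + i\mathbf{u}_0 + j\mathbf{v}_0$ I would set
\[
\mathbf{y}_\theta(\mathbf{x}) := \mathbf{R}_i(\theta)\bigl(\mathbf{x} - i\mathbf{u}_0 - j\mathbf{v}_0\bigr) + i\,\mathbf{u}(\theta) + j\,\mathbf{v}(\theta),
\]
with $\mathbf{R}_i(\theta)$ the rotation from Eq.~(\ref{eq:MechConstruct3}) and $\mathbf{u}(\theta), \mathbf{v}(\theta)$ the deformed Bravais vectors from Section~\ref{ssec:DesignKin}. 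Every panel is then deformed rigidly by construction, and the piecewise image of the pattern is precisely $\mathcal{T}_{\text{ori}}^\theta$ as defined in Eq.~(\ref{eq:MechConstructFinal}).

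The first thing to check is continuity of this piecewise definition across shared creases. Within a given cell, all four panels meet at the central vertex $\mathbf{0}$, which is fixed by every $\mathbf{R}_i(\theta)$; and along a crease with tangent $\mathbf{t}_i^r$ shared by $\mathcal{P}_i$ and $\mathcal{P}_{\sigma^{-1}(i)}$, the identities of Eq.~(\ref{eq:MechConstruct3}) force both rotations to map $\mathbf{t}_i^r$ to $\mathbf{t}_i(\theta)$, so the two rigid motions agree on the entire crease. Between cells, the outer boundary of $\Omega_{\text{cell}}$ consists of parallelogram sides parallel to the various $\mathbf{t}_i^r$; since the reference translation by $\mathbf{u}_0$ (resp.\ $\mathbf{v}_0$) is undone after deformation by the translation $\mathbf{u}(\theta)$ (resp.\ $\mathbf{v}(\theta)$), the piecewise rigid motions on the two sides of a cell seam agree on the common boundary, and thus $\mathbf{y}_\theta$ is continuous on all of $\mathcal{T}_{\text{ori}}$.

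Uniqueness up to rigid motion is the step I expect will require the most care. Let $\tilde{\mathbf{y}}\colon \mathcal{T}_{\text{ori}}\to\mathbb{R}^3$ be any continuous, mountain-valley-preserving mechanism deformation, and, after a global rigid motion, assume $\tilde{\mathbf{y}}$ restricts to the identity on panel $\mathcal{P}_1$ of the origin cell. Applying Proposition~\ref{mechKinCellProp} to the origin cell then determines a unique $\theta_{0,0}\in(\theta^-,\theta^+)$ and hence the rigid motions of its remaining three panels. I would march outward: the cell at lattice position $(1,0)$ shares a full boundary with the origin cell, and this shared boundary fixes the rigid motion of at least one of its panels; combined with the one-DOF description of its own mechanism, this forces a common parameter $\theta_{1,0}=\theta_{0,0}$ and an overall translation $\mathbf{u}(\theta_{0,0})$. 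The strict monotonicity of $|\mathbf{u}(\theta)|$ and $|\mathbf{v}(\theta)|$ from Lemma~\ref{SignsLemma} is the clean ingredient that rules out distinct $\theta$-values in adjacent cells yielding consistent boundary identifications. A straightforward induction on the lattice indices then extends this rigidity to all of $\mathcal{T}_{\text{ori}}$.

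Analyticity in $\theta$ is then immediate. The rotations $\mathbf{R}_i(\theta)$ are determined algebraically by the analytic triple $(\mathbf{t}_i(\theta), \mathbf{t}_{\sigma(i)}(\theta), \mathbf{t}_i(\theta)\times\mathbf{t}_{\sigma(i)}(\theta))$ through the three equations in Eq.~(\ref{eq:MechConstruct3}), and $\mathbf{u}(\theta), \mathbf{v}(\theta)$ are likewise analytic. The initial condition $\mathbf{t}_i(0)=\mathbf{t}_i^r$ from Proposition~\ref{mechKinCellProp} gives $\mathbf{R}_i(0)=\mathbf{I}$, $\mathbf{u}(0)=\mathbf{u}_0$, and $\mathbf{v}(0)=\mathbf{v}_0$, so $\mathbf{y}_0 = \mathrm{id}$ and $\mathbf{y}_0(\mathcal{T}_{\text{ori}})=\mathcal{T}_{\text{ori}}$, completing the proof.
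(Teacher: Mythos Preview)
Your construction of $\mathbf{y}_\theta$, the continuity check, and the analyticity argument are exactly what the paper does in Appendix~\ref{ssec:MechProofs3}. Where you diverge is in the uniqueness step: the paper sets up two dedicated lemmas (Lemmas~\ref{ImplicitLemma2} and~\ref{ImplicitLemma3} in Appendix~\ref{ssec:MechProofs2}) that use the implicit function theorem to show the neighbor compatibility condition $f_{\mathbf{u}_0}(\tilde\theta,\theta)=\mathbf{t}_2^d(\gamma(\tilde\theta))\cdot\mathbf{t}_4^r-\mathbf{t}_2^d(\gamma(\theta))\cdot\mathbf{t}_4^r=0$ forces $\tilde\theta=\theta$, and then that the residual rotation must be the identity. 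You instead appeal directly to Lemma~\ref{SignsLemma}. This works, and is arguably cleaner, once you notice that the shared boundary between $\mathbf{u}_0$-neighbors carries the three deformed vertices $\mathbf{t}_1(\theta)$, $\mathbf{t}_1(\theta)+\mathbf{t}_2(\theta)$, $\mathbf{t}_1(\theta)+\mathbf{t}_4(\theta)$, whose outer pair are separated by exactly $|\mathbf{v}(\theta)|$; strict monotonicity then pins $\tilde\theta=\theta$ (and the analogous $|\mathbf{u}(\theta)|$ argument handles $\mathbf{v}_0$-neighbors). Indeed $f_{\mathbf{u}_0}=0$ is equivalent to $|\mathbf{v}(\tilde\theta)|=|\mathbf{v}(\theta)|$ since $|\mathbf{t}_2|,|\mathbf{t}_4|$ are fixed, so the two routes are not far apart. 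One point to tighten in your write-up: the phrase ``this shared boundary fixes the rigid motion of at least one of its panels'' is not quite right, since each panel of the neighbor touches only one segment of the shared wedge; what you actually need (and what the paper's Lemma~\ref{ImplicitLemma3} supplies) is that once $\tilde\theta=\theta$, the rotation $\mathbf{R}$ must fix both $\mathbf{t}_2(\theta)$ and $\mathbf{t}_4(\theta)$, and these are linearly independent by the mountain-valley nondegeneracy, so $\mathbf{R}=\mathbf{I}$.
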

\noindent It is not hard to show the existence of such a mechanism motion. Uniqueness is more subtle --- its proof involves repeated applications of the implicit function theorem under the hypothesis that the mountain-valley assignment is unchanged. We sketch  the proof using Fig.\;\ref{Fig:Marching} below; again, see Appendix \ref{ssec:MechProofs3} for the details.

Start by folding the crease between two adjacent panels of a cell in a manner that takes the dihedral angle  from $\theta_0$ to $\theta_0 + \theta$ as  in Fig.\;\ref{Fig:Marching}(a). We claim that the pattern's mechanism motion is uniquely determined by this folding. Indeed, the deformed wedge in red in Fig.\;\ref{Fig:Marching}(a) must be matched by a corresponding folding of the neighboring pair of panels. A basic geometry argument  dictates that there is exactly one way to do this folding while preserving the mountain-valley assignment (see Appendix \ref{ssec:MechProofs2}). This solution is sketched in Fig.\;\ref{Fig:Marching}(b). In other words, the folding of a single crease determines the folding of the opposite crease about a vertex. This argument can be repeated to fit  the adjacent pair of panels uniquely and compatibly to the right of the deformed unit cell  and above it (Fig.\;\ref{Fig:Marching}(c)), and then again to the right and above (Fig.\;\ref{Fig:Marching}(d)).  Moving on to the fourth cell, there are now two deformed wedges to contend with. It is either the case that the two ways to march the construction yield the same result, or an incompatibility arises and the pattern fails to have a mechanism motion that contains the reference configuration.  The latter is impossible, since the periodically repeating cells in Fig.\;\ref{Fig:Marching}(e) define a compatible mechanism motion. Uniqueness of the mechanism motion follows.

\subsection{Local bending kinematics}\label{ssec:LocalBend}
Having characterized the mechanism motions of generic parallelogram origami patterns, we now study their local bending kinematics. We start by relaxing the mechanism kinematics of a single unit cell to allow for perturbations that preserve the distances between neighboring vertices to leading order.   We  then solve the fitting problem for such ``slightly bent''  neighboring unit cells, as discussed in Section \ref{ssec:MainResult}.

\begin{figure}[t!]
\centering
\includegraphics[width=1\textwidth]{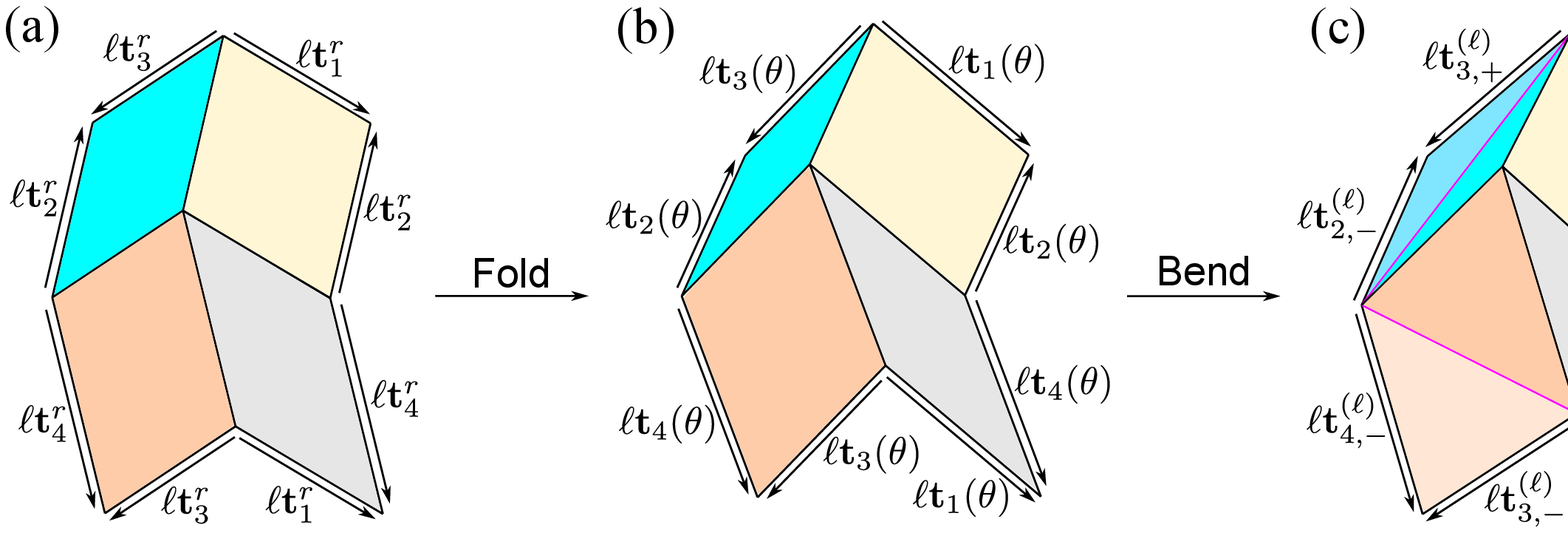} 
\caption{Notation for the local bending kinematics of a single cell of characteristic length $\ell$. Boundary of (a) the reference configurations, (b) after folding and (c) after folding and bending.}
\label{Fig:idepercell}
\end{figure}

We first fix a single parallelogram unit cell $\Omega_{\text{cell}}^{(\ell)} = \ell \Omega_{\text{cell}}$ that has been scaled by $0< \ell \ll 1$, so that the crease vectors are $\ell \mathbf{t}_1^r, \ldots, \ell \mathbf{t}_4^r$.   Per the previous section, the mechanism deformation is parameterized up to rigid motion by deforming the creases as  $\ell \mathbf{t}_i^r  \mapsto \ell \mathbf{t}_i(\theta)$, $i = 1,\ldots,4$, $\theta \in (\theta^{-}, \theta^+)$, where the deformed $\mathbf{t}_1(\theta), \ldots, \mathbf{t}_4(\theta)$ creases satisfy Eq.\;(\ref{eq:MechConstruct1}-\ref{eq:MechConstruct2}). Fig.\;\ref{Fig:idepercell}(a-b) sketches this deformation. 


We seek a small perturbation to the mechanism that does not change lengths at leading order. We first argue that we may fix the interior crease vectors to be $\mathbf{t}_i(\theta)$, $i = 1,\ldots, 4$, and distort only the cell boundaries. Indeed, suppose we distort the interior crease vectors through small perturbations 
\begin{equation}
\begin{aligned}\label{eq:infinitePert}
\mathbf{t}_i(\theta) \mapsto \mathbf{t}_i(\theta) + \ell \boldsymbol{\delta} \mathbf{t}_i, \quad i = 1,\ldots, 4
\end{aligned}
\end{equation}
 that preserve lengths to leading order in $\ell$. For this to be the case, it is necessary and sufficient that the perturbations $\boldsymbol{\delta} \mathbf{t}_i \in \mathbb{R}^3$, $i =1,\ldots,4,$ satisfy the orthogonality conditions
\begin{equation}
\begin{aligned}\label{eq:compatBendFirst}
&\boldsymbol{\delta} \mathbf{t}_i \cdot \mathbf{t}_i(\theta) = 0 \quad i = 1,\ldots,4, \quad  \text{ and } \quad (\boldsymbol{\delta} \mathbf{t}_i - \boldsymbol{\delta} \mathbf{t}_j) \cdot ( \mathbf{t}_i(\theta) - \mathbf{t}_j(\theta)) = 0 \quad ij \in \{ 12,23,34,41\}. 
\end{aligned}
\end{equation}
Such perturbations turn out to be a combination of an overall infinitesimal rotation of the cell and a perturbation of the mechanism angle $\theta \in (\theta^{-}, \theta^+)$.
\begin{lemma}
Eq.\;(\ref{eq:compatBendFirst}) holds if and only if 
\begin{equation}
\begin{aligned}\label{eq:compatBendSecond}
\boldsymbol{\delta} \mathbf{t}_i = \boldsymbol{\omega} \times \mathbf{t}_i(\theta) + \xi \mathbf{t}_i'(\theta), \quad i = 1,\ldots, 4
\end{aligned}
\end{equation}
for some $\boldsymbol{\omega} \in \mathbb{R}^3$ and $\xi \in \mathbb{R}$. The perturbation satisfies 
\begin{equation}
\begin{aligned}\label{eq:compatBendThird}
\mathbf{t}_i(\theta) + \ell \boldsymbol{\delta}  \mathbf{t}_i = \big[\mathbf{I} + \ell( \boldsymbol{\omega} \times)\big] \mathbf{t}_i(\theta + \ell \xi) + O(\ell^2), \quad i = 1,\ldots, 4.
\end{aligned}
\end{equation}
\end{lemma}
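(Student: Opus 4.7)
The statement splits naturally into two parts: the equivalence between Eq.\;(\ref{eq:compatBendFirst}) and the parameterization in Eq.\;(\ref{eq:compatBendSecond}), and the Taylor formula in Eq.\;(\ref{eq:compatBendThird}). The plan is to establish the equivalence by showing both sides describe the same 4-dimensional subspace of $(\mathbb{R}^3)^4$, and then to obtain the Taylor formula by direct expansion.

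The $(\Leftarrow)$ direction is a routine check. For $\boldsymbol{\delta}\mathbf{t}_i = \boldsymbol{\omega}\times\mathbf{t}_i(\theta) + \xi\mathbf{t}_i'(\theta)$, the cross-product contributions satisfy $(\boldsymbol{\omega}\times\mathbf{t}_i)\cdot\mathbf{t}_i=0$ and $(\boldsymbol{\omega}\times(\mathbf{t}_i-\mathbf{t}_j))\cdot(\mathbf{t}_i-\mathbf{t}_j)=0$ identically, while the mechanism-derivative contributions $\xi\mathbf{t}_i'(\theta)\cdot\mathbf{t}_i(\theta) = \tfrac{\xi}{2}\tfrac{d}{d\theta}|\mathbf{t}_i(\theta)|^2$ and $\xi(\mathbf{t}_i'-\mathbf{t}_j')\cdot(\mathbf{t}_i-\mathbf{t}_j) = \tfrac{\xi}{2}\tfrac{d}{d\theta}|\mathbf{t}_i(\theta)-\mathbf{t}_j(\theta)|^2$ vanish because the mechanism preserves both the crease lengths and the pairwise distances between adjacent crease endpoints, as encoded in Eq.\;(\ref{eq:compatCreasesZero}).

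For the $(\Rightarrow)$ direction I use dimension counting. Eq.\;(\ref{eq:compatBendFirst}) is precisely the linearization at $(\mathbf{t}_1(\theta),\ldots,\mathbf{t}_4(\theta))$ of the first two conditions in Eq.\;(\ref{eq:compatCreasesZero}), so its solution set is the tangent space to the mechanism manifold $\mathcal{M}$ from Proposition \ref{mechKinCellProp}. Since $\mathcal{M}$ is an analytic 4-dimensional manifold parameterized by $SO(3)\times(\theta^-,\theta^+)$, this tangent space has dimension exactly $4$. By the $(\Leftarrow)$ direction, the linear map $(\boldsymbol{\omega},\xi)\in\mathbb{R}^4 \mapsto \{\boldsymbol{\omega}\times\mathbf{t}_i(\theta)+\xi\mathbf{t}_i'(\theta)\}_{i=1}^4$ takes values in this tangent space; it remains to show this map is injective. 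If the image vanishes for all $i$, then subtracting the $i=3$ from the $i=1$ equation and dotting with $\mathbf{u}(\theta)=\mathbf{t}_1(\theta)-\mathbf{t}_3(\theta)$ gives
\begin{equation*}
0 = \big(\boldsymbol{\omega}\times\mathbf{u}(\theta) + \xi\mathbf{u}'(\theta)\big)\cdot\mathbf{u}(\theta) = \xi\,\mathbf{u}'(\theta)\cdot\mathbf{u}(\theta),
\end{equation*}
which forces $\xi=0$ by Lemma \ref{SignsLemma}. Then $\boldsymbol{\omega}\times\mathbf{t}_i(\theta)=0$ for $i=1,\ldots,4$, and since $\mathbf{t}_1(\theta),\mathbf{t}_2(\theta),\mathbf{t}_3(\theta)$ span $\mathbb{R}^3$ by the non-degeneracy assumption Eq.\;(\ref{eq:tangentsConstraints}) (and continuity from $\theta=0$), $\boldsymbol{\omega}$ must be simultaneously parallel to each, hence $\boldsymbol{\omega}=0$. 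An injective linear image of dimension $4$ inside a $4$-dimensional space fills it, which proves the equivalence.

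Finally, the formula in Eq.\;(\ref{eq:compatBendThird}) follows by analytic Taylor expansion of the (real-analytic) curve $\mathbf{t}_i(\cdot)$ around $\theta$, combined with the obvious $O(\ell^2)$ expansion of $\mathbf{I}+\ell(\boldsymbol{\omega}\times)$:
\begin{equation*}
\bigl[\mathbf{I}+\ell(\boldsymbol{\omega}\times)\bigr]\mathbf{t}_i(\theta+\ell\xi) = \mathbf{t}_i(\theta) + \ell\bigl[\boldsymbol{\omega}\times\mathbf{t}_i(\theta) + \xi\mathbf{t}_i'(\theta)\bigr] + O(\ell^2) = \mathbf{t}_i(\theta) + \ell\,\boldsymbol{\delta}\mathbf{t}_i + O(\ell^2).
\end{equation*}
The main obstacle in this plan is the $(\Rightarrow)$ direction: verifying the rank of the linear system directly would be cumbersome, but this is circumvented cleanly by importing the dimension of $\mathcal{M}$ from Proposition \ref{mechKinCellProp} and combining it with the monotonicity of $|\mathbf{u}(\theta)|$ from Lemma \ref{SignsLemma} to secure injectivity.
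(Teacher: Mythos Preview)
Your proof is correct and follows essentially the same approach as the paper: both identify the solution set of Eq.\;(\ref{eq:compatBendFirst}) with the tangent space of the 4-dimensional manifold $\mathcal{M}$ from Proposition \ref{mechKinCellProp}, verify that the parameterization in Eq.\;(\ref{eq:compatBendSecond}) lands there, and then establish injectivity of $(\boldsymbol{\omega},\xi)\mapsto\{\boldsymbol{\omega}\times\mathbf{t}_i(\theta)+\xi\mathbf{t}_i'(\theta)\}$ using Lemma \ref{SignsLemma}. The only cosmetic difference is in the injectivity step: the paper expands $\boldsymbol{\omega}$ in the $\{\mathbf{u}(\theta),\mathbf{v}(\theta),\mathbf{e}_3\}$ basis and argues that $\{\mathbf{e}_3,\mathbf{e}_3\times\mathbf{u}(\theta),\mathbf{u}'(\theta)\}$ span $\mathbb{R}^3$, whereas you dot with $\mathbf{u}(\theta)$ to kill $\xi$ first and then use that $\mathbf{t}_1(\theta),\mathbf{t}_2(\theta),\mathbf{t}_3(\theta)$ span $\mathbb{R}^3$. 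One small sharpening: your justification ``Eq.\;(\ref{eq:tangentsConstraints}) and continuity from $\theta=0$'' only gives the spanning near $\theta=0$; to cover all $\theta\in(\theta^-,\theta^+)$ you should cite the mountain-valley preservation in Eq.\;(\ref{eq:MVPreserve}), which directly guarantees $\mathbf{t}_1(\theta)\cdot(\mathbf{t}_2(\theta)\times\mathbf{t}_3(\theta))\neq 0$ throughout the interval.
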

\begin{proof}
Proposition \ref{mechKinCellProp} shows that the set of mechanism deformations of a cell $\mathcal{M}$ is a  $4$-dimensional manifold, since each $\mathbf{t}_i(\theta)$ is smooth.  The key observation is that Eq.\;(\ref{eq:infinitePert}-\ref{eq:compatBendFirst}) describe algebraic conditions that are necessary and sufficient for an infinitesimal perturbation of the mechanism. As a result, each $(\mathbf{R}\boldsymbol{\delta} \mathbf{t}_1,\ldots, \mathbf{R}\boldsymbol{\delta} \mathbf{t}_4)$ must belong to the tangent space of the manifold in Eq.\;(\ref{eq:Mmanifold}) at the point $(\mathbf{R}\mathbf{t}_1(\theta), \ldots, \mathbf{R} \mathbf{t}_4(\theta))$. Since the dimension of the tangent space of a manifold is the same as that of the manifold, we conclude that the solution space to Eq.\;(\ref{eq:compatBendFirst}) is four dimensional.

Next, observe that the parameterization in Eq.\;(\ref{eq:compatBendSecond}) solves all the conditions in Eq.\;(\ref{eq:compatBendFirst}) for any choice of $\boldsymbol{\omega}$ and $\xi$. We thus complete the proof by showing  that $\boldsymbol{\omega}$ and $\xi$ are four linearly independent DOFs in this parameterization. Assume  
\begin{equation}
    \begin{aligned}\label{eq:TrivialLemmaBend1}
      \boldsymbol{\omega} \times \mathbf{t}_i(\theta) + \xi \mathbf{t}_i'(\theta) =  \mathbf{0} \quad \text{for all } i = 1,\ldots,4.  
    \end{aligned}
\end{equation}
By subtracting the $i = 1,3$ equations and the $i = 2,4$ equations, Eq.\;(\ref{eq:TrivialLemmaBend1}) implies that 
\begin{equation}
\begin{aligned}\label{eq:TrivialLemmaBend2}
\boldsymbol{\omega} \times \mathbf{u}(\theta) + \xi \mathbf{u}'(\theta) = \mathbf{0} , \quad 
\boldsymbol{\omega} \times \mathbf{v}(\theta) + \xi \mathbf{v}'(\theta) = \mathbf{0} ,
\end{aligned}
\end{equation}
for $\mathbf{u}(\theta) = \mathbf{t}_1(\theta) - \mathbf{t}_3(\theta)$ and $\mathbf{v}(\theta) = \mathbf{t}_2(\theta) - \mathbf{t}_4(\theta)$, which span the plane normal to $\mathbf{e}_3$ by construction.  To characterize Eq.\;(\ref{eq:TrivialLemmaBend2}), we write $\boldsymbol{\omega}$ in the $\{ \mathbf{u}(\theta), \mathbf{v}(\theta), \mathbf{e}_3\}$ basis as   $\boldsymbol{\omega} = \omega_u \mathbf{u}(\theta) + \omega_v \mathbf{v}(\theta) + \omega_3 \mathbf{e}_3$ and obtain the equations  $\omega_v (\mathbf{e}_3 \cdot (\mathbf{v}(\theta) \times \mathbf{u}(\theta)) \mathbf{e}_3 + \omega_3 \mathbf{e}_3 \times \mathbf{u}(\theta) +  \xi \mathbf{u}'(\theta) = \mathbf{0}$  and $\omega_u (\mathbf{e}_3 \cdot (\mathbf{u}(\theta) \times \mathbf{v}(\theta)) \mathbf{e}_3 + \omega_3 \mathbf{e}_3 \times \mathbf{v}(\theta) +  \xi \mathbf{v}'(\theta) = \mathbf{0}$.  Note that $|\mathbf{u}(\theta)|, |\mathbf{v}(\theta)|$ are strictly monotonic functions of $\theta$ by Lemma \ref{SignsLemma}. As a result, $\mathbf{u}'(\theta)$ and $\mathbf{v}'(\theta)$ have non-zero components in the $\mathbf{u}(\theta)$ and $\mathbf{v}(\theta)$ directions. It follows that  $\{ \mathbf{e}_3, \mathbf{e}_3 \times \mathbf{u}(\theta), \mathbf{u}'(\theta)\}$   and $\{ \mathbf{e}_3, \mathbf{e}_3 \times \mathbf{v}(\theta), \mathbf{v}'(\theta)\}$ both span $\mathbb{R}^3$. Thus, Eq.\;(\ref{eq:TrivialLemmaBend2}) implies $\boldsymbol{\omega} = \mathbf{0}$ and $\xi = 0$. This proves the desired linear independence and  shows that Eq.\;(\ref{eq:compatBendSecond}) fully characterizes  the solutions to Eq.\;(\ref{eq:compatBendFirst}).

 The perturbation in Eq.\;(\ref{eq:compatBendThird}) follows by Taylor expansion. \end{proof}

Due to the structure of the perturbations in Eq.\;(\ref{eq:compatBendThird}), we are free to fix the interior creases as $\mathbf{t}_i(\theta)$, $i =1,\ldots,4$, and distort only the corner nodes of the cell to describe its bending kinematics.  The point is that a bent cell with interior crease vectors $(\mathbf{I} + \ell (\boldsymbol{\omega} \times)) \mathbf{t}_i(\tilde{\theta} + \ell \xi)$ can be  transformed to leading order to a bent  cell with fixed interior crease vectors $\mathbf{t}_i(\theta)$ through an  infinitesimal rotation of the  cell $(\mathbf{I} - \ell (\boldsymbol{\omega} \times ) )$ and by replacing $\tilde{\theta} + \ell \xi$ with $\theta$. With this, the  bending kinematics of a unit cell are described as follows. Observe that the four corner points of the mechanism deformation are the sum of the the deformed crease vectors $\mathbf{s}_1(\theta) := \mathbf{t}_1(\theta) + \mathbf{t}_2(\theta), \ldots, \mathbf{s}_4(\theta) = \mathbf{t}_4(\theta) + \mathbf{t}_1(\theta)$. Since three of the four vertices of each panel are now fixed, the corner points must be perturbed along the normal of the panel. So, $\mathbf{s}_i(\theta) \mapsto  \mathbf{s}_i(\theta) + \ell \boldsymbol{\delta} \mathbf{s}_i$ satisfies 
\begin{equation}
\begin{aligned}
&\boldsymbol{\delta} \mathbf{s}_1 =  \kappa_1  ( \mathbf{t}_1(\theta) \times \mathbf{t}_2(\theta)), \;\;\; \boldsymbol{\delta} \mathbf{s}_2 =  \kappa_2  ( \mathbf{t}_2(\theta) \times \mathbf{t}_3(\theta)), \;\;\; \boldsymbol{\delta} \mathbf{s}_3 =  \kappa_3  ( \mathbf{t}_3(\theta) \times \mathbf{t}_4(\theta)), \;\;\; \boldsymbol{\delta} \mathbf{s}_4 =  \kappa_4 ( \mathbf{t}_4(\theta) \times \mathbf{t}_1(\theta))
\end{aligned}
\end{equation}
for panel curvatures $\kappa_1, \ldots, \kappa_4 \in \mathbb{R}$. For future reference, the boundaries of the slightly bent cell are $\ell \mathbf{t}_{i,\pm}^{(\ell)}(\theta, \boldsymbol{\kappa})$ as indicated in Fig.\;\ref{Fig:idepercell}(c), where 
\begin{equation}
\begin{aligned}\label{eq:bentBoundaries1}
\mathbf{t}_{1,-}^{(\ell)}(\theta,\boldsymbol{\kappa}) :=   \mathbf{t}_{1}(\theta) + \ell \kappa_4 ( \mathbf{t}_4(\theta) \times \mathbf{t}_1(\theta)) , \quad   \mathbf{t}_{3,-}^{(\ell)} (\theta,\boldsymbol{\kappa}) := \mathbf{t}_3(\theta) + \ell \kappa_3 ( \mathbf{t}_3(\theta) \times \mathbf{t}_4(\theta)), \\
\mathbf{t}_{1,+}^{(\ell)}(\theta,\boldsymbol{\kappa}) :=   \mathbf{t}_{1}(\theta) + \ell \kappa_1 ( \mathbf{t}_1(\theta) \times \mathbf{t}_2(\theta)) , \quad   \mathbf{t}_{3,+}^{(\ell)} (\theta,\boldsymbol{\kappa}) := \mathbf{t}_3(\theta) + \ell \kappa_2 ( \mathbf{t}_2(\theta) \times \mathbf{t}_3(\theta)), \\
\mathbf{t}_{2,-}^{(\ell)}(\theta,\boldsymbol{\kappa}) :=   \mathbf{t}_{2}(\theta) + \ell \kappa_2 ( \mathbf{t}_2(\theta) \times \mathbf{t}_3(\theta)) , \quad   \mathbf{t}_{4,-}^{(\ell)} (\theta,\boldsymbol{\kappa}) := \mathbf{t}_4(\theta) + \ell \kappa_3 ( \mathbf{t}_3(\theta) \times \mathbf{t}_4(\theta)), \\
\mathbf{t}_{2,+}^{(\ell)}(\theta,\boldsymbol{\kappa}) :=   \mathbf{t}_{2}(\theta) + \ell \kappa_1 ( \mathbf{t}_1(\theta) \times \mathbf{t}_2(\theta)) , \quad   \mathbf{t}_{4,+}^{(\ell)} (\theta,\boldsymbol{\kappa}) := \mathbf{t}_4(\theta) + \ell \kappa_4 ( \mathbf{t}_4(\theta) \times \mathbf{t}_1(\theta)).
\end{aligned}
\end{equation}
This completes our parameterization of a single slightly bent cell. 

\begin{figure}[t!]
\centering
\includegraphics[width=0.9\textwidth]{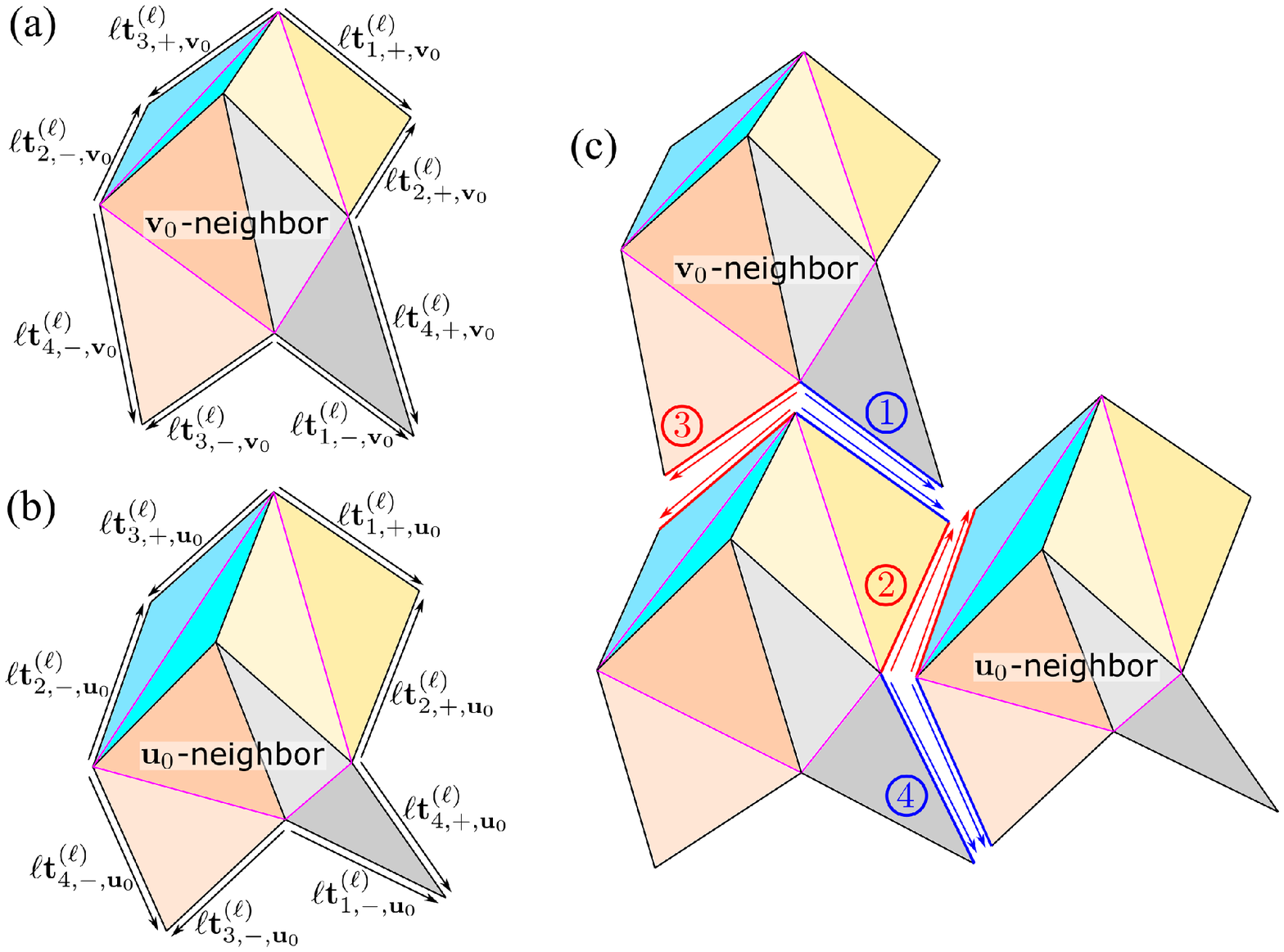} 
\caption{Neighbor compatibility for cells of characteristic length $\ell$. (a) Notation for the $\mathbf{v}_0$-neighbor and (b) $\mathbf{u}_0$-neighbor. (c) Description for fitting together three neighboring cells. The four vector compatibility conditions  indicated appear in Eq.\;(\ref{eq:neighborBendFinal}).}
\label{Fig:FitCells}
\end{figure}

We now show how to fit together neighboring slightly bent unit cells.  Let Eq.\;(\ref{eq:bentBoundaries1}) describe the boundaries of an actuated and slightly bent $\Omega_{\text{cell}}^{(\ell)}$, as in Fig.\;\ref{Fig:idepercell}(c). Then, following Fig.\;\ref{Fig:FitCells}(a), assume that the neighboring cell $\Omega_{\text{cell}}^{(\ell)} + \ell \mathbf{v}_0$ is actuated to a slightly bent configuration with cell boundary tangents 
\begin{equation}
\begin{aligned}\label{eq:bentBoundaries2}
\mathbf{t}_{i,\pm, \mathbf{v}_0}^{(\ell)}(\theta,\boldsymbol{\kappa}, \delta \theta_{\mathbf{v}_0} , \boldsymbol{\omega}_{\mathbf{v}_0})  :=   (\mathbf{I}+ \ell (\boldsymbol{\omega}_{\mathbf{v}_0} \times)) \mathbf{t}_{i,\pm}^{(\ell)}(\theta + \ell  \delta \theta_{\mathbf{v}_0} ,\boldsymbol{\kappa}), \quad i = 1,\ldots,4.
\end{aligned}
\end{equation}
Finally, following Fig.\;\ref{Fig:FitCells}(b), assume that the neighboring cell $\Omega_{\text{cell}}^{(\ell)} + \ell \mathbf{u}_0$ is actuated to a slightly bent configuration with  cell boundary tangents 
\begin{equation}
\begin{aligned}\label{eq:bentBoundaries3}
\mathbf{t}_{i,\pm, \mathbf{u}_0}^{(\ell)}(\theta,\boldsymbol{\kappa}, \delta \theta_{\mathbf{u}_0} , \boldsymbol{\omega}_{\mathbf{u}_0})  :=   (\mathbf{I}+ \ell (\boldsymbol{\omega}_{\mathbf{u}_0} \times)) \mathbf{t}_{i,\pm}^{(\ell)}(\theta + \ell  \delta \theta_{\mathbf{u}_0} ,\boldsymbol{\kappa}), \quad i = 1,\ldots,4.
\end{aligned}
\end{equation}
The assumptions here are motivated by the physical observation that two slightly bent neighboring cells in an overall pattern do not deviate much from each other in their kinematics. Compared to the original cell, the neighbors pick up an actuation  angle  $\ell\delta \theta_{(\cdot)}$ to account for small changes in the folds between the cells. They also pick up an  infinitesimal rotation  $\mathbf{I} + \ell ( \boldsymbol{\omega}_{(\cdot)}\times )$, which is needed to fit the cells together compatibly at leading order. (Small perturbations of the curvatures $\boldsymbol{\kappa} \mapsto \boldsymbol{\kappa} + \ell \boldsymbol{\delta} \boldsymbol{\kappa}_{(\cdot)}$ are also possible. However, they enter the kinematics at a higher order in $\ell$.) 

Fig.\;\ref{Fig:FitCells}(c) illustrates the local fitting problem constraining the parameters $\boldsymbol{\kappa}$, $\delta \theta_{\mathbf{u}_0}$, $\delta \theta_{\mathbf{v}_0}$, $\boldsymbol{\omega}_{\mathbf{u}_0}$ and  $\boldsymbol{\omega}_{\mathbf{v}_0}$ in Eq.\;(\ref{eq:bentBoundaries1}-\ref{eq:bentBoundaries3}). There are four compatibility conditions to solve. They concern an approximate fitting of neighboring tangents of the form 
\begin{equation}
\begin{aligned}\label{eq:neighborBendFinal}
\mathbf{t}^{(\ell)}_{1,-,\mathbf{v}_0}(\theta,\boldsymbol{\kappa}, \delta \theta_{\mathbf{v}_0} , \boldsymbol{\omega}_{\mathbf{v}_0}) - \mathbf{t}^{(\ell)}_{1,+}(\theta,\boldsymbol{\kappa}) = O(\ell^2),  \quad \mathbf{t}^{(\ell)}_{3,-,\mathbf{u}_0}(\theta,\boldsymbol{\kappa}, \delta \theta_{\mathbf{v}_0} , \boldsymbol{\omega}_{\mathbf{v}_0}) - \mathbf{t}^{(\ell)}_{3,+}(\theta,\boldsymbol{\kappa}) = O(\ell^2), \\ 
\mathbf{t}^{(\ell)}_{2,-,\mathbf{u}_0}(\theta,\boldsymbol{\kappa}, \delta \theta_{\mathbf{u}_0} , \boldsymbol{\omega}_{\mathbf{u}_0}) - \mathbf{t}^{(\ell)}_{2,+}(\theta,\boldsymbol{\kappa}) = O(\ell^2), \quad \mathbf{t}^{(\ell)}_{4,-,\mathbf{u}_0}(\theta,\boldsymbol{\kappa}, \delta \theta_{\mathbf{u}_0} , \boldsymbol{\omega}_{\mathbf{u}_0}) - \mathbf{t}^{(\ell)}_{4,+}(\theta,\boldsymbol{\kappa}) = O(\ell^2).
\end{aligned}
\end{equation}
Once these conditions are solved, cell translations can be chosen resulting in gaps $\sim \ell^3$ between the neighboring cells and strains $\sim \ell^2$. (Doing so globally, however, requires additional enrichment of the ansatz by local perturbations, as will be described in Section \ref{ssec:GlobalOrigami}.) 
\begin{proposition}\label{LocalBendProp} 
Eq.\;(\ref{eq:neighborBendFinal}) holds if and only if  $\boldsymbol{\omega}_{\mathbf{u}_0}$, $\boldsymbol{\omega}_{\mathbf{v}_0}$  $\delta \theta_{\mathbf{u}_0} $ and $\delta \theta_{\mathbf{v}_0} $ satisfy  
\begin{equation}
\begin{aligned}\label{eq:micon1}
& \bm{\omega}_{\mathbf{u}_0}\times\mathbf{v}(\theta)+\delta\theta_{\mathbf{u}_0}\mathbf{v}'(\theta)= \bm{\omega}_{\mathbf{v}_0}\times\mathbf{u}(\theta)+ \delta\theta_{\mathbf{v}_0}\mathbf{u}'(\theta), \\
& \bm{\omega}_{\mathbf{u}_0}\cdot\mathbf{v}'(\theta)= \bm{\omega}_{\mathbf{v}_0}\cdot\mathbf{u}'(\theta)
\end{aligned}
\end{equation}
and $\boldsymbol{\kappa} = (\kappa_1, \ldots, \kappa_4)$ satisfies
\begin{equation}
\begin{aligned}\label{eq:theKappas1}
&\begin{pmatrix}
\kappa_1 \\ \kappa_2 \\ \kappa_3 \\ \kappa_4 
\end{pmatrix} = \underbrace{\begin{pmatrix} \frac{(\mathbf{t}_2(\theta)\times\mathbf{t}_3(\theta))^T}{V_{123}(\theta)}  & \frac{\mathbf{t}_3(\theta)\cdot\mathbf{t}'_2(\theta)}{V_{123}(\theta)} & 0 & 0  \\ 0 & 0  &   \frac{(\mathbf{t}_3(\theta)\times\mathbf{t}_4(\theta))^T}{V_{234}(\theta) }& \frac{\mathbf{t}'_3(\theta)\cdot\mathbf{t}_4(\theta)}{V_{234}(\theta)} \\
\frac{(\mathbf{t}_4(\theta) \times \mathbf{t}_1(\theta))^T}{V_{314}(\theta)}  & \frac{\mathbf{t}_1(\theta) \cdot \mathbf{t}_4'(\theta)}{V_{314}(\theta)}  & 0 & 0   \\ 0 & 0 & \frac{(\mathbf{t}_1(\theta) \times \mathbf{t}_2(\theta))^T}{V_{421}(\theta)} & \frac{\mathbf{t}_1'(\theta) \cdot \mathbf{t}_2(\theta)}{V_{421}(\theta)}  \end{pmatrix}}_{=:\mathbf{B}(\theta)}  \begin{pmatrix} \boldsymbol{\omega}_{\mathbf{u}_0} \\ \delta \theta_{\mathbf{u}_0} \\ \boldsymbol{\omega}_{\mathbf{v}_0} \\ \delta \theta_{\mathbf{v}_0}  \end{pmatrix} 
\end{aligned}
\end{equation}
where $V_{ijk}(\theta) = \mathbf{t}_i(\theta) \cdot (\mathbf{t}_j(\theta) \times \mathbf{t}_k(\theta))$. 
\end{proposition}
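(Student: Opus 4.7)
The plan is to Taylor expand each of the four compatibility conditions in Eq.\;(\ref{eq:neighborBendFinal}) to first order in $\ell$ and to decompose the resulting four vector identities so as to separate the $\kappa_i$-determining scalar equations (which give Eq.\;(\ref{eq:theKappas1})) from the $(\bm{\omega},\delta\theta)$-consistency conditions (which give Eq.\;(\ref{eq:micon1})). The converse implication is a direct substitution check into the $O(\ell)$ vector identities.

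Applying $\mathbf{t}_i(\theta + \ell \delta\theta_{(\cdot)}) = \mathbf{t}_i(\theta) + \ell \delta\theta_{(\cdot)} \mathbf{t}_i'(\theta) + O(\ell^2)$ and $(\mathbf{I} + \ell(\bm{\omega}_{(\cdot)} \times))$ in Eqs.\;(\ref{eq:bentBoundaries1})--(\ref{eq:bentBoundaries3}), the zeroth-order parts of both sides of each line of Eq.\;(\ref{eq:neighborBendFinal}) cancel (both equal $\mathbf{t}_i(\theta)$), and setting the $O(\ell)$ parts to zero produces four vector identities indexed by $i \in \{1,2,3,4\}$. Concretely, the $\mathbf{t}_1$-match reads
\begin{equation*}
\kappa_1(\mathbf{t}_1(\theta)\times\mathbf{t}_2(\theta)) - \kappa_4(\mathbf{t}_4(\theta)\times\mathbf{t}_1(\theta)) = \bm{\omega}_{\mathbf{v}_0}\times\mathbf{t}_1(\theta) + \delta\theta_{\mathbf{v}_0}\, \mathbf{t}_1'(\theta),
\end{equation*}
with the other three obtained by cyclically rotating the indices and swapping $(\bm{\omega}_{\mathbf{v}_0}, \delta\theta_{\mathbf{v}_0}) \leftrightarrow (\bm{\omega}_{\mathbf{u}_0}, \delta\theta_{\mathbf{u}_0})$.

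The first equation of Eq.\;(\ref{eq:micon1}) then falls out immediately: each $\kappa_i$ appears in exactly two of the four identities with opposite signs, so the subtractions ``$\mathbf{t}_1$-match minus $\mathbf{t}_3$-match'' and ``$\mathbf{t}_2$-match minus $\mathbf{t}_4$-match'' produce identities with identical $\kappa$-combinations on the left and right-hand sides $\bm{\omega}_{\mathbf{v}_0}\times\mathbf{u}(\theta) + \delta\theta_{\mathbf{v}_0}\mathbf{u}'(\theta)$ and $\bm{\omega}_{\mathbf{u}_0}\times\mathbf{v}(\theta) + \delta\theta_{\mathbf{u}_0}\mathbf{v}'(\theta)$ respectively; equating these yields Eq.\;(\ref{eq:micon1})$_1$. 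Eq.\;(\ref{eq:theKappas1}) follows by noting that each identity lies in the plane $\mathbf{t}_i^\perp$ --- since $|\mathbf{t}_i|$ constant gives $\mathbf{t}_i\cdot\mathbf{t}_i' = 0$ --- and thus encodes two scalar equations. Dotting the $\mathbf{t}_2$-match with $\mathbf{t}_3(\theta)$ annihilates the $\kappa_2$-term (as $(\mathbf{t}_2 \times \mathbf{t}_3) \cdot \mathbf{t}_3 = 0$) and, after dividing by $V_{123}(\theta) \neq 0$ and using $(\bm{\omega} \times \mathbf{t}_2) \cdot \mathbf{t}_3 = \bm{\omega} \cdot (\mathbf{t}_2 \times \mathbf{t}_3)$, recovers the first row of $\mathbf{B}(\theta)$ in Eq.\;(\ref{eq:theKappas1}). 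Analogous single-$\kappa$ projections --- $\mathbf{t}_3$-match with $\mathbf{t}_4$, $\mathbf{t}_4$-match with $\mathbf{t}_1$, and $\mathbf{t}_1$-match with $\mathbf{t}_2$ --- yield the remaining three rows.

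It remains to extract the scalar equation Eq.\;(\ref{eq:micon1})$_2$. The four complementary scalar projections (for instance the $\mathbf{t}_2$-match with $\mathbf{t}_1$) produce ``alternative'' formulas for the same four $\kappa_i$'s, now expressed in the opposite pair of $(\bm{\omega}, \delta\theta)$ variables; setting each alternative equal to its Eq.\;(\ref{eq:theKappas1}) counterpart yields four scalar consistency conditions. Using the identities $\mathbf{t}_i' \cdot \mathbf{t}_j = -\mathbf{t}_i \cdot \mathbf{t}_j'$ (from constancy of $\mathbf{t}_i \cdot \mathbf{t}_j$ for adjacent $ij$), together with the already-derived vector equation Eq.\;(\ref{eq:micon1})$_1$, three of these four consistencies become tautological and the single remaining independent condition collapses to $\bm{\omega}_{\mathbf{u}_0} \cdot \mathbf{v}'(\theta) = \bm{\omega}_{\mathbf{v}_0} \cdot \mathbf{u}'(\theta)$. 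I expect the main obstacle to be precisely this last step --- the triple-product bookkeeping that shows the four $\kappa$-consistency conditions collapse to exactly one independent scalar constraint, and that this constraint is the Poisson-like identity Eq.\;(\ref{eq:micon1})$_2$ rather than something extraneous.
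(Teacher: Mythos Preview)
Your derivation of Eq.\;(\ref{eq:micon1})$_1$ (subtracting the $\mathbf{t}_1$/$\mathbf{t}_3$ and $\mathbf{t}_2$/$\mathbf{t}_4$ matches) and of Eq.\;(\ref{eq:theKappas1}) (projecting each match onto a single adjacent $\mathbf{t}_j$) is exactly what the paper does. The difference lies in how you extract Eq.\;(\ref{eq:micon1})$_2$ and how you handle sufficiency.

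For Eq.\;(\ref{eq:micon1})$_2$, the paper does not go through the ``dual $\kappa$-formula'' consistency conditions you outline. Instead it dots each of the four $O(\ell)$ identities with $\mathbf{t}_i'(\theta)\times\mathbf{t}_i(\theta)$, which isolates the $\mathbf{t}_i'$-component in $\mathbf{t}_i^\perp$. After differencing, the $\kappa$-terms assemble directly into the total derivatives $\kappa_1(\mathbf{t}_1\cdot\mathbf{t}_2)'+\kappa_2(\mathbf{t}_2\cdot\mathbf{t}_3)'+\kappa_3(\mathbf{t}_3\cdot\mathbf{t}_4)'+\kappa_4(\mathbf{t}_4\cdot\mathbf{t}_1)'$, all of which vanish by rigidity of adjacent sector angles; what remains is precisely $\bm{\omega}_{\mathbf{u}_0}\cdot\mathbf{v}'-\bm{\omega}_{\mathbf{v}_0}\cdot\mathbf{u}'=0$. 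This sidesteps the triple-product bookkeeping you flagged as the main obstacle: rather than showing that four consistency conditions collapse to one, the $\mathbf{t}_i'\times\mathbf{t}_i$ projection produces Eq.\;(\ref{eq:micon1})$_2$ in two lines. Your route would work in principle, but you have not actually carried it out, and the reduction ``three of four become tautological given Eq.\;(\ref{eq:micon1})$_1$'' is slightly circular as stated, since Eq.\;(\ref{eq:micon1})$_1$ is itself a consequence of those same eight scalar equations.

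For sufficiency, ``direct substitution'' is not free: substituting Eqs.\;(\ref{eq:micon1}) and (\ref{eq:theKappas1}) back into the four $O(\ell)$ identities requires verifying exactly the four complementary scalar projections you left unresolved. The paper avoids this entirely with a dimension count: the four vector identities live in $\mathbf{t}_i^\perp$ and hence impose at most eight linearly independent constraints, while Eqs.\;(\ref{eq:micon1}) and (\ref{eq:theKappas1}) are shown to be exactly eight independent constraints (four from the $\kappa$-formulas, four from the explicit parameterization of $\bm{\omega}_{\mathbf{u}_0},\bm{\omega}_{\mathbf{v}_0}$ that solves Eq.\;(\ref{eq:micon1})). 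Necessity then forces equivalence.
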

\begin{remark}\label{EquivRemark} In the course of the proof, we also establish that Eq.\;(\ref{eq:micon1}) holds if and only if 
\begin{equation}
\begin{aligned}\label{eq:genParamOmega}
\boldsymbol{\omega}_{\mathbf{u}_0} = \tau \mathbf{u}(\theta) + \kappa (\mathbf{u}'(\theta) \cdot \mathbf{u}(\theta)) \mathbf{v}(\theta) +  \Big(\frac{\delta\theta_{\mathbf{u}_0} \mathbf{v}'(\theta) \cdot \mathbf{u}(\theta) - \delta \theta_{\mathbf{v}_0} \mathbf{u}'(\theta) \cdot \mathbf{u}(\theta) }{\mathbf{e}_3 \cdot ( \mathbf{u}(\theta) \times \mathbf{v}(\theta))}  \Big) \mathbf{e}_3, \\
\boldsymbol{\omega}_{\mathbf{v}_0} = \kappa (\mathbf{v}'(\theta) \cdot \mathbf{v}(\theta))  \mathbf{u}(\theta) - \tau  \mathbf{v}(\theta) +  \Big(\frac{\delta\theta_{\mathbf{u}_0} \mathbf{v}'(\theta) \cdot \mathbf{v}(\theta) - \delta \theta_{\mathbf{v}_0} \mathbf{u}'(\theta) \cdot \mathbf{v}(\theta)}{\mathbf{e}_3 \cdot ( \mathbf{u}(\theta) \times \mathbf{v}(\theta))} \Big) \mathbf{e}_3,
\end{aligned}
\end{equation}
for some $\kappa$ and $\tau  \in \mathbb{R}$. These parameters can be interpreted as ``bend" ($\kappa$) and ``twist" ($\tau$) of the unit cell; see Section \ref{ssec:discussSec-modes} for further discussion and Section \ref{ssec:Examples} for examples of these modes.
\end{remark}
\begin{proof}
By expanding out Eq.\;(\ref{eq:neighborBendFinal}), we conclude from the $O(\ell)$ terms that it is equivalent to 
\begin{equation}
\begin{aligned}\label{eq:compBendEquiv}
&\big[ \boldsymbol{\omega}_{\mathbf{v}_0}  + \kappa_4 \mathbf{t}_4(\theta)  + \kappa_1 \mathbf{t}_2(\theta) \big] \times \mathbf{t}_1(\theta) + \delta \theta_{\mathbf{v}_0}  \mathbf{t}_1'(\theta) = \mathbf{0}, 
 &&\big[\boldsymbol{\omega}_{\mathbf{v}_0} - \kappa_3 \mathbf{t}_4(\theta)  - \kappa_2 \mathbf{t}_2(\theta) \big] \times \mathbf{t}_3(\theta) + \delta \theta_{\mathbf{v}_0}  \mathbf{t}'_3(\theta)  = \mathbf{0},    \\
&\big[\boldsymbol{\omega}_{\mathbf{u}_0} - \kappa_2 \mathbf{t}_3(\theta) - \kappa_1 \mathbf{t}_1(\theta)  \big] \times \mathbf{t}_2(\theta) + \delta \theta_{\mathbf{u}_0} \mathbf{t}_2'(\theta) = \mathbf{0},  
 &&\big[\boldsymbol{\omega}_{\mathbf{u}_0} + \kappa_3 \mathbf{t}_3(\theta) + \kappa_4 \mathbf{t}_1(\theta)  \big] \times  \mathbf{t}_4(\theta) + \delta \theta_{\mathbf{u}_0} \mathbf{t}_4'(\theta) = \mathbf{0}.
\end{aligned}
\end{equation}
We focus on necessary conditions to Eq.\;(\ref{eq:compBendEquiv}) and then turn to sufficiency.  

To find necessary conditions, we eliminate the panel curvatures $\kappa_1,\dots,\kappa_4$ from Eq.\;(\ref{eq:compBendEquiv}) while simultaneously bringing in the Bravais lattice vectors $\mathbf{u}(\theta)=\mathbf{t}_1(\theta) - \mathbf{t}_3(\theta)$ and $\mathbf{v}(\theta)=\mathbf{t}_2(\theta) - \mathbf{t}_4(\theta)$. Taking the difference of the first and second equation, and similarly the third and fourth, in Eq.\;(\ref{eq:compBendEquiv}) gives that  
\begin{equation}
 \begin{aligned}\label{eq:comp4}
 \bm{\omega}_{\mathbf{v}_0}\times\mathbf{u}(\theta)+\big[\kappa_1\mathbf{t}_2(\theta)+ \kappa_4\mathbf{t}_4(\theta)\big]\times\mathbf{t}_1(\theta) -\big[-\kappa_2\mathbf{t}_2(\theta)-\kappa_3\mathbf{t}_4(\theta)\big]\times\mathbf{t}_3(\theta)+ \delta\theta_{\mathbf{v}_0}\mathbf{u}'(\theta) &=\mathbf{0}, \\ 
 \bm{\omega}_{\mathbf{u}_0}\times\mathbf{v}(\theta)+\big[-\kappa_2\mathbf{t}_3(\theta)-\kappa_1\mathbf{t}_1(\theta)\big]\times\mathbf{t}_2(\theta) -\big[\kappa_3\mathbf{t}_3(\theta)+\kappa_4\mathbf{t}_1(\theta)\big]\times\mathbf{t}_4(\theta)+ \delta\theta_{\mathbf{u}_0}\mathbf{v}'(\theta) &=\mathbf{0}.
\end{aligned}
\end{equation}
Taking the difference of these equations produces  the first identity in Eq.\;(\ref{eq:micon1}); the $\kappa_i$'s cancel. 

Next, we dot  the four compatibility conditions in Eq.\;(\ref{eq:compBendEquiv}) by $\mathbf{t}'_1(\theta) \times \mathbf{t}_1(\theta), \mathbf{t}'_3(\theta) \times \mathbf{t}_3(\theta)$, $\mathbf{t}_2'(\theta) \times \mathbf{t}_2(\theta)$, $\mathbf{t}_4'(\theta) \times \mathbf{t}_4(\theta)$, respectively, to obtain the necessary conditions
\begin{equation}
\begin{aligned}\label{eq:compDiff}
 \bm{\omega}_{\mathbf{v}_0}\cdot\mathbf{t}'_1(\theta)+\kappa_1\mathbf{t}_2(\theta)\cdot\mathbf{t}'_1(\theta)+\kappa_4\mathbf{t}_4(\theta)\cdot\mathbf{t}'_1(\theta) =0, \quad \bm{\omega}_{\mathbf{v}_0}\cdot\mathbf{t}'_3(\theta)-\kappa_2\mathbf{t}_2(\theta)\cdot\mathbf{t}'_3(\theta)-\kappa_3\mathbf{t}_4(\theta)\cdot\mathbf{t}'_3(\theta) =0, \\
 \bm{\omega}_{\mathbf{u}_0}\cdot\mathbf{t}'_2(\theta)-\kappa_1\mathbf{t}_1(\theta)\cdot\mathbf{t}'_2(\theta)-\kappa_2\mathbf{t}_3(\theta)\cdot\mathbf{t}'_2(\theta)=0, \quad \bm{\omega}_{\mathbf{u}_0}\cdot\mathbf{t}'_4(\theta)+\kappa_4\mathbf{t}_1(\theta)\cdot\mathbf{t}'_4(\theta)+\kappa_3\mathbf{t}_3(\theta)\cdot\mathbf{t}'_4(\theta) =0,
\end{aligned}
\end{equation}
after manipulations involving $\mathbf{t}_i'(\theta)\cdot \mathbf{t}_i(\theta) = 0$ and  $(\mathbf{t}_i'(\theta) \times \mathbf{t}_i(\theta)) \cdot (\mathbf{b} \times \mathbf{t}_i(\theta)) = (\mathbf{t}_i'(\theta) \cdot \mathbf{b}) |\mathbf{t}_i^r|^2$, $\mathbf{b} \in \mathbb{R}^3$. Taking differences on the rows in Eq.\;(\ref{eq:compDiff}) gives that
\begin{equation}
 \begin{aligned}\label{eq:compDiffDiff}
 \bm{\omega}_{\mathbf{v}_0}\cdot\mathbf{u}'(\theta)+\kappa_1\mathbf{t}_2(\theta)\cdot\mathbf{t}'_1(\theta) + \kappa_4\mathbf{t}_4(\theta)\cdot\mathbf{t}'_1(\theta) + \kappa_2\mathbf{t}_2(\theta)\cdot\mathbf{t}'_3(\theta)+\kappa_3\mathbf{t}_4(\theta)\cdot\mathbf{t}'_3(\theta) =0, \\
 \bm{\omega}_{\mathbf{u}_0}\cdot\mathbf{v}'(\theta)-\kappa_1\mathbf{t}_1(\theta)\cdot\mathbf{t}'_2(\theta)- \kappa_2\mathbf{t}_3(\theta)\cdot\mathbf{t}'_2(\theta)- \kappa_4\mathbf{t}_1(\theta)\cdot\mathbf{t}'_4(\theta)-\kappa_3\mathbf{t}_3(\theta)\cdot\mathbf{t}'_4(\theta)=0.
  \end{aligned}
\end{equation}
Taking the difference of these two equations and using the product rule of differentiation gives that 
\begin{equation}
 \begin{aligned}\label{eq:compAlmostFinal}
& \bm{\omega}_{\mathbf{u}_0}\cdot\mathbf{v}'(\theta)  -\bm{\omega}_{\mathbf{v}_0}\cdot\mathbf{u}'(\theta)  =\kappa_1\big(\mathbf{t}_1(\theta)\cdot\mathbf{t}_2(\theta)\big)'+ \kappa_2\big(\mathbf{t}_2(\theta)\cdot\mathbf{t}_3(\theta)\big)'+\kappa_3\big(\mathbf{t}_3(\theta)\cdot\mathbf{t}_4(\theta)\big)' + \kappa_4\big(\mathbf{t}_1(\theta)\cdot\mathbf{t}_4(\theta)\big)'.
  \end{aligned}
\end{equation}
Since $\mathbf{t}_1(\theta) \cdot \mathbf{t}_2(\theta) = \mathbf{t}_1^r \cdot \mathbf{t}_2^r, \ldots, \mathbf{t}_1(\theta)\cdot\mathbf{t}_4(\theta) = \mathbf{t}_1^r \cdot \mathbf{t}_4^r$ are constant, the righthand side of this equation is zero. The second identity in  Eq.\;(\ref{eq:micon1}) follows. 

As a final set of necessary conditions, we dot the first condition  in Eq.\;(\ref{eq:compBendEquiv}) by $\mathbf{t}_2(\theta)$, the second by $\mathbf{t}_4(\theta)$, the third by $\mathbf{t}_3(\theta)$, and the fourth by $\mathbf{t}_1(\theta)$. After rearranging the terms in these dot products and using that $\mathbf{t}_i(\theta) \cdot (\mathbf{t}_j(\theta) \times \mathbf{t}_k(\theta)) \neq 0$ for all $i \neq j \neq k \in \{ 1,2,3,4\}$, we obtain the formula for the $\kappa_i's$ in Eq.\;(\ref{eq:theKappas1}).  In summary, we derive that Eq.\;(\ref{eq:micon1}) and (\ref{eq:theKappas1}) are necessary to solve Eq.\;(\ref{eq:compBendEquiv}). 

We turn to sufficiency. Observe that Eq.\;(\ref{eq:compBendEquiv}) comprises at most eight linearly independent constraints since the first, second, third, and fourth conditions are orthogonal to $\mathbf{t}_1(\theta)$, $\mathbf{t}_3(\theta)$, $\mathbf{t}_2(\theta)$ and $\mathbf{t}_4(\theta)$, respectively. To complete the proof, it is enough to show that Eq.\;(\ref{eq:micon1}) and Eq.\;(\ref{eq:theKappas1}) are, in fact, eight linearly independent constraints. Clearly Eq.\;(\ref{eq:theKappas1}) are four constraints.  For Eq.\;(\ref{eq:micon1}), we write $\boldsymbol{\omega}_{\mathbf{u}_0}$ and $\boldsymbol{\omega}_{\mathbf{v}_0}$ in the $\{ \mathbf{u}(\theta), \mathbf{v}(\theta), \mathbf{e}_3\}$ basis and obtain, through standard algebraic manipulations, that its general parameterization is Eq.\;(\ref{eq:genParamOmega}) where $\kappa$ and $\tau$ are two DOFs. Thus, Eq.\;(\ref{eq:micon1}) constrains four of the six DOFs in $\boldsymbol{\omega}_{\mathbf{u}_0}$ and $\boldsymbol{\omega}_{\mathbf{v}_0}$, so it is four total constraints. This completes the proof. 
\end{proof}

\section{Derivation of the effective plate theory}\label{sec:DerivationSec}

Here, we derive the effective plate theory from  Theorem~\ref{MainTheorem}. First, we establish the surface theory part of the result using the kinematics of slightly bent parallelogram origami cells obtained in the previous section. Then, we construct the desired global origami soft modes. Section \ref{sec:CompleteTheProof} ends by proving Theorem~\ref{MainTheorem}.

\subsection{Effective surface theory}\label{ssec:EffSurfacesSection}
The previous section showed that the slightly bent kinematics of neighboring parallelogram origami cells are characterized by parameters $\theta \in (\theta^{-}, \theta^+)$, $\delta \theta_{\mathbf{u}_0} $, $\delta \theta_{\mathbf{v}_0} $,  $\boldsymbol{\omega}_{\mathbf{u}_0}$ and $\boldsymbol{\omega}_{\mathbf{v}_0}$  solving Eq.\;(\ref{eq:micon1}). We now extend these parameters  to smooth fields on the reference domain $\Omega$ through the substitutions 
\begin{equation}
\begin{aligned}
(\theta,  \delta \theta_{\mathbf{u}_0} , \delta \theta_{\mathbf{v}_0} ,  \boldsymbol{\omega}_{\mathbf{v}_0}, \boldsymbol{\omega}_{\mathbf{u}_0}) \rightarrow  (\theta(\mathbf{x}) ,  \partial_{\mathbf{u}_0} \theta(\mathbf{x}) ,\partial_{\mathbf{v}_0} \theta(\mathbf{x}) , \boldsymbol{\omega}_{\mathbf{v}_0}(\mathbf{x}) , \boldsymbol{\omega}_{\mathbf{u}_0}(\mathbf{x}) ).
\end{aligned}
\end{equation}
As such, the constraints for slightly bent cells become global pointwise constraints of the form
\begin{equation}
\begin{aligned}\label{eq:effPDE1}
&\theta(\mathbf{x}) \in (\theta^{-}, \theta^+), \\ 
&\boldsymbol{\omega}_{\mathbf{u}_0}(\mathbf{x}) \times \mathbf{v}(\theta(\mathbf{x}))  + \partial_{\mathbf{u}_0} \theta(\mathbf{x}) \mathbf{v}'(\theta(\mathbf{x}))   =  \boldsymbol{\omega}_{\mathbf{v}_0}(\mathbf{x}) \times \mathbf{u}(\theta(\mathbf{x}))  + \partial_{\mathbf{v}_0} \theta(\mathbf{x}) \mathbf{u}'(\theta(\mathbf{x})), \\
&\boldsymbol{\omega}_{\mathbf{u}_0}(\mathbf{x}) \cdot \mathbf{v}'(\theta(\mathbf{x})) = \boldsymbol{\omega}_{\mathbf{v}_0} (\mathbf{x}) \cdot \mathbf{u}'(\theta(\mathbf{x})).
\end{aligned}
\end{equation} 
We claim that these fields describe a parameterized surface when supplemented with the PDE constraint
\begin{equation}
\begin{aligned}\label{eq:effPDE2}
 \partial_{\mathbf{v}_0} \bm{\omega}_{\mathbf{u}_0}(\mathbf{x})-\partial_{\mathbf{u}_0} \bm{\omega}_{\mathbf{v}_0}(\mathbf{x})= \bm{\omega}_{\mathbf{u}_0}(\mathbf{x})\times \bm{\omega}_{\mathbf{v}_0}(\mathbf{x}).
\end{aligned}
\end{equation}
The surface, in particular, is given by a deformation $\mathbf{y}_{\text{eff}} \colon \Omega \rightarrow \mathbb{R}^3$ with 
\begin{equation}
\begin{aligned}\label{eq:effPDE3}
&\partial_{\mathbf{u}_0}  \mathbf{y}_{\text{eff}}(\mathbf{x}) = \mathbf{R}_{\text{eff}}(\mathbf{x}) \mathbf{u}(\theta(\mathbf{x})) , &&  \partial_{\mathbf{v}_0}  \mathbf{y}_{\text{eff}}(\mathbf{x}) = \mathbf{R}_{\text{eff}}(\mathbf{x}) \mathbf{v}(\theta(\mathbf{x})) ,\\
&\partial_{\mathbf{u}_0} \mathbf{R}_{\text{eff}}(\mathbf{x}) = \mathbf{R}_{\text{eff}}(\mathbf{x}) (\boldsymbol{\omega}_{\mathbf{u}_0} (\mathbf{x})  \times ), && \partial_{\mathbf{v}_0} \mathbf{R}_{\text{eff}}(\mathbf{x}) = \mathbf{R}_{\text{eff}}(\mathbf{x}) (\boldsymbol{\omega}_{\mathbf{v}_0} (\mathbf{x})  \times ). 
\end{aligned}
\end{equation} 
In fact, this is the core content of Cartan's method of moving frames  \cite{cartan2001riemannian}, as it applies to our problem of coarse graining parallelogram origami. For completeness, we provide a proof. Note $\langle \mathbf{x} \rangle = \frac{1}{|\Omega|} \int_{\Omega} \mathbf{x} dA$.

\begin{proposition}\label{firstProp}
Let $\theta(\mathbf{x})$, $\bm{\omega}_{\mathbf{u}_0} (\mathbf{x})$ and $\bm{\omega}_{\mathbf{v}_0}(\mathbf{x})$  be smooth fields on $\Omega$ solving Eqs.\;(\ref{eq:effPDE1}-\ref{eq:effPDE2}). There exists a unique and smooth rotation field $\mathbf{R}_{\emph{eff}} \colon \Omega \rightarrow SO(3)$  with $\mathbf{R}_{\emph{eff}}(\langle \mathbf{x} \rangle) = \mathbf{I}$ solving the Pfaff system
\begin{equation}
\begin{aligned}\label{eq:firstSurface1}
\partial_{\mathbf{u}_0} \mathbf{R}_{\emph{eff}}(\mathbf{x}) = \mathbf{R}_{\emph{eff}}( \mathbf{x}) \big( \bm{\omega}_{\mathbf{u}_0}(\mathbf{x}) \times \big) , \quad \partial_{\mathbf{v}_0} \mathbf{R}_{\emph{eff}}(\mathbf{x}) = \mathbf{R}_{\emph{eff}}( \mathbf{x}) \big( \bm{\omega}_{\mathbf{v}_0}(\mathbf{x}) \times \big).
\end{aligned}
\end{equation}
In addition, there exists a unique and smooth deformation $\mathbf{y}_{\emph{eff}} \colon \Omega \rightarrow \mathbb{R}^3$  such that $\mathbf{y}_{\emph{eff}} ( \langle \mathbf{x} \rangle) = \mathbf{0}$ and 
\begin{equation}
\begin{aligned}\label{eq:secondSurface1}
\nabla \mathbf{y}_{\emph{eff}}(\mathbf{x}) = \mathbf{R}_{\emph{eff}}(\mathbf{x}) \mathbf{A}_{\emph{eff}}(\theta(\mathbf{x})).
\end{aligned}
\end{equation}
\end{proposition}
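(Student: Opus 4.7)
The proposition has the shape of a classical two-step integrability result for moving frames on the simply connected planar domain $\Omega$. The plan is to first solve the linear Pfaff system (\ref{eq:firstSurface1}) for $\mathbf{R}_{\text{eff}}$ by verifying its Frobenius integrability condition, and then integrate $\nabla \mathbf{y}_{\text{eff}} = \mathbf{R}_{\text{eff}}\mathbf{A}_{\text{eff}}(\theta)$ by verifying its Poincar\'e curl-free condition. In both steps, the task reduces to checking that the integrability condition dictated by equality of mixed partials is \emph{exactly} one of the hypotheses (\ref{eq:effPDE1})--(\ref{eq:effPDE2}).

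For the rotation field, I would differentiate the Pfaff system (\ref{eq:firstSurface1}) in the opposite lattice direction, apply the product rule, and use the commutator identity $(\mathbf{a}\times)(\mathbf{b}\times) - (\mathbf{b}\times)(\mathbf{a}\times) = ((\mathbf{a}\times\mathbf{b})\times)$ for skew-symmetric operators on $\mathbb{R}^3$. After multiplying by $\mathbf{R}_{\text{eff}}^T$ and collapsing the skew-symmetric part, the mixed-partial identity $\partial_{\mathbf{v}_0}\partial_{\mathbf{u}_0}\mathbf{R}_{\text{eff}} = \partial_{\mathbf{u}_0}\partial_{\mathbf{v}_0}\mathbf{R}_{\text{eff}}$ is equivalent to
\begin{equation*}
\partial_{\mathbf{v}_0}\boldsymbol{\omega}_{\mathbf{u}_0} - \partial_{\mathbf{u}_0}\boldsymbol{\omega}_{\mathbf{v}_0} = \boldsymbol{\omega}_{\mathbf{u}_0} \times \boldsymbol{\omega}_{\mathbf{v}_0},
\end{equation*}
which is precisely (\ref{eq:effPDE2}). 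Since $\Omega$ is simply connected with smooth boundary and the coefficient fields are smooth, the Frobenius theorem delivers a unique smooth solution $\mathbf{R}_{\text{eff}} : \Omega \to \mathbb{R}^{3\times 3}$ with $\mathbf{R}_{\text{eff}}(\langle\mathbf{x}\rangle) = \mathbf{I}$. To see that $\mathbf{R}_{\text{eff}}$ lands in $SO(3)$, I would use skew-symmetry of $(\boldsymbol{\omega}_{\mathbf{u}_0}\times)$ and $(\boldsymbol{\omega}_{\mathbf{v}_0}\times)$ to check directly that $\partial_{\mathbf{u}_0}(\mathbf{R}_{\text{eff}}\mathbf{R}_{\text{eff}}^T) = \partial_{\mathbf{v}_0}(\mathbf{R}_{\text{eff}}\mathbf{R}_{\text{eff}}^T) = \mathbf{0}$, so by the initial condition $\mathbf{R}_{\text{eff}}\mathbf{R}_{\text{eff}}^T \equiv \mathbf{I}$; continuity of $\det\mathbf{R}_{\text{eff}}$ then fixes the sign at $+1$.

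For the deformation, I would set $\mathbf{F}(\mathbf{x}) := \mathbf{R}_{\text{eff}}(\mathbf{x})\mathbf{A}_{\text{eff}}(\theta(\mathbf{x}))$ and check that its Poincar\'e curl-free condition $\partial_{\mathbf{v}_0}(\mathbf{F}\tilde{\mathbf{u}}_0) = \partial_{\mathbf{u}_0}(\mathbf{F}\tilde{\mathbf{v}}_0)$ holds. Applying the Pfaff system from the first step together with the chain rule expands this to
\begin{equation*}
\mathbf{R}_{\text{eff}}\bigl[\boldsymbol{\omega}_{\mathbf{v}_0}\times\mathbf{u}(\theta) + \partial_{\mathbf{v}_0}\theta\,\mathbf{u}'(\theta)\bigr] = \mathbf{R}_{\text{eff}}\bigl[\boldsymbol{\omega}_{\mathbf{u}_0}\times\mathbf{v}(\theta) + \partial_{\mathbf{u}_0}\theta\,\mathbf{v}'(\theta)\bigr],
\end{equation*}
and multiplying by $\mathbf{R}_{\text{eff}}^T$ reproduces exactly the second (vector) line of (\ref{eq:effPDE1}). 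Since $\Omega$ is simply connected, Poincar\'e's lemma then yields a unique smooth $\mathbf{y}_{\text{eff}}$ with $\nabla \mathbf{y}_{\text{eff}} = \mathbf{F}$ and $\mathbf{y}_{\text{eff}}(\langle\mathbf{x}\rangle) = \mathbf{0}$. The main obstacle is not conceptual but bookkeeping: carefully matching the skew-part and chain-rule expansions to the hypotheses. I note that the algebraic hypothesis $\boldsymbol{\omega}_{\mathbf{u}_0}\cdot\mathbf{v}'(\theta) = \boldsymbol{\omega}_{\mathbf{v}_0}\cdot\mathbf{u}'(\theta)$ from (\ref{eq:effPDE1}) is not needed to construct the surface itself; it enters only later, through Proposition~\ref{LocalBendProp}, when these effective fields must be realized by a genuine origami recovery sequence.
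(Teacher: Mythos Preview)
Your proposal is correct and follows essentially the same approach as the paper: verify that Eq.\;(\ref{eq:effPDE2}) is the Frobenius compatibility condition for the Pfaff system (\ref{eq:firstSurface1}) via the commutator identity for skew tensors, integrate to obtain $\mathbf{R}_{\text{eff}}$ and check it stays in $SO(3)$, then verify that the vector line of Eq.\;(\ref{eq:effPDE1}) is the curl-free condition for $\mathbf{R}_{\text{eff}}\mathbf{A}_{\text{eff}}(\theta)$ and invoke the Poincar\'e lemma. The only cosmetic differences are that the paper passes to coordinates $(\eta_1,\eta_2)$ adapted to $\tilde{\mathbf{u}}_0,\tilde{\mathbf{v}}_0$ before quoting the Pfaff and Poincar\'e lemmas (its Lemmas~\ref{firstLemmaEff}--\ref{secondLemmaEff}), and it establishes $\mathbf{R}_{\text{eff}}\in SO(3)$ by first fixing $\det\mathbf{R}_{\text{eff}}\equiv 1$ and then showing $\mathbf{R}_{\text{eff}}^{-T}\mathbf{R}_{\text{eff}}^{-1}\equiv\mathbf{I}$, whereas you argue $\mathbf{R}_{\text{eff}}\mathbf{R}_{\text{eff}}^T\equiv\mathbf{I}$ first --- both are standard and equivalent.
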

\noindent  The proof is based on well-known mathematical facts, which we collect in Lemmas \ref{firstLemmaEff}-\ref{secondLemmaEff} below. A convenient modern reference is \cite{ciarlet2008new}, Theorems 2-6.
\begin{lemma}\label{firstLemmaEff}
Let $U\subset\mathbb{R}^2$ be a simply connected domain with a smooth boundary, let $\mathbf{A}_{\alpha} \colon U \rightarrow \mathbb{R}^{3\times3}$, $\alpha = 1,2$ be smooth matrix fields that satisfy
\begin{equation}
\begin{aligned}\label{eq:TheAiPDE}
\partial_1 \mathbf{A}_2(\eta_1, \eta_2) - \partial_2 \mathbf{A}_1(\eta_1, \eta_2)  + \mathbf{A}_1(\eta_1, \eta_2) \mathbf{A}_2(\eta_1, \eta_2) - \mathbf{A}_2(\eta_1, \eta_2) \mathbf{A}_1(\eta_1, \eta_2) = \mathbf{0}
\end{aligned}
\end{equation}
on $U$ and let  $(\bar{\eta}_{1}, \bar{\eta}_{2}) \in U$ and $\bar{\mathbf{F}}\in \mathbb{R}^{3\times3}$ be given. There is one and only one  smooth matrix field $\mathbf{F} \colon U \rightarrow \mathbb{R}^{3\times3}$ solving the Pfaff system
\begin{equation}
\begin{aligned}
\partial_1 \mathbf{F}(\eta_1, \eta_2) = \mathbf{F}(\eta_1, \eta_2) \mathbf{A}_{1}(\eta_1, \eta_2), \quad \partial_2 \mathbf{F}(\eta_1, \eta_2) = \mathbf{F}(\eta_1, \eta_2) \mathbf{A}_{2}(\eta_1, \eta_2)
\end{aligned}
\end{equation}
along with the condition $\mathbf{F}(\bar{\eta}_1, \bar{\eta}_2) = \bar{\mathbf{F}}$.
\end{lemma}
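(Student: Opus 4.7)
The plan is to prove this as a standard Frobenius/Pfaff theorem: first build $\mathbf{F}$ locally near $(\bar\eta_1,\bar\eta_2)$ by solving two successive linear ODEs, then use the compatibility relation Eq.\;(\ref{eq:TheAiPDE}) to show both Pfaff equations hold simultaneously, and finally globalize using the simple connectedness of $U$. Linearity of the system in $\mathbf{F}$, together with smoothness of $\mathbf{A}_1,\mathbf{A}_2$, ensures that every ODE encountered is solvable on the full relevant interval and that solutions depend smoothly on parameters.

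First I would fix a rectangle $R\subset U$ containing $(\bar\eta_1,\bar\eta_2)$ and build $\mathbf{F}$ on $R$ in two stages. Stage (a): solve the linear matrix ODE $\tfrac{d}{d\eta_1}\mathbf{F}(\eta_1,\bar\eta_2)=\mathbf{F}(\eta_1,\bar\eta_2)\mathbf{A}_1(\eta_1,\bar\eta_2)$ with initial datum $\mathbf{F}(\bar\eta_1,\bar\eta_2)=\bar{\mathbf{F}}$. Stage (b): for each fixed $\eta_1$, solve $\tfrac{d}{d\eta_2}\mathbf{F}(\eta_1,\eta_2)=\mathbf{F}(\eta_1,\eta_2)\mathbf{A}_2(\eta_1,\eta_2)$ with initial datum supplied by stage (a). Smooth dependence on parameters gives a smooth $\mathbf{F}$ on $R$ satisfying the second Pfaff equation identically, and the first one along the line $\eta_2=\bar\eta_2$.

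The key step is verifying the $\partial_1$-equation everywhere on $R$. Define the error
\begin{equation*}
\mathbf{G}(\eta_1,\eta_2):=\partial_1\mathbf{F}(\eta_1,\eta_2)-\mathbf{F}(\eta_1,\eta_2)\mathbf{A}_1(\eta_1,\eta_2).
\end{equation*}
Differentiating $\mathbf{G}$ in $\eta_2$, commuting partial derivatives of $\mathbf{F}$, substituting $\partial_2\mathbf{F}=\mathbf{F}\mathbf{A}_2$, and invoking the compatibility condition Eq.\;(\ref{eq:TheAiPDE}) gives, after cancellations,
\begin{equation*}
\partial_2\mathbf{G}(\eta_1,\eta_2)=\mathbf{G}(\eta_1,\eta_2)\,\mathbf{A}_2(\eta_1,\eta_2).
\end{equation*}
By construction, $\mathbf{G}(\eta_1,\bar\eta_2)=\mathbf{0}$, so uniqueness for this linear ODE in $\eta_2$ forces $\mathbf{G}\equiv\mathbf{0}$ on $R$. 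Local uniqueness of $\mathbf{F}$ on $R$ follows from the same ODE uniqueness, applied first along $\eta_2=\bar\eta_2$ and then along vertical slices.

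To globalize, take any $\mathbf{p}^*\in U$ and any smooth path $\gamma\subset U$ from $(\bar\eta_1,\bar\eta_2)$ to $\mathbf{p}^*$; cover $\gamma$ by finitely many rectangular patches $R_1,\dots,R_N$ and extend $\mathbf{F}$ successively by applying the local argument on each $R_k$ with initial datum inherited from $R_{k-1}$. The main obstacle is showing the value at $\mathbf{p}^*$ is path-independent, so that $\mathbf{F}$ is well-defined on $U$. This is where simple connectedness enters: any two such paths are homotopic in $U$, and along any continuous deformation of paths through sufficiently fine rectangular grids, the local uniqueness just established forces the endpoint value of the integrated $\mathbf{F}$ to remain constant. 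Equivalently, the monodromy around any small loop is trivial by the $\mathbf{G}\equiv\mathbf{0}$ argument, and simple connectedness upgrades this to trivial monodromy around every loop. Uniqueness of the global $\mathbf{F}$ with $\mathbf{F}(\bar\eta_1,\bar\eta_2)=\bar{\mathbf{F}}$ then follows by applying the uniqueness statement inside each patch along any path.
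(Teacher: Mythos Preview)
Your proof is correct and follows the standard Frobenius/Pfaff argument. The paper, however, does not prove this lemma at all: it states Lemmas~\ref{firstLemmaEff}--\ref{secondLemmaEff} as well-known facts and simply cites \cite{ciarlet2008new}, Theorems~2--6, as a convenient reference. So there is nothing to compare against beyond noting that your two-stage ODE construction with the error field $\mathbf{G}=\partial_1\mathbf{F}-\mathbf{F}\mathbf{A}_1$ and the monodromy/path-independence argument via simple connectedness is exactly the classical route such references take; you have supplied a self-contained proof where the paper defers to the literature.
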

\begin{lemma}\label{secondLemmaEff}
Let $U\subset\mathbb{R}^2$ be a simply connected domain with a  smooth boundary, and let $\mathbf{h}_{\alpha} \colon U \rightarrow \mathbb{R}^3$, $\alpha =1,2$, be smooth vector fields that satisfy 
\begin{equation}
\begin{aligned}
\partial_1 \mathbf{h}_2(\eta_1, \eta_2) = \partial_2 \mathbf{h}_1(\eta_1, \eta_2)
\end{aligned}
\end{equation} 
on $U$. There exists a smooth vector field $\boldsymbol{\varphi} \colon U \rightarrow \mathbb{R}^3$, unique up to an additive constant, solving
\begin{equation}
\partial_1 \boldsymbol{\varphi}(\eta_1, \eta_2) = \mathbf{h}_1(\eta_1, \eta_2), \quad \partial_2 \boldsymbol{\varphi}(\eta_1, \eta_2) = \mathbf{h}_2(\eta_1,\eta_2).
\end{equation}
\end{lemma}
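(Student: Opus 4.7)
The plan is to recognize this as the classical Poincar\'e lemma for closed 1-forms on a simply connected planar domain, applied componentwise. Since the three scalar components of $\boldsymbol{\varphi}$ and $\mathbf{h}_\alpha$ decouple, it suffices to treat scalar fields $h_1, h_2 \colon U \to \mathbb{R}$ satisfying $\partial_1 h_2 = \partial_2 h_1$ and construct a scalar potential $\varphi$ with $\partial_\alpha \varphi = h_\alpha$. Fix a basepoint $\mathbf{p}_0 \in U$, and for each $\mathbf{p} \in U$ choose a piecewise smooth path $\gamma$ from $\mathbf{p}_0$ to $\mathbf{p}$ in $U$ (one exists because any simply connected open subset of $\mathbb{R}^2$ is path-connected). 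Define
\[
\varphi(\mathbf{p}) := \int_\gamma h_1\, d\eta_1 + h_2\, d\eta_2.
\]

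The first key step is to show that $\varphi(\mathbf{p})$ does not depend on the chosen path. Given two paths $\gamma, \tilde\gamma$ from $\mathbf{p}_0$ to $\mathbf{p}$, the concatenation of $\gamma$ with the reverse of $\tilde\gamma$ is a piecewise smooth closed loop in $U$; by simple connectivity, this loop admits a smooth null-homotopy $H\colon[0,1]^2\to U$ fixing its endpoints. Differentiating the line integral $s\mapsto \int_{H(s,\cdot)} h_1\,d\eta_1+h_2\,d\eta_2$ with respect to $s$, then integrating by parts in $t$, produces an integrand proportional to $\partial_1 h_2 - \partial_2 h_1$, which vanishes by hypothesis; the boundary terms vanish because the endpoints of $H(s,\cdot)$ are fixed throughout the homotopy. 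Hence the integral around any null-homotopic loop is zero, and path-independence follows.

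Once path-independence is in hand, the PDE and smoothness follow quickly. For any $\mathbf{p} \in U$, choose $\varepsilon>0$ so that the segment from $\mathbf{p}$ to $\mathbf{p} + s \mathbf{e}_\alpha$ is contained in $U$ for $|s|<\varepsilon$. Evaluating $\varphi(\mathbf{p}+s\mathbf{e}_\alpha)$ via a path to $\mathbf{p}$ concatenated with this segment, path-independence yields
\[
\varphi(\mathbf{p} + s \mathbf{e}_\alpha) - \varphi(\mathbf{p}) = \int_0^s h_\alpha(\mathbf{p} + t \mathbf{e}_\alpha)\, dt,
\]
so by the fundamental theorem of calculus $\partial_\alpha \varphi(\mathbf{p}) = h_\alpha(\mathbf{p})$. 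The smoothness of $\varphi$ is inherited from that of $h_1, h_2$ by iterating the argument on the partial derivatives. For uniqueness, if $\varphi_1$ and $\varphi_2$ both solve the system, then $\nabla(\varphi_1 - \varphi_2) \equiv \mathbf{0}$ on the connected set $U$, so $\varphi_1 - \varphi_2$ is a constant.

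The main obstacle is the rigorous handling of path-independence, since the closed loop formed by $\gamma$ and $\tilde\gamma$ need not be simple and so need not enclose a standard region amenable to a direct application of Green's theorem. The homotopy-invariance argument sketched above sidesteps this: one smooths the continuous null-homotopy guaranteed by simple connectivity (using the smooth boundary hypothesis to stay inside $U$ near $\partial U$ if needed), and the compatibility condition $\partial_1 h_2 = \partial_2 h_1$ is then exactly what makes the $s$-derivative of the line integral vanish identically. As an alternative route, one can approximate the loop by simple closed polygonal curves, apply Green's theorem on the bounded regions they enclose, and use $\partial_1 h_2 - \partial_2 h_1 = 0$ on each region to conclude.
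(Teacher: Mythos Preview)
Your argument is correct and is the standard proof of the Poincar\'e lemma via path integrals and homotopy invariance of line integrals of closed forms. The paper, however, does not prove this lemma at all: it states Lemmas~\ref{firstLemmaEff} and~\ref{secondLemmaEff} together as ``well-known mathematical facts'' and simply cites Theorems~2--6 of \cite{ciarlet2008new} as a convenient reference. So there is nothing to compare at the level of mathematical strategy; you have supplied a self-contained proof where the paper defers to the literature.

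One small remark: the smooth-boundary hypothesis is not actually needed to smooth the null-homotopy. Since $U$ is open and the image of a continuous homotopy $H\colon[0,1]^2\to U$ is compact, it lies at positive distance from $\partial U$, and mollification by a sufficiently small kernel yields a smooth homotopy still contained in $U$. Your parenthetical about using the boundary regularity ``if needed'' is harmless but unnecessary.
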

\begin{proof}[Proof of Proposition \ref{firstProp}.] Assume that Eqs.\;(\ref{eq:effPDE1}-\ref{eq:effPDE2}) hold for smooth fields $\theta(\mathbf{x})$, $\boldsymbol{\omega}_{\mathbf{u}_0}(\mathbf{x})$ and $\boldsymbol{\omega}_{\mathbf{v}_0}(\mathbf{x})$ on $\Omega$, a simply connected domain with a  smooth boundary. 

Recall that the  Bravais lattice vectors $\tilde{\mathbf{u}}_0$ and $\tilde{\mathbf{v}}_0$ span $\mathbb{R}^2$. Thus, there exists reciprocal vectors $\tilde{\mathbf{u}}_0^r$ and $\tilde{\mathbf{v}}_0^r$ such that $\tilde{\mathbf{u}}_0^r \cdot \tilde{\mathbf{u}}_0 = 1$, $\tilde{\mathbf{u}}_0^r \cdot \tilde{\mathbf{v}}_0 = 0$, $\tilde{\mathbf{v}}_0^r \cdot \tilde{\mathbf{v}}_0 = 1$ and $\tilde{\mathbf{v}}_0^r \cdot \tilde{\mathbf{u}}_0 = 0$. Hence $\mathbf{x} \in \Omega$ satisfies $\mathbf{x}  = (\mathbf{x} \cdot \tilde{\mathbf{u}}_0^r) \tilde{\mathbf{u}}_0 +  (\mathbf{x} \cdot \tilde{\mathbf{v}}_0^r) \tilde{\mathbf{v}}_0$.  Let $\eta_1:= \mathbf{x} \cdot \tilde{\mathbf{u}}_0^r$ and $\eta_2 :=\mathbf{x} \cdot \tilde{\mathbf{v}}_0^r$, so that $\mathbf{x}$ is parameterized as $\mathbf{x}(\eta_1, \eta_2) = \eta_1 \tilde{\mathbf{u}}_0 + \eta_2 \tilde{\mathbf{v}}_0$.  Finally, let $U := \{ (\eta_1, \eta_2) \colon \mathbf{x}(\eta_1, \eta_2) \in \Omega\}$. By construction, $U$ is a simply connected domain with a  smooth boundary. Also, there is a unique $(\bar{\eta}_1, \bar{\eta}_2) \in U$ such that $\mathbf{x}(\bar{\eta}_1, \bar{\eta}_2) = \langle \mathbf{x} \rangle$. We write $\boldsymbol{\eta} = (\eta_1, \eta_2)$ for short below. 

Building on these definitions, let $\hat{\theta} (\boldsymbol{\eta}) := \theta(\mathbf{x}(\boldsymbol{\eta}))$, $\boldsymbol{\omega}_1(\boldsymbol{\eta}) := \boldsymbol{\omega}_{\mathbf{u}_0}(\mathbf{x}(\boldsymbol{\eta}))$ and  $\boldsymbol{\omega}_2(\boldsymbol{\eta}):= \boldsymbol{\omega}_{\mathbf{v}_0}(\mathbf{x}(\boldsymbol{\eta}))$.  We claim that Eq.\;(\ref{eq:effPDE2}) implies  Eq.\;(\ref{eq:TheAiPDE}) for the tensors $\mathbf{A}_1(\boldsymbol{\eta}) : =  [\boldsymbol{\omega}_1(\boldsymbol{\eta}) \times ]$ and $\mathbf{A}_2(\boldsymbol{\eta}) : =  [\boldsymbol{\omega}_2(\boldsymbol{\eta}) \times ]$. This claim follows from a few basic observations. First, the definitions of $\eta_1, \eta_2$  give  that 
\begin{equation}
\begin{aligned}\label{eq:partialDerivativesRelate}
\partial_1 \boldsymbol{\omega}_2(\boldsymbol{\eta})  &= \lim_{\epsilon \rightarrow 0} \epsilon^{-1} \big( \boldsymbol{\omega}_2(\eta_1 + \epsilon, \eta_2) - \boldsymbol{\omega}_2(\eta_1, \eta_2)    \big) \\
&=  \lim_{\epsilon \rightarrow 0} \epsilon^{-1} \big(\boldsymbol{\omega}_{\mathbf{v}_0}(\mathbf{x}(\eta_1, \eta_2) +  \epsilon \tilde{\mathbf{u}}_0 ) - \boldsymbol{\omega}_{\mathbf{v}_0}(\mathbf{x}(\eta_1, \eta_2))    \big) =  \partial_{\mathbf{u}_0} \boldsymbol{\omega}_{\mathbf{v}_0} (\mathbf{x}(\boldsymbol{\eta})), \\
\partial_2 \boldsymbol{\omega}_1(\boldsymbol{\eta})  &= \lim_{\epsilon \rightarrow 0} \epsilon^{-1} \big( \boldsymbol{\omega}_1(\eta_1, \eta_2 + \epsilon) - \boldsymbol{\omega}_1(\eta_1, \eta_2)    \big) \\ 
& = \lim_{\epsilon \rightarrow 0} \epsilon^{-1} \big(\boldsymbol{\omega}_{\mathbf{u}_0}(\mathbf{x}(\eta_1, \eta_2) +  \epsilon \tilde{\mathbf{v}}_0 ) - \boldsymbol{\omega}_{\mathbf{u}_0}(\mathbf{x}(\eta_1, \eta_2))    \big)  = \partial_{\mathbf{v}_0} \boldsymbol{\omega}_{\mathbf{u}_0} (\mathbf{x}(\boldsymbol{\eta})). 
\end{aligned}
\end{equation}
As a result, Eq.\;(\ref{eq:effPDE2}) implies that
\begin{equation}
\begin{aligned}\label{eq:firstObProp1}
 \partial_{2}\boldsymbol{\omega}_1(\boldsymbol{\eta})-\partial_1\boldsymbol{\omega}_2(\boldsymbol{\eta})=\boldsymbol{\omega}_1(\boldsymbol{\eta})\times \boldsymbol{\omega}_2(\boldsymbol{\eta})
\end{aligned}
\end{equation}
on $U$. Next, notice that 
\begin{equation}
\begin{aligned}\label{eq:secondObProp1}
\mathbf{a}_1 - \mathbf{a}_2   -  \mathbf{b}_1 \times \mathbf{b}_2 = \mathbf{0} \quad \iff \quad (\mathbf{a}_1 \times) - (\mathbf{a}_2 \times) - (\mathbf{b}_1 \times ) ( \mathbf{b}_2 \times) + (\mathbf{b}_2 \times) ( \mathbf{b}_1 \times) = \mathbf{0} 
\end{aligned}
\end{equation}
for any $\mathbf{a}_1, \mathbf{a}_2, \mathbf{b}_1, \mathbf{b}_2 \in \mathbb{R}^3$. Indeed, standard properties of the cross-product yield $\mathbf{v} \times ( \mathbf{a}_1 - \mathbf{a}_2   -  \mathbf{b}_1 \times \mathbf{b}_2) = \big[ (\mathbf{a}_1 \times) - (\mathbf{a}_2 \times) - (\mathbf{b}_1 \times ) ( \mathbf{b}_2 \times) + (\mathbf{b}_2 \times) ( \mathbf{b}_1 \times)\big] \mathbf{v}$. Thus, Eqs.\;(\ref{eq:firstObProp1}-\ref{eq:secondObProp1}) furnish the PDE
\begin{equation}
\begin{aligned}
&(\partial_{1}\boldsymbol{\omega}_2(\boldsymbol{\eta}) \times )  -  (\partial_2\boldsymbol{\omega}_1(\boldsymbol{\eta}) \times )  + (\boldsymbol{\omega}_1(\boldsymbol{\eta}) \times )  (\boldsymbol{\omega}_2(\boldsymbol{\eta}) \times )  - (\boldsymbol{\omega}_2(\boldsymbol{\eta}) \times )(\boldsymbol{\omega}_1(\boldsymbol{\eta}) \times )  = \mathbf{0}
\end{aligned}
\end{equation}
on $U$.  The claim follows since $(\partial_1 \boldsymbol{\omega}_2(\boldsymbol{\eta}) \times ) = \partial_1 ( \boldsymbol{\omega}_2(\boldsymbol{\eta}) \times )$ and $(\partial_2 \boldsymbol{\omega}_1(\boldsymbol{\eta}) \times ) = \partial_2 ( \boldsymbol{\omega}_1(\boldsymbol{\eta}) \times )$.

By Lemma \ref{firstLemmaEff}, there is a smooth and unique $\mathbf{R} \colon U \rightarrow \mathbb{R}^{3\times3}$ solving the Pfaff system
\begin{equation}
\begin{aligned}\label{eq:Pfaff1}
\partial_1 \mathbf{R}(\boldsymbol{\eta})  = \mathbf{R}(\boldsymbol{\eta})  \big(\boldsymbol{\omega}_1(\boldsymbol{\eta}) \times \big), \quad \partial_2 \mathbf{R}(\boldsymbol{\eta})  = \mathbf{R}(\boldsymbol{\eta})  \big(\boldsymbol{\omega}_2(\boldsymbol{\eta}) \times \big)
\end{aligned}
\end{equation}
with $\mathbf{R}(\bar{\boldsymbol{\eta}}) = \mathbf{I}$. To show that $\mathbf{R}(\boldsymbol{\eta})$ is a rotation for all $\boldsymbol{\eta} \in U$, observe that
\begin{equation}
    \begin{aligned}\label{eq:constDerivative}
        \partial_i [\det \mathbf{R}(\boldsymbol{\eta})] = \det \mathbf{R}(\boldsymbol{\eta})  \text{Tr}\big(\big[\partial_i \mathbf{R}(\boldsymbol{\eta}) \big] \mathbf{R}^{-1}(\boldsymbol{\eta}) \big)  = \det \mathbf{R}(\boldsymbol{\eta}) \text{Tr} \big([\boldsymbol{\omega}_i(\boldsymbol{\eta}) \times] \big) = 0, \quad i = 1,2
    \end{aligned}
\end{equation}
where $\mathbf{R}(\boldsymbol{\eta})$ is invertible. Since $\det \mathbf{R}(\boldsymbol{\eta})$ is smooth and $\det \mathbf{R}(\bar{\boldsymbol{\eta}}) = 1$, a continuation argument using Eq.\;(\ref{eq:constDerivative}) gives that $\det \mathbf{R}(\boldsymbol{\eta}) = 1$ everywhere on $U$. As such, $\mathbf{R}^{-1}(\boldsymbol{\eta})$ is well-defined and smooth. It also satisfies $\partial_i \mathbf{R}^{-1}(\boldsymbol{\eta}) = (\boldsymbol{\omega}_i(\boldsymbol{\eta}) \times) \mathbf{R}^{-1}(\boldsymbol{\eta})$, $i = 1,2,$ by differentiation of the inverse. Consequently, 
\begin{equation}
    \begin{aligned}
        \partial_i \big( \mathbf{R}^{-T}(\boldsymbol{\eta}) \mathbf{R}^{-1}(\boldsymbol{\eta}) \big) = \big[(\boldsymbol{\omega}_i(\boldsymbol{\eta}) \times ) \mathbf{R}^{-1}(\boldsymbol{\eta})\big]^{T} \mathbf{R}^{-1}(\boldsymbol{\eta}) + \mathbf{R}^{-T}(\boldsymbol{\eta}) (\boldsymbol{\omega}_i(\boldsymbol{\eta}) \times) \mathbf{R}^{-1}(\boldsymbol{\eta}) = \mathbf{0}, \quad i =1,2
    \end{aligned}
\end{equation}
and hence  $\mathbf{R}^{-T}(\boldsymbol{\eta}) \mathbf{R}^{-1}(\boldsymbol{\eta}) = \mathbf{I}$  since $\mathbf{R}(\bar{\boldsymbol{\eta}}) = \mathbf{I} = \mathbf{R}^{-1}(\bar{\boldsymbol{\eta}})$.  Finally, $\mathbf{R}(\boldsymbol{\eta}) \in SO(3)$ because $\det \mathbf{R}(\boldsymbol{\eta}) = 1$.

Having shown that $\mathbf{R}(\boldsymbol{\eta})$ is a rotation field, we define $\mathbf{R}_{\text{eff}} \colon \Omega \rightarrow SO(3)$  as $\mathbf{R}_{\text{eff}}(\mathbf{x}(\boldsymbol{\eta})) =  \mathbf{R}(\boldsymbol{\eta})$ for  $\boldsymbol{\eta} \in U$. Similar to Eq.\;(\ref{eq:partialDerivativesRelate}), the derivatives of these fields are related via 
\begin{equation}
\begin{aligned}
\partial_1 \mathbf{R}(\boldsymbol{\eta}) &= \mathbf{R}(\boldsymbol{\eta}) \big[ \boldsymbol{\omega}_1(\boldsymbol{\eta}) \times \big] = \mathbf{R}_{\text{eff}}(\mathbf{x}(\boldsymbol{\eta}))\big[ \boldsymbol{\omega}_{\mathbf{u}_0} (\mathbf{x}(\boldsymbol{\eta})) \times \big]= \partial_{\mathbf{u}_0} \mathbf{R}_{\text{eff}}( \mathbf{x}(\boldsymbol{\eta})),  \\ 
\partial_2 \mathbf{R}(\boldsymbol{\eta}) &= \mathbf{R}(\boldsymbol{\eta}) \big[ \boldsymbol{\omega}_2(\boldsymbol{\eta}) \times \big] = \mathbf{R}_{\text{eff}}(\mathbf{x}(\boldsymbol{\eta}))\big[ \boldsymbol{\omega}_{\mathbf{v}_0} (\mathbf{x}(\boldsymbol{\eta})) \times \big] = \partial_{\mathbf{v}_0} \mathbf{R}_{\text{eff}}( \mathbf{x}(\boldsymbol{\eta})).
\end{aligned}
\end{equation}
Thus,  $\mathbf{R}_{\text{eff}}(\mathbf{x})$ is the unique and smooth rotation field satisfying $\mathbf{R}_{\text{eff}}(\langle \mathbf{x} \rangle) = \mathbf{I}$ and the second set of identities in Eq.\;(\ref{eq:effPDE3}). (Concerning uniqueness, if there were a second such rotation field $\mathbf{Q}_{\text{eff}}(\mathbf{x}) \neq \mathbf{R}_{\text{eff}}(\mathbf{x})$ satisfying these conditions, then there is also a rotation field $\mathbf{Q}(\boldsymbol{\eta}) := \mathbf{Q}_{\text{eff}}(\mathbf{x}(\boldsymbol{\eta}))$  that is $\neq \mathbf{R}(\boldsymbol{\eta})$ and yet solves Eq.\;(\ref{eq:Pfaff1}) with $\mathbf{Q}(\boldsymbol{\eta}) = \mathbf{I}$. This would contradict the uniqueness of solutions to Pfaff systems.)  

Next, recall that $\hat{\theta} (\boldsymbol{\eta}) = \theta(\mathbf{x}(\boldsymbol{\eta}))$. So,
\begin{equation}
\begin{aligned}
&\partial_1 \big[ \mathbf{R}(\boldsymbol{\eta}) \mathbf{v}(\hat{\theta} (\boldsymbol{\eta})) \big] =    \mathbf{R}_{\text{eff}}(\mathbf{x}(\boldsymbol{\eta})) \Big[ \boldsymbol{\omega}_{\mathbf{u}_0}(\mathbf{x}(\boldsymbol{\eta})) \times \mathbf{v}(\theta(\mathbf{x}(\boldsymbol{\eta}))) + \partial_{\mathbf{u}_0} \theta( \mathbf{x}(\boldsymbol{\eta})) \mathbf{v}'(\theta(\mathbf{x}(\boldsymbol{\eta}))) \Big], \\
&\partial_2 \big[ \mathbf{R}(\boldsymbol{\eta}) \mathbf{u}(\hat{\theta} (\boldsymbol{\eta})) \big] =    \mathbf{R}_{\text{eff}}(\mathbf{x}(\boldsymbol{\eta})) \Big[ \boldsymbol{\omega}_{\mathbf{v}_0}(\mathbf{x}(\boldsymbol{\eta})) \times \mathbf{u}(\theta(\mathbf{x}(\boldsymbol{\eta}))) + \partial_{\mathbf{v}_0} \theta( \mathbf{x}(\boldsymbol{\eta})) \mathbf{u}'(\theta(\mathbf{x}(\boldsymbol{\eta}))) \Big].
\end{aligned}
\end{equation}
Since Eq.\;(\ref{eq:effPDE1}) holds, $\partial_1 \big[ \mathbf{R}(\boldsymbol{\eta}) \mathbf{v}(\hat{\theta} (\boldsymbol{\eta})) \big] = \partial_2 \big[ \mathbf{R}(\boldsymbol{\eta}) \mathbf{u}(\hat{\theta} (\boldsymbol{\eta})) \big]$. It follows from Lemma \ref{secondLemmaEff} that there is a unique and smooth  deformation $\boldsymbol{\varphi} \colon U \rightarrow \mathbb{R}^3$ such that  $\boldsymbol{\varphi}(\bar{\boldsymbol{\eta}}) = \mathbf{0}$ and 
\begin{equation}
\begin{aligned}
\partial_1 \boldsymbol{\varphi}(\boldsymbol{\eta}) =  \mathbf{R}(\boldsymbol{\eta}) \mathbf{u}(\hat{\theta} (\boldsymbol{\eta})), \quad \partial_2 \boldsymbol{\varphi}(\boldsymbol{\eta}) =  \mathbf{R}(\boldsymbol{\eta}) \mathbf{v}(\hat{\theta} (\boldsymbol{\eta})).
\end{aligned}
\end{equation}
Similar to the argument for the rotation field, we conclude that $\mathbf{y}_{\text{eff}} \colon \Omega \rightarrow \mathbb{R}^3$ defined by $\mathbf{y}_{\text{eff}}(\mathbf{x}(\boldsymbol{\eta})) =  \boldsymbol{\varphi}(\boldsymbol{\eta})$, $\boldsymbol{\eta} \in U$ is the unique and smooth deformation with $\mathbf{y}_{\text{eff}}(\langle \mathbf{x} \rangle) = \mathbf{0}$ and such that the first set of identities in Eq.\;(\ref{eq:effPDE3}) holds. This completes the proof of the proposition. 
\end{proof}

\subsection{Construction of soft origami modes}\label{ssec:GlobalOrigami}

We now construct general soft origami modes. Our plan is illustrated in Fig.\;\ref{Fig:EggboxGap}. We sample the PDE in Eqs.\;(\ref{eq:effPDE1}-\ref{eq:effPDE2}) in a cell-wise manner to deform the origami cells. We impose the local cell-wise bending kinematics from Section \ref{ssec:LocalBend} on the construction, but suffer incompatibilities across the cell boundaries which must be resolved. Of course, such incompatibility is inevitable when sampling a smooth PDE to construct a discrete structure. Here, the intercell gaps turn out to be coupled at leading order to an underdetermined linear PDE system with three equations and four unknowns. We recognize how to solve this PDE, which we do in Appendix \ref{sec:ExistencePDE}. This leads to the desired soft modes.

\begin{figure}[t!]
\centering
\includegraphics[width=1\textwidth]{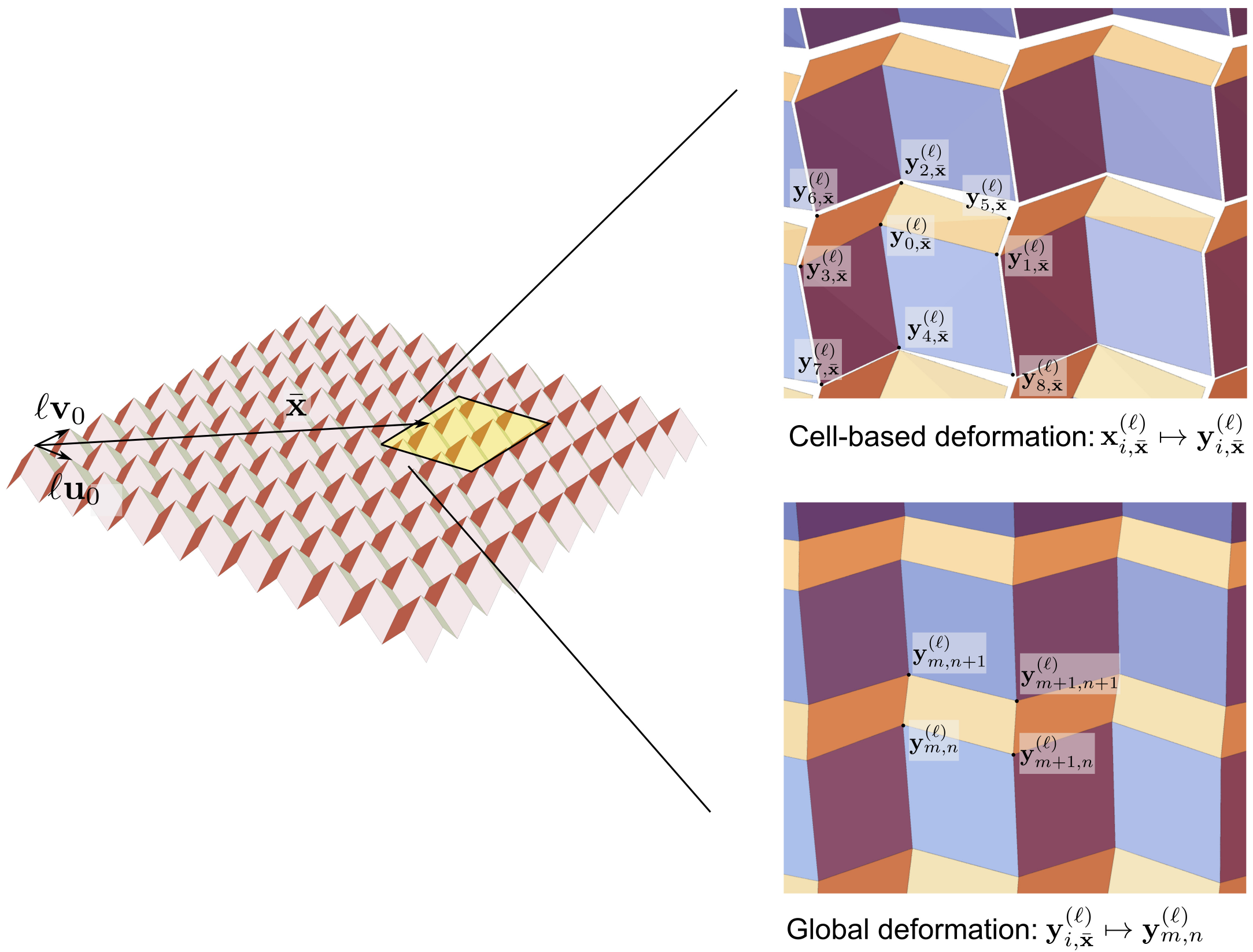} 
\caption{Illustration of the cell-based construction of origami soft modes. Each cell is made up of nine points. These points are deformed according to the local bending kinematics in Section \ref{ssec:LocalBend}, with parameters that vary from cell to cell according to a solution of the effective surface theory in Section \ref{ssec:EffSurfacesSection}. This construction leads to small gaps between adjacent cells,  which are then closed by an averaging procedure as shown.}
\label{Fig:EggboxGap}
\end{figure}

\paragraph{Preliminaries.} Let us quickly recall the setup of the coarse-graining problem for the reader's convenience. We consider any parallelogram origami design built from crease vectors $\mathbf{t}_1^r, \ldots, \mathbf{t}_4^r$ per Section \ref{ssec:DesignKin}. The design has a unique mechanism motion given by deforming the crease vectors as $\mathbf{t}_i^r \mapsto \mathbf{t}_i(\theta)$ for $\theta \in (\theta^{-}, \theta^+)$. The Bravais lattice vectors of the reference configuration are $\mathbf{u}_0 = \mathbf{t}_1^r - \mathbf{t}_3^r$ and $\mathbf{v}_0 = \mathbf{t}_2^r - \mathbf{t}_4^r$. These span the plane with normal $\mathbf{e}_3$ and satisfy $\mathbf{e}_3 \cdot (\mathbf{u}_0 \times \mathbf{v}_0) > 0$. Analogously, their deformed counterparts are $\mathbf{u}(\theta) = \mathbf{t}_1(\theta) - \mathbf{t}_3(\theta)$ and $\mathbf{v}(\theta) = \mathbf{t}_2(\theta) - \mathbf{t}_4(\theta)$. They too are taken to span the plane with normal $\mathbf{e}_3$ and to have $\mathbf{e}_3 \cdot (\mathbf{u}(\theta) \times \mathbf{v}(\theta)) > 0$.  Finally, $\tilde{\mathbf{u}}_0$ and $\tilde{\mathbf{v}}_0$ denote the orthogonal projections of $\mathbf{u}_0$ and $\mathbf{v}_0$ onto $\mathbb{R}^2$.  

\paragraph{The PDE description.} Let  $\Omega$  be the planar reference domain from Section \ref{ssec:BarHinge}. Let $\theta(\mathbf{x})$, $\boldsymbol{\omega}_{\mathbf{u}_0}(\mathbf{x})$ and $\boldsymbol{\omega}_{\mathbf{v}_0}(\mathbf{x})$ be smoothly defined on a neighborhood of  $\overline{\Omega}$. In the case that the design’s Poisson’s ratio $\nu(\theta)$ in Eq.\;(\ref{eq:PoissonsRatioDesign}) is negative, assume further that these fields are analytic. Finally, assume that the fields  solve the PDE in Eqs.\;(\ref{eq:effPDE1}-\ref{eq:effPDE2}) on $\Omega$. 
Extract from this PDE solution the unique and smooth rotation field $\mathbf{R}_{\text{eff}} \colon \Omega \rightarrow SO(3)$ and  effective deformation $\mathbf{y}_{\text{eff}} \colon \Omega \rightarrow \mathbb{R}^3$ satisfying (\ref{eq:effPDE3}) along with the constraints $\mathbf{R}_{\text{eff}} (\langle \mathbf{x} \rangle) = \mathbf{I}$ and $\mathbf{y}_{\text{eff}}(\langle \mathbf{x} \rangle ) = \mathbf{0}$. 
Next,  define a smooth analog of the bending parameters in Eq.\;(\ref{eq:theKappas1}) through a vector field $\boldsymbol{\kappa} \colon \Omega \rightarrow \mathbb{R}^4$, $\boldsymbol{\kappa}(\mathbf{x}) := ( \kappa_1(\mathbf{x}) ,\ldots, \kappa_4(\mathbf{x}))$  that replaces $\theta, \delta \theta_{\mathbf{u}_0}$, $\delta \theta_{\mathbf{v}_0}$, $\boldsymbol{\omega}_{\mathbf{u}_0}$ and $\boldsymbol{\omega}_{\mathbf{v}_0}$ in that description with $\theta(\mathbf{x})$, $\partial_{\mathbf{u}_0} \theta (\mathbf{x})$, $\partial_{\mathbf{v}_0} \theta(\mathbf{x})$, $\boldsymbol{\omega}_{\mathbf{v}_0}(\mathbf{x})$  and $\boldsymbol{\omega}_{\mathbf{v}_0}(\mathbf{x})$. That is,  
\begin{equation}
\begin{aligned}\label{eq:theKappaFields}
\begin{pmatrix}
\kappa_1(\mathbf{x}) \\ \kappa_2(\mathbf{x}) \\ \kappa_3(\mathbf{x}) \\ \kappa_4(\mathbf{x}) 
\end{pmatrix} = \mathbf{B}(\theta(\mathbf{x}))  \begin{pmatrix}  \boldsymbol{\omega}_{\mathbf{u}_0}(\mathbf{x}) \\ \partial_{\mathbf{u}_0} \theta(\mathbf{x}) \\
\boldsymbol{\omega}_{\mathbf{v}_0}(\mathbf{x}) \\ \partial_{\mathbf{v}_0} \theta(\mathbf{x})
 \end{pmatrix}
\end{aligned}
\end{equation}
for $\mathbf{B}(\theta)$  in Eq.\;(\ref{eq:theKappas1}). Our origami construction is based on sampling the fields $\theta(\mathbf{x})$, $\boldsymbol{\omega}_{\mathbf{u}_0}(\mathbf{x})$, $\boldsymbol{\omega}_{\mathbf{v}_0}(\mathbf{x})$,  $\mathbf{R}_{\text{eff}}(\mathbf{x})$, $\mathbf{y}_{\text{eff}}(\mathbf{x})$ and $\boldsymbol{\kappa}(\mathbf{x})$.

\paragraph{A cell-based construction.} Let $\Omega_{\text{ori}}^{(\ell)}$ be the origami reference domain from Section \ref{ssec:BarHinge}. Its panels have characteristic length $\sim \ell \ll 1$.  Let $\mathcal{I}_{\text{cell}}^{(\ell)} := \{ \ell( i \tilde{\mathbf{u}}_0 + j \tilde{\mathbf{v}}_0 ) \subset \Omega \colon i, j \in \mathbb{Z}   \}$ consist of the central vertices of the unit cells in $\Omega_{\text{ori}}^{(\ell)}$.  For each $\bar{\mathbf{x}} \in \mathcal{I}_{\text{cell}}^{(\ell)}$, observe that a single undeformed cell is fully specified by nine vertices: the interior vertex $\mathbf{x}_{0,\bar{\mathbf{x}}}^{(\ell)} :=  (\bar{\mathbf{x}} , 0 )$  and the boundary vertices 
\begin{equation}
\begin{aligned}\label{eq:cellbyCell}
& \mathbf{x}_{1, \bar{\mathbf{x}}}^{(\ell)} := \mathbf{x}_{0,\bar{\mathbf{x}}}^{(\ell)}  + \ell \mathbf{t}_1^{r}, \quad  \mathbf{x}_{2, \bar{\mathbf{x}}}^{(\ell)} := \mathbf{x}_{0,\bar{\mathbf{x}}}^{(\ell)}  + \ell \mathbf{t}_2^{r} , \quad \mathbf{x}_{3, \bar{\mathbf{x}}}^{(\ell)} := \mathbf{x}_{0,\bar{\mathbf{x}}}^{(\ell)}  + \ell \mathbf{t}_3^{r}, \quad \mathbf{x}_{4, \bar{\mathbf{x}}}^{(\ell)} := \mathbf{x}_{0,\bar{\mathbf{x}}}^{(\ell)}  + \ell \mathbf{t}_4^{r},\\
&\mathbf{x}_{5,\bar{\mathbf{x}}}^{(\ell)} := \mathbf{x}_{1,\bar{\mathbf{x}}}^{(\ell)} + \ell \mathbf{t}_2^r, \quad \mathbf{x}_{6,\bar{\mathbf{x}}}^{(\ell)} := \mathbf{x}_{2,\bar{\mathbf{x}}}^{(\ell)} + \ell \mathbf{t}_3^r, \quad \mathbf{x}_{7,\bar{\mathbf{x}}}^{(\ell)} := \mathbf{x}_{3,\bar{\mathbf{x}}}^{(\ell)} + \ell \mathbf{t}_4^r, \quad \mathbf{x}_{8,\bar{\mathbf{x}}}^{(\ell)} := \mathbf{x}_{4,\bar{\mathbf{x}}}^{(\ell)} + \ell \mathbf{t}_1^r.
\end{aligned}
\end{equation}
To deform each such cell we must deform the points $\mathbf{x}_{i,\bar{\mathbf{x}}}^{(\ell)} \mapsto \mathbf{y}_{i,\bar{\mathbf{x}}}^{(\ell)}$, $i = 0, 1,\ldots,8$. We proceed by constructing a preferred deformation for each cell, and then by closing the intercell gaps. The labeling here is as in Fig.\;\ref{Fig:EggboxGap}.

For each $\bar{\mathbf{x}} \in \mathcal{I}_{\text{cell}}^{(\ell)}$, we deform the interior vertex of the origami cell $\mathbf{x}_{0,\bar{\mathbf{x}}}^{(\ell)}$ to 
\begin{equation}
\begin{aligned}\label{eq:cellwise1}
\mathbf{y}_{0,\bar{\mathbf{x}}}^{(\ell)} := \mathbf{y}_{\text{eff}}(\bar{\mathbf{x}}) +\ell \mathbf{d}(\bar{\mathbf{x}})
\end{aligned}
\end{equation} 
for some smooth vector field $\mathbf{d}(\mathbf{x})$ that will be chosen later. We then deform the adjacent vertices $\mathbf{x}_{i,\bar{\mathbf{x}}}^{(\ell)}$, $i =1,\ldots,4$, to
\begin{equation}
\begin{aligned}\label{eq:cellwise2}
\mathbf{y}_{i,\bar{\mathbf{x}}}^{(\ell)} :=  \mathbf{y}_{0,\bar{\mathbf{x}}}^{(\ell)}  + \ell \big[  \mathbf{R}_{\text{eff}}(\bar{\mathbf{x}}) \big( \mathbf{I} + \ell (\boldsymbol{\omega}(\bar{\mathbf{x}}) \times ) \big)\big]  \mathbf{t}_i(\theta(\bar{\mathbf{x}}) + \ell \xi(\bar{\mathbf{x}}) ), \quad i = 1,\ldots, 4
\end{aligned}
\end{equation}
for some smooth vector and scalar fields $\boldsymbol{\omega}(\mathbf{x})$ and $\xi(\mathbf{x})$ that will also be chosen later. Finally, we deform the corner vertices $\mathbf{x}_{i,\bar{\mathbf{x}}}^{(\ell)}$, $i =5, \ldots,8$, to 
\begin{equation}
\begin{aligned}\label{eq:cellwise3}
\mathbf{y}_{5,\bar{\mathbf{x}}}^{(\ell)} :=  \mathbf{y}_{1,\bar{\mathbf{x}}}^{(\ell)}  + \ell \big[  \mathbf{R}_{\text{eff}}(\bar{\mathbf{x}}) \big( \mathbf{I} + \ell (\boldsymbol{\omega}(\bar{\mathbf{x}}) \times ) \big)\big]  \mathbf{t}^{(\ell)}_{2,+}(\theta(\bar{\mathbf{x}}) + \ell \xi(\bar{\mathbf{x}}), \boldsymbol{\kappa}(\bar{\mathbf{x}})  ),  \\
\mathbf{y}_{6,\bar{\mathbf{x}}}^{(\ell)} :=  \mathbf{y}_{2,\bar{\mathbf{x}}}^{(\ell)}  + \ell \big[  \mathbf{R}_{\text{eff}}(\bar{\mathbf{x}}) \big( \mathbf{I} + \ell (\boldsymbol{\omega}(\bar{\mathbf{x}}) \times ) \big)\big]  \mathbf{t}^{(\ell)}_{3,+}(\theta(\bar{\mathbf{x}}) + \ell \xi(\bar{\mathbf{x}}), \boldsymbol{\kappa}(\bar{\mathbf{x}})  ), \\
\mathbf{y}_{7,\bar{\mathbf{x}}}^{(\ell)} :=  \mathbf{y}_{3,\bar{\mathbf{x}}}^{(\ell)}  + \ell \big[  \mathbf{R}_{\text{eff}}(\bar{\mathbf{x}}) \big( \mathbf{I} + \ell (\boldsymbol{\omega}(\bar{\mathbf{x}}) \times ) \big)\big]  \mathbf{t}^{(\ell)}_{4,-}(\theta(\bar{\mathbf{x}}) + \ell \xi(\bar{\mathbf{x}}), \boldsymbol{\kappa}(\bar{\mathbf{x}})  ),  \\
\mathbf{y}_{8,\bar{\mathbf{x}}}^{(\ell)} :=  \mathbf{y}_{4,\bar{\mathbf{x}}}^{(\ell)}  + \ell \big[  \mathbf{R}_{\text{eff}}(\bar{\mathbf{x}}) \big( \mathbf{I} + \ell (\boldsymbol{\omega}(\bar{\mathbf{x}}) \times ) \big)\big]  \mathbf{t}^{(\ell)}_{1,-}(\theta(\bar{\mathbf{x}}) + \ell \xi(\bar{\mathbf{x}}), \boldsymbol{\kappa}(\bar{\mathbf{x}})  ),
\end{aligned}
\end{equation}
using the definitions for the slightly bent boundary tangents in Eq.\;(\ref{eq:bentBoundaries1}) (see also Fig.\;\ref{Fig:FitCells}(c)). 
Note in the analysis of gaps to come, we assume that the sampled fields are (or can be) smoothly defined on a neighborhood of $\overline{\Omega}$. This assumption will be verified for $\mathbf{d}(\mathbf{x})$, $\boldsymbol{\omega}(\mathbf{x})$ and $\xi(\mathbf{x})$ once we choose these fields; it holds for $\mathbf{y}_{\text{eff}}(\mathbf{x})$, $\mathbf{R}_{\text{eff}}(\mathbf{x})$ and $\boldsymbol{\kappa}(\mathbf{x})$ given their definitions using $\theta(\mathbf{x})$, $\boldsymbol{\omega}_{\mathbf{u}_0}(\mathbf{x})$ and $\boldsymbol{\omega}_{\mathbf{v}_0}(\mathbf{x})$.

The additional fields $\mathbf{d}(\mathbf{x})$, $\boldsymbol{\omega}(\mathbf{x})$ and $\xi(\mathbf{x})$ enrich the ansatz without changing the lengths of the slightly bent cells at leading order. In particular, when sampled at the cell-level, $\mathbf{d}(\mathbf{x})$ and $\boldsymbol{\omega}(\mathbf{x})$ correspond to an infinitesimal  rigid body motion, and $\xi(\mathbf{x})$ is a perturbation of the cell's mechanism. This degeneracy is crucial to producing an overall origami construction with panel strain $\sim \ell^2$, as opposed to $\sim \ell$.

\paragraph{Intercell incompatibilities.} All quantities  in  the cell-wise ansatz in Eqs.\;(\ref{eq:cellwise1}-\ref{eq:cellwise3}) are specified by the effective surface theory in Eq.\;(\ref{eq:SurfaceTheory}), except for the fields $\mathbf{d}(\mathbf{x})$, $\boldsymbol{\omega}(\mathbf{x})$ and $\xi(\mathbf{x})$. We now derive an auxiliary PDE involving these fields, governing the gaps between neighboring cells  at $O(\ell^2)$. The gaps are defined by 
\begin{equation}
\begin{aligned}\label{eq:theGapsDef}
&\mathbf{g}_{31, \bar{\mathbf{x}}}^{(\ell)}  := \mathbf{y}_{3, \bar{\mathbf{x}}+ \ell \tilde{\mathbf{u}}_0}^{(\ell)} -  \mathbf{y}_{1, \bar{\mathbf{x}}}^{(\ell)}, \quad \mathbf{g}_{65, \bar{\mathbf{x}}}^{(\ell)}  := \mathbf{y}_{6, \bar{\mathbf{x}}+ \ell \tilde{\mathbf{u}}_0}^{(\ell)} -  \mathbf{y}_{5, \bar{\mathbf{x}}}^{(\ell)}, \quad \mathbf{g}_{78,\bar{\mathbf{x}}}^{(\ell)} := \mathbf{y}_{7, \bar{\mathbf{x}}+ \ell \tilde{\mathbf{u}}_0}^{(\ell)} -  \mathbf{y}_{8, \bar{\mathbf{x}}}^{(\ell)},  \\
&\mathbf{g}_{42, \bar{\mathbf{x}}}^{(\ell)} := \mathbf{y}_{4, \bar{\mathbf{x}}+ \ell \tilde{\mathbf{v}}_0}^{(\ell)} -  \mathbf{y}_{2, \bar{\mathbf{x}}}^{(\ell)}, \quad \mathbf{g}_{85, \bar{\mathbf{x}}}^{(\ell)} := \mathbf{y}_{8, \bar{\mathbf{x}}+ \ell \tilde{\mathbf{v}}_0}^{(\ell)} -  \mathbf{y}_{5, \bar{\mathbf{x}}}^{(\ell)}, \quad \mathbf{g}_{76, \bar{\mathbf{x}}}^{(\ell)} := \mathbf{y}_{7, \bar{\mathbf{x}}+ \ell \tilde{\mathbf{v}}_0}^{(\ell)} -  \mathbf{y}_{6, \bar{\mathbf{x}}}^{(\ell)}
\end{aligned}
\end{equation}
for $\bar{\mathbf{x}} \in \mathcal{I}^{(\ell)}_{\text{cell}}$. First, we relate all the gaps to two distinguished ones $\mathbf{g}_{31,\bar{\mathbf{x}}}^{(\ell)}$ and $\mathbf{g}_{42,\bar{\mathbf{x}}}^{(\ell)}$.
\begin{lemma}\label{LemmaCell1}
The ansatz satisfies 
\begin{equation}
\begin{aligned}
\mathbf{g}_{65,\bar{\mathbf{x}}}^{(\ell)} - \mathbf{g}_{31,\bar{\mathbf{x}}}^{(\ell)} = O(\ell^3), \quad \mathbf{g}_{78,\bar{\mathbf{x}}}^{(\ell)} - \mathbf{g}_{31,\bar{\mathbf{x}}}^{(\ell)} = O(\ell^3), \quad \mathbf{g}_{85,\bar{\mathbf{x}}}^{(\ell)} - \mathbf{g}_{42,\bar{\mathbf{x}}}^{(\ell)} = O(\ell^3), \quad \mathbf{g}_{76,\bar{\mathbf{x}}}^{(\ell)} - \mathbf{g}_{42,\bar{\mathbf{x}}}^{(\ell)} = O(\ell^3).
\end{aligned}
\end{equation}
\end{lemma}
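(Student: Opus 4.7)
I propose to prove all four estimates by direct Taylor expansion in $\ell$, exhibiting each $O(\ell^2)$ discrepancy between a secondary gap and its corresponding primary gap as one of the four pointwise compatibility identities in (\ref{eq:compBendEquiv}). Because the $\kappa_i(\bar{\mathbf{x}})$ are defined through (\ref{eq:theKappaFields}) and the fields $\theta$, $\boldsymbol{\omega}_{\mathbf{u}_0}$, $\boldsymbol{\omega}_{\mathbf{v}_0}$ solve the surface theory (\ref{eq:SurfaceTheory}), Proposition \ref{LocalBendProp} guarantees that these identities hold at every $\bar{\mathbf{x}}\in\Omega$; this forces the leading error in each gap difference to be $O(\ell^3)$. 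I describe the strategy for $\mathbf{g}_{65,\bar{\mathbf{x}}}^{(\ell)}-\mathbf{g}_{31,\bar{\mathbf{x}}}^{(\ell)}$; the other three identities are handled identically.

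First, I would substitute the cell-wise formulas (\ref{eq:cellwise2})--(\ref{eq:cellwise3}) into (\ref{eq:theGapsDef}). The key algebraic setup is to rewrite the discrepancy by passing through the right cell's internal difference $\mathbf{y}_{2,\bar{\mathbf{x}}+\ell\tilde{\mathbf{u}}_0}^{(\ell)}-\mathbf{y}_{3,\bar{\mathbf{x}}+\ell\tilde{\mathbf{u}}_0}^{(\ell)}$, which by construction equals $\ell\mathbf{R}_+(\mathbf{I}+\ell\boldsymbol{\omega}_+\times)(\mathbf{t}_2-\mathbf{t}_3)$ evaluated at $\theta_++\ell\xi_+$. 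This combines cleanly with the $\mathbf{t}_3$-component of the bent tangent offset from vertex 2 to vertex 6 in the right cell, collapsing the expression to the $\mathbf{t}_2$-centered form
\begin{equation*}
\mathbf{g}_{65,\bar{\mathbf{x}}}^{(\ell)}-\mathbf{g}_{31,\bar{\mathbf{x}}}^{(\ell)}=\ell\mathbf{R}_+(\mathbf{I}+\ell\boldsymbol{\omega}_+\times)\bigl[\mathbf{t}_2+\ell\kappa_{2,+}(\mathbf{t}_2\times\mathbf{t}_3)\bigr]-\ell\mathbf{R}(\mathbf{I}+\ell\boldsymbol{\omega}\times)\bigl[\mathbf{t}_2+\ell\kappa_1(\mathbf{t}_1\times\mathbf{t}_2)\bigr],
\end{equation*}
where $\mathbf{t}_i$ are evaluated at $\theta_++\ell\xi_+$ and $\theta+\ell\xi$ in the two terms, and the subscript $+$ denotes evaluation at $\bar{\mathbf{x}}+\ell\tilde{\mathbf{u}}_0$.

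Next I would Taylor expand the $+$-quantities about $\bar{\mathbf{x}}$, using $\mathbf{R}_+=\mathbf{R}(\mathbf{I}+\ell(\boldsymbol{\omega}_{\mathbf{u}_0}\times))+O(\ell^2)$ and $\theta_+=\theta+\ell\partial_{\mathbf{u}_0}\theta+O(\ell^2)$, together with $\boldsymbol{\omega}_+=\boldsymbol{\omega}+O(\ell)$, $\xi_+=\xi+O(\ell)$, $\kappa_{2,+}=\kappa_2+O(\ell)$. Since the $\ell\boldsymbol{\omega}\times$ rotation correction and the $\ell\kappa$ bending correction each carry an extra factor of $\ell$, any variation in $\boldsymbol{\omega}$, $\xi$, or $\kappa$ across the cell interface contributes only at $O(\ell^3)$, and moreover the leading $O(\ell^2)$ contributions from the common auxiliary fields $\boldsymbol{\omega}(\bar{\mathbf{x}})$ and $\xi(\bar{\mathbf{x}})$ appear symmetrically in both terms and cancel. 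Collecting the surviving $O(\ell^2)$ part gives
\begin{equation*}
\mathbf{g}_{65,\bar{\mathbf{x}}}^{(\ell)}-\mathbf{g}_{31,\bar{\mathbf{x}}}^{(\ell)}=\ell^2\mathbf{R}_{\text{eff}}(\bar{\mathbf{x}})\bigl[\boldsymbol{\omega}_{\mathbf{u}_0}(\bar{\mathbf{x}})\times\mathbf{t}_2+\partial_{\mathbf{u}_0}\theta(\bar{\mathbf{x}})\,\mathbf{t}_2'+\kappa_2(\bar{\mathbf{x}})(\mathbf{t}_2\times\mathbf{t}_3)-\kappa_1(\bar{\mathbf{x}})(\mathbf{t}_1\times\mathbf{t}_2)\bigr]+O(\ell^3),
\end{equation*}
with all tangents evaluated at $\theta(\bar{\mathbf{x}})$. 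The bracketed quantity is exactly the left-hand side of the third equation of (\ref{eq:compBendEquiv}) under the substitution $\delta\theta_{\mathbf{u}_0}\leftrightarrow\partial_{\mathbf{u}_0}\theta(\bar{\mathbf{x}})$, and therefore vanishes pointwise by Proposition \ref{LocalBendProp} applied to (\ref{eq:theKappaFields}).

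The remaining three estimates follow by the same recipe: $\mathbf{g}_{78}-\mathbf{g}_{31}$ reduces at $O(\ell^2)$ to the fourth equation of (\ref{eq:compBendEquiv}) (involving $\mathbf{t}_4$), while $\mathbf{g}_{85}-\mathbf{g}_{42}$ and $\mathbf{g}_{76}-\mathbf{g}_{42}$ reduce to the first and second equations (involving $\mathbf{t}_1$ and $\mathbf{t}_3$) respectively. The main obstacle is bookkeeping: one must keep careful track of which cell supplies which rotation, angle, and curvature in each term so as to verify that the free auxiliary fields $\boldsymbol{\omega}$ and $\xi$ — which are not constrained by the surface theory — drop out of the $O(\ell^2)$ discrepancy. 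They do so precisely because each enters the two cell contributions with an extra factor of $\ell$, making any cell-to-cell variation harmless at the required order.
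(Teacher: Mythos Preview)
Your proposal is correct and follows essentially the same route as the paper: both rewrite the gap difference as the discrepancy between neighboring bent-cell tangents, Taylor expand in $\ell$, and invoke Proposition~\ref{LocalBendProp} to kill the $O(\ell^2)$ term. The paper packages the computation in the $\mathbf{t}^{(\ell)}_{i,\pm}$ notation of Eq.~(\ref{eq:bentBoundaries1})--(\ref{eq:bentBoundaries3}) and appeals to Eq.~(\ref{eq:neighborBendFinal}) directly, whereas you unpack the expansion to exhibit each $O(\ell^2)$ bracket as one of the four equations in (\ref{eq:compBendEquiv}); since (\ref{eq:compBendEquiv}) is precisely the $O(\ell)$ content of (\ref{eq:neighborBendFinal}), the two arguments are equivalent.
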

\begin{proof}
The proof builds on the analysis in Proposition \ref{LocalBendProp}. Essentially,   after  a Taylor expansion of the gaps in $\ell$, we are able to match the conditions for approximate compatibility in that proposition.  

Consider any $\bar{\mathbf{x}} \in \mathcal{I}_{\text{cell}}^{(\ell)}.$ We show explicitly that $\mathbf{g}_{65,\bar{\mathbf{x}}}^{(\ell)} - \mathbf{g}_{31,\bar{\mathbf{x}}}^{(\ell)} = O(\ell^3)$. The other three statements follow by a  similar argument. Observe first that $\mathbf{y}_{6,\bar{\mathbf{x}}}^{(\ell)}$  in Eq.\;(\ref{eq:cellwise3}) can also be written as 
\begin{equation}
\begin{aligned}
\mathbf{y}_{6,\bar{\mathbf{x}}}^{(\ell)}  =  \mathbf{y}_{3,\bar{\mathbf{x}}}^{(\ell)} +  \ell \big[  \mathbf{R}_{\text{eff}}(\bar{\mathbf{x}}) \big( \mathbf{I} + \ell (\boldsymbol{\omega}(\bar{\mathbf{x}}) \times ) \big)\big]  \mathbf{t}^{(\ell)}_{2,-}(\theta(\bar{\mathbf{x}}) + \ell \xi(\bar{\mathbf{x}}), \boldsymbol{\kappa}(\bar{\mathbf{x}})  )
\end{aligned}
\end{equation}
since $ \mathbf{t}_3(\theta) + \mathbf{t}_{2,-}(\theta, \boldsymbol{\kappa}) = \mathbf{t}_2(\theta) + \mathbf{t}_{3,+}(\theta, \boldsymbol{\kappa})$ by Eq.\;(\ref{eq:bentBoundaries1}). It follows that 
\begin{equation}
\begin{aligned}
\mathbf{g}_{65,\bar{\mathbf{x}}}^{(\ell)} - \mathbf{g}_{31,\bar{\mathbf{x}}}^{(\ell)} &=  \ell \big[  \mathbf{R}_{\text{eff}}(\bar{\mathbf{x}} + \ell \tilde{\mathbf{u}}_0 ) \big( \mathbf{I} + \ell (\boldsymbol{\omega}(\bar{\mathbf{x}} + \ell \tilde{\mathbf{u}}_0) \times ) \big)\big]  \mathbf{t}^{(\ell)}_{2,-}(\theta(\bar{\mathbf{x}}+ \ell \tilde{\mathbf{u}}_0) + \ell \xi(\bar{\mathbf{x}}+ \ell \tilde{\mathbf{u}}_0), \boldsymbol{\kappa}(\bar{\mathbf{x}} + \ell \tilde{\mathbf{u}}_0)  )  \\
&\qquad  - \ell \big[  \mathbf{R}_{\text{eff}}(\bar{\mathbf{x}}) \big( \mathbf{I} + \ell (\boldsymbol{\omega}(\bar{\mathbf{x}}) \times ) \big)\big]  \mathbf{t}^{(\ell)}_{2,+}(\theta(\bar{\mathbf{x}}) + \ell \xi(\bar{\mathbf{x}}), \boldsymbol{\kappa}(\bar{\mathbf{x}})  ) \\
&= \ell  \mathbf{R}_{\text{eff}}(\bar{\mathbf{x}}) \big[ \mathbf{I} + \ell ( \boldsymbol{\omega}_{\mathbf{u}_0}(\bar{\mathbf{x}}) \times ) \big] \big[ \mathbf{I} + \ell (\boldsymbol{\omega}(\bar{\mathbf{x}}) \times ) \big]  \mathbf{t}^{(\ell)}_{2,-}(\theta(\bar{\mathbf{x}}) + \ell  \xi(\bar{\mathbf{x}})  + \ell \partial_{\mathbf{u}_0} \theta( \bar{\mathbf{x}}) ,\boldsymbol{\kappa} (\bar{\mathbf{x}}) ) + O(\ell^3) \\
&\qquad  - \ell \big[  \mathbf{R}_{\text{eff}}(\bar{\mathbf{x}}) \big( \mathbf{I} + \ell (\boldsymbol{\omega}(\bar{\mathbf{x}}) \times ) \big)\big]  \mathbf{t}^{(\ell)}_{2,+}(\theta(\bar{\mathbf{x}}) + \ell \xi(\bar{\mathbf{x}}), \boldsymbol{\kappa}(\bar{\mathbf{x}})  )  \\ 
&= \ell \big[  \mathbf{R}_{\text{eff}}(\bar{\mathbf{x}}) \big( \mathbf{I} + \ell (\boldsymbol{\omega}(\bar{\mathbf{x}}) \times ) \big)\big]  \big[ \mathbf{I} + \ell ( \boldsymbol{\omega}_{\mathbf{u}_0}(\bar{\mathbf{x}}) \times ) \big] \mathbf{t}^{(\ell)}_{2,-}(\theta(\bar{\mathbf{x}})  + \ell \partial_{\mathbf{u}_0} \theta( \bar{\mathbf{x}}) ,\boldsymbol{\kappa} (\bar{\mathbf{x}}) ) + O(\ell^3) \\
&\qquad  - \ell \big[  \mathbf{R}_{\text{eff}}(\bar{\mathbf{x}}) \big( \mathbf{I} + \ell (\boldsymbol{\omega}(\bar{\mathbf{x}}) \times ) \big)\big]  \mathbf{t}^{(\ell)}_{2,+}(\theta(\bar{\mathbf{x}}) , \boldsymbol{\kappa}(\bar{\mathbf{x}})  )  ,
\end{aligned} 
\end{equation}
The last equality uses that $\mathbf{t}^{(\ell)}_{2,-}(\theta + \ell \xi, \boldsymbol{\kappa}) =  \mathbf{t}^{(\ell)}_{2,-}(\theta , \boldsymbol{\kappa}) + \ell \xi \mathbf{t}_2'(\theta)  +O(\ell^2)$ and $\mathbf{t}^{(\ell)}_{2,+}(\theta + \ell \xi, \boldsymbol{\kappa}) =  \mathbf{t}^{(\ell)}_{2,+}(\theta , \boldsymbol{\kappa}) + \ell \xi \mathbf{t}_2'(\theta)  +O(\ell^2)$. Next, set $\mathbf{R}^{(\ell)}(\bar{\mathbf{x}}) := \big[  \mathbf{R}_{\text{eff}}(\bar{\mathbf{x}}) \big( \mathbf{I} + \ell (\boldsymbol{\omega}(\bar{\mathbf{x}}) \times ) \big)\big] $ and observe  that 
\begin{equation}
\begin{aligned}
\mathbf{g}_{65,\bar{\mathbf{x}}}^{(\ell)} - \mathbf{g}_{31,\bar{\mathbf{x}}}^{(\ell)} &=  \ell \mathbf{R}^{(\ell)}(\bar{\mathbf{x}})\big[  \mathbf{t}^{(\ell)}_{2,-,\tilde{\mathbf{u}}_0}(\theta(\bar{\mathbf{x}}) , \boldsymbol{\kappa} (\bar{\mathbf{x}})  , \partial_{\mathbf{u}_0} \theta( \bar{\mathbf{x}}) , \boldsymbol{\omega}_{\mathbf{u}_0} (\bar{\mathbf{x}}) )   -   \mathbf{t}^{(\ell)}_{2,+}(\theta(\bar{\mathbf{x}}), \boldsymbol{\kappa}(\bar{\mathbf{x}})  )\big] + O(\ell^3) 
\end{aligned}
\end{equation}
 using the definition in Eq.\;(\ref{eq:bentBoundaries2}). Proposition \ref{LocalBendProp}  furnishes that 
 \begin{equation}
 \begin{aligned}
  \mathbf{t}^{(\ell)}_{2,-,\tilde{\mathbf{u}}_0}(\theta(\bar{\mathbf{x}}) , \boldsymbol{\kappa} (\bar{\mathbf{x}})  , \partial_{\mathbf{u}_0} \theta( \bar{\mathbf{x}}) , \boldsymbol{\omega}_{\mathbf{u}_0} (\bar{\mathbf{x}}) )   -   \mathbf{t}^{(\ell)}_{2,+}(\theta(\bar{\mathbf{x}}), \boldsymbol{\kappa}(\bar{\mathbf{x}})  ) = O(\ell^2)
 \end{aligned}
 \end{equation} 
since Eq.\;(\ref{eq:effPDE1}) and Eq.\;(\ref{eq:theKappaFields})  match  Eq.\;(\ref{eq:micon1}) and Eq.\;(\ref{eq:theKappas1}) under the replacement of their arguments. We conclude that  $\mathbf{g}_{65,\bar{\mathbf{x}}}^{(\ell)} - \mathbf{g}_{31,\bar{\mathbf{x}}}^{(\ell)} = O(\ell^3)$ as desired.  
\end{proof}
We now compute the leftover gaps $\mathbf{g}_{31,\bar{\mathbf{x}}}^{(\ell)}$ and $\mathbf{g}_{42,\bar{\mathbf{x}}}^{(\ell)}$ at leading order in $\ell$.
\begin{lemma}\label{LemmaCell2}
The ansatz satisfies 
\begin{equation}
\begin{aligned}\label{eq:gGapsIdent}
\mathbf{g}_{31,\bar{\mathbf{x}}}^{(\ell)} &= \ell^2 \Big\{ \partial_{\mathbf{u}_0} \mathbf{d}(\bar{\mathbf{x}}) + \frac{1}{2} \partial_{\mathbf{u}_0} \big[ \mathbf{R}_{\emph{eff}}(\bar{\mathbf{x}})\big(  \mathbf{t}_1(\theta(\bar{\mathbf{x}})) + \mathbf{t}_3(\theta(\bar{\mathbf{x}}))\big) \big]    - \mathbf{R}_{\emph{eff}}(\bar{\mathbf{x}}) \big[ \boldsymbol{\omega}(\bar{\mathbf{x}}) \times \mathbf{u}(\theta(\bar{\mathbf{x}}))  + \xi(\bar{\mathbf{x}}) \mathbf{u}'(\theta(\bar{\mathbf{x}}) )\big]\Big\}+ O(\ell^3)  \\
\mathbf{g}_{42,\bar{\mathbf{x}}}^{(\ell)}  &=  \ell^2 \Big\{ \partial_{\mathbf{v}_0} \mathbf{d}(\bar{\mathbf{x}}) + \frac{1}{2} \partial_{\mathbf{v}_0} \big[ \mathbf{R}_{\emph{eff}}(\bar{\mathbf{x}})\big(  \mathbf{t}_2(\theta(\bar{\mathbf{x}})) + \mathbf{t}_4(\theta(\bar{\mathbf{x}}))\big) \big]   - \mathbf{R}_{\emph{eff}}(\bar{\mathbf{x}}) \big[ \boldsymbol{\omega}(\bar{\mathbf{x}}) \times \mathbf{v}(\theta(\bar{\mathbf{x}}))  + \xi(\bar{\mathbf{x}}) \mathbf{v}'(\theta(\bar{\mathbf{x}}) )\big]\Big\}+ O(\ell^3) 
\end{aligned}
\end{equation}
for all $\bar{\mathbf{x}} \in \mathcal{I}_{\emph{cell}}^{(\ell)}$.
\end{lemma}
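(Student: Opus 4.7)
The proof amounts to a careful Taylor expansion in $\ell$ of each term appearing in the definitions $\mathbf{g}_{31,\bar{\mathbf{x}}}^{(\ell)} = \mathbf{y}_{3,\bar{\mathbf{x}}+\ell\tilde{\mathbf{u}}_0}^{(\ell)} - \mathbf{y}_{1,\bar{\mathbf{x}}}^{(\ell)}$ and $\mathbf{g}_{42,\bar{\mathbf{x}}}^{(\ell)} = \mathbf{y}_{4,\bar{\mathbf{x}}+\ell\tilde{\mathbf{v}}_0}^{(\ell)} - \mathbf{y}_{2,\bar{\mathbf{x}}}^{(\ell)}$. I will focus on $\mathbf{g}_{31,\bar{\mathbf{x}}}^{(\ell)}$; the derivation for $\mathbf{g}_{42,\bar{\mathbf{x}}}^{(\ell)}$ is strictly analogous after swapping $\mathbf{u}_0\leftrightarrow \mathbf{v}_0$, $\tilde{\mathbf{u}}_0\leftrightarrow \tilde{\mathbf{v}}_0$, and the tangent pairs $(\mathbf{t}_1,\mathbf{t}_3)\leftrightarrow(\mathbf{t}_2,\mathbf{t}_4)$. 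Substituting Eqs.\;(\ref{eq:cellwise1})--(\ref{eq:cellwise2}) splits $\mathbf{g}_{31,\bar{\mathbf{x}}}^{(\ell)}$ into the interior-vertex difference $\mathbf{y}_{0,\bar{\mathbf{x}}+\ell\tilde{\mathbf{u}}_0}^{(\ell)} - \mathbf{y}_{0,\bar{\mathbf{x}}}^{(\ell)}$ plus the two rescaled tangent contributions for $i=3$ at $\bar{\mathbf{x}}+\ell\tilde{\mathbf{u}}_0$ and $i=1$ at $\bar{\mathbf{x}}$. For the first piece, I would apply a second-order Taylor expansion of $\mathbf{y}_{\text{eff}}$ together with Eq.\;(\ref{eq:effPDE3}) to obtain $\ell\mathbf{R}_{\text{eff}}(\bar{\mathbf{x}})\mathbf{u}(\theta(\bar{\mathbf{x}})) + \tfrac{\ell^2}{2}\partial_{\mathbf{u}_0}[\mathbf{R}_{\text{eff}}(\mathbf{x})\mathbf{u}(\theta(\mathbf{x}))]|_{\bar{\mathbf{x}}} + \ell^2\partial_{\mathbf{u}_0}\mathbf{d}(\bar{\mathbf{x}}) + O(\ell^3)$.

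The central check is the cancellation at order $\ell$. After first-order expansion of the remaining smooth sampled fields and discarding all $O(\ell^3)$ terms (these include cross terms like $\ell^2 (\boldsymbol{\omega}\times)\mathbf{t}_3'$, $\ell^2\xi(\boldsymbol{\omega}\times)\mathbf{t}_3$, derivatives of $\boldsymbol{\omega}$ and $\xi$ multiplied by $\ell^2$, and rotation-derivative products with $\ell(\boldsymbol{\omega}\times)$), the leading contribution is $\ell\mathbf{R}_{\text{eff}}(\bar{\mathbf{x}})\bigl[\mathbf{u}(\theta(\bar{\mathbf{x}})) + \mathbf{t}_3(\theta(\bar{\mathbf{x}})) - \mathbf{t}_1(\theta(\bar{\mathbf{x}}))\bigr]$, which vanishes because $\mathbf{u}(\theta) = \mathbf{t}_1(\theta) - \mathbf{t}_3(\theta)$. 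At order $\ell^2$ I would collect: (i) $\partial_{\mathbf{u}_0}\mathbf{d}(\bar{\mathbf{x}})$; (ii) the half-derivative $\tfrac{1}{2}\partial_{\mathbf{u}_0}[\mathbf{R}_{\text{eff}}(\mathbf{x})\mathbf{u}(\theta(\mathbf{x}))]$ from the Taylor remainder for $\mathbf{y}_{\text{eff}}$; (iii) the full derivative $\partial_{\mathbf{u}_0}[\mathbf{R}_{\text{eff}}(\mathbf{x})\mathbf{t}_3(\theta(\mathbf{x}))]$ from sampling $\mathbf{t}_3$ at the neighboring cell center; (iv) the $\xi$-shift $\xi(\bar{\mathbf{x}})\mathbf{R}_{\text{eff}}(\bar{\mathbf{x}})\bigl[\mathbf{t}_3'(\theta(\bar{\mathbf{x}})) - \mathbf{t}_1'(\theta(\bar{\mathbf{x}}))\bigr] = -\xi(\bar{\mathbf{x}})\mathbf{R}_{\text{eff}}(\bar{\mathbf{x}})\mathbf{u}'(\theta(\bar{\mathbf{x}}))$; and (v) the rotation correction $\mathbf{R}_{\text{eff}}(\bar{\mathbf{x}})\bigl(\boldsymbol{\omega}(\bar{\mathbf{x}})\times[\mathbf{t}_3(\theta(\bar{\mathbf{x}})) - \mathbf{t}_1(\theta(\bar{\mathbf{x}}))]\bigr) = -\mathbf{R}_{\text{eff}}(\bar{\mathbf{x}})\bigl(\boldsymbol{\omega}(\bar{\mathbf{x}})\times\mathbf{u}(\theta(\bar{\mathbf{x}}))\bigr)$. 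Combining (ii) and (iii), and again using $\mathbf{u}(\theta) = \mathbf{t}_1(\theta) - \mathbf{t}_3(\theta)$, condenses the derivative terms into $\tfrac{1}{2}\partial_{\mathbf{u}_0}\bigl[\mathbf{R}_{\text{eff}}(\mathbf{x})\bigl(\mathbf{t}_1(\theta(\mathbf{x})) + \mathbf{t}_3(\theta(\mathbf{x}))\bigr)\bigr]$, which is precisely the second term in the claimed formula for $\mathbf{g}_{31,\bar{\mathbf{x}}}^{(\ell)}$.

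The main obstacle is purely bookkeeping: one must keep track of which sampled fields need to be expanded to first versus zeroth order in $\ell$, given the explicit prefactor $\ell$ multiplying the tangent contributions together with the inner $\ell$ factors in $\mathbf{I} + \ell(\boldsymbol{\omega}\times)$ and $\theta(\cdot) + \ell\xi(\cdot)$. Because $\mathbf{R}_{\text{eff}}$, $\theta$, $\boldsymbol{\omega}$, $\xi$, and $\mathbf{d}$ are all assumed smooth on a neighborhood of $\overline{\Omega}$, standard Taylor estimates control the remainders uniformly in $\bar{\mathbf{x}}\in\mathcal{I}_{\text{cell}}^{(\ell)}$, yielding the claimed $O(\ell^3)$ error. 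The identical expansion performed along the $\tilde{\mathbf{v}}_0$-direction, starting from the second identity $\partial_{\mathbf{v}_0}\mathbf{y}_{\text{eff}}(\mathbf{x}) = \mathbf{R}_{\text{eff}}(\mathbf{x})\mathbf{v}(\theta(\mathbf{x}))$ in Eq.\;(\ref{eq:effPDE3}) and using $\mathbf{v}(\theta) = \mathbf{t}_2(\theta) - \mathbf{t}_4(\theta)$, produces the second identity in Eq.\;(\ref{eq:gGapsIdent}).
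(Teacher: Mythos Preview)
Your proposal is correct and follows essentially the same route as the paper's proof: both carry out a second-order Taylor expansion of the cell-wise ansatz, use Eq.\;(\ref{eq:effPDE3}) to kill the $O(\ell)$ contribution via $\mathbf{u}(\theta)=\mathbf{t}_1(\theta)-\mathbf{t}_3(\theta)$, and then combine the half-derivative $\tfrac{1}{2}\partial_{\mathbf{u}_0}[\mathbf{R}_{\text{eff}}\mathbf{u}(\theta)]$ with the full derivative $\partial_{\mathbf{u}_0}[\mathbf{R}_{\text{eff}}\mathbf{t}_3(\theta)]$ to produce the symmetric term $\tfrac{1}{2}\partial_{\mathbf{u}_0}[\mathbf{R}_{\text{eff}}(\mathbf{t}_1+\mathbf{t}_3)]$. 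The only cosmetic difference is that the paper writes the second-order Taylor remainder of $\mathbf{y}_{\text{eff}}$ as $\tfrac{1}{2}\partial_{\mathbf{u}_0}\partial_{\mathbf{u}_0}\mathbf{y}_{\text{eff}}$ before invoking Eq.\;(\ref{eq:effPDE3}), whereas you substitute that identity immediately.
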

\begin{proof}
The proof  follows by Taylor expansion and algebraic manipulation involving the identities in Eq.\;(\ref{eq:effPDE3}). Let $\bar{\mathbf{x}} \in \mathcal{I}_{\text{cell}}^{(\ell)}$. Observe that $\mathbf{y}^{(\ell)}_{3, \bar{\mathbf{x}} + \ell \tilde{\mathbf{u}}_0}$ can be expanded in powers of $\ell$ as
\begin{equation}
\begin{aligned}
\mathbf{y}^{(\ell)}_{3, \bar{\mathbf{x}} + \ell \tilde{\mathbf{u}}_0} &=   \mathbf{y}_{0,\bar{\mathbf{x}} + \ell \tilde{\mathbf{u}}_0}^{(\ell)} + \ell \mathbf{R}_{\text{eff}}(\bar{\mathbf{x}} + \ell \tilde{\mathbf{u}}_0) \big[ \mathbf{I} + \ell ( \boldsymbol{\omega}(\bar{\mathbf{x}} + \ell \tilde{\mathbf{u}}_0) \times ) \big] \mathbf{t}_3(\theta(\bar{\mathbf{x}} + \ell \tilde{\mathbf{u}}_0) + \ell \xi(\bar{\mathbf{x}} + \ell  \tilde{\mathbf{u}}_0))  \\
&=  \mathbf{y}_{\text{eff}}(\bar{\mathbf{x}}) + \ell \Big ( \mathbf{d}(\bar{\mathbf{x}}) +  \partial_{\mathbf{u}_0} \mathbf{y}_{\text{eff}}(\bar{\mathbf{x}})  + \mathbf{R}_{\text{eff}}(\bar{\mathbf{x}}) \mathbf{t}_3(\theta(\bar{\mathbf{x}})) \Big)  + \ell^2 \Big\{ \frac{1}{2} \partial_{\mathbf{u}_0} \partial_{\mathbf{u}_0} \mathbf{y}_{\text{eff}}(\bar{\mathbf{x}})  +  \partial_{\mathbf{u}_0} \mathbf{d}(\bar{\mathbf{x}})  \\
&\qquad + \mathbf{R}_{\text{eff}}(\bar{\mathbf{x}}) \big[  \boldsymbol{\omega}(\bar{\mathbf{x}}) \times \mathbf{t}_3(\theta(\bar{\mathbf{x}}))  + \xi(\bar{\mathbf{x}}) \mathbf{t}_3'(\theta(\bar{\mathbf{x}}) \big]  + \partial_{\mathbf{u}_0} \big[ \mathbf{R}_{\text{eff}}(\bar{\mathbf{x}}) \mathbf{t}_3(\theta(\bar{\mathbf{x}})) \big] \Big\} + O(\ell^3).
\end{aligned}
\end{equation}
Likewise,
\begin{equation}
\begin{aligned}
\mathbf{y}^{(\ell)}_{1,\bar{\mathbf{x}}} &= \mathbf{y}_{\text{eff}}(\bar{\mathbf{x}}) + \ell \Big ( \mathbf{d}(\bar{\mathbf{x}})  +  \mathbf{R}_{\text{eff}}(\bar{\mathbf{x}}) \mathbf{t}_1(\theta(\bar{\mathbf{x}}))) \Big) +  \ell^2\Big( \mathbf{R}_{\text{eff}}(\bar{\mathbf{x}}) \big[  \boldsymbol{\omega}(\bar{\mathbf{x}}) \times \mathbf{t}_1(\theta(\bar{\mathbf{x}}))  + \xi(\bar{\mathbf{x}}) \mathbf{t}_1'(\theta(\bar{\mathbf{x}}) \big] \Big). 
\end{aligned}
\end{equation}
Their difference  is  
\begin{equation}
\begin{aligned}
\mathbf{g}_{31,\bar{\mathbf{x}}}^{(\ell)} &= \ell \Big( \partial_{\mathbf{u}_0} \mathbf{y}_{\text{eff}}(\bar{\mathbf{x}})  - \mathbf{R}_{\text{eff}}(\bar{\mathbf{x}}) \mathbf{u}(\theta(\bar{\mathbf{x}})) \Big)  + \ell^2 \Big\{   \frac{1}{2} \partial_{\mathbf{u}_0} \partial_{\mathbf{u}_0} \mathbf{y}_{\text{eff}}(\bar{\mathbf{x}}) + \partial_{\mathbf{u}_0} \mathbf{d}(\bar{\mathbf{x}})   \\
 &\qquad - \mathbf{R}_{\text{eff}}(\bar{\mathbf{x}}) \big[  \boldsymbol{\omega}(\bar{\mathbf{x}}) \times \mathbf{u}(\theta(\bar{\mathbf{x}}))  + \xi(\bar{\mathbf{x}}) \mathbf{u}'(\theta(\bar{\mathbf{x}}) \big]  + \partial_{\mathbf{u}_0} \big[ \mathbf{R}_{\text{eff}}(\bar{\mathbf{x}}) \mathbf{t}_3(\theta(\bar{\mathbf{x}})) \big] \Big\} + O(\ell^3)
\end{aligned}
\end{equation}
since $\mathbf{u}(\theta) = \mathbf{t}_1(\theta) - \mathbf{t}_3(\theta)$. By Eq.\;(\ref{eq:effPDE3}), the term at $O(\ell)$ vanishes and
\begin{equation}
\begin{aligned}
 \partial_{\mathbf{u}_0} \big[ \mathbf{R}_{\text{eff}}(\bar{\mathbf{x}}) \mathbf{t}_3(\theta(\bar{\mathbf{x}})) \big] &= \frac{1}{2}  \partial_{\mathbf{u}_0} \big[ \mathbf{R}_{\text{eff}}(\bar{\mathbf{x}}) \big(\mathbf{t}_1(\theta(\bar{\mathbf{x}})) + \mathbf{t}_3(\theta(\bar{\mathbf{x}}))\big) \big] - \frac{1}{2} \partial_{\mathbf{u}_0} \big[ \mathbf{R}_{\text{eff}}(\bar{\mathbf{x}}) \big(\mathbf{t}_1(\theta(\bar{\mathbf{x}})) - \mathbf{t}_3(\theta(\bar{\mathbf{x}}))\big) \big]  \\
 &=  \frac{1}{2}  \partial_{\mathbf{u}_0} \big[ \mathbf{R}_{\text{eff}}(\bar{\mathbf{x}}) \big(\mathbf{t}_1(\theta(\bar{\mathbf{x}})) + \mathbf{t}_3(\theta(\bar{\mathbf{x}}))\big) \big] - \frac{1}{2} \partial_{\mathbf{u}_0} \partial_{\mathbf{u}_0} \mathbf{y}_{\text{eff}}(\bar{\mathbf{x}}).
 \end{aligned}
 \end{equation}
 This establishes the first part of Eq.\;(\ref{eq:gGapsIdent}). The result for $\mathbf{g}_{42,\bar{\mathbf{x}}}^{(\ell)}$  follows by a similar set of manipulations. 
\end{proof}

Finally, we state conditions under which the leftover gaps in Eq.\;(\ref{eq:gGapsIdent}) vanish at $O(\ell^2)$.
\begin{lemma}\label{LemmaCell3}
There is a vector field $\mathbf{d}(\mathbf{x})$ solving
\begin{equation}
\begin{aligned}\label{eq:dSolve}
\begin{cases}\partial_{\mathbf{u}_0} \mathbf{d}(\mathbf{x}) =  \mathbf{R}_{\emph{eff}}(\mathbf{x}) \big[ \boldsymbol{\omega}(\mathbf{x}) \times \mathbf{u}(\theta(\mathbf{x}))  + \xi(\mathbf{x}) \mathbf{u}'(\theta(\mathbf{x}) )\big] - \frac{1}{2} \partial_{\mathbf{u}_0} \big[ \mathbf{R}_{\emph{eff}}(\mathbf{x})\big(  \mathbf{t}_1(\theta(\mathbf{x})) + \mathbf{t}_3(\theta(\mathbf{x}))\big) \big] \\
\partial_{\mathbf{v}_0} \mathbf{d}(\mathbf{x}) =    \mathbf{R}_{\emph{eff}}(\mathbf{x}) \big[ \boldsymbol{\omega}(\mathbf{x}) \times \mathbf{v}(\theta(\mathbf{x}))  + \xi(\mathbf{x}) \mathbf{v}'(\theta(\mathbf{x}) )\big] - \frac{1}{2} \partial_{\mathbf{v}_0} \big[ \mathbf{R}_{\emph{eff}}(\mathbf{x})\big(  \mathbf{t}_2(\theta(\mathbf{x})) + \mathbf{t}_4(\theta(\mathbf{x}))\big) \big]
\end{cases}
\end{aligned}
\end{equation}
on $\Omega$ if and only if $\boldsymbol{\omega}(\mathbf{x})$ and $\xi(\mathbf{x})$ solve 
\begin{equation}
\begin{aligned}\label{eq:theAuxPDE0}
&\partial_{\mathbf{u}_0} \Big(\mathbf{R}_{\emph{eff}}(\mathbf{x}) \big[ \boldsymbol{\omega}(\mathbf{x}) \times \mathbf{v}(\theta(\mathbf{x})) + \xi(\mathbf{x}) \mathbf{v}'(\theta(\mathbf{x})) \big] \Big)  + \frac{1}{2} \partial_{\mathbf{u}_0} \partial_{\mathbf{v}_0} \Big( \mathbf{R}_{\emph{eff}}(\mathbf{x}) \big[ \mathbf{t}_1(\theta(\mathbf{x})) + \mathbf{t}_3(\theta(\mathbf{x})) \big]\Big)  \\
&\quad = \partial_{\mathbf{v}_0} \Big(\mathbf{R}_{\emph{eff}}(\mathbf{x}) \big[ \boldsymbol{\omega}(\mathbf{x}) \times \mathbf{u}(\theta(\mathbf{x})) + \xi(\mathbf{x}) \mathbf{u}'(\theta(\mathbf{x})) \big] \Big)  + \frac{1}{2} \partial_{\mathbf{u}_0} \partial_{\mathbf{v}_0} \Big( \mathbf{R}_{\emph{eff}}(\mathbf{x}) \big[ \mathbf{t}_2(\theta(\mathbf{x})) + \mathbf{t}_4(\theta(\mathbf{x})) \big]\Big)
\end{aligned}
\end{equation}
on $\Omega$. The vector field $\mathbf{d}(\mathbf{x})$ is smoothly extendable to a neighborhood of $\overline{\Omega}$ as long as $\boldsymbol{\omega}(\mathbf{x})$ and $\boldsymbol{\xi}(\mathbf{x})$ are.
\end{lemma}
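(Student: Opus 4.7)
The plan is to recognize Eq.\;(\ref{eq:dSolve}) as a Pfaff system for $\mathbf{d}(\mathbf{x})$ and to apply the Poincar\'e-lemma-type result from Lemma \ref{secondLemmaEff} after the change of coordinates $(\eta_1,\eta_2)\mapsto \mathbf{x}(\eta_1,\eta_2)=\eta_1\tilde{\mathbf{u}}_0+\eta_2\tilde{\mathbf{v}}_0$ used in the proof of Proposition \ref{firstProp}. In those coordinates, $\Omega$ corresponds to a simply connected domain $U\subset\mathbb{R}^2$ with smooth boundary, and the directional derivatives $\partial_{\mathbf{u}_0}$ and $\partial_{\mathbf{v}_0}$ become the partials $\partial_1$ and $\partial_2$, just as in Eq.\;(\ref{eq:partialDerivativesRelate}). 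Denoting the right-hand sides of the two equations in Eq.\;(\ref{eq:dSolve}) by $\mathbf{h}_{\mathbf{u}_0}(\mathbf{x})$ and $\mathbf{h}_{\mathbf{v}_0}(\mathbf{x})$, Lemma \ref{secondLemmaEff} guarantees the existence of a smooth $\mathbf{d}(\mathbf{x})$ solving Eq.\;(\ref{eq:dSolve}) on $\Omega$ if and only if the Poincar\'e compatibility condition $\partial_{\mathbf{v}_0}\mathbf{h}_{\mathbf{u}_0}(\mathbf{x})=\partial_{\mathbf{u}_0}\mathbf{h}_{\mathbf{v}_0}(\mathbf{x})$ holds on $\Omega$.

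The main step is to verify that this compatibility is precisely Eq.\;(\ref{eq:theAuxPDE0}). I would differentiate the first line of Eq.\;(\ref{eq:dSolve}) by $\partial_{\mathbf{v}_0}$ and the second by $\partial_{\mathbf{u}_0}$, obtaining
\begin{equation*}
\begin{aligned}
\partial_{\mathbf{v}_0}\mathbf{h}_{\mathbf{u}_0}&=\partial_{\mathbf{v}_0}\Big(\mathbf{R}_{\text{eff}}\big[\boldsymbol{\omega}\times\mathbf{u}(\theta)+\xi\mathbf{u}'(\theta)\big]\Big)-\tfrac{1}{2}\partial_{\mathbf{v}_0}\partial_{\mathbf{u}_0}\big[\mathbf{R}_{\text{eff}}(\mathbf{t}_1(\theta)+\mathbf{t}_3(\theta))\big],\\
\partial_{\mathbf{u}_0}\mathbf{h}_{\mathbf{v}_0}&=\partial_{\mathbf{u}_0}\Big(\mathbf{R}_{\text{eff}}\big[\boldsymbol{\omega}\times\mathbf{v}(\theta)+\xi\mathbf{v}'(\theta)\big]\Big)-\tfrac{1}{2}\partial_{\mathbf{u}_0}\partial_{\mathbf{v}_0}\big[\mathbf{R}_{\text{eff}}(\mathbf{t}_2(\theta)+\mathbf{t}_4(\theta))\big].
\end{aligned}
\end{equation*}
Since the fields in question are smooth, the mixed directional derivatives commute. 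Setting the two expressions equal and moving the second-order terms to the opposite sides of the equality (which flips their signs) yields exactly Eq.\;(\ref{eq:theAuxPDE0}). This establishes the claimed equivalence.

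The final paragraph addresses the smooth extendability assertion. Since $\boldsymbol{\omega}(\mathbf{x})$ and $\xi(\mathbf{x})$ are assumed smoothly defined on a neighborhood of $\overline{\Omega}$, and the other coefficient fields $\theta(\mathbf{x})$, $\mathbf{R}_{\text{eff}}(\mathbf{x})$ extend smoothly there by Proposition \ref{firstProp}, the vector fields $\mathbf{h}_{\mathbf{u}_0}$ and $\mathbf{h}_{\mathbf{v}_0}$ and the integrability identity Eq.\;(\ref{eq:theAuxPDE0}) all extend smoothly to a simply connected open neighborhood of $\overline{\Omega}$. Applying Lemma \ref{secondLemmaEff} on this enlarged neighborhood produces the desired extension of $\mathbf{d}(\mathbf{x})$. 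The only real obstacle is careful bookkeeping of signs in matching the compatibility condition to the form of Eq.\;(\ref{eq:theAuxPDE0}); the underlying existence theory is entirely standard.
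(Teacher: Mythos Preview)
Your proposal is correct and follows essentially the same approach as the paper: both recognize that Eq.\;(\ref{eq:dSolve}) is solvable for $\mathbf{d}$ on the simply connected domain $\Omega$ if and only if the mixed-partial compatibility $\partial_{\mathbf{v}_0}\mathbf{h}_{\mathbf{u}_0}=\partial_{\mathbf{u}_0}\mathbf{h}_{\mathbf{v}_0}$ holds, and both verify that this compatibility is exactly Eq.\;(\ref{eq:theAuxPDE0}). The only minor difference is in the extendability argument: the paper appeals directly to a smooth extension theorem (since the right-hand sides extend smoothly, $\mathbf{d}\in C^\infty(\overline{\Omega})$ and hence extends), whereas you propose re-running the Poincar\'e lemma on an enlarged domain---but be careful, as Eq.\;(\ref{eq:theAuxPDE0}) is only assumed on $\Omega$, not on the neighborhood, so your route needs the extra observation that $\mathbf{d}$ is smooth up to $\partial\Omega$ before extending.
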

\begin{proof}
Since partial derivatives commute, we derive the PDE in Eq.\;(\ref{eq:theAuxPDE0}) by taking the $\partial_{\mathbf{v}_0}$ derivative of the first equation and the $\partial_{\mathbf{u}_0}$ derivative of the second and setting them equal to each other. Since $\Omega$ is simply connected, it follows by standard arguments (Lemma \ref{LemmaPDE1}) that finding a solution $(\boldsymbol{\omega}(\mathbf{x}), \xi(\mathbf{x}))$ to Eq.\;(\ref{eq:theAuxPDE0})  is sufficient for the existence of $\mathbf{d}(\mathbf{x})$ solving Eq.\;(\ref{eq:dSolve}). The fact that $\mathbf{d}(\mathbf{x})$ is smoothly extendable follows from its definition using a smooth extension theorem (e.g., the ones in \cite{evans2022partial}, Chapter 5 will do).
\end{proof}

\paragraph{The global origami construction.} To finish, we need to know that there is a smooth solution $(\boldsymbol{\omega}(\mathbf{x}), \xi(\mathbf{x}))$ of the  linear system of PDEs in Eq.\;(\ref{eq:theAuxPDE0}).
The proof of this is given in  Appendix \ref{sec:ExistencePDE}. After some inspired manipulations, the proof amounts to a lengthy application of technical but standard results from PDE theory.
Mechanics and physics oriented readers can safely skip this proof as the techniques are not required in the rest of the paper, only the result. 
Accepting the existence of a smooth solution $(\boldsymbol{\omega}(\mathbf{x}), \xi(\mathbf{x}))$ to Eq.\;(\ref{eq:theAuxPDE0}), we apply it to our ansatz. We choose $\mathbf{d}(\mathbf{x})$ as in Lemma \ref{LemmaCell3}. It follows from Lemmas \ref{LemmaCell1}-\ref{LemmaCell3} that all of the gaps in the cell-wise ansatz are small:
\begin{equation}
\begin{aligned}\label{eq:cellwiseGapsFinal}
\big( \mathbf{g}_{31,\bar{\mathbf{x}}}^{(\ell)} , \mathbf{g}_{65, \bar{\mathbf{x}}}^{(\ell)}  ,  \mathbf{g}_{78,\bar{\mathbf{x}}}^{(\ell)},   \mathbf{g}_{42, \bar{\mathbf{x}}}^{(\ell)} ,  \mathbf{g}_{85, \bar{\mathbf{x}}}^{(\ell)} , \mathbf{g}_{76, \bar{\mathbf{x}}}^{(\ell)}  \big) = O(\ell^3)
\end{aligned}
\end{equation}
for all $\bar{\mathbf{x}} \in \mathcal{I}_{\text{cell}}^{(\ell)}$.  

Recall from Section \ref{ssec:BarHinge}  that  $\{ \mathbf{x}_{i,j}^{(\ell)} \in \mathbb{R}^3 \colon (i,j) \in \mathcal{I}^{(\ell)} \}$ bijectively labels the vertices of the origami in $\Omega_{\text{cell}}^{(\ell)}$, in a way that is consistent with the nearest neighbor description in Fig.\;\ref{Fig:oridom}. A valid origami deformation is given by mapping all these points as $\mathbf{x}_{i,j}^{(\ell)} \mapsto \mathbf{y}_{i,j}^{(\ell)}$. We now specify  $\{ \mathbf{y}_{i,j}^{(\ell)}  \}$ by closing the gaps generated in our above cell-wise ansatz.  In particular, for each $\mathbf{x}_{i,j}^{(\ell)}$ and $(i,j) \in \mathcal{I}^{(\ell)}$, we define 
\begin{equation}
\begin{aligned}\label{eq:TheOriConstruct}
\mathbf{y}_{i,j}^{(\ell)} := \begin{cases}
\mathbf{y}_{0,\bar{\mathbf{x}}}^{(\ell)} & \text{ if } \mathbf{x}_{i,j}^{(\ell)} = \mathbf{x}_{0,\bar{\mathbf{x}}}^{(\ell)} \text{ for some } \bar{\mathbf{x}} \in \mathcal{I}_{\text{cell}}^{(\ell)} \\
 \frac{1}{2} (\mathbf{y}_{1,\bar{\mathbf{x}}}^{(\ell)} + \mathbf{y}_{3,\bar{\mathbf{x}}+ \ell \tilde{\mathbf{u}}_0}^{(\ell)}) &  \text{ if } \mathbf{x}_{i,j}^{(\ell)} = \mathbf{x}_{1,\bar{\mathbf{x}}}^{(\ell)} \text{ for some } \bar{\mathbf{x}} \in \mathcal{I}_{\text{cell}}^{(\ell)} \\
 \frac{1}{2} (\mathbf{y}_{2,\bar{\mathbf{x}}}^{(\ell)} + \mathbf{y}_{4,\bar{\mathbf{x}}+ \ell \tilde{\mathbf{v}}_0}^{(\ell)}) &  \text{ if } \mathbf{x}_{i,j}^{(\ell)} = \mathbf{x}_{2,\bar{\mathbf{x}}}^{(\ell)} \text{ for some } \bar{\mathbf{x}} \in \mathcal{I}_{\text{cell}}^{(\ell)}  \\
 \frac{1}{4} (\mathbf{y}_{5,\bar{\mathbf{x}}}^{(\ell)} + \mathbf{y}_{6,\bar{\mathbf{x}} + \ell \tilde{\mathbf{u}}_0}^{(\ell)} + \mathbf{y}_{8,\bar{\mathbf{x}} + \ell \tilde{\mathbf{v}}_0}^{(\ell)} + \mathbf{y}_{7,\bar{\mathbf{x}} + \ell \tilde{\mathbf{u}}_0+ \ell \tilde{\mathbf{v}}_0}^{(\ell)}) & \text{ if } \mathbf{x}_{i,j}^{(\ell)} = \mathbf{x}_{5,\bar{\mathbf{x}}}^{(\ell)} \text{ for some } \bar{\mathbf{x}} \in \mathcal{I}_{\text{cell}}^{(\ell)}.
\end{cases}
\end{aligned}
\end{equation}
Fig.\;\ref{Fig:EggboxGap} is helpful for visualizing this definition. As an example,  $\mathbf{y}_{m+1,n+1}^{(\ell)}$ in the figure is equal to $\frac{1}{4} (\mathbf{y}_{5,\bar{\mathbf{x}}}^{(\ell)} + \mathbf{y}_{6,\bar{\mathbf{x}} + \ell \tilde{\mathbf{u}}_0}^{(\ell)} + \mathbf{y}_{8,\bar{\mathbf{x}} + \ell \tilde{\mathbf{v}}_0}^{(\ell)} + \mathbf{y}_{7,\bar{\mathbf{x}} + \ell \tilde{\mathbf{u}}_0+ \ell \tilde{\mathbf{v}}_0}^{(\ell)})$. 

Eq.\;(\ref{eq:TheOriConstruct}) completes the construction: the full ansatz is specified by deforming the vertices $\mathbf{x}_{i,j}^{(\ell)}$ to $\mathbf{y}_{i,j}^{(\ell)}$ through the formulas in Eqs.\;(\ref{eq:cellwise1}), (\ref{eq:cellwise2}), (\ref{eq:cellwise3}) and (\ref{eq:TheOriConstruct}) for any choice of fields $\mathbf{y}_{\text{eff}}(\mathbf{x})$, $\mathbf{R}_{\text{eff}}(\mathbf{x})$, $\theta(\mathbf{x})$, $\boldsymbol{\omega}_{\mathbf{u}_0}(\mathbf{x})$, $\boldsymbol{\omega}_{\mathbf{v}_0}(\mathbf{x})$, $\boldsymbol{\kappa}(\mathbf{x})$, $\boldsymbol{\omega}(\mathbf{x})$, $\xi(\mathbf{x})$ and $\mathbf{d}(\mathbf{x})$ that satisfy the assumptions and constraints outlined in this section.

We claim that these origami deformations have panel strains $\sim \ell^2$, and that they approximate the given effective deformation $\mathbf{y}_\text{eff}(\mathbf{x})$. To prove this, recall that the panel strains are defined by $\varepsilon_{i,j,1}^{(\ell)}, \ldots,\varepsilon_{i,j,4}^{(\ell)}$ in Eq.\;(\ref{eq:PanelStrain}), and that  $\hat{\mathbf{x}}_{i,j}^{(\ell)}$ denotes the orthogonal projection of $\mathbf{x}_{i,j}^{(\ell)}$ onto $\mathbb{R}^2$.
\begin{proposition}\label{GlobalOrigamiProp}
The origami deformation $\{\mathbf{y}_{i,j}^{(\ell)}\}$ satisfies (i) $|\varepsilon^{(\ell)}_{i,j,1}|, \ldots, |\varepsilon_{i,j,4}^{(\ell)}|  = O(\ell^2)$ for all $(i,j) \in \mathcal{I}^{(\ell)}$, and (ii) $|\mathbf{y}_{i,j}^{(\ell)} - \mathbf{y}_{\emph{eff}}(\hat{\mathbf{x}}_{i,j}^{(\ell)})| = O(\ell)$ for all $(i,j) \in \mathcal{I}^{(\ell)}$.
\end{proposition}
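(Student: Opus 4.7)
The plan is to reduce both claims to two estimates: (A) the pre-averaged cell-wise ansatz from Eqs.~(\ref{eq:cellwise1})--(\ref{eq:cellwise3}) preserves pairwise distances between the nine vertices of each cell $\bar{\mathbf{x}}$ up to order $\ell^3$, i.e., $\big||\mathbf{y}_{p,\bar{\mathbf{x}}}^{(\ell)} - \mathbf{y}_{q,\bar{\mathbf{x}}}^{(\ell)}| - |\mathbf{x}_{p,\bar{\mathbf{x}}}^{(\ell)} - \mathbf{x}_{q,\bar{\mathbf{x}}}^{(\ell)}|\big| = O(\ell^3)$ for all $p,q \in \{0,\ldots,8\}$; and (B) the averaging rule Eq.~(\ref{eq:TheOriConstruct}) perturbs each cell-wise value by at most $O(\ell^3)$, i.e., for every origami vertex and every cell center $\bar{\mathbf{x}}$ containing it, $\mathbf{y}_{i,j}^{(\ell)} = \mathbf{y}_{p,\bar{\mathbf{x}}}^{(\ell)} + O(\ell^3)$ for the appropriate $p$. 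Estimate (B) is immediate from Eq.~(\ref{eq:cellwiseGapsFinal}), since any difference between cell-wise representatives of the same vertex equals a signed sum of at most two of the gaps listed in Eq.~(\ref{eq:theGapsDef}) (one for edge-shared vertices, two for corner-shared vertices, via an intermediate cell-wise representative).

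To establish (A), fix $\bar{\mathbf{x}}$ and set $\mathbf{Q}^{(\ell)} := \mathbf{R}_{\text{eff}}(\bar{\mathbf{x}})[\mathbf{I} + \ell(\boldsymbol{\omega}(\bar{\mathbf{x}})\times)]$. Since $\mathbf{R}_{\text{eff}} \in SO(3)$ and $(\boldsymbol{\omega}\times)^T = -(\boldsymbol{\omega}\times)$, a short computation gives $|\mathbf{Q}^{(\ell)}\mathbf{v}|^2 = |\mathbf{v}|^2 + \ell^2 |\boldsymbol{\omega}(\bar{\mathbf{x}})\times\mathbf{v}|^2$, hence $|\mathbf{Q}^{(\ell)}\mathbf{v}| = |\mathbf{v}| + O(\ell^2)$ uniformly for bounded $\mathbf{v}$ on $\overline{\Omega}$. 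Each pair difference $\mathbf{y}_{p,\bar{\mathbf{x}}}^{(\ell)} - \mathbf{y}_{q,\bar{\mathbf{x}}}^{(\ell)}$ then equals $\ell\mathbf{Q}^{(\ell)} \mathbf{w}_{pq}^{(\ell)}$, where $\mathbf{w}_{pq}^{(\ell)}$ is built from at most two tangents $\mathbf{t}_i(\theta+\ell\xi)$ and at most one bending perturbation $\ell\kappa_k(\mathbf{t}_k\times\mathbf{t}_{k+1})$ drawn from the boundary tangents in Eq.~(\ref{eq:bentBoundaries1}). Combining the mechanism identities $|\mathbf{t}_i(\theta)| = |\mathbf{t}_i^r|$ and $\mathbf{t}_i(\theta)\cdot\mathbf{t}_j(\theta) = \mathbf{t}_i^r\cdot\mathbf{t}_j^r$ with the orthogonality relation $(\mathbf{t}_i \pm \mathbf{t}_j)\cdot(\mathbf{t}_i\times\mathbf{t}_j) = 0$, one checks case by case that $|\mathbf{w}_{pq}^{(\ell)}|^2 = |\mathbf{x}_{p,\bar{\mathbf{x}}}^{(\ell)} - \mathbf{x}_{q,\bar{\mathbf{x}}}^{(\ell)}|^2 / \ell^2 + O(\ell^2)$; taking square roots and multiplying by $\ell$ yields (A).

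Claim (i) follows at once: the four vertices of each panel lie in a single cell $\bar{\mathbf{x}}$, so (B) gives $\mathbf{y}_{i,j}^{(\ell)} = \mathbf{y}_{p,\bar{\mathbf{x}}}^{(\ell)} + \mathbf{e}_{i,j}^{(\ell)}$ with $|\mathbf{e}_{i,j}^{(\ell)}| = O(\ell^3)$, and the reverse triangle inequality combined with (A) yields $\big||\mathbf{y}_{i',j'}^{(\ell)} - \mathbf{y}_{i,j}^{(\ell)}| - |\mathbf{x}_{i',j'}^{(\ell)} - \mathbf{x}_{i,j}^{(\ell)}|\big| = O(\ell^3)$ for each edge and diagonal of the panel; dividing by the reference length $\sim \ell$ produces the strain bound $O(\ell^2)$. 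For (ii), fix a vertex $\mathbf{x}_{i,j}^{(\ell)}$ and any cell center $\bar{\mathbf{x}}$ containing it, so that $|\hat{\mathbf{x}}_{i,j}^{(\ell)} - \bar{\mathbf{x}}| = O(\ell)$. Boundedness of the sampled fields on $\overline{\Omega}$ and direct inspection of Eqs.~(\ref{eq:cellwise1})--(\ref{eq:cellwise3}) give $\mathbf{y}_{p,\bar{\mathbf{x}}}^{(\ell)} = \mathbf{y}_{\text{eff}}(\bar{\mathbf{x}}) + O(\ell)$, while smoothness of $\mathbf{y}_{\text{eff}}$ gives $\mathbf{y}_{\text{eff}}(\hat{\mathbf{x}}_{i,j}^{(\ell)}) = \mathbf{y}_{\text{eff}}(\bar{\mathbf{x}}) + O(\ell)$; combining these with the $O(\ell^3)$ averaging perturbation from (B) gives $|\mathbf{y}_{i,j}^{(\ell)} - \mathbf{y}_{\text{eff}}(\hat{\mathbf{x}}_{i,j}^{(\ell)})| = O(\ell)$. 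The main obstacle is the case-by-case bookkeeping of (A), especially for the panel-diagonal pairs involving a corner vertex $\mathbf{y}_{5,\bar{\mathbf{x}}}^{(\ell)},\ldots,\mathbf{y}_{8,\bar{\mathbf{x}}}^{(\ell)}$, where a bending perturbation enters and its otherwise $O(\ell)$ contribution to the squared pair distance cancels only because of the orthogonality identity above.
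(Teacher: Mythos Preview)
Your approach is essentially the paper's: reduce the averaged construction to the cell-wise one via the $O(\ell^3)$ gap estimates (your (B), the paper's Eq.~(\ref{eq:cellwiseGapsFinal})), then verify that the cell-wise deformation is a near-isometry on each panel using the mechanism identities and the orthogonality $(\mathbf{t}_i\pm\mathbf{t}_j)\cdot(\mathbf{t}_i\times\mathbf{t}_j)=0$ (your (A), the paper's representative computation in Eqs.~(\ref{eq:case52Part0})--(\ref{eq:case52Final})). Part (ii) is handled identically.

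One caution: your (A) as stated---for \emph{all} $p,q\in\{0,\dots,8\}$---is false. Pairs spanning non-adjacent panels, e.g.\ $(p,q)=(5,7)$, bring in the inner products $\mathbf{t}_1(\theta)\cdot\mathbf{t}_3(\theta)$ and $\mathbf{t}_2(\theta)\cdot\mathbf{t}_4(\theta)$, which the mechanism does \emph{not} preserve (only adjacent pairs $ij\in\{12,23,34,41\}$ are), so the distance discrepancy there is $O(\ell)$, not $O(\ell^3)$. This does no damage to your argument, since your own description of $\mathbf{w}_{pq}^{(\ell)}$ (``at most two tangents \ldots\ and at most one bending perturbation'') already restricts implicitly to within-panel pairs, and those are the only ones claim (i) uses; just tighten the statement of (A) to match.
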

\begin{proof}
For part (i), we estimate the lengths of neighboring vertices in the origami construction $\{ \mathbf{y}_{i,j}^{(\ell)} \}$ by those in the cell-wise  construction $\{ \mathbf{y}_{i,\bar{\mathbf{x}}}^{(\ell)} \}$.  We show in particular that the deviation between these lengths is $O(\ell^3)$. The desired estimate then follows because each cell in the latter is deformed by an approximate mechanism deformation, resulting in sufficiently small strain.  We demonstrate the result explicitly for one representative example. The other cases follow analogously. 

Consider the panel strain $\varepsilon_{i,j,1}^{(\ell)}$ in Eq.\;(\ref{eq:PanelStrain}) for $i$ and $j$ in Eq.\;(\ref{eq:TheOriConstruct}) such that  $\mathbf{y}_{i+1,j}^{(\ell)} =  \frac{1}{4} (\mathbf{y}_{5,\bar{\mathbf{x}}}^{(\ell)} + \mathbf{y}_{6,\bar{\mathbf{x}} + \ell \tilde{\mathbf{u}}_0}^{(\ell)} + \mathbf{y}_{8,\bar{\mathbf{x}} + \ell \tilde{\mathbf{v}}_0}^{(\ell)} + \mathbf{y}_{7,\bar{\mathbf{x}} + \ell \tilde{\mathbf{u}}_0+ \ell \tilde{\mathbf{v}}_0}^{(\ell)})$ and $\mathbf{y}_{i,j}^{(\ell)}  =  \frac{1}{2} (\mathbf{y}_{2,\bar{\mathbf{x}}}^{(\ell)} + \mathbf{y}_{4,\bar{\mathbf{x}}+ \ell \tilde{\mathbf{v}}_0}^{(\ell)})$, $\bar{\mathbf{x}} \in \mathcal{I}^{(\ell)}_{\text{cell}}$. It follows that 
\begin{equation}
\begin{aligned}\label{eq:case52Part0}
\mathbf{y}_{i+1,j}^{(\ell)}  - \mathbf{y}_{i,j}^{(\ell)} &=  \mathbf{y}_{5, \bar{\mathbf{x}}}^{(\ell)} + \frac{1}{4} \big( 2 \mathbf{g}_{65,\bar{\mathbf{x}}}^{(\ell)} +  \mathbf{g}_{85,\bar{\mathbf{x}}}^{(\ell)}  + \mathbf{g}_{76,\bar{\mathbf{x}}+ \ell \tilde{\mathbf{u}}_0}^{(\ell)} \big)  - \mathbf{y}_{2,\bar{\mathbf{x}}}^{(\ell)} + \frac{1}{2} \mathbf{g}_{42,\bar{\mathbf{x}}}^{(\ell)} = \mathbf{y}_{5, \bar{\mathbf{x}}}^{(\ell)}  - \mathbf{y}_{2, \bar{\mathbf{x}}}^{(\ell)} + O(\ell^3)
\end{aligned}
\end{equation}
using Eq\;(\ref{eq:theGapsDef}) and Eq.\;(\ref{eq:cellwiseGapsFinal}). The definitions in Eq.\;(\ref{eq:cellwise2}-\ref{eq:cellwise3}) then give that
\begin{equation}
\begin{aligned}\label{eq:case52}
| \mathbf{y}_{5,\bar{\mathbf{x}}}^{(\ell)} - \mathbf{y}_{2,\bar{\mathbf{x}}}^{(\ell)}|  &=  |\ell \big[  \mathbf{R}_{\text{eff}}(\bar{\mathbf{x}}) \big( \mathbf{I} + \ell (\boldsymbol{\omega}(\bar{\mathbf{x}}) \times ) \big)\big]  \mathbf{t}^{(\ell)}_{1,+}(\theta(\bar{\mathbf{x}}) + \ell \xi(\bar{\mathbf{x}}), \boldsymbol{\kappa}(\bar{\mathbf{x}})  )|  \\
&= \ell |  \mathbf{t}^{(\ell)}_{1,+}(\theta(\bar{\mathbf{x}}) + \ell \xi(\bar{\mathbf{x}}), \boldsymbol{\kappa}(\bar{\mathbf{x}})  )| + O(\ell^3)
\end{aligned}
\end{equation}
since $\mathbf{t}_1(\tilde{\theta}) + \mathbf{t}_{2,+}(\tilde{\theta}, \boldsymbol{\kappa}) = \mathbf{t}_2(\tilde{\theta}) + \mathbf{t}_{1,+}(\tilde{\theta}, \boldsymbol{\kappa})$ and since $|\mathbf{R}(\mathbf{I} + \ell (\boldsymbol{\omega} \times )) \mathbf{v}| = \sqrt{|\mathbf{v}|^2 +O(\ell^2)} = |\mathbf{v}| + O(\ell^2)$ for any $\mathbf{R} \in SO(3)$ and $\boldsymbol{\omega}, \mathbf{v} \in \mathbb{R}^3$. Similarly,  $|\mathbf{t}^{(\ell)}_{1,+}(\tilde{\theta}, \boldsymbol{\kappa} )| = \sqrt{ |\mathbf{t}_1(\tilde{\theta})|^2 + O(\ell^2)} = |\mathbf{t}_1(\tilde{\theta})| + O(\ell^2)$ using the definition in Eq.\;(\ref{eq:bentBoundaries1}) and Taylor expansion, and $|\mathbf{t}_1(\tilde{\theta})| = |\mathbf{t}_1^r|$. It follows that 
\begin{equation}
\begin{aligned}\label{eq:case52Final}
\ell |  \mathbf{t}^{(\ell)}_{1,+}(\theta(\bar{\mathbf{x}}) + \ell \xi(\bar{\mathbf{x}}), \boldsymbol{\kappa}(\bar{\mathbf{x}})  )| = \ell |  \mathbf{t}_{1}(\theta(\bar{\mathbf{x}}) + \ell \xi(\bar{\mathbf{x}}))|  + O(\ell^3)=  \ell |\mathbf{t}_1^r|.
\end{aligned}
\end{equation}
Since $\ell |\mathbf{t}_1^r| = |  \mathbf{x}_{5,\bar{\mathbf{x}}}^{(\ell)} - \mathbf{x}_{2,\bar{\mathbf{x}}}^{(\ell)}| = |\mathbf{x}_{i+1,j}^{(\ell)} - \mathbf{x}_{i,j}^{(\ell)}|$ by assumption, we conclude from Eqs.\;(\ref{eq:case52Part0}-\ref{eq:case52Final}) that 
\begin{equation}
\begin{aligned}
|\mathbf{y}_{i+1,j}^{(\ell)} - \mathbf{y}_{i,j}^{(\ell)}| = | \mathbf{y}_{5,\bar{\mathbf{x}}}^{(\ell)} - \mathbf{y}_{2,\bar{\mathbf{x}}}^{(\ell)}| + O(\ell^3) = \ell |\mathbf{t}_1^r| + O(\ell^3) = |\mathbf{x}_{i+1,j}^{(\ell)} - \mathbf{x}_{i,j}^{(\ell)}| + O(\ell^3). 
\end{aligned}
\end{equation}
Hence $|\varepsilon_{i,j,1}^{(\ell)}| = O(\ell^2)$.

For part (ii), consider any $\mathbf{y}_{i,j}^{(\ell)}$  and $\mathbf{x}_{i,j}^{(\ell)}$, $(i,j) \in \mathcal{I}^{(\ell)}$. Note that $\hat{\mathbf{x}}_{i,j}^{(\ell)} = (\mathbf{x}_{i,j}^{(\ell)} \cdot \mathbf{e}_1, \mathbf{x}_{i,j}^{(\ell)} \cdot \mathbf{e}_2)$. Since each $\mathbf{y}_{i,j}^{(\ell)}$ is defined by averaging points in the cell-wise construction and since the gaps between neighboring cells are $O(\ell^3)$, we have that $\mathbf{y}_{i,j}^{(\ell)} = \mathbf{y}_{k,\bar{\mathbf{x}}}^{(\ell)} + O(\ell^3)$ for some $k \in \{ 0,1,\ldots,8\}$ and $\bar{\mathbf{x}} \in \mathcal{I}^{(\ell)}_{\text{cell}}$. Since the vertices inside the reference and deformed cell are at most a distance $O(\ell)$ apart,  $\mathbf{y}_{k,\bar{\mathbf{x}}}^{(\ell)}   = \mathbf{y}_{0,\bar{\mathbf{x}}}^{(\ell)}+ O(\ell)$ and $\hat{\mathbf{x}}_{i,j}^{(\ell)} = \bar{\mathbf{x}} + O(\ell)$.  Using the definition $\mathbf{y}_{0,\bar{\mathbf{x}}}^{(\ell)} := \mathbf{y}_{\text{eff}}(\bar{\mathbf{x}}) + \ell \mathbf{d}(\bar{\mathbf{x}})$ from Eq.\;(\ref{eq:cellwise1}),
\begin{equation}
\begin{aligned}
|\mathbf{y}_{i,j}^{(\ell)} - \mathbf{y}_{\text{eff}}(\hat{\mathbf{x}}_{i,j}^{(\ell)})  | &= |\mathbf{y}_{0,\bar{\mathbf{x}}}^{(\ell)}+ O(\ell)-  \mathbf{y}_{\text{eff}}(\hat{\mathbf{x}}_{i,j}^{(\ell)})|  =| \mathbf{y}_{\text{eff}}(\bar{\mathbf{x}})  + O(\ell) - \mathbf{y}_{\text{eff}}(\bar{\mathbf{x}} + O(\ell))| = O(\ell).
\end{aligned}
\end{equation}
Here we used that $\mathbf{y}_{\text{eff}}(\mathbf{x})$ is smooth. 
\end{proof}

\subsection{Proof of Theorem~\ref{MainTheorem}}\label{sec:CompleteTheProof}

We are ready to prove Theorem~\ref{MainTheorem}, which we do in a series of short steps. Consider any parallelogram origami design from Section \ref{ssec:DesignKin}, and  the bar and hinge energy of this origami design in  Section~\ref{ssec:BarHinge}. Let $\theta(\mathbf{x})$,  $\bm{\omega}_{\mathbf{u}_0} (\mathbf{x})$ and $\bm{\omega}_{\mathbf{v}_0}(\mathbf{x})$ solve the surface theory and have the regularity properties from the statement of the theorem, so that we can guarantee a smooth solution to Eq.\;(\ref{eq:theAuxPDE0}).  The first few results of the theorem have already been proven, but we reiterate them for clarity. 

\noindent \textbf{Step 1}. \textit{There exists a unique and smooth rotation field $\mathbf{R}_{\emph{eff}} \colon \Omega \rightarrow SO(3)$ with $\mathbf{R}_{\emph{eff}}(\langle \mathbf{x} \rangle) = \mathbf{I}$  such that $\partial_{\mathbf{u}_0} \mathbf{R}_{\emph{eff}}(\mathbf{x}) = \mathbf{R}_{\emph{eff}}(\mathbf{x}) (\boldsymbol{\omega}_{\mathbf{u}_0}(\mathbf{x}) \times)$ and $\partial_{\mathbf{v}_0} \mathbf{R}_{\emph{eff}}(\mathbf{x}) = \mathbf{R}_{\emph{eff}}(\mathbf{x}) (\boldsymbol{\omega}_{\mathbf{v}_0}(\mathbf{x}) \times)$.  In addition, there exists a unique and smooth  effective deformation $\mathbf{y}_{\emph{eff}} \colon \Omega \rightarrow \mathbb{R}^3$  such that $\mathbf{y}_{\emph{eff}} ( \langle \mathbf{x} \rangle) = \mathbf{0}$ and $\nabla \mathbf{y}_{\emph{eff}}(\mathbf{x}) = \mathbf{R}_{\emph{eff}}(\mathbf{x}) \mathbf{A}_{\emph{eff}}(\theta(\mathbf{x}))$.}

\begin{proof} This follows directly from Proposition \ref{firstProp}. \end{proof}

\noindent Fix the unique $\mathbf{R}_{\text{eff}}(\mathbf{x})$ and $\mathbf{y}_{\text{eff}}(\mathbf{x})$ as above. Choose a smooth solution $(\boldsymbol{\omega}(\mathbf{x}), \xi(\mathbf{x}))$ to Eq.\;(\ref{eq:theAuxPDE0}), the existence of which is guaranteed by Proposition \ref{AuxPDEProp} of Appendix \ref{sec:ExistencePDE}, and construct a global origami deformation $\{\mathbf{y}_{i,j}^{(\ell)}\}$ from the fields  $\theta(\mathbf{x})$, $\bm{\omega}_{\mathbf{u}_0} (\mathbf{x})$, $\bm{\omega}_{\mathbf{v}_0}(\mathbf{x})$, $\mathbf{R}_{\text{eff}}(\mathbf{x})$, $\mathbf{y}_{\text{eff}}(\mathbf{x})$, $\boldsymbol{\omega}(\mathbf{x})$ and $\xi(\mathbf{x})$, as done in Section \ref{ssec:GlobalOrigami}. Recall that the panels strains are defined by $\varepsilon_{i,j,1}^{(\ell)}, \ldots, \varepsilon_{i,j,4}^{(\ell)}$ in Eq.\;(\ref{eq:PanelStrain}).

\noindent \textbf{Step 2}. \textit{The origami deformation $\{ \mathbf{y}_{i,j}^{(\ell)} \}$ satisfies $|\varepsilon^{(\ell)}_{i,j,1}|, \ldots, |\varepsilon_{i,j,4}^{(\ell)}|  = O(\ell^2)$ for all $(i,j) \in \mathcal{I}^{(\ell)}$.}

\begin{proof} This follows directly from the first part of Proposition \ref{GlobalOrigamiProp}. \end{proof}

\noindent Next, recall that $\hat{\mathbf{x}}_{i,j}^{(\ell)}$ denotes the orthogonal projection of $\mathbf{x}_{i,j}^{(\ell)}$ onto $\mathbb{R}^2$.

\noindent \textbf{Step 3}.\;\textit{The origami deformation $\{ \mathbf{y}_{i,j}^{(\ell)} \}$ approximates the effective deformation $\mathbf{y}_{\emph{eff}}(\mathbf{x})$ in that $|\mathbf{y}_{i,j}^{(\ell)} - \mathbf{y}_{\emph{eff}}(\hat{\mathbf{x}}_{i,j}^{(\ell)})| = O(\ell)$ for all $(i,j) \in \mathcal{I}^{(\ell)}$.}

\begin{proof} This follows directly from the second part of Proposition \ref{GlobalOrigamiProp}. \end{proof}

We now calculate the bending energy of each individual panel. This involves a Taylor expansion of the angles $\psi_{i,j,1}^{(\ell)}$ and $\psi_{i,j,2}^{(\ell)}$ defined in Eqs.\;(\ref{eq:getBendAngles}) and (\ref{eq:getBendAngles2}). As these angles are cumbersome nonlinear functions of the panel vertices, we supply several useful preliminary identities based on the schematic in Fig.\;\ref{Fig:CalcBending}.  Let $\mathbf{R}^{(\ell)} = \mathbf{R} + O(\ell)$ for some $\mathbf{R} \in SO(3)$ and $\mathbf{t}_i^{(\ell)} = \mathbf{t}_i(\theta) + O(\ell)$, $i = 1,\ldots,4$, for the deformed tangents $\mathbf{t}_i(\theta)$ from Section \ref{ssec:DesignKin}. Note we do not assume that $\mathbf{R}^{(\ell)}$ is a rotation.

\begin{figure}[t!]
\centering
\includegraphics[width=1\textwidth]{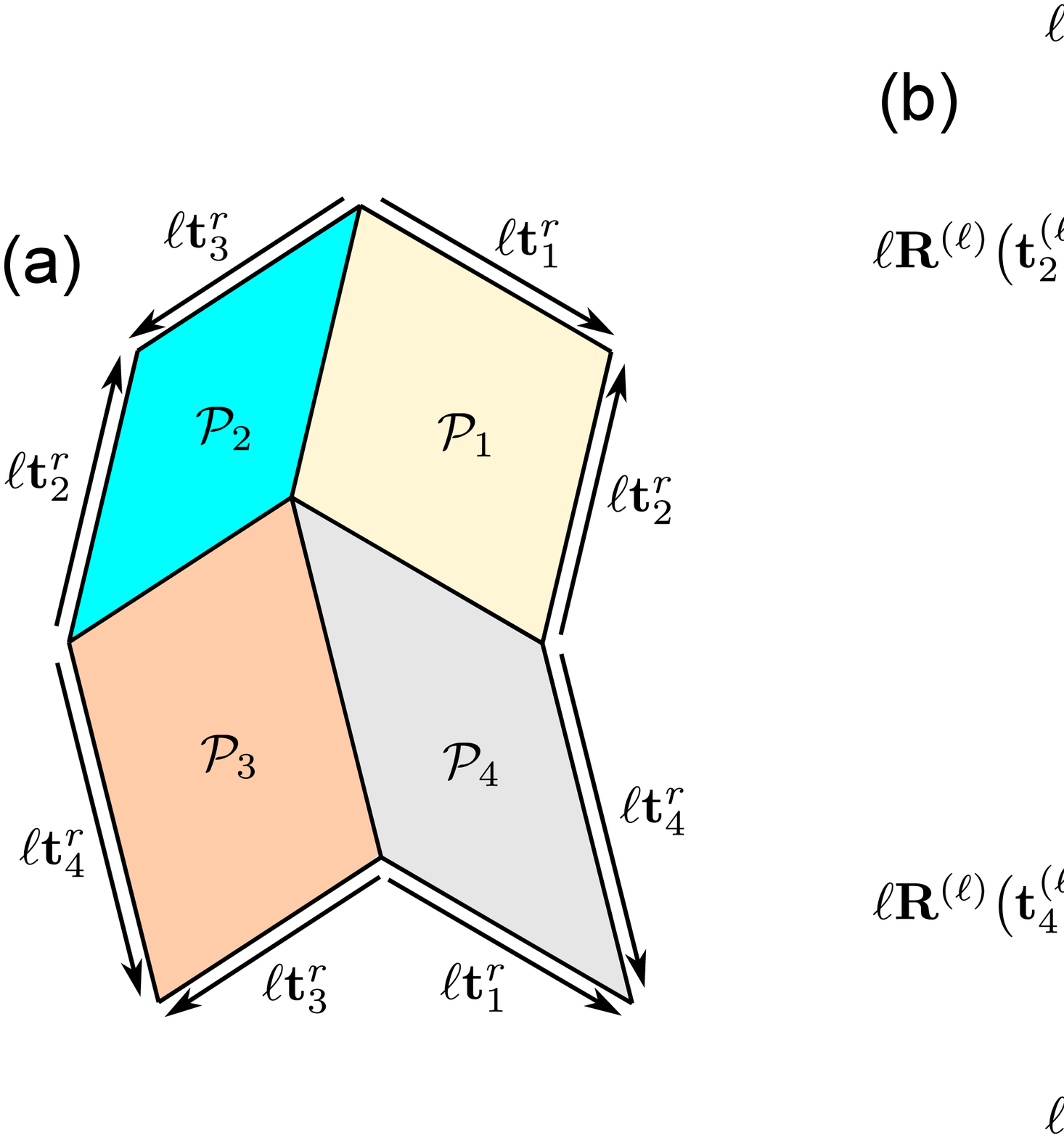} 
\caption{Schematic for panel bending. The points and vectors drawn here are used to calculate the bending strains of the panels.  Note $\mathbf{R}^{(\ell)} = \mathbf{R} + O(\ell)$ for some $\mathbf{R} \in SO(3)$ and $\mathbf{t}_i^{(\ell)} = \mathbf{t}_i(\theta) + O(\ell)$, $i = 1,\ldots,4$, for the deformed tangents $\mathbf{t}_i(\theta)$ from Section \ref{ssec:DesignKin}.}
\label{Fig:CalcBending}
\end{figure}

\noindent \textbf{Step 3}.\;\textit{The vertices $\mathbf{y}_{1}^{(\ell)},\ldots, \mathbf{y}_4^{(\ell)}$ in Fig.\;\ref{Fig:CalcBending}  satisfy
\begin{align}\label{eq:ugly}
&\underline{\text{Panel 1:}} \nonumber \\
&\begin{cases}
\frac{\mathbf{y}_1^{(\ell)} - \mathbf{y}_2^{(\ell)}}{|\mathbf{y}_1^{(\ell)} - \mathbf{y}_2^{(\ell)}|}  \cdot \Big( \Big[\frac{(\mathbf{y}_1^{(\ell)} - \mathbf{y}_0^{(\ell)}  )\times ( \mathbf{y}_2^{(\ell)} - \mathbf{y}_0^{(\ell)})}{|(\mathbf{y}_1^{(\ell)} - \mathbf{y}_0^{(\ell)}  )\times ( \mathbf{y}_2^{(\ell)} - \mathbf{y}_0^{(\ell)})|} \Big] \times \Big[\frac{(\mathbf{y}_2^{(\ell)} - \mathbf{y}_5^{(\ell)}  )\times ( \mathbf{y}_1^{(\ell)} - \mathbf{y}_5^{(\ell)})}{|(\mathbf{y}_2^{(\ell)} - \mathbf{y}_5^{(\ell)}  )\times ( \mathbf{y}_1^{(\ell)} - \mathbf{y}_5^{(\ell)})|} \Big] \Big) = \ell \kappa_1 \frac{|(\mathbf{t}_1(\theta) - \mathbf{t}_2(\theta)) \times \mathbf{n}_{12}(\theta)|^2}{|\mathbf{t}_1(\theta) - \mathbf{t}_2(\theta)| |\mathbf{n}_{12}(\theta)|^2 } + O(\ell^2),    \\
\frac{\mathbf{y}_5^{(\ell)} - \mathbf{y}_0^{(\ell)}}{|\mathbf{y}_5^{(\ell)} - \mathbf{y}_0^{(\ell)}|}  \cdot  \Big( \Big[\frac{(\mathbf{y}_5^{(\ell)} - \mathbf{y}_1^{(\ell)}  )\times ( \mathbf{y}_0^{(\ell)} - \mathbf{y}_1^{(\ell)})}{|(\mathbf{y}_5^{(\ell)} - \mathbf{y}_1^{(\ell)}  )\times ( \mathbf{y}_0^{(\ell)} - \mathbf{y}_1^{(\ell)})|} \Big] \times \Big[\frac{(\mathbf{y}_0^{(\ell)} - \mathbf{y}_2^{(\ell)}  )\times ( \mathbf{y}_5^{(\ell)} - \mathbf{y}_2^{(\ell)})}{|(\mathbf{y}_0^{(\ell)} - \mathbf{y}_2^{(\ell)}  )\times ( \mathbf{y}_5^{(\ell)} - \mathbf{y}_2^{(\ell)})|} \Big] \Big)  = \ell \kappa_1 \frac{|(\mathbf{t}_1(\theta) + \mathbf{t}_2(\theta)) \times \mathbf{n}_{12}(\theta)|^2}{|\mathbf{t}_1(\theta) + \mathbf{t}_2(\theta)| |\mathbf{n}_{12}(\theta)|^2 } + O(\ell^2).
\end{cases} \nonumber \\ 
&\underline{\text{Panel 2:}} \nonumber \\
&\begin{cases}
\frac{\mathbf{y}_0^{(\ell)} - \mathbf{y}_6^{(\ell)}}{|\mathbf{y}_0^{(\ell)} - \mathbf{y}_6^{(\ell)}|} \cdot   \Big(\Big[\frac{(\mathbf{y}_0^{(\ell)} - \mathbf{y}_3^{(\ell)}  )\times ( \mathbf{y}_6^{(\ell)} - \mathbf{y}_3^{(\ell)})}{|(\mathbf{y}_0^{(\ell)} - \mathbf{y}_3^{(\ell)}  )\times ( \mathbf{y}_6^{(\ell)} - \mathbf{y}_3^{(\ell)})|} \Big] \times \Big[\frac{( \mathbf{y}_6^{(\ell)} - \mathbf{y}_2^{(\ell)}) \times (\mathbf{y}_0^{(\ell)} - \mathbf{y}_2^{(\ell)}  )}{|(\mathbf{y}_0^{(\ell)} - \mathbf{y}_2^{(\ell)}  )\times ( \mathbf{y}_6^{(\ell)} - \mathbf{y}_2^{(\ell)})|} \Big]  \Big)   =  \ell \kappa_2 \frac{|(\mathbf{t}_2(\theta) + \mathbf{t}_3(\theta)) \times \mathbf{n}_{23}(\theta)|^2}{|\mathbf{t}_2(\theta) + \mathbf{t}_3(\theta)| |\mathbf{n}_{23}(\theta)|^2 } + O(\ell^2),    \\
\frac{\mathbf{y}_2^{(\ell)} - \mathbf{y}_3^{(\ell)}}{|\mathbf{y}_2^{(\ell)} - \mathbf{y}_3^{(\ell)}|} \cdot \Big( \Big[\frac{ ( \mathbf{y}_2^{(\ell)} - \mathbf{y}_0^{(\ell)}) \times (\mathbf{y}_3^{(\ell)} - \mathbf{y}_0^{(\ell)}  )}{|  ( \mathbf{y}_2^{(\ell)} - \mathbf{y}_0^{(\ell)}) \times (\mathbf{y}_3^{(\ell)} - \mathbf{y}_0^{(\ell)}  )|} \Big] \times \Big[\frac{(\mathbf{y}_3^{(\ell)} - \mathbf{y}_6^{(\ell)}  )\times ( \mathbf{y}_2^{(\ell)} - \mathbf{y}_6^{(\ell)})}{|(\mathbf{y}_3^{(\ell)} - \mathbf{y}_6^{(\ell)}  )\times ( \mathbf{y}_2^{(\ell)} - \mathbf{y}_6^{(\ell)})|} \Big]\Big) =  \ell \kappa_2 \frac{|(\mathbf{t}_2(\theta) - \mathbf{t}_3(\theta)) \times \mathbf{n}_{23}(\theta)|^2}{|\mathbf{t}_2(\theta) - \mathbf{t}_3(\theta)| |\mathbf{n}_{23}(\theta)|^2 } + O(\ell^2).
\end{cases} \nonumber  \\ 
&\underline{\text{Panel 3:}} \nonumber  \\
&\begin{cases}
\frac{\mathbf{y}_4^{(\ell)} - \mathbf{y}_3^{(\ell)}}{|\mathbf{y}_4^{(\ell)} - \mathbf{y}_3^{(\ell)}|} \cdot \Big( \Big[\frac{(\mathbf{y}_4^{(\ell)} - \mathbf{y}_7^{(\ell)}  )\times ( \mathbf{y}_3^{(\ell)} - \mathbf{y}_7^{(\ell)})}{|(\mathbf{y}_4^{(\ell)} - \mathbf{y}_7^{(\ell)}  )\times ( \mathbf{y}_3^{(\ell)} - \mathbf{y}_7^{(\ell)})|} \Big] \times \Big[\frac{( \mathbf{y}_3^{(\ell)} - \mathbf{y}_0^{(\ell)}) \times (\mathbf{y}_4^{(\ell)} - \mathbf{y}_0^{(\ell)}  )}{|(\mathbf{y}_3^{(\ell)} - \mathbf{y}_0^{(\ell)}  )\times ( \mathbf{y}_4^{(\ell)} - \mathbf{y}_0^{(\ell)})|} \Big] \Big)    =  \ell \kappa_3 \frac{|(\mathbf{t}_3(\theta) - \mathbf{t}_4(\theta)) \times \mathbf{n}_{34}(\theta)|^2}{|\mathbf{t}_3(\theta) - \mathbf{t}_4(\theta)| |\mathbf{n}_{34}(\theta)|^2 } + O(\ell^2),    \\
\frac{\mathbf{y}_0^{(\ell)} - \mathbf{y}_7^{(\ell)}}{|\mathbf{y}_0^{(\ell)} - \mathbf{y}_7^{(\ell)}|} \cdot \Big( \Big[\frac{ ( \mathbf{y}_0^{(\ell)} - \mathbf{y}_4^{(\ell)}) \times (\mathbf{y}_7^{(\ell)} - \mathbf{y}_4^{(\ell)}  )}{|  ( \mathbf{y}_0^{(\ell)} - \mathbf{y}_4^{(\ell)}) \times (\mathbf{y}_7^{(\ell)} - \mathbf{y}_4^{(\ell)}  )|} \Big] \times \Big[\frac{(\mathbf{y}_7^{(\ell)} - \mathbf{y}_3^{(\ell)}  )\times ( \mathbf{y}_0^{(\ell)} - \mathbf{y}_3^{(\ell)})}{|(\mathbf{y}_7^{(\ell)} - \mathbf{y}_3^{(\ell)}  )\times ( \mathbf{y}_0^{(\ell)} - \mathbf{y}_3^{(\ell)})|} \Big] \Big)    =  \ell \kappa_3 \frac{|(\mathbf{t}_3(\theta) + \mathbf{t}_4(\theta)) \times \mathbf{n}_{34}(\theta)|^2}{|\mathbf{t}_3(\theta) + \mathbf{t}_4(\theta)| |\mathbf{n}_{34}(\theta)|^2 } + O(\ell^2). 
\end{cases} \nonumber  \\ 
&\underline{\text{Panel 4:}} \nonumber  \\
&\begin{cases}
\frac{\mathbf{y}_8^{(\ell)} - \mathbf{y}_0^{(\ell)}}{|\mathbf{y}_8^{(\ell)} - \mathbf{y}_0^{(\ell)}|} \cdot \Big( \Big[\frac{(\mathbf{y}_8^{(\ell)} - \mathbf{y}_4^{(\ell)}  )\times ( \mathbf{y}_0^{(\ell)} - \mathbf{y}_4^{(\ell)})}{|(\mathbf{y}_8^{(\ell)} - \mathbf{y}_4^{(\ell)}  )\times ( \mathbf{y}_0^{(\ell)} - \mathbf{y}_4^{(\ell)})|} \Big] \times \Big[\frac{( \mathbf{y}_0^{(\ell)} - \mathbf{y}_1^{(\ell)}) \times (\mathbf{y}_8^{(\ell)} - \mathbf{y}_1^{(\ell)}  )}{|(\mathbf{y}_0^{(\ell)} - \mathbf{y}_1^{(\ell)}  )\times ( \mathbf{y}_8^{(\ell)} - \mathbf{y}_1^{(\ell)})|} \Big]  \Big)   =  \ell \kappa_4 \frac{|(\mathbf{t}_4(\theta) + \mathbf{t}_1(\theta)) \times \mathbf{n}_{41}(\theta)|^2}{|\mathbf{t}_4(\theta) + \mathbf{t}_1(\theta)| |\mathbf{n}_{41}(\theta)|^2 } + O(\ell^2),    \\
\frac{\mathbf{y}_1^{(\ell)} - \mathbf{y}_4^{(\ell)}}{|\mathbf{y}_1^{(\ell)} - \mathbf{y}_4^{(\ell)}|}  \cdot \Big( \Big[\frac{ ( \mathbf{y}_1^{(\ell)} - \mathbf{y}_8^{(\ell)}) \times (\mathbf{y}_4^{(\ell)} - \mathbf{y}_8^{(\ell)}  )}{|  ( \mathbf{y}_1^{(\ell)} - \mathbf{y}_8^{(\ell)}) \times (\mathbf{y}_4^{(\ell)} - \mathbf{y}_8^{(\ell)}  )|} \Big] \times \Big[\frac{(\mathbf{y}_4^{(\ell)} - \mathbf{y}_0^{(\ell)}  )\times ( \mathbf{y}_1^{(\ell)} - \mathbf{y}_0^{(\ell)})}{|(\mathbf{y}_4^{(\ell)} - \mathbf{y}_0^{(\ell)}  )\times ( \mathbf{y}_1^{(\ell)} - \mathbf{y}_0^{(\ell)})|} \Big] \Big)  =  \ell \kappa_4 \frac{|(\mathbf{t}_4(\theta) - \mathbf{t}_1(\theta)) \times \mathbf{n}_{41}(\theta)|^2}{|\mathbf{t}_4(\theta) - \mathbf{t}_1(\theta)| |\mathbf{n}_{41}(\theta)|^2 } + O(\ell^2),
\end{cases}
\end{align}
where $\mathbf{n}_{ij}(\theta) := \mathbf{t}_i(\theta) \times \mathbf{t}_j(\theta)$.}
\begin{proof}
Observe from the diagram of Panel 1 that 
\begin{equation}
\begin{aligned}\label{eq:AnnoyingBendCalc1}
&(\mathbf{y}_1^{(\ell)} - \mathbf{y}_0^{(\ell)}  )\times  (\mathbf{y}_2^{(\ell)} - \mathbf{y}_0^{(\ell)}  ) = \ell^2 \big[\text{cof }\mathbf{R}^{(\ell)} \big]  (\mathbf{t}_1^{(\ell)} \times \mathbf{t}_2^{(\ell)}), \\
&  (\mathbf{y}_2^{(\ell)} - \mathbf{y}_5^{(\ell)}  ) \times (\mathbf{y}_1^{(\ell)} - \mathbf{y}_5^{(\ell)} )   = -\ell^2 \big[\text{cof }\mathbf{R}^{(\ell)} \big] \big\{  (\mathbf{t}_2^{(\ell)} \times \mathbf{t}_1^{(\ell)}) + \ell \kappa_1 ( \mathbf{t}_2^{(\ell)} - \mathbf{t}_1^{(\ell)}) \times ( \mathbf{t}_1^{(\ell)} \times \mathbf{t}_2^{(\ell)})  \big\} ,  \\
&(\mathbf{y}_5^{(\ell)} - \mathbf{y}_1^{(\ell)}  )\times ( \mathbf{y}_0^{(\ell)} - \mathbf{y}_1^{(\ell)}) = - \ell^2 \big[\text{cof }\mathbf{R}^{(\ell)} \big] \big\{  (\mathbf{t}_2^{(\ell)} \times \mathbf{t}_1^{(\ell)}) - \ell \kappa_1 \mathbf{t}_1^{(\ell)}  \times (\mathbf{t}_1^{(\ell)} \times \mathbf{t}_2^{(\ell)}) \big\},  \\ 
&(\mathbf{y}_0^{(\ell)} - \mathbf{y}_2^{(\ell)}  )\times ( \mathbf{y}_5^{(\ell)} - \mathbf{y}_2^{(\ell)}) =  - \ell^2 \big[\text{cof }\mathbf{R}^{(\ell)} \big] \big\{  (\mathbf{t}_2^{(\ell)} \times \mathbf{t}_1^{(\ell)}) + \ell \kappa_1 \mathbf{t}_2^{(\ell)}  \times (\mathbf{t}_1^{(\ell)} \times \mathbf{t}_2^{(\ell)}) \big\} 
\end{aligned}
\end{equation}
by standard properties of the cross product and cofactor,  namely that $(\text{cof } \mathbf{A})(\mathbf{a} \times \mathbf{b}) = \mathbf{A} \mathbf{a} \times \mathbf{A} \mathbf{b}$ for $\mathbf{A} \in \mathbb{R}^{3\times3}$ and $\mathbf{a}, \mathbf{b} \in \mathbb {R}^3$. Taking the cross product of the first two and second two equations above, respectively, yields 
\begin{equation}
\begin{aligned}\label{eq:AnnoyingBendCalc2}
&\big[(\mathbf{y}_1^{(\ell)} - \mathbf{y}_0^{(\ell)}  )\times  (\mathbf{y}_2^{(\ell)} - \mathbf{y}_0^{(\ell)}  )  \big] \times \big[(\mathbf{y}_2^{(\ell)} - \mathbf{y}_5^{(\ell)}  ) \times (\mathbf{y}_1^{(\ell)} - \mathbf{y}_5^{(\ell)}  ) \big]  \\
&\qquad \qquad  = - \ell^5 \kappa_1  \big[ \text{cof} \big[ \text{cof } \mathbf{R}^{(\ell)} \big]  \big] (\mathbf{t}_1^{(\ell)} \times \mathbf{t}_2^{(\ell)}) \times \big\{ ( \mathbf{t}_2^{(\ell)} - \mathbf{t}_1^{(\ell)}) \times  (\mathbf{t}_1^{(\ell)} \times \mathbf{t}_2^{(\ell)})  \big \}  \\
& \big[ (\mathbf{y}_5^{(\ell)} - \mathbf{y}_1^{(\ell)}  )\times ( \mathbf{y}_0^{(\ell)} - \mathbf{y}_1^{(\ell)}) \big]  \times \big[(\mathbf{y}_0^{(\ell)} - \mathbf{y}_2^{(\ell)}  )\times ( \mathbf{y}_5^{(\ell)} - \mathbf{y}_2^{(\ell)}) \big]  \\
&\qquad \qquad  =  \ell^5 \kappa_1  \big[ \text{cof} \big[ \text{cof } \mathbf{R}^{(\ell)} \big]  \big] (\mathbf{t}_1^{(\ell)} \times \mathbf{t}_2^{(\ell)}) \times \big\{ ( \mathbf{t}_2^{(\ell)} + \mathbf{t}_1^{(\ell)}) \times  (\mathbf{t}_1^{(\ell)} \times \mathbf{t}_2^{(\ell)})  \big \}  + O(\ell^6)
\end{aligned}
\end{equation}
That $\mathbf{R}^{(\ell)} = \mathbf{R} + O(\ell)$ for a rotation $\mathbf{R}$ furnishes the identities $\big[\text{cof }\mathbf{R}^{(\ell)} \big]  = \mathbf{R} + O(\ell)$ and  $\big[ \text{cof} \big[ \text{cof } \mathbf{R}^{(\ell)} \big]  \big] = \mathbf{R} + O(\ell)$. Thus, 
\begin{equation}
\begin{aligned}
&\Big[\tfrac{(\mathbf{y}_1^{(\ell)} - \mathbf{y}_0^{(\ell)}  )\times ( \mathbf{y}_2^{(\ell)} - \mathbf{y}_0^{(\ell)})}{|(\mathbf{y}_1^{(\ell)} - \mathbf{y}_0^{(\ell)}  )\times ( \mathbf{y}_2^{(\ell)} - \mathbf{y}_0^{(\ell)})|} \Big] \times \Big[\tfrac{(\mathbf{y}_2^{(\ell)} - \mathbf{y}_5^{(\ell)}  )\times ( \mathbf{y}_1^{(\ell)} - \mathbf{y}_5^{(\ell)})}{|(\mathbf{y}_2^{(\ell)} - \mathbf{y}_5^{(\ell)}  )\times ( \mathbf{y}_1^{(\ell)} - \mathbf{y}_5^{(\ell)})|} \Big]    = -  \ell \kappa_1 \tfrac{\mathbf{R} [ \mathbf{n}_{12}(\theta) \times \{ (\mathbf{t}_2(\theta) - \mathbf{t}_1(\theta)) \times  \mathbf{n}_{12}(\theta)  \} ]}{|\mathbf{n}_{12}(\theta)|^2} + O(\ell^2) , \\
& \Big[\tfrac{(\mathbf{y}_5^{(\ell)} - \mathbf{y}_1^{(\ell)}  )\times ( \mathbf{y}_0^{(\ell)} - \mathbf{y}_1^{(\ell)})}{|(\mathbf{y}_5^{(\ell)} - \mathbf{y}_1^{(\ell)}  )\times ( \mathbf{y}_0^{(\ell)} - \mathbf{y}_1^{(\ell)})|} \Big] \times \Big[\tfrac{(\mathbf{y}_0^{(\ell)} - \mathbf{y}_2^{(\ell)}  )\times ( \mathbf{y}_5^{(\ell)} - \mathbf{y}_2^{(\ell)})}{|(\mathbf{y}_0^{(\ell)} - \mathbf{y}_2^{(\ell)}  )\times ( \mathbf{y}_5^{(\ell)} - \mathbf{y}_2^{(\ell)})|} \Big]  =   \ell \kappa_1 \tfrac{\mathbf{R}[ \mathbf{n}_{12}(\theta) \times \{ (\mathbf{t}_2(\theta) + \mathbf{t}_1(\theta)) \times  \mathbf{n}_{12}(\theta)  \} ]}{|\mathbf{n}_{12}(\theta)|^2} + O(\ell^2) , 
\end{aligned}
\end{equation}
since each $\mathbf{t}_i^{(\ell)} = \mathbf{t}_i(\theta) + O(\ell)$. The desired result  for Panel 1 follows after noting that $\mathbf{y}_1^{(\ell)} - \mathbf{y}_2^{(\ell)} =  \mathbf{R}( \mathbf{t}_1(\theta) - \mathbf{t}_2(\theta)) + O(\ell)$ and  $\mathbf{y}_5^{(\ell)} - \mathbf{y}_0^{(\ell)} = \mathbf{R}( \mathbf{t}_1(\theta) + \mathbf{t}_2(\theta))  + O(\ell)$. The calculations for Panels 2-4 are similar and not done here for brevity. 
\end{proof}

\noindent Now recall from Section \ref{ssec:MechKin} that each deformed tangent $\mathbf{t}_i(\theta)$ is related to the reference tangent  $\mathbf{t}_i^r$ through $\theta$-dependent panel rotations. We therefore establish the following.

\noindent \textbf{Step 4}.\;\textit{In each formula in Eq.\;(\ref{eq:ugly}), 
\begin{equation}
\begin{aligned}\label{eq:bijpm}
\frac{|(\mathbf{t}_i(\theta) \pm \mathbf{t}_i(\theta)) \times \mathbf{n}_{ij}(\theta)|^2}{|\mathbf{t}_i(\theta) \pm \mathbf{t}_j(\theta)| |\mathbf{n}_{ij}(\theta)|^2 } =\frac{| (\mathbf{t}_i^r   \pm \mathbf{t}_j^{r})  \times ( \mathbf{t}_i^r \times \mathbf{t}_j^r)|^2}{|(\mathbf{t}_i^r   \pm \mathbf{t}_j^{r})||( \mathbf{t}_i^r \times \mathbf{t}_j^r)|^2} =: b_{ij}^{\pm} , \quad ij \in \{12,23,34,41\}, 
\end{aligned}
\end{equation}
i.e., these terms do not depend on $\theta$.} 
\begin{proof}
Since we are dealing with adjacent crease vectors, $\mathbf{t}_i(\theta) = \mathbf{R}_i(\theta) \mathbf{t}_i^r$ and $\mathbf{t}_j(\theta) = \mathbf{R}_i(\theta) \mathbf{t}_j^r$ for some rotation $\mathbf{R}_i(\theta)$. As a result, $\mathbf{n}_{ij}(\theta) = ( \mathbf{R}_i(\theta) \mathbf{t}_i^r \times  \mathbf{R}_i(\theta) \mathbf{t}_j^r) = \mathbf{R}_i(\theta) ( \mathbf{t}_i^r \times \mathbf{t}_j^r)$ and thus $(\mathbf{t}_i(\theta) \pm \mathbf{t}_i(\theta)) \times \mathbf{n}_{ij}(\theta) = [\mathbf{R}_i(\theta) ( \mathbf{t}_i^r \pm \mathbf{t}_j^r)]  \times[ \mathbf{R}_i(\theta) (\mathbf{t}_i^r \times \mathbf{t}_j^r)] = \mathbf{R}_i(\theta)  [( \mathbf{t}_i^r \pm \mathbf{t}_j^r) \times ( \mathbf{t}_i^r \times \mathbf{t}_j^r)]$. 
\end{proof}

Next, we derive asymptotic formulas for the angles $\psi_{i,j,1}^{(\ell)}$ and $\psi_{i,j,1}^{(\ell)}$. The formulas are organized using the notation for the reference vertices  $\mathbf{x}_{i,\bar{\mathbf{x}}}^{(\ell)}$ introduced in Section \ref{ssec:GlobalOrigami}.  They also involve the panel curvature fields $\kappa_1(\mathbf{x}), \ldots, \kappa_4(\mathbf{x})$  in Eq.\;(\ref{eq:theKappaFields}) and the moduli $b_{ij}^{\pm}$ above. 

\noindent \textbf{Step 5}. \textit{The origami deformation $\{ \mathbf{y}_{i,j}^{(\ell)} \}$ satisfies $|\psi^{(\ell)}_{i,j,1}|, |\psi_{i,j,2}^{(\ell)}|  = O(\ell)$ for all $(i,j) \in \mathcal{I}^{(\ell)}$.  Specifically, 
\begin{equation}
\begin{aligned}\label{eq:psiResult1}
\psi^{(\ell)}_{i,j,1} = \begin{cases}
\ell b_{12}^- \kappa_{1}(\bar{\mathbf{x}})  + O(\ell^2) &  \text{ if } \mathbf{x}_{i,j}^{(\ell)} = \mathbf{x}_{0,\bar{\mathbf{x}}}^{(\ell)} \text { for some } \bar{\mathbf{x}} \in \mathcal{I}_{\emph{cell}}^{(\ell)},  \\
\ell b_{23}^+ \kappa_{2}(\bar{\mathbf{x}})  + O(\ell^2)  &  \text{ if } \mathbf{x}_{i,j}^{(\ell)} = \mathbf{x}_{1,\bar{\mathbf{x}}}^{(\ell)} \text { for some }  \bar{\mathbf{x}} \in \mathcal{I}_{\emph{cell}}^{(\ell)}, \\
\ell b_{34}^- \kappa_{3}(\bar{\mathbf{x}})  + O(\ell^2) &  \text{ if } \mathbf{x}_{i,j}^{(\ell)} = \mathbf{x}_{5,\bar{\mathbf{x}}}^{(\ell)} \text { for some }  \bar{\mathbf{x}} \in \mathcal{I}_{\emph{cell}}^{(\ell)},  \\
\ell b_{41}^+ \kappa_{4}(\bar{\mathbf{x}}) + O(\ell^2)  &  \text{ if } \mathbf{x}_{i,j}^{(\ell)} = \mathbf{x}_{2,\bar{\mathbf{x}}}^{(\ell)} \text{ for some } \bar{\mathbf{x}} \in \mathcal{I}_{\emph{cell}}^{(\ell)},
\end{cases}
\end{aligned}
\end{equation}
and 
\begin{equation}
\begin{aligned}\label{eq:psiResult2}
\psi^{(\ell)}_{i,j,2} = \begin{cases}
\ell b_{12}^+ \kappa_{1}(\bar{\mathbf{x}}) + O(\ell^2)  &  \text{ if } \mathbf{x}_{i,j}^{(\ell)} = \mathbf{x}_{0,\bar{\mathbf{x}}}^{(\ell)} \text { for some } \bar{\mathbf{x}} \in \mathcal{I}_{\emph{cell}}^{(\ell)},  \\
\ell b_{23}^- \kappa_{2}(\bar{\mathbf{x}}) + O(\ell^2)  &  \text{ if } \mathbf{x}_{i,j}^{(\ell)} = \mathbf{x}_{1,\bar{\mathbf{x}}}^{(\ell)} \text { for some }  \bar{\mathbf{x}} \in \mathcal{I}_{\emph{cell}}^{(\ell)}, \\
\ell b_{34}^+ \kappa_{3}(\bar{\mathbf{x}})  + O(\ell^2) &  \text{ if } \mathbf{x}_{i,j}^{(\ell)} = \mathbf{x}_{5,\bar{\mathbf{x}}}^{(\ell)} \text { for some }  \bar{\mathbf{x}} \in \mathcal{I}_{\emph{cell}}^{(\ell)},  \\
\ell b_{41}^- \kappa_{4}(\bar{\mathbf{x}})  + O(\ell^2) &  \text{ if } \mathbf{x}_{i,j}^{(\ell)} = \mathbf{x}_{2,\bar{\mathbf{x}}}^{(\ell)} \text{ for some } \bar{\mathbf{x}} \in \mathcal{I}_{\emph{cell}}^{(\ell)}.
\end{cases}
\end{aligned}
\end{equation}}
\begin{proof}
Again, each $\mathbf{y}_{i,j}^{(\ell)}$ is defined by averaging points in the cell-wise construction where the gaps between neighboring cells are $O(\ell^3)$ small. Thus, we have $\mathbf{y}_{i,j}^{(\ell)} = \mathbf{y}_{k,\bar{\mathbf{x}}}^{(\ell)} + O(\ell^3)$ for appropriate $k \in \{ 0,1,\ldots,8\}$ for some $\bar{\mathbf{x}} \in \mathcal{I}^{(\ell)}_{\text{cell}}$. We use this to estimate functions which depend on the origami vertices  $\mathbf{y}_{i,j}^{(\ell)}$ by the cell-wise vertices $\mathbf{y}_{k,\bar{\mathbf{x}}}^{(\ell)}$. Eventually, we  obtain an expression for the angles  $\psi^{(\ell)}_{i,j,1}$ and $\psi_{i,j,2}^{(\ell)}$ that is very similar to Eq.\;(\ref{eq:ugly}). The proof then  follows from the previous two steps.   

  Let us focus on the case that $\mathbf{y}_{i,j}^{(\ell)} = \mathbf{y}_{0,\bar{\mathbf{x}}}^{(\ell)}$ for some $\bar{\mathbf{x}}$ in Eq.\;(\ref{eq:TheOriConstruct}).  Recalling the definitions of $\mathbf{n}_{i,j,1}^{(\ell)}$ and $\mathbf{m}_{i,j,1}^{(\ell)}$ in Eq.\;(\ref{eq:getNormals}), we have 
 \begin{equation}
 \begin{aligned}
&\mathbf{n}_{i,j,1}^{(\ell)}= \frac{(\mathbf{y}_{1,\bar{\mathbf{x}}}^{(\ell)} - \mathbf{y}_{0,\bar{\mathbf{x}}}^{(\ell)} + O(\ell^3)) \times (\mathbf{y}_{2,\bar{\mathbf{x}}}^{(\ell)} - \mathbf{y}_{0,\bar{\mathbf{x}}}^{(\ell)} + O(\ell^3))}{|(\mathbf{y}_{1,\bar{\mathbf{x}}}^{(\ell)} - \mathbf{y}_{0,\bar{\mathbf{x}}}^{(\ell)} + O(\ell^3)) \times (\mathbf{y}_{2,\bar{\mathbf{x}}}^{(\ell)} - \mathbf{y}_{0,\bar{\mathbf{x}}}^{(\ell)} + O(\ell^3))|} = \frac{(\mathbf{y}_{1,\bar{\mathbf{x}}}^{(\ell)} - \mathbf{y}_{0,\bar{\mathbf{x}}}^{(\ell)}) \times (\mathbf{y}_{2,\bar{\mathbf{x}}}^{(\ell)} - \mathbf{y}_{0,\bar{\mathbf{x}}}^{(\ell)})}{|(\mathbf{y}_{1,\bar{\mathbf{x}}}^{(\ell)} - \mathbf{y}_{0,\bar{\mathbf{x}}}^{(\ell)}) \times (\mathbf{y}_{2,\bar{\mathbf{x}}}^{(\ell)} - \mathbf{y}_{0,\bar{\mathbf{x}}}^{(\ell)})|}  + O(\ell^2) ,\\
&\mathbf{m}_{i,j,1}^{(\ell)}= \frac{(\mathbf{y}_{2,\bar{\mathbf{x}}}^{(\ell)} - \mathbf{y}_{5,\bar{\mathbf{x}}}^{(\ell)} + O(\ell^3)) \times (\mathbf{y}_{1,\bar{\mathbf{x}}}^{(\ell)} - \mathbf{y}_{5,\bar{\mathbf{x}}}^{(\ell)} + O(\ell^3))}{|(\mathbf{y}_{2,\bar{\mathbf{x}}}^{(\ell)} - \mathbf{y}_{5,\bar{\mathbf{x}}}^{(\ell)} + O(\ell^3)) \times (\mathbf{y}_{1,\bar{\mathbf{x}}}^{(\ell)} - \mathbf{y}_{5,\bar{\mathbf{x}}}^{(\ell)} + O(\ell^3))|} = \frac{(\mathbf{y}_{2,\bar{\mathbf{x}}}^{(\ell)} - \mathbf{y}_{5,\bar{\mathbf{x}}}^{(\ell)}) \times (\mathbf{y}_{1,\bar{\mathbf{x}}}^{(\ell)} - \mathbf{y}_{5,\bar{\mathbf{x}}}^{(\ell)})}{|(\mathbf{y}_{2,\bar{\mathbf{x}}}^{(\ell)} - \mathbf{y}_{5,\bar{\mathbf{x}}}^{(\ell)}) \times (\mathbf{y}_{1,\bar{\mathbf{x}}}^{(\ell)} - \mathbf{y}_{5,\bar{\mathbf{x}}}^{(\ell)})|}  + O(\ell^2), \\
&\frac{\mathbf{y}_{i+1,j}^{(\ell)} - \mathbf{y}_{i,j+1}^{(\ell)}}{|\mathbf{y}_{i+1,j}^{(\ell)} - \mathbf{y}_{i,j+1}^{(\ell)}|}  =  \frac{\mathbf{y}_{1,\bar{\mathbf{x}}}^{(\ell)} - \mathbf{y}_{2,\bar{\mathbf{x}}}^{(\ell)} + O(\ell^3)}{|  \mathbf{y}_{1,\bar{\mathbf{x}}}^{(\ell)} - \mathbf{y}_{2,\bar{\mathbf{x}}}^{(\ell)} + O(\ell^3) |}  =  \frac{\mathbf{y}_{1,\bar{\mathbf{x}}}^{(\ell)} - \mathbf{y}_{2,\bar{\mathbf{x}}}^{(\ell)}}{|  \mathbf{y}_{1,\bar{\mathbf{x}}}^{(\ell)} - \mathbf{y}_{2,\bar{\mathbf{x}}}^{(\ell)}|}  + O(\ell^2).
 \end{aligned}
 \end{equation}
Thus, the angle $\psi_{i,j,1}^{(\ell)}$ defined in Eq.\;(\ref{eq:getBendAngles}) satisfies 
 \begin{equation}
\begin{aligned}\label{eq:firstCalcPsi}
\psi_{i,j,1}^{(\ell)} = \arcsin \Big( \Big[ \tfrac{\mathbf{y}_{1,\bar{\mathbf{x}}}^{(\ell)} - \mathbf{y}_{2,\bar{\mathbf{x}}}^{(\ell)}}{|  \mathbf{y}_{1,\bar{\mathbf{x}}}^{(\ell)} - \mathbf{y}_{2,\bar{\mathbf{x}}}^{(\ell)}|}  \Big] \cdot \Big\{   \Big[  \tfrac{(\mathbf{y}_{1,\bar{\mathbf{x}}}^{(\ell)} - \mathbf{y}_{0,\bar{\mathbf{x}}}^{(\ell)}) \times (\mathbf{y}_{2,\bar{\mathbf{x}}}^{(\ell)} - \mathbf{y}_{0,\bar{\mathbf{x}}}^{(\ell)})}{|(\mathbf{y}_{1,\bar{\mathbf{x}}}^{(\ell)} - \mathbf{y}_{0,\bar{\mathbf{x}}}^{(\ell)}) \times (\mathbf{y}_{2,\bar{\mathbf{x}}}^{(\ell)} - \mathbf{y}_{0,\bar{\mathbf{x}}}^{(\ell)})|} \Big] \times  \Big[\tfrac{(\mathbf{y}_{2,\bar{\mathbf{x}}}^{(\ell)} - \mathbf{y}_{5,\bar{\mathbf{x}}}^{(\ell)}) \times (\mathbf{y}_{1,\bar{\mathbf{x}}}^{(\ell)} - \mathbf{y}_{5,\bar{\mathbf{x}}}^{(\ell)})}{|(\mathbf{y}_{2,\bar{\mathbf{x}}}^{(\ell)} - \mathbf{y}_{5,\bar{\mathbf{x}}}^{(\ell)}) \times (\mathbf{y}_{1,\bar{\mathbf{x}}}^{(\ell)} - \mathbf{y}_{5,\bar{\mathbf{x}}}^{(\ell)})|}  \Big]  \Big\} + O(\ell^2)  \Big) .
\end{aligned}
 \end{equation}
 In this formula, each $\mathbf{y}_{i,\bar{\mathbf{x}}}^{(\ell)}$ is akin to $\mathbf{y}_i^{(\ell)}$ in Fig.\;\ref{Fig:CalcBending} and Eq.\;(\ref{eq:ugly}) with $\mathbf{R}^{(\ell)} = \mathbf{R}_{\text{eff}}(\bar{\mathbf{x}}) \big[ \mathbf{I} + \ell ( \boldsymbol{\omega}(\bar{\mathbf{x}}) \times ) \big]$, $\mathbf{t}_i^{(\ell)} = \mathbf{t}_i(\theta(\bar{\mathbf{x}}) + \ell \xi(\bar{\mathbf{x}}))$ and $\kappa_i = \kappa_i(\bar{\mathbf{x}})$. Consequently,  Eq.\;(\ref{eq:firstCalcPsi}) simplifies to 
 \begin{equation}
 \begin{aligned}
 \psi_{i,j,1}^{(\ell)} &=\arcsin \Big( \ell \kappa_1(\bar{\mathbf{x}})  \frac{|(\mathbf{t}_1(\theta(\bar{\mathbf{x}})) - \mathbf{t}_2(\theta (\bar{\mathbf{x}}))) \times \mathbf{n}_{12}(\theta(\bar{\mathbf{x}}))|^2}{|\mathbf{t}_1(\theta(\bar{\mathbf{x}})) - \mathbf{t}_2(\theta(\bar{\mathbf{x}}))| |\mathbf{n}_{12}(\theta(\bar{\mathbf{x}}))|^2 } + O(\ell^2)  \Big)  \\
 &=  \ell \kappa_1(\bar{\mathbf{x}})  \frac{|(\mathbf{t}_1(\theta(\bar{\mathbf{x}})) - \mathbf{t}_2(\theta (\bar{\mathbf{x}}))) \times \mathbf{n}_{12}(\theta(\bar{\mathbf{x}}))|^2}{|\mathbf{t}_1(\theta(\bar{\mathbf{x}})) - \mathbf{t}_2(\theta(\bar{\mathbf{x}}))| |\mathbf{n}_{12}(\theta(\bar{\mathbf{x}}))|^2 } + O(\ell^2) = \ell b_{12}^{-} \kappa_1(\bar{\mathbf{x}}) + O(\ell^2)  
 \end{aligned}
 \end{equation}
 due to the results of the previous two steps and since $\sin (\beta) \approx \beta$ to leading order. The other statements in Eq.\;(\ref{eq:psiResult1}-\ref{eq:psiResult2}) follow in an analogous fashion. 
\end{proof}

At this point, we have proved everything in Theorem~\ref{MainTheorem} up to Eq.\;(\ref{eq:MainLim}). The rest of the proof deals with the bar and hinge energy   $\mathcal{E}^{(\ell)}_{\text{tot}} ( \{ \mathbf{y}_{i,j}^{(\ell)} \}) =  \mathcal{E}_{\text{str}}^{(\ell)} ( \{ \mathbf{y}_{i,j}^{(\ell)} \}) + \mathcal{E}_{\text{bend}}^{(\ell)} ( \{ \mathbf{y}_{i,j}^{(\ell)} \}) + \mathcal{E}_{\text{fold}}^{(\ell)} ( \{ \mathbf{y}_{i,j}^{(\ell)} \})$ from Section \ref{ssec:BarHinge}.

\noindent  \textbf{Step 6.}\;\textit{The origami deformations  $\{ \mathbf{y}_{i,j}^{(\ell)} \}$ satisfy $\lim_{\ell \rightarrow 0} \mathcal{E}_{\emph{str}}^{(\ell)} ( \{ \mathbf{y}_{i,j}^{(\ell)} \}) = 0$.}
\begin{proof}
Step 1 proved $|\varepsilon^{(\ell)}_{i,j,1}|, \ldots, |\varepsilon_{i,j,4}^{(\ell)}|  = O(\ell^2)$ for all $(i,j) \in \mathcal{I}^{(\ell)}$. Thus, 
\begin{equation}
\begin{aligned}\label{eq:preLimStrEnergy}
\mathcal{E}_{\text{str}}^{(\ell)}( \{ \mathbf{y}_{i,j}^{(\ell)}\}) &=   \sum_{(i,j) \in \mathcal{I}^{(\ell)}}  A_{i,j}^{(\ell)}   \ell^{\alpha_s}   \Phi^{\text{str}}_{i,j}( \varepsilon^{(\ell)}_{i,j,1}, \varepsilon^{(\ell)}_{i,j,2},\varepsilon^{(\ell)}_{i,j,3},\varepsilon^{(\ell)}_{i,j,4}) =  \sum_{(i,j) \in \mathcal{I}^{(\ell)}}  O(\ell^2) \ell^{\alpha_s}  \Phi^{\text{str}}_{i,j}( O(\ell^2) ) \\
&=   \sum_{(i,j) \in \mathcal{I}^{(\ell)}}  O(\ell^{2 + \alpha_s} ) \big[  O(\ell^2) \cdot  DD \Phi^{\text{str}}_{i,j} ( \mathbf{0}) O(\ell^2)  \big] =   \sum_{(i,j) \in \mathcal{I}^{(\ell)}} O(\ell^{6 + \alpha_s} ) ,
\end{aligned}
\end{equation}
where the strain energy $ \Phi^{\text{str}}_{i,j}(\cdot)$ has been Taylor expanded using that it is  smooth and minimized on  $(\varepsilon^{(\ell)}_{i,j,1}, \varepsilon^{(\ell)}_{i,j,2},\varepsilon^{(\ell)}_{i,j,3},\varepsilon^{(\ell)}_{i,j,4}) = \mathbf{0}$. Next, the set $\mathcal{I}^{(\ell)}$ contains $\sim 1/\ell^2$ points, since the characteristic length of $\Omega$ is $\sim 1$. It follows that  $\mathcal{E}_{\text{str}}^{(\ell)}( \{ \mathbf{y}_{i,j}^{(\ell)}\}) =  O(\ell^{4 + \alpha_s})$.  The stretching energy vanishes, since $\alpha_s \in (-4, -2)$ by the hypothesis  in Eq.\;(\ref{eq:alphaAssumptions}).
\end{proof}

\noindent  \textbf{Step 7.}\;\textit{The origami deformations  $\{ \mathbf{y}_{i,j}^{(\ell)} \}$ satisfy  $\lim_{\ell \rightarrow 0} \mathcal{E}_{\emph{fold}}^{(\ell)} ( \{ \mathbf{y}_{i,j}^{(\ell)} \}) = 0$.}
\begin{proof}
By their definitions in Eqs.(\ref{eq:getBetas}-\ref{eq:getGammas}), $\beta_{i,j,0}^{(\ell)}, \beta_{i,j}^{(\ell)}, \gamma_{i,j,0}^{(\ell)}, \gamma_{i,j}^{(\ell)}$ take values in the compact set $[-\pi, \pi]$.  As such, $\Phi_{i,j}^{\text{fold}}( \beta_{i,j}^{(\ell)} - \beta_{i,j,0}^{(\ell)},   \gamma_{i,j}^{(\ell)} - \gamma_{i,j,0}^{(\ell)}) \leq \max_{\beta, \gamma \in [-2\pi, 2\pi]}  \Phi_{i,j}^{\text{fold}} ( \beta, \gamma) = : M_{i,j}$. Also, $\Phi_{i,j}^{\text{fold}} = \Phi_{i+2,j}^{\text{fold}} = \Phi_{i, j+2}^{\text{fold}}$  for all $(i,j) \in \mathcal{I}^{(\ell)}$, leading to $M_{i,j} = M_{i+2, j} = M_{i,j+2}$ for all  $(i,j) \in \mathcal{I}^{(\ell)}$. Thus,  there is an absolute constant $M$ such that $M_{i,j} \leq M$ for all $(i,j) \in \mathcal{I}^{(\ell)}$. It follows that 
\begin{equation}
\begin{aligned}
\mathcal{E}_{\text{fold}}^{(\ell)} ( \{ \mathbf{y}_{i,j}^{(\ell)} \}) =  \sum_{(i,j) \in \mathcal{I}^{(\ell)}}  A_{i,j}^{(\ell)}  \ell^{\alpha_f}  \Phi_{i,j}^{\text{fold}} \big( \beta_{i,j}^{(\ell)} - \beta_{i,j,0}^{(\ell)}, \gamma_{i,j}^{(\ell)} -\gamma_{i,j,0}^{(\ell)} \big)  \leq M  \ell^{\alpha_f}   \sum_{(i,j) \in \mathcal{I}^{(\ell)}}  A_{i,j}^{(\ell)} \leq C\ell^{\alpha_f}  
\end{aligned}
\end{equation}
since  each $A_{i,j}^{(\ell)} \sim \ell^2$ and $ \mathcal{I}^{(\ell)}$ contains $\sim 1/\ell^2$ points. The result follows as $\alpha_f >0$ is assumed in Eq.\;(\ref{eq:alphaAssumptions}).
\end{proof}

\noindent  \textbf{Step 8.}\;\textit{The origami deformations  $\{ \mathbf{y}_{i,j}^{(\ell)} \}$ satisfy  
\begin{equation}
\begin{aligned}\label{eq:LimBendEnergy}
\lim_{\ell \rightarrow 0} \mathcal{E}_{\emph{bend}}^{(\ell)} ( \{ \mathbf{y}_{i,j}^{(\ell)} \}) = \int_{\Omega}  \bigg\{\begin{pmatrix} \boldsymbol{\omega}_{\mathbf{u}_0}(\mathbf{x}) \\ \partial_{\mathbf{u}_0} \theta(\mathbf{x}) \end{pmatrix} \cdot  \mathbf{K}_{\mathbf{u}_{0}}(\theta(\mathbf{x}) ) \begin{pmatrix} \boldsymbol{\omega}_{\mathbf{u}_0}(\mathbf{x}) \\ \partial_{\mathbf{u}_0} \theta(\mathbf{x}) \end{pmatrix}  +  \begin{pmatrix} \boldsymbol{\omega}_{\mathbf{v}_0}(\mathbf{x}) \\ \partial_{\mathbf{v}_0} \theta(\mathbf{x}) \end{pmatrix} \cdot  \mathbf{K}_{\mathbf{v}_{0}}(\theta(\mathbf{x}) ) \begin{pmatrix} \boldsymbol{\omega}_{\mathbf{v}_0}(\mathbf{x}) \\ \partial_{\mathbf{v}_0} \theta(\mathbf{x}) \end{pmatrix} \bigg\} dA 
\end{aligned}
\end{equation}
for $\mathbf{K}_{\mathbf{u}_0}(\theta)$ and $\mathbf{K}_{\mathbf{v}_0}(\theta)$ in Eq.\;(\ref{eq:stiffnessTensor}).}
\begin{proof}
Since there are four panels per unit cell, $ \mathcal{E}_{\text{bend}}^{(\ell)} ( \{ \mathbf{y}_{i,j}^{(\ell)} \})$ defined panel-by-panel in Eq.\;(\ref{eq:eBend}) can be written in a cell-by-cell sum as
\begin{equation}
\begin{aligned}
\mathcal{E}_{\text{bend}}^{(\ell)}(\{ \mathbf{y}_{i,j}^{(\ell)} \} )= \sum_{\bar{\mathbf{x}} \equiv  \ell (i \tilde{\mathbf{u}}_0 + j \tilde{\mathbf{v}}_0) \in \mathcal{I}_{\text{cell}}^{(\ell)}}    & \Big\{  A_{i,j}^{(\ell)}   \ell^{\alpha_b}   \Phi^{\text{bend}}_{i,j}(\psi_{i,j,1}^{(\ell)}, \psi_{i,j,2}^{(\ell)})  + A_{i-1,j-1}^{(\ell)}   \ell^{\alpha_b}   \Phi^{\text{bend}}_{i-1,j-1}(\psi_{i-1,j-1,1}^{(\ell)}, \psi_{i-1,j-1,2}^{(\ell)})  \\
&+ A_{i,j-1}^{(\ell)}   \ell^{\alpha_b}   \Phi^{\text{bend}}_{i,j-1}(\psi_{i,j-1,1}^{(\ell)}, \psi_{i,j-1,2}^{(\ell)})   +  A_{i-1,j}^{(\ell)}   \ell^{\alpha_b}   \Phi^{\text{bend}}_{i-1,j}(\psi_{i-1,j,1}^{(\ell)}, \psi_{i-1,j,2}^{(\ell)}) \Big\}. 
\end{aligned}
\end{equation}
This uses the definition of  $\mathcal{I}^{(\ell)}_{\text{cell}}$ at the start of the cell-based construction in Section \ref{ssec:GlobalOrigami}. Note that  the four panel areas $A_{i,j}^{(\ell)}$  are  $\ell^2 |\mathbf{t}_k^r \times \mathbf{t}_{l}^r|$ for  $kl \in \{12,23,34,41\}$. By Step 5,
\begin{equation}
\begin{aligned}\label{eq:EbendUgly}
\mathcal{E}_{\text{bend}}^{(\ell)}(\{ \mathbf{y}_{i,j}^{(\ell)} \} )  = \sum_{\bar{\mathbf{x}} \in  \mathcal{I}_{\text{cell}}^{(\ell)}} &\bigg\{ \ell^{2+\alpha_b} |\mathbf{t}_1^r \times \mathbf{t}_2^r| \Phi^{\text{bend}}_{\mathcal{P}_1}(\ell b_{12}^{-} \kappa_{1}(\bar{\mathbf{x}}) + O(\ell^2), \ell b_{12}^{+} \kappa_{1}(\bar{\mathbf{x}}) + O(\ell^2))  \\
&\quad + \ell^{2+\alpha_b} |\mathbf{t}_2^r \times \mathbf{t}_3^r| \Phi^{\text{bend}}_{\mathcal{P}_2}(\ell b_{23}^{+} \kappa_{2}(\bar{\mathbf{x}}) + O(\ell^2), \ell b_{23}^{-} \kappa_{2}(\bar{\mathbf{x}}) + O(\ell^2)) \\
&\quad + \ell^{2+\alpha_b} |\mathbf{t}_3^r \times \mathbf{t}_4^r| \Phi^{\text{bend}}_{\mathcal{P}_3}(\ell b_{34}^{-} \kappa_{3}(\bar{\mathbf{x}}) + O(\ell^2), \ell b_{34}^{+} \kappa_{3}(\bar{\mathbf{x}}) + O(\ell^2)) \\
&\quad +  \ell^{2+\alpha_b} |\mathbf{t}_4^r \times \mathbf{t}_1^r| \Phi^{\text{bend}}_{\mathcal{P}_4}(\ell b_{41}^{+} \kappa_{4}(\bar{\mathbf{x}}) + O(\ell^2), \ell b_{41}^{-} \kappa_{4}(\bar{\mathbf{x}}) + O(\ell^2)) \bigg\}  
\end{aligned}
\end{equation}
where we have used the periodicity condition $\Phi^{\text{bend}}_{i,j} = \Phi^{\text{bend}}_{i,j+2} =  \Phi^{\text{bend}}_{i+2,j}$ for all $(i,j) \in \mathcal{I}^{(\ell)}$ to label the energy densities according to the four types of panels being bent in Eq.\;(\ref{eq:OmegaCell}). Focusing on the first term in Eq.\;(\ref{eq:EbendUgly}), a Taylor expansion gives that 
\begin{equation}
    \begin{aligned}
         \Phi^{\text{bend}}_{\mathcal{P}_1}(\ell b_{12}^{-} \kappa_{1}(\bar{\mathbf{x}}) + O(\ell^2), \ell b_{12}^{+} \kappa_{1}(\bar{\mathbf{x}}) + O(\ell^2)) &= \frac{\ell^2}{2}   \begin{pmatrix} b_{12}^{-} \\ b_{12}^+ \end{pmatrix}  \cdot DD \Phi_{\mathcal{P}_1}^{\text{bend}}( \mathbf{0} ) \begin{pmatrix} b_{12}^{-} \\ b_{12}^+ \end{pmatrix} \big(\kappa_1(\bar{\mathbf{x}})\big)^2 + O(\ell^3) \\
         &= \ell^2 \frac{|\mathbf{u}_0 \times \mathbf{v}_0|}{|\mathbf{t}_1^r \times \mathbf{t}_2^r|}   b_1  \big(\kappa_1(\bar{\mathbf{x}})\big)^2 + O(\ell^3)
    \end{aligned}
\end{equation}
since $\Phi^{\text{bend}}_{\mathcal{P}_1}$ is smooth with minimum $\Phi^{\text{bend}}_{\mathcal{P}_1}(\mathbf{0})  =\mathbf{0}$. The second equality follows by matching the definitions for $b_{12}^{\pm}$ in Eq.\;(\ref{eq:bijpm}) to $b_1$ in Eq.\;(\ref{eq:bendingModuli}). Analogous Taylor expansions of each term in Eq.\;(\ref{eq:EbendUgly}) yield
\begin{equation}
\begin{aligned}\label{eq:eBendLessUgly}
\mathcal{E}_{\text{bend}}^{(\ell)}(\{ \mathbf{y}_{i,j}^{(\ell)} \} )  &= \sum_{\bar{\mathbf{x}} \in  \mathcal{I}_{\text{cell}}^{(\ell)}} \ell^{4+ \alpha_b} |\mathbf{u}_0 \times \mathbf{v}_0|  \Big\{ \sum_{k =1,\ldots,4}   b_k (\kappa_k(\bar{\mathbf{x}}))^2 + O(\ell)\Big\}.
\end{aligned}
\end{equation}

We now take $\ell \rightarrow 0$  in Eq.\;(\ref{eq:eBendLessUgly}). Note  $\alpha_b = -2$ by assumption, each    $\kappa_1(\mathbf{x}), \ldots, \kappa_4(\mathbf{x})$  is smooth on $\Omega$ and  $\lim_{\ell \rightarrow 0} \sum_{\bar{\mathbf{x}} \in  \mathcal{I}_{\text{cell}}^{(\ell)}} \ell^{2} |\mathbf{u}_0 \times \mathbf{v}_0|  = \Omega$. Thus, 
\begin{equation}
\begin{aligned}
\lim_{\ell \rightarrow 0} \mathcal{E}_{\text{bend}}^{(\ell)}(\{ \mathbf{y}_{i,j}^{(\ell)} \} )  =\int_{\Omega} \Big\{  b_1 (\kappa_1(\mathbf{x}))^2 +  b_2 (\kappa_2(\mathbf{x}))^2 +  b_3 (\kappa_3(\mathbf{x}))^2 +  b_4 (\kappa_4(\mathbf{x}))^2 \Big\}  dA .
\end{aligned}
\end{equation} 
Substituting the formulas in Eq.\;(\ref{eq:theKappaFields}) for the panel curvature fields $\kappa_1(\mathbf{x}), \ldots, \kappa_4(\mathbf{x})$ and recalling the definition of $\mathbf{B}(\theta)$ proves the result.
\end{proof}

This completes the proof of Theorem~\ref{MainTheorem}.

\section{Discussion, examples and outlook} \label{sec:Examples}

We end by discussing some of the most intriguing aspects of our effective plate theory for origami, and by illustrating these aspects in examples. Section \ref{ssec:discussSec-modes} starts by rewriting the surface theory for $\theta(\mathbf{x})$, $\boldsymbol{\omega}_{\mathbf{u}_0}(\mathbf{x})$ and $\boldsymbol{\omega}_{\mathbf{v}_0}(\mathbf{x})$ into one involving more standard quantities such as the first and second fundamental forms of the effective deformation $\mathbf{y}_\text{eff}(\mathbf{x})$. Along the way, we identify the three basic soft modes of deformation available to parallelogram origami (actuation, bending and twisting modes) and provide variables to parameterize general deformations in terms of these modes. 
Section \ref{ssec:Discussion2} goes on to discuss our effective theory in the context of more familiar continuum theories, namely Kirchhoff's plate theory \cite{friesecke2002theorem} and models of generalized elastic continua \cite{eringen2012microcontinuum}. Section \ref{ssec:Examples} applies our theory to the examples of Miura and Eggbox origami. Finally, Section \ref{ssec:Outlook} ends  with a brief commentary on future directions.

\subsection{Reformulation of the effective surface theory}\label{ssec:discussSec-modes}

So far, we have described the effective deformations of parallelogram origami through field variables  $\theta(\mathbf{x})$, $\boldsymbol{\omega}_{\mathbf{u}_0}(\mathbf{x})$ and  $\boldsymbol{\omega}_{\mathbf{v}_0}(\mathbf{x})$ that solve Eq.\;(\ref{eq:SurfaceTheory}). As a quick recap, this equation is composed of the angle restriction $\theta(\mathbf{x}) \in (\theta^-, \theta^+)$, the algebraic constraints 
\begin{equation}
    \begin{aligned}\label{eq:algConstraints}
        &\boldsymbol{\omega}_{\mathbf{u}_0}(\mathbf{x}) \times \mathbf{v}(\theta(\mathbf{x}))  + \partial_{\mathbf{u}_0} \theta(\mathbf{x}) \mathbf{v}'(\theta(\mathbf{x}))   =  \boldsymbol{\omega}_{\mathbf{v}_0}(\mathbf{x}) \times \mathbf{u}(\theta(\mathbf{x}))  + \partial_{\mathbf{v}_0} \theta(\mathbf{x}) \mathbf{u}'(\theta(\mathbf{x})), \\
&\boldsymbol{\omega}_{\mathbf{u}_0}(\mathbf{x}) \cdot \mathbf{v}'(\theta(\mathbf{x})) = \boldsymbol{\omega}_{\mathbf{v}_0} (\mathbf{x}) \cdot \mathbf{u}'(\theta(\mathbf{x}))
    \end{aligned}
\end{equation}
and the compatibility condition  $\partial_{\mathbf{v}_0} \bm{\omega}_{\mathbf{u}_0}(\mathbf{x})-\partial_{\mathbf{u}_0} \bm{\omega}_{\mathbf{v}_0}(\mathbf{x})= \bm{\omega}_{\mathbf{u}_0}(\mathbf{x})\times \bm{\omega}_{\mathbf{v}_0}(\mathbf{x})$.

\paragraph{Actuation, bend and twist.} We now show how to write the effective theory in a way that emphasizes the three natural modes of deformation for parallelogram origami: actuation, bend, and twist. First, observe that  Remark~\ref{EquivRemark} allows us to explicitly solve for the algebraic constraints in Eq.\;(\ref{eq:algConstraints}). To do so, we expand the vector fields $\boldsymbol{\omega}_{\mathbf{u}_0}(\mathbf{x})$ and $\boldsymbol{\omega}_{\mathbf{v}_0}(\mathbf{x})$ in the basis $\{ \mathbf{u}(\theta(\mathbf{x})), \mathbf{v}(\theta(\mathbf{x})), \mathbf{e}_3\}$ and choose the components of this expansion using the remark. This gives
\begin{equation}
\begin{aligned}\label{eq:omegaParamsSolveAlg}
&\boldsymbol{\omega}_{\mathbf{u}_0}(\mathbf{x}) = \tau(\mathbf{x})  \mathbf{u}(\theta(\mathbf{x})) + \kappa(\mathbf{x}) [\mathbf{u}'(\theta(\mathbf{x})) \cdot \mathbf{u}(\theta(\mathbf{x}))] \mathbf{v}(\theta(\mathbf{x})) + \big[  \boldsymbol{\Gamma}_{\mathbf{u}_0}(\theta(\mathbf{x})) \cdot \nabla_0 \theta(\mathbf{x})  \big]\mathbf{e}_3, \\
&\boldsymbol{\omega}_{\mathbf{v}_0}(\mathbf{x}) = \kappa(\mathbf{x}) [\mathbf{v}'(\theta(\mathbf{x})) \cdot \mathbf{v}(\theta(\mathbf{x}))]  \mathbf{u}(\theta(\mathbf{x})) - \tau(\mathbf{x})   \mathbf{v}(\theta(\mathbf{x})) +\big[\boldsymbol{\Gamma}_{\mathbf{v}_0}(\theta(\mathbf{x})) \cdot \nabla_0 \theta(\mathbf{x})  \big] \mathbf{e}_3
\end{aligned}
\end{equation}
for some $\kappa, \tau \colon \Omega \rightarrow \mathbb{R}$, and where
\begin{equation}
\begin{aligned}
\boldsymbol{\Gamma}_{\mathbf{u}_0}(\theta) := \frac{[\mathbf{u}(\theta) \cdot \mathbf{v}'(\theta) ]\tilde{\mathbf{e}}_1 - [\mathbf{u}(\theta) \cdot \mathbf{u}'(\theta)] \tilde{\mathbf{e}}_2}{\mathbf{e}_3 \cdot (\mathbf{u}(\theta) \times \mathbf{v}(\theta))},  \quad 
\boldsymbol{\Gamma}_{\mathbf{v}_0}(\theta) := \frac{[\mathbf{v}(\theta) \cdot \mathbf{v}'(\theta) ]\tilde{\mathbf{e}}_1 - [\mathbf{v}(\theta) \cdot \mathbf{u}'(\theta)] \tilde{\mathbf{e}}_2}{\mathbf{e}_3 \cdot (\mathbf{u}(\theta) \times \mathbf{v}(\theta))}
\end{aligned}
\end{equation}
with $\{\tilde{\mathbf{e}}_1, \tilde{\mathbf{e}}_2\}$ as the standard basis of $\mathbb{R}^2$. The gradient $\nabla_0$ is defined as $\nabla_0 \theta(\mathbf{x}) := \partial_{\mathbf{u}_0} \theta(\mathbf{x}) \tilde{\mathbf{e}}_1 + \partial_{\mathbf{v}_0} \theta (\mathbf{x}) \tilde{\mathbf{e}}_2$. 

Substituting Eq.\;(\ref{eq:omegaParamsSolveAlg}) into the remaining conditions of the effective surface theory leads to a system of three scalar PDEs in $\theta(\mathbf{x})$, $\kappa(\mathbf{x})$ and $\tau(\mathbf{x})$ equivalent to the original ones:
\begin{equation}
\begin{aligned}\label{eq:GenPDESurf}
&\underline{\text{Equivalent formulation of the effective surface theory for parallelogram origami:}}\\
&\begin{cases}
\theta(\mathbf{x}) \in (\theta^{-}, \theta^+), \\
\nabla_0^{\perp} \cdot \big[ \boldsymbol{\Gamma}( \theta(\mathbf{x})) \nabla_0 \theta(\mathbf{x}) \big]  = \begin{pmatrix} \kappa(\mathbf{x})  \\ \tau(\mathbf{x}) \end{pmatrix} \cdot \boldsymbol{\Lambda}(\theta(\mathbf{x}))   \begin{pmatrix} \kappa(\mathbf{x})  \\ \tau(\mathbf{x}) \end{pmatrix}, \\
\Big[ \mathbf{M}_{\mathbf{u}_0}(\theta(\mathbf{x}) ) \partial_{\mathbf{u}_0} - \mathbf{M}_{\mathbf{v}_0}(\theta(\mathbf{x}) ) \partial_{\mathbf{v}_0} + \mathbf{M}(\theta(\mathbf{x}), \nabla_0\theta(\mathbf{x}))   \Big]  \begin{pmatrix} \kappa(\mathbf{x})  \\ \tau(\mathbf{x}) \end{pmatrix}  = \mathbf{0}.
\end{cases}
\end{aligned}
\end{equation}  
In the standard Cartesian basis, the purely $\theta$-dependent  $\mathbb{R}^{2\times2}$ tensors in these expressions are 
\begin{equation}
\begin{aligned}
&\boldsymbol{\Gamma}(\theta) := \frac{1}{A(\theta)} \begin{pmatrix} \mathbf{u}(\theta) \cdot \mathbf{v}'(\theta) & - \mathbf{u}(\theta) \cdot \mathbf{u}'(\theta) \\ \mathbf{v}(\theta) \cdot \mathbf{v}'(\theta) & -\mathbf{v}(\theta) \cdot \mathbf{u}'(\theta) \end{pmatrix},  &&  \boldsymbol{\Lambda}(\theta)  := -A(\theta) \begin{pmatrix} \big[\mathbf{u}'(\theta) \cdot \mathbf{u}(\theta) \big] \big[ \mathbf{v}'(\theta) \cdot \mathbf{v}(\theta) \big] & 0 \\ 0 & 1 \end{pmatrix}, \\
&\mathbf{M}_{\mathbf{u}_0}(\theta) := \begin{pmatrix}
[\mathbf{v}'(\theta) \cdot \mathbf{v}(\theta) ] [ \mathbf{u}(\theta) \cdot \mathbf{e}_1] & -\mathbf{v}(\theta) \cdot \mathbf{e}_1 \\ [\mathbf{v}'(\theta) \cdot \mathbf{v}(\theta) ] [\mathbf{u}(\theta) \cdot \mathbf{e}_2]  & -\mathbf{v}(\theta) \cdot \mathbf{e}_2
\end{pmatrix},  && \mathbf{M}_{\mathbf{v}_0}(\theta)  := \begin{pmatrix}
[\mathbf{u}'(\theta) \cdot \mathbf{u}(\theta) ] [ \mathbf{v}(\theta) \cdot \mathbf{e}_1] & \mathbf{u}(\theta) \cdot \mathbf{e}_1 \\ [\mathbf{u}'(\theta) \cdot \mathbf{u}(\theta) ] [\mathbf{v}(\theta) \cdot \mathbf{e}_2]  & \mathbf{u}(\theta) \cdot \mathbf{e}_2
\end{pmatrix},
\end{aligned}
\end{equation}
where $A(\theta) = \mathbf{e}_3 \cdot (\mathbf{u}(\theta) \times \mathbf{v}(\theta))$ is the area of the parallelogram formed by $\mathbf{u}(\theta)$ and $\mathbf{v}(\theta)$. 
The last tensor $\mathbf{M}(\theta, \tilde{\mathbf{g}}) \in \mathbb{R}^{2\times2}$ in Eq.\;(\ref{eq:GenPDESurf}) has a lengthy expression, whose coefficients in the standard basis are
\begin{equation}
\begin{aligned}
M_{11}(\theta, \tilde{\mathbf{g}}) & := -[\tilde{\mathbf{e}}_1 \cdot \boldsymbol{\Gamma}(\theta) \tilde{\mathbf{g}}]  [ \mathbf{v}'(\theta)  \cdot \mathbf{v}(\theta)] [\mathbf{u}(\theta) \cdot \mathbf{e}_2 ] + [\tilde{\mathbf{e}}_2 \cdot \boldsymbol{\Gamma}(\theta)\tilde{\mathbf{g}}]  [ \mathbf{u}'(\theta)  \cdot \mathbf{u}(\theta)] [\mathbf{v}(\theta) \cdot \mathbf{e}_2 ] \\
&\qquad  + (\tilde{\mathbf{g}} \cdot \tilde{\mathbf{e}}_1) \big[\tilde{\mathbf{e}}_1 \cdot \mathbf{M}_{\mathbf{u}_0}'(\theta) \tilde{\mathbf{e}}_1 \big]  - (\tilde{\mathbf{g}} \cdot \tilde{\mathbf{e}}_2) \big[\tilde{\mathbf{e}}_1 \cdot \mathbf{M}_{\mathbf{v}_0}'(\theta) \tilde{\mathbf{e}}_1 \big],  \\
M_{21}(\theta, \tilde{\mathbf{g}}) &:=  [\tilde{\mathbf{e}}_1 \cdot \boldsymbol{\Gamma}(\theta) \tilde{\mathbf{g}}]  [ \mathbf{v}'(\theta)  \cdot \mathbf{v}(\theta)] [\mathbf{u}(\theta) \cdot \mathbf{e}_1 ] - [\tilde{\mathbf{e}}_2 \cdot \boldsymbol{\Gamma}(\theta) \tilde{\mathbf{g}}]  [ \mathbf{u}'(\theta)  \cdot \mathbf{u}(\theta)] [\mathbf{v}(\theta) \cdot \mathbf{e}_1 ] \\
&\qquad  + (\tilde{\mathbf{g}} \cdot \tilde{\mathbf{e}}_1) \big[\tilde{\mathbf{e}}_2 \cdot \mathbf{M}_{\mathbf{u}_0}'(\theta) \tilde{\mathbf{e}}_1 \big]  - (\tilde{\mathbf{g}} \cdot \tilde{\mathbf{e}}_2) \big[\tilde{\mathbf{e}}_2 \cdot \mathbf{M}_{\mathbf{v}_0}'(\theta) \tilde{\mathbf{e}}_1 \big],  \\
M_{12}(\theta, \tilde{\mathbf{g}}) &:=  [\tilde{\mathbf{e}}_1 \cdot \boldsymbol{\Gamma}(\theta) \tilde{\mathbf{g}}]  [\mathbf{v}(\theta) \cdot \mathbf{e}_2 ]  + [\tilde{\mathbf{e}}_2 \cdot \boldsymbol{\Gamma}(\theta)\tilde{\mathbf{g}}]  [\mathbf{u}(\theta) \cdot \mathbf{e}_2 ] ,  \\
&\qquad  + (\tilde{\mathbf{g}} \cdot \tilde{\mathbf{e}}_1) \big[\tilde{\mathbf{e}}_1 \cdot \mathbf{M}_{\mathbf{u}_0}'(\theta) \tilde{\mathbf{e}}_2 \big]  - (\tilde{\mathbf{g}} \cdot \tilde{\mathbf{e}}_2) \big[\tilde{\mathbf{e}}_1 \cdot \mathbf{M}_{\mathbf{v}_0}'(\theta) \tilde{\mathbf{e}}_2 \big],  \\
M_{22}(\theta, \tilde{\mathbf{g}}) &:= -[\tilde{\mathbf{e}}_1 \cdot \boldsymbol{\Gamma}(\theta) \tilde{\mathbf{g}}]  [\mathbf{v}(\theta) \cdot \mathbf{e}_1 ]  -  [\tilde{\mathbf{e}}_2 \cdot \boldsymbol{\Gamma}(\theta) \tilde{\mathbf{g}}]  [\mathbf{u}(\theta) \cdot \mathbf{e}_1 ] \\ 
&\qquad  + (\tilde{\mathbf{g}} \cdot \tilde{\mathbf{e}}_1) \big[\tilde{\mathbf{e}}_2 \cdot \mathbf{M}_{\mathbf{u}_0}'(\theta) \tilde{\mathbf{e}}_2 \big]  - (\tilde{\mathbf{g}} \cdot \tilde{\mathbf{e}}_2) \big[\tilde{\mathbf{e}}_2 \cdot \mathbf{M}_{\mathbf{v}_0}'(\theta) \tilde{\mathbf{e}}_2 \big].
\end{aligned}
\end{equation}
Note $\mathbf{M}(\theta, \tilde{\mathbf{g}})$ is linear in $\tilde{\mathbf{g}}$ for each fixed $\theta$. The operator $\nabla_0^{\perp} \cdot $ in Eq.\;(\ref{eq:GenPDESurf})  satisfies  $\nabla_0^{\perp} \cdot \big[ \boldsymbol{\Gamma}(\theta(\mathbf{x})) \nabla_0 \theta(\mathbf{x}) \big]$ $:= \partial_{\mathbf{v}_0} \big[\boldsymbol{\Gamma}(\theta(\mathbf{x})) \nabla_0 \theta(\mathbf{x}) \big]\cdot \tilde{\mathbf{e}}_1 -  \partial_{\mathbf{u}_0} \big[\boldsymbol{\Gamma}(\theta(\mathbf{x})) \nabla_0 \theta(\mathbf{x})\big] \cdot \tilde{\mathbf{e}}_2$. 

Given a solution $(\theta(\mathbf{x}), \kappa(\mathbf{x}), \tau(\mathbf{x}))$ of Eq.\;(\ref{eq:GenPDESurf}), the effective deformation  of the origami can be recovered by defining $\boldsymbol{\omega}_{\mathbf{u}_0}(\mathbf{x})$ and $\boldsymbol{\omega}_{\mathbf{v}_0}(\mathbf{x})$ via Eq.\;(\ref{eq:omegaParamsSolveAlg}), and solving for $\mathbf{R}_{\text{eff}}(\mathbf{x})$ and $\mathbf{y}_{\text{eff}}(\mathbf{x})$ 
via Proposition \ref{firstProp}.

\paragraph{The fundamental forms of parallelogram origami.}  As in the original formulation of the surface theory, the field $\theta(\mathbf{x})$ gives the cell-wise actuation of the mechanism. The new fields $\kappa(\mathbf{x})$ and $\tau(\mathbf{x})$ reflect two additional basic modes of deformation of the origami in question: these can be understood as bending and twisting modes. To see why, first recall from the differential geometry of surfaces \cite{do2016differential} that a deformation $\mathbf{y} \colon \Omega\subset\mathbb{R}^2 \rightarrow \mathbb{R}^3$ has first and second fundamental forms defined by
\begin{equation}
\begin{aligned}\label{eq:theFunds}
\mathbf{I}_{\mathbf{y}}(\mathbf{x}) := \big(\nabla \mathbf{y}(\mathbf{x})\big)^T  \nabla \mathbf{y}(\mathbf{x}), \quad \mathbf{II}_{\mathbf{y}}(\mathbf{x}) := -\big(\nabla \mathbf{y}(\mathbf{x})\big)^T  \nabla \mathbf{n}_{\mathbf{y}}(\mathbf{x})
\end{aligned}
\end{equation}
where $\mathbf{n}_{\mathbf{y}}(\mathbf{x}) = \tfrac{\partial_1 \mathbf{y}(\mathbf{x}) \times \partial_2 \mathbf{y} (\mathbf{x})}{| \partial_1 \mathbf{y}(\mathbf{x}) \times \partial_2 \mathbf{y} (\mathbf{x})|}$ is a unit normal to the surface parameterized by $\mathbf{y}(\mathbf{x})$. 
These forms are frame-indifferent descriptors of  the intrinsic and extrinsic geometry of surfaces, and are used in typical plate theories as measures of strain corresponding to stretching and bending of the plate. Here, for parallelogram origami, the fundamental forms obey a set of algebraic constraints involving $\theta(\mathbf{x}), \kappa(\mathbf{x})$ and $\tau(\mathbf{x})$. In particular, by Eqs.\;(\ref{eq:secondSurface}), (\ref{eq:omegaParamsSolveAlg}) and (\ref{eq:theFunds}), every effective deformation in our theory must satisfy 
\begin{equation}
\begin{aligned}\label{eq:bendingTwistingModes}
&\underline{\text{origami actuation modes:}}\\
&\mathbf{I}_{\mathbf{y}_{\text{eff}}}(\mathbf{x}) = \mathbf{A}_{\text{eff}}^T(\theta(\mathbf{x})) \mathbf{A}_{\text{eff}}(\theta(\mathbf{x})), \\
&\underline{\text{origami bending modes:}}\\
&\tilde{\mathbf{u}}_0 \cdot \mathbf{II}_{\mathbf{y}_{\text{eff}}}   (\mathbf{x})  \tilde{\mathbf{u}}_0 = \mathbf{u}(\theta(\mathbf{x})) \cdot ( \boldsymbol{\omega}_{\mathbf{u}_0}(\mathbf{x}) \times \mathbf{e}_3 ) = -A(\theta(\mathbf{x})) \big[ \mathbf{u}'(\theta(\mathbf{x})) \cdot \mathbf{u}(\theta(\mathbf{x})) \big] \kappa(\mathbf{x}), \\ 
&\tilde{\mathbf{v}}_0 \cdot \mathbf{II}_{\mathbf{y}_{\text{eff}}}   (\mathbf{x})  \tilde{\mathbf{v}}_0 = \mathbf{v}(\theta(\mathbf{x})) \cdot ( \boldsymbol{\omega}_{\mathbf{v}_0}(\mathbf{x}) \times \mathbf{e}_3 ) = A(\theta(\mathbf{x})) \big[ \mathbf{v}'(\theta(\mathbf{x})) \cdot \mathbf{v}(\theta(\mathbf{x})) \big] \kappa(\mathbf{x}),
 \\
&\underline{\text{origami twisting modes:}} \\
&\tilde{\mathbf{u}}_0 \cdot \mathbf{II}_{\mathbf{y}_{\text{eff}}}   (\mathbf{x})  \tilde{\mathbf{v}}_0 = \mathbf{u}(\theta(\mathbf{x})) \cdot ( \boldsymbol{\omega}_{\mathbf{v}_0}(\mathbf{x}) \times \mathbf{e}_3 ) = A(\theta(\mathbf{x}))  \tau(\mathbf{x}), \\ 
&\tilde{\mathbf{v}}_0 \cdot \mathbf{II}_{\mathbf{y}_{\text{eff}}}   (\mathbf{x})  \tilde{\mathbf{u}}_0 = \mathbf{v}(\theta(\mathbf{x})) \cdot ( \boldsymbol{\omega}_{\mathbf{u}_0}(\mathbf{x}) \times \mathbf{e}_3 ) = A(\theta(\mathbf{x}))  \tau(\mathbf{x}).
\end{aligned}
\end{equation}
The first constraint links the first fundamental form to the actuation and reflects the fact that the pattern should deform in a manner consistent with a local mechanism, as discussed in Section \ref{ssec:MainResult}. The other two sets of constraints link the second fundamental form to $\kappa(\mathbf{x})$ and $\tau(\mathbf{x})$. Observe that $\kappa(\mathbf{x})$ is proportional to the normal curvatures in the $\mathbf{u}_0$ and $\mathbf{v}_0$ directions, while  $\tau(\mathbf{x})$ is  proportional to the cross terms. This dichotomy reflects our terminology --- bending for $\kappa$ and twisting for $\tau$ --- and is illustrated in the forthcoming  Fig.\;\ref{Fig:MiuraEggboxExample}.

\paragraph{In-plane and out-of-plane Poisson's ratios.}  Many notable features of the soft modes of parallelogram origami can be derived from the constraints in Eq.\;(\ref{eq:bendingTwistingModes}). To demonstrate this, 
we now re-derive and discuss an important identity from the literature on origami metamaterials, popularly stated as \textit{the origami's in-plane and out-of-plane Poisson's ratio are equal and opposite} \cite{lebee2018fitting,mcinerney2022discrete, nassar2017curvature, nassar2022strain, schenk2013geometry,wei2013geometric}. 

First, eliminate $\kappa(\mathbf{x})$ from the middle constraint in   Eq.\;(\ref{eq:bendingTwistingModes})  to obtain  
\begin{equation}
\begin{aligned}\label{eq:importantImplication}
&[ \mathbf{v}'(\theta(\mathbf{x})) \cdot \mathbf{v}(\theta(\mathbf{x}))] [\tilde{\mathbf{u}}_0 \cdot \mathbf{II}_{\mathbf{y}_{\text{eff}}}(\mathbf{x})\tilde{\mathbf{u}}_0 ] = - [ \mathbf{u}'(\theta(\mathbf{x})) \cdot \mathbf{u}(\theta(\mathbf{x}))] [\tilde{\mathbf{v}}_0 \cdot \mathbf{II}_{\mathbf{y}_{\text{eff}}}(\mathbf{x})\tilde{\mathbf{v}}_0 ] .
\end{aligned}
\end{equation}
Next, introduce the  normal curvatures of $\mathbf{y}_{\text{eff}}(\mathbf{x})$  in the $\mathbf{u}_0$- and $\mathbf{v}_0$-directions,
\begin{equation}
\begin{aligned}
\kappa_{\mathbf{u}_0}(\mathbf{x}) :=  \frac{\tilde{\mathbf{u}}_0 \cdot \mathbf{II}_{\mathbf{y}_{\text{eff}}}(\mathbf{x}) \tilde{\mathbf{u}}_0}{\tilde{\mathbf{u}}_0 \cdot \mathbf{I}_{\mathbf{y}_{\text{eff}}}(\mathbf{x}) \tilde{\mathbf{u}}_0} , \quad  \kappa_{\mathbf{v}_0}(\mathbf{x}) :=  \frac{\tilde{\mathbf{v}}_0 \cdot \mathbf{II}_{\mathbf{y}_{\text{eff}}}(\mathbf{x}) \tilde{\mathbf{v}}_0}{ \tilde{\mathbf{v}}_0 \cdot \mathbf{I}_{\mathbf{y}_{\text{eff}}}(\mathbf{x}) \tilde{\mathbf{v}}_0},
\end{aligned}
\end{equation}
and recall the Poisson's ratio $\nu(\theta)$ defined in Eq.\;(\ref{eq:PoissonsRatioDesign}). Combining the first identity in Eq.\;(\ref{eq:bendingTwistingModes}) with Eq.\;(\ref{eq:importantImplication}) gives the desired result: if $\kappa_{\mathbf{v}_0}(\mathbf{x})$ is non-zero, then 
\begin{equation}
\begin{aligned}\label{eq:curvaturesToStrains}
\frac{\kappa_{\mathbf{u}_0}(\mathbf{x})}{\kappa_{\mathbf{v}_0}(\mathbf{x})} = \nu(\theta(\mathbf{x})).
\end{aligned}
\end{equation}

Eq.\;(\ref{eq:curvaturesToStrains}) quantifies the coupling between  actuation and bending of parallelogram origami.  As mentioned in the introduction, it was discovered independently by Schenk and Guest \cite{schenk2013geometry} and Wei et al.\;\cite{wei2013geometric}  in the case of the Miura origami and later expanded upon and generalized to  parallelogram origami by  Nassar et al.\;\cite{lebee2018fitting,nassar2017curvature,nassar2022strain} and McInerney et al.\;\cite{mcinerney2022discrete}. We point out that the terminology ``in-plane Poisson's ratio" for $\nu(\theta)$ and ``out-of-plane Poisson's ratio'' for $-\kappa_{\mathbf{u}_0}/\kappa_{\mathbf{v}_0}$ is really a vestige of earlier works on Miura and Eggbox origami. In these cases, the strains and normal curvatures are appropriately orthogonal --- $\mathbf{u}(\theta)\cdot\mathbf{v}(\theta)=0$ for Miura and Eggobx, as explained in Section \ref{ssec:Examples} below --- making their ratios proper Poisson's ratios. However, such orthogonality is not present for generic parallelogram origami patterns, and  interpreting  Eq.\;(\ref{eq:curvaturesToStrains}) as a statement on Poisson's ratios is a bit of an abuse of terminology, although one that we too find convenient.

\paragraph{Effectively planar origami.} There is a case in which the identity in Eq.\;(\ref{eq:curvaturesToStrains}) fails: effectively planar origami. This case is obtained by setting   $\kappa(\mathbf{x})=\tau(\mathbf{x}) = 0$, and thus $\mathbf{II}_{\mathbf{y}_{\text{eff}}}(\mathbf{x}) = \mathbf{0}$.  Our surface theory then reduces to a single governing equation in the actuation angle:
\begin{equation}
\begin{aligned}
(\text{planer origami soft modes:}) \quad \nabla_0^{\perp} \cdot \big[ \boldsymbol{\Gamma}( \theta(\mathbf{x})) \nabla_0 \theta(\mathbf{x}) \big]  = 0. 
\end{aligned}
\end{equation}
In fact, we derived essentially the same field theory in our previous work on planar kirigami metamaterials \cite{zheng2022continuum,zheng2023modelling}, albeit with a different formula for the tensor $\boldsymbol{\Gamma}(\theta)$, and with $\theta(\mathbf{x})$ encoding the cell-wise actuation of the kirigami's slits (the kirigami analog of creases). This observation suggests a universal character to the soft modes in locally mechanistic systems; for instance, it raises the question of whether a version of the full surface theory in Eq.\;(\ref{eq:SurfaceTheory}) could hold for non-planar kirigami deformations. We leave this to future work.

\subsection{Discussion of the effective plate theory for origami}\label{ssec:Discussion2}

We are now ready to compare our effective plate theory for origami with other, more familiar continuum theories.
The plate theory in Theorem \ref{MainTheorem} involves both a quadratic stored energy function in the fields $\nabla\theta(\mathbf{x})$, $\boldsymbol{\omega}_{\mathbf{u}_0}(\mathbf{x})$ and $\boldsymbol{\omega}_{\mathbf{v}_0}(\mathbf{x})$, and a hard constraint on these fields expressed in the form of the surface theory discussed above. That a hard constraint should arise when coarse-graining the soft modes of origami is not  surprising after all, especially given the microscopic origins of our effective energy. Nonlinear plate theories, when derived asymptotically for thin bodies in a bending-dominated limit, usually involve a constraint on their first fundamental forms. In the classical Kirchhoff plate theory the first fundamental form is required to be the identity matrix (see \cite{friesecke2002theorem} for a modern, variational derivation); similar constraints arise in plate models of ``active sheets" made of hydrogels or liquid crystal elastomers \cite{lewicka2011scaling,plucinsky2018actuation}. These constraints are also well-understood in the physical literature in the language of non-Euclidean plate theory \cite{aharoni2014geometry, efrati2009elastic,klein2007shaping}, and have provided key design guidance for engineers looking to ``program'' desired shapes in applications \cite{aharoni2018universal,mostajeran2016encoding,plucinsky2018patterning}. 

Origami is different. It too admits a plate-like theory, as we have shown, but not one that involves only a constraint on the first fundamental form. Rather, there is an algebraic coupling between the first and second fundamental forms implied by the discrete nature of the origami's unit cells --- again, see Eq.\;(\ref{eq:localBendIdent}) or Eq.\;(\ref{eq:importantImplication}).  In this way, the task of coarse graining origami, and mechanical metamaterials more broadly, adds to our understanding of continuum mechanics by forcing us to go beyond the paradigm of classical (or even non-Euclidean) plate theories. 
This merits further discussion. 

\paragraph{Effective plate theory for origami.}
To aid in the discussion, we first rewrite the plate theory from Theorem \ref{MainTheorem} in a more ``classical'' form, using the first and second fundamental forms derived in the previous section. Recall the effective energy density from the main result: 
\begin{equation}
\begin{aligned}\label{eq:Weff}
W_{\text{eff}}(\mathbf{x}) := \begin{pmatrix} \boldsymbol{\omega}_{\mathbf{u}_0}(\mathbf{x}) \\ \partial_{\mathbf{u}_0}\theta(\mathbf{x})  \end{pmatrix} \cdot  \mathbf{K}_{\mathbf{u}_{0}}(\theta(\mathbf{x})) \begin{pmatrix} \boldsymbol{\omega}_{\mathbf{u}_0}(\mathbf{x}) \\  \partial_{\mathbf{u}_0} \theta(\mathbf{x}) \end{pmatrix}  +  \begin{pmatrix} \boldsymbol{\omega}_{\mathbf{v}_0}(\mathbf{x}) \\ \partial_{\mathbf{v}_0} \theta(\mathbf{x}) \end{pmatrix} \cdot  \mathbf{K}_{\mathbf{v}_{0}}(\theta(\mathbf{x}) ) \begin{pmatrix} \boldsymbol{\omega}_{\mathbf{v}_0}(\mathbf{x}) \\ \partial_{\mathbf{v}_0} \theta(\mathbf{x})\end{pmatrix}
\end{aligned}.
\end{equation}
 Eliminating $\kappa(\mathbf{x})$ and $\tau(\mathbf{x})$  in Eq.\;(\ref{eq:omegaParamsSolveAlg})  via Eq.\;(\ref{eq:bendingTwistingModes}) gives the fields $\boldsymbol{\omega}_{\mathbf{u}_0}(\mathbf{x})$ and $\boldsymbol{\omega}_{\mathbf{u}_0}(\mathbf{x})$  as
\begin{equation}
\begin{aligned}\label{eq:finalOmegaParam}
&\boldsymbol{\omega}_{\mathbf{u}_0}(\mathbf{x}) =\Big(\tfrac{\tilde{\mathbf{v}}_{0} \cdot \mathbf{II}_{\mathbf{y}_{\text{eff}}}(\mathbf{x}) \tilde{\mathbf{u}}_0}{ A(\theta(\mathbf{x})) }\Big)  \mathbf{u}(\theta(\mathbf{x})) - \Big(\tfrac{\tilde{\mathbf{u}}_{0} \cdot \mathbf{II}_{\mathbf{y}_{\text{eff}}}(\mathbf{x}) \tilde{\mathbf{u}}_0}{ A(\theta(\mathbf{x})) }\Big) \mathbf{v}(\theta(\mathbf{x})) +  \big[  \boldsymbol{\Gamma}_{\mathbf{u}_0}(\theta(\mathbf{x})) \cdot \nabla_0 \theta(\mathbf{x})  \big] \mathbf{e}_3, \\
&\boldsymbol{\omega}_{\mathbf{v}_0}(\mathbf{x}) = \Big(\tfrac{\tilde{\mathbf{v}}_{0} \cdot \mathbf{II}_{\mathbf{y}_{\text{eff}}}(\mathbf{x}) \tilde{\mathbf{v}}_0}{ A(\theta(\mathbf{x})) }\Big) \mathbf{u}(\theta(\mathbf{x})) - \Big(\tfrac{\tilde{\mathbf{u}}_{0} \cdot \mathbf{II}_{\mathbf{y}_{\text{eff}}}(\mathbf{x}) \tilde{\mathbf{v}}_0}{ A(\theta(\mathbf{x})) }\Big)   \mathbf{v}(\theta(\mathbf{x})) +  \big[  \boldsymbol{\Gamma}_{\mathbf{v}_0}(\theta(\mathbf{x})) \cdot \nabla_0 \theta(\mathbf{x})  \big] \mathbf{e}_3.
\end{aligned}
\end{equation}
 Due to the linearity of this parameterization in $\mathbf{II}_{\mathbf{y}_{\text{eff}}}(\mathbf{x})$ and $\nabla \theta(\mathbf{x})$, $W_{\text{eff}}(\mathbf{x})$ can also be written as a quadratic form in these variables:
\begin{equation}
\begin{aligned}
W_{\text{eff}}(\mathbf{x})=  \begin{pmatrix} \mathbf{II}_{\mathbf{y}_{\text{eff}}}(\mathbf{x}) \\ \nabla \theta(\mathbf{x})^T \end{pmatrix} \colon \mathbb{K}(\theta(\mathbf{x})) \colon \begin{pmatrix} \mathbf{II}_{\mathbf{y}_{\text{eff}}}(\mathbf{x}) \\ \nabla \theta(\mathbf{x})^T \end{pmatrix}
\end{aligned}
\end{equation}
for a fourth order tensor $\mathbb{K}(\theta) \in \mathbb{R}^{3 \times 2 \times 3 \times 2}$ with the major symmetry  $\mathbf{A} \colon \mathbb{K}(\theta) \colon \mathbf{B} = \mathbf{B} \colon \mathbb{K}(\theta) \colon \mathbf{A}$ for all $\mathbf{A}, \mathbf{B} \in \mathbb{R}^{3\times2}$. This tensor can be made explicit using Eqs.\;(\ref{eq:stiffnessTensor}-\ref{eq:bendingModuli}) and Eqs.\;(\ref{eq:Weff}-\ref{eq:finalOmegaParam}).

To write the plate theory in full, recognize that the  constraints linking the first and second fundamental forms of the effective deformation $\mathbf{y}_{\text{eff}}(\mathbf{x})$ to the actuation $\theta(\mathbf{x})$ are the first constraint in Eq.\;(\ref{eq:bendingTwistingModes}) and Eq.\;(\ref{eq:importantImplication}). Correspondingly, we define the set of  admissible angles and  symmetric tensors as 
\begin{equation}
\begin{aligned}\label{eq:preFinalFormEnergy}
\mathcal{A}_{\text{ori}} := \Big\{ &( \theta, \mathbf{G} , \mathbf{H})  \in  ( \theta^{-}, \theta^+) \times \mathbb{R}^{2 \times 2}_{\text{sym}} \times \mathbb{R}^{2\times2}_{\text{sym}}   \quad \text {subject to } \\
& \quad  \mathbf{G} = \mathbf{A}_{\text{eff}}^T(\theta) \mathbf{A}_{\text{eff}}(\theta) \text{ and }  [\mathbf{v}'(\theta) \cdot \mathbf{v}(\theta)][ \tilde{\mathbf{u}}_0 \cdot \mathbf{H} \tilde{\mathbf{u}}_0 ]= -[\mathbf{u}'(\theta) \cdot \mathbf{u}(\theta)][ \tilde{\mathbf{v}}_0 \cdot \mathbf{H} \tilde{\mathbf{v}}_0 ] \Big\} .
\end{aligned}
\end{equation}
The complete limiting plate theory for parallelogram origami is thus 
\begin{equation}
\begin{aligned}\label{eq:finalEnergyForm}
\mathcal{E}_{\text{eff}}( \mathbf{y}_{\text{eff}}, \theta) := \begin{cases}
 \int_{\Omega}  \begin{pmatrix} \mathbf{II}_{\mathbf{y}_{\text{eff}}}(\mathbf{x}) \\ \nabla \theta(\mathbf{x})^T \end{pmatrix} \colon \mathbb{K}(\theta(\mathbf{x})) \colon \begin{pmatrix} \mathbf{II}_{\mathbf{y}_{\text{eff}}}(\mathbf{x}) \\ \nabla \theta(\mathbf{x})^T \end{pmatrix} dA   & \text{ if }  (\theta(\mathbf{x}), \mathbf{I}_{\mathbf{y}_{\text{eff}}}(\mathbf{x}), \mathbf{II}_{\mathbf{y}_{\text{eff}}}(\mathbf{x}) ) \in \mathcal{A}_{\text{ori}}  \text{ on $\Omega$} \\
 +\infty & \text{ otherwise.}
 \end{cases} 
\end{aligned}
\end{equation}

\paragraph{Comparison with Kirchhoff's plate theory.} 
We now compare Eq.\;(\ref{eq:finalEnergyForm}) against Kirchhoff's plate theory, which is based on the constrained bending energy
\begin{equation}
\begin{aligned}\label{eq:classicalKirchhoff}
\mathcal{E}_{\text{iso}}(\mathbf{y}) := \begin{cases}
\frac{1}{24} \int_{\Omega} \Big\{  2 \mu | \mathbf{II}_{\mathbf{y}}(\mathbf{x})|^2 + \frac{\lambda \mu}{\mu + \lambda/2} \big[\text{Tr}\big(  \mathbf{II}_{\mathbf{y}}(\mathbf{x}) \big) \big]^2 \Big\}  dA & \text{ if }  \mathbf{I}_{\mathbf{y}}(\mathbf{x}) = \mathbf{I} \text{ on $\Omega$} \\
+\infty & \text{ otherwise}
\end{cases}
\end{aligned}
\end{equation}
where $\lambda$ and $\mu$ are the  Lam\'{e} constants. The constraint  $\mathbf{I}_{\mathbf{y}}(\mathbf{x}) = \mathbf{I}$ in Kirchhoff's theory selects isometric deformations, i.e., ones that preserve Euclidean lengths and angles. To each isometric deformation the theory assigns an energy that is quadratic in its second fundamental form, the bending strain measure in the theory. However, Kirchhoff's theory does not have any analog of the second constraint in $\mathcal{A}_{\text{ori}}$ on the origami's second fundamental form. 

Of course, there is a pointwise constraint on the second fundamental forms in Kirchhoff's plate theory, although it is not usually made explicit in its formulation. According to Gauss's theorem egregium, any sufficiently smooth deformation $\mathbf{y}(\mathbf{x})$ satisfying the isometry constraint $\mathbf{I}_{\mathbf{y}}(\mathbf{x}) =\mathbf{I}$ must have a second fundamental form whose determinant is zero: $\det\mathbf{II}_{\mathbf{y}}(\mathbf{x})=0$.
More generally, when applied to shell theory or to non-Euclidean plates, Gauss's theorem constrains the ratio of the determinants of the first and second fundamental forms to be equal to the Gauss curvature of the shell, a quantity that is completely determined by the first fundamental form \cite{do2016differential}. Gauss's theorem also applies to the effective deformations of origami, but in a way that is implied by the coupling $\mathbf{I}_{\mathbf{y}_{\text{eff}}}(\mathbf{x}) = \mathbf{A}_{\text{eff}}^T(\theta(\mathbf{x})) \mathbf{A}_{\text{eff}}(\theta(\mathbf{x}))$ in $\mathcal{A}_{\text{ori}}$; the last part of $\mathcal{A}_{\text{ori}}$ is an additional coupling of microscopic origin, not a manifestation of Gauss's theorem. 


\paragraph{Origami as a generalized elastic continua.} 

The second key difference between our effective theory for origami and standard continuum theories is the actuation field $\theta(\mathbf{x})$, which reflects origami's local degrees of freedom, i.e., the existence of a mechanism. 
This field is felt throughout our theory. It enters nonlinearly into the moduli $\mathbb{K}(\theta(\mathbf{x}))$, its gradient enters as an additional quadratic strain measure, and it couples to the first and second fundamental forms via constraints that differ greatly from those of a Euclidean isometry (see the examples in Fig.\;\ref{Fig:MiuraEggboxExample}). The presence of an actuation field makes our theory not only a generalization of Kirchhoff's plate theory, but also an example of a \textit{generalized elastic continuum}, which we discuss now.

As mentioned in the introduction, generalized elasticity refers to a class of models that incorporate auxiliary fields --- beyond the deformation --- as additional degrees of freedom leading to an enriched set of governing equations. The Cosserat brothers first proposed the concept via ``director models" in the early 1900s \cite{cosserat1909theorie}. Eringen later codified the subject in much detail \cite{eringen2012microcontinuum} by developing a wide range of microcontinuum models, including the popular micropolar, microstretch, and micromorphic theories. These theories introduce the auxiliary  fields to capture mechanical rearrangements at the microscale with consequences for elasticity at larger scales. Perhaps, as a result, a few lines of research beyond our own \cite{zheng2023modelling} have connected mechanical metamaterials  to this  subject. 
R.\;S.\;Lakes  recognized that extremal hinged lattices cannot be modeled using  conventional elasticity and advocated for a Cosserat continuum \cite{lakes2022extremal}; Nassar et al.\;\cite{nassar2020microtwist} proposed a ``microtwist" continuum theory to explain topological polarization in the kagome lattice; Sarhil et al.\;\cite{sarhil2023modeling,sarhil2023size} modeled the size effects in metamaterial beams using relaxed micromorphic theories; a review article has noted the link \cite{kadic20193d}; others \cite{saremi2020topological,sun2020continuum} have made analogies to these theories to arrive at strain gradient models. 

 Like our theory, the generalized elastic models referenced above introduce elastic energy terms that are  quadratic  in the gradient of the auxiliary field(s). Unlike our theory, these models have mostly focused on the linear response regime, making them small displacement theories; particularly, their  stored energies are quadratic in a linear measure of strain. This assumption is not consistent with the large deformation response of parallelogram origami undergoing a soft mode. Our generalized elastic theory in Eq.\;(\ref{eq:finalEnergyForm}) captures these large soft modes. It couples the first and second fundamental forms of their effective deformations to the origami's local actuation. This coupling is made explicit by the constraints in $\mathcal{A}_{\text{ori}}$. Only after getting this coupling right does the theory resemble a more-or-less familiar generalized elastic continua,  with an energy density that is quadratic in the appropriate nonlinear strain measure $\mathbf{II}_{\mathbf{y}_{\text{eff}}}(\mathbf{x})$ and the gradient of the auxiliary field $\nabla \theta(\mathbf{x})$.



\subsection{Examples: the Miura and Eggbox patterns}\label{ssec:Examples}
\begin{figure}[t!]
\centering
\includegraphics[width=1\textwidth]{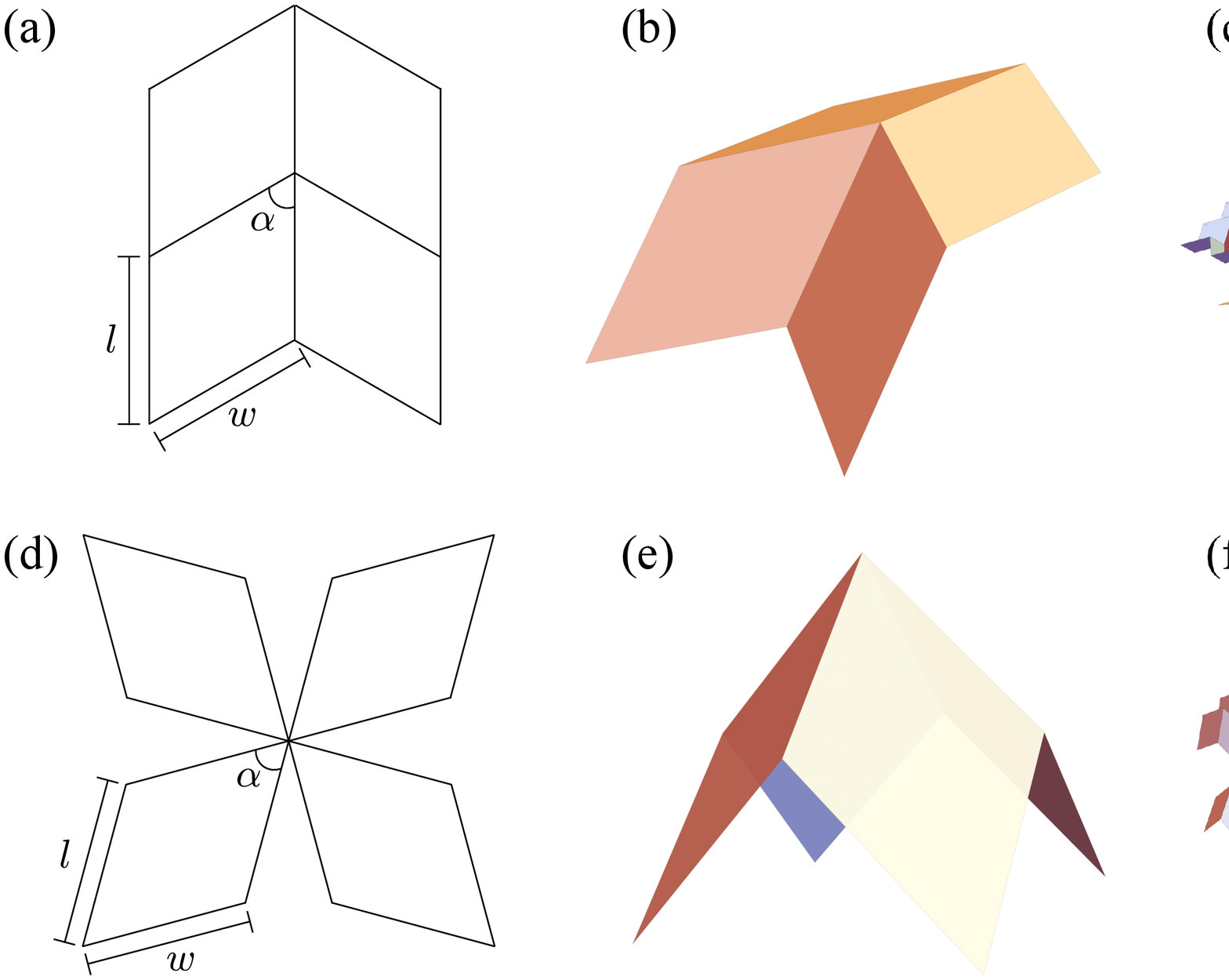} 
\caption{Miura and Eggbox origami. The top half of the figure displays the Miura pattern: (a) the flattened unit cell and its design parameters; (b) the partly folded reference cell; (c) the pattern and a folded state that highlights its auxeticity. The bottom half of the figure displays the Eggbox pattern: (d) the flattened unit cell and its design parameters; (e) the partly folded reference cell; (f) the pattern and a folded state that highlights its conventional Poisson's ratio.  }
\label{Fig:MiuraEggboxAuxeticity}
\end{figure}
We close by demonstrating a few solutions to the surface theory in Eq.\;(\ref{eq:GenPDESurf}) for the Miura and Eggbox origami. Both  are canonical examples of parallelogram origami, obtained by tessellating a cell composed of a single parallelogram of length $l>0$, width $w>0$ and angle $\alpha \in (0, \pi/2)$. Fig.\;\ref{Fig:MiuraEggboxAuxeticity}(a) and (d) show flattened versions of each unit cell. The Miura pattern in Fig.\;\ref{Fig:MiuraEggboxAuxeticity}(a-c) is Euclidean. Its sector angles sum to $2 \pi$  around each vertex, yielding a flat crease pattern. The Eggbox pattern in Fig.\;\ref{Fig:MiuraEggboxAuxeticity}(d-f) is \textit{not} Euclidean.  Its sector angles sum to   $4 \alpha$  and $4( \pi -\alpha)$ in an alternating fashion around each vertex. It cannot  flatten out. 

Regarding kinematics,  both patterns have  a single degree-of-freedom mechanism described by an orthogonal parameterization $\mathbf{u}(\theta) = \lambda_u(\theta) \mathbf{e}_1$ and $\mathbf{v}(\theta) = \lambda_v(\theta) \mathbf{e}_2$ for $\lambda_{u}(\theta), \lambda_v(\theta) >0$. However, the Miura is auxetic since both its sides contract/expand on folding/unfolding, while  the Eggbox is not auxetic since one side contracts and the other expands on folding (see Fig.\;\ref{Fig:MiuraEggboxAuxeticity}(c) and (f)). This difference furnishes
\begin{equation}
    \begin{aligned}\label{eq:dichotomy}
        \text{Miura:} \quad  \lambda_u'(\theta) \lambda_v'(\theta) > 0, \quad \text{Eggbox:} \quad \lambda_u'(\theta) \lambda_v'(\theta) < 0.
    \end{aligned}
\end{equation}
These inequalities have significant implications for the nature of bending and twisting in the Miura and Eggbox patterns. By Eqs.\;(\ref{eq:bendingTwistingModes}), the Gauss curvature of a general soft mode satisfies 
\begin{equation}
\begin{aligned}\label{eq:signGC}
\text{sign of the Gauss curvature}   = \text{sign} \Big( -\lambda_u'(\theta) \lambda_v'(\theta) \lambda_u(\theta) \lambda_v(\theta) \kappa^2 - \tau^2   \Big).
\end{aligned}
\end{equation}
Thus, the Miura always deforms with effectively negative Gauss curvature, while the Eggbox can deform with positive or negative Gauss curvature  depending on the amount of bending ($\kappa$) versus twisting ($\tau$) in the soft mode. We illustrate this in examples below.

For the examples, we focus on cases where the parallelogram building block of the pattern is fixed at $w = l = 1$ and $\alpha = \tfrac{2}{3} \pi$. This yields orthogonal Bravais lattice vectors given by
\begin{equation}
\begin{aligned}\label{eq:ExamplesBravParam}
&\underline{\text{Miura:}} \\
&\lambda_u(\theta) =\sqrt{3} \cos \Big( \frac{\theta + \pi/6}{2} \Big) , \qquad \lambda_v(\theta) = 2\sqrt{2}\big(5 - 3 \cos( \theta+ \pi/6) \big)^{-1/2} , \qquad   \theta \in (- \pi/6, 5\pi/6),  \\
&\underline{\text{Eggbox:}}\\ 
&\lambda_u(\theta) = 2 \sin\Big[\frac{1}{2} \big( \theta + \arccos(1- \cos \theta)\big) \Big],\qquad    \lambda_v(\theta) = \lambda_u(-\theta), \qquad \theta \in (-\pi/3, \pi/3),
\end{aligned}
\end{equation}
after a lengthy calculation involving the cell's rigid kinematics (which we omit). 
Fig.\;\ref{Fig:MiuraEggboxAuxeticity}(b) and (e) shows each cell's partly folded reference configuration, obtained by setting $\theta= 0$. In contrast to the depictions in Figs.\;\ref{Fig:idepat} and \ref{Fig:pertcell}, the $\theta$-parameterization of the Eggbox's mechanism motion does not strictly measure the change in the dihedral angle of any one of its creases. This distinction does not impact the theory.  
 
 We construct solutions of the surface theory for these Miura and Eggbox patterns using a one-dimensional ansatz for the fields $\kappa(\mathbf{x}) \equiv \kappa(\mathbf{x}\cdot \tilde{\mathbf{v}}_0^r)$, $\tau(\mathbf{x}) \equiv \tau(\mathbf{x}\cdot \tilde{\mathbf{v}}_0^r)$ and $\theta(\mathbf{x}) \equiv \theta(\mathbf{x}\cdot \tilde{\mathbf{v}}_0^r)$, explained in Appendix \ref{sec:1DSolve}. 
 This ansatz turns the PDEs in Eq.\;(\ref{eq:GenPDESurf}) into easily solvable ordinary differential equations (ODE); Fig.\;\ref{Fig:MiuraEggboxExample} shows deformations for both patterns obtained using the procedure in the appendix. Fig.\;\ref{Fig:MiuraEggboxExample}(a-c) displays bending and twisting modes for Miura origami, while Fig.\;\ref{Fig:MiuraEggboxExample}(d-f) displays analogous Eggbox modes. Pure bending is shown in the top pane, pure twist on the bottom and a mixture of bending and twist in the middle. We discuss these examples in the context of their Gauss curvature, using the illuminating formulas in Eqs.\;(\ref{eq:dichotomy}-\ref{eq:signGC}).

\begin{figure}[t!]
\centering
\includegraphics[width = 1\textwidth]{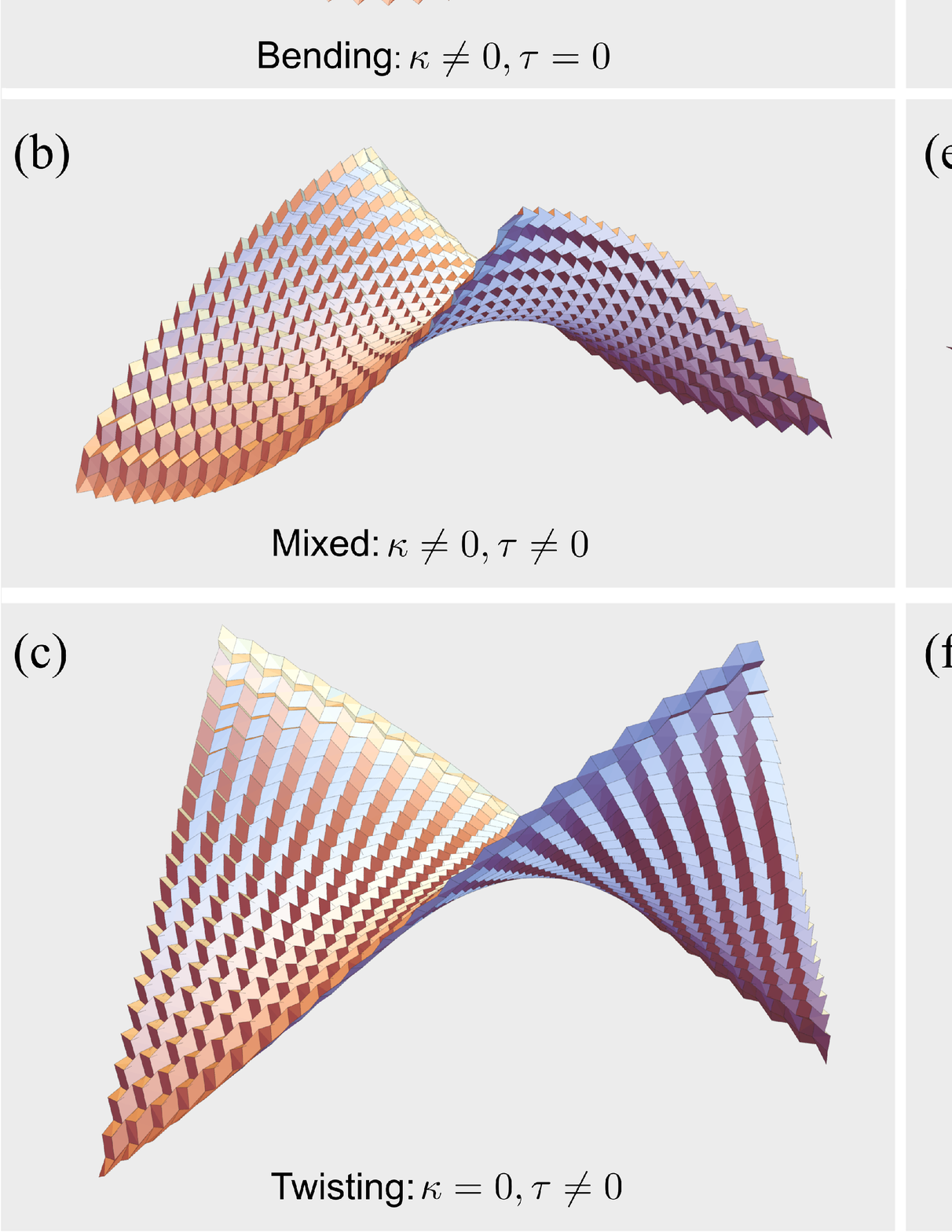} 
\caption{Examples of bending ($\kappa$) and twisting ($\tau$) deformations in Miura and Eggbox origami. (a-c) Miura examples: (a) pure bending, (b) combination of bend and twist and  (c) pure twist.  (d-f) Eggbox examples: (d) pure bending, (e) combination of bend and twist and (f) pure twist.}
\label{Fig:MiuraEggboxExample}
\end{figure}

The pure bending case for the Miura origami in Fig.\;\ref{Fig:MiuraEggboxExample}(a)  reproduces the  symmetric saddle reported throughout the literature \cite{callens2018flat,lebee2015folds,schenk2013geometry, wei2013geometric}. That of the Eggbox in  Fig.\;\ref{Fig:MiuraEggboxExample}(d)  displays  a symmetric cap, the bending motif of the opposite extreme familiar to those who have studied this pattern \cite{nassar2017curvature,pratapa2019geometric,schenk2011origami}. These deformations are obtained by imposing the ``no-twist'' constraint $\tau(s) = 0$ in the governing equations in Appendix \ref{sec:1DSolve} (Eq.\;(\ref{eq:tauKappaAnalytical}-\ref{eq:finalODE})), and  choosing the other free parameters in these equations to produce an actuation field $\theta(s)$ that is symmetric about the midline of the domain. They also illustrate the basic fact that the Gauss curvature of pure bending is strictly negative for Miura origami, strictly positive for Eggbox and generally depends only on whether the pattern's mechanism is auxetic or not (per Eq.\;(\ref{eq:signGC})). 

The twisting modes of parallelogram origami offer no such dichotomy. Pure twist has negative Gauss curvature regardless of whether the pattern is Miura, Eggbox or \textit{any other} type of parallelogram origami --- set $\kappa=0$ and $\tau\neq 0$ in Eq. (\ref{eq:signGC}). We illustrate this  fact by producing essentially the same type of pure twist mode for the two patterns in Fig.\;\ref{Fig:MiuraEggboxExample}(c) and (f).  These were made by imposing $\kappa(s) = 0$ and  choosing the remaining free parameters to produce  another symmetric ODE solution for $\theta(s)$. While many authors have emphasized a link  between the in-plane and out-of-plane Poisson's ratio in these types of  origami patterns \cite{lebee2018fitting,mcinerney2022discrete, nassar2017curvature, nassar2022strain, schenk2013geometry,wei2013geometric} (see Section \ref{ssec:discussSec-modes}), our results  demonstrate a subtlety in the physical interpretation of this link: the pattern's auxeticity is only directly linked to certain normal curvatures associated with bending modes.  It has no such  link to the principal curvatures of generic deformations. This is consistent with the fact that non-auxetic patterns like the Eggbox can accommodate both positive and negative Gauss curvature,  depending on the amount of twist. 

As a final demonstration, we highlight deformation modes that simultaneously bend and twist. Fig.\;\ref{Fig:MiuraEggboxExample}(b) displays an example for the Miura. Bend and twist produce the same Gauss curvature in this case and are seemingly cooperative --- the deformation looks like a twisted saddle. However, for Eggbox origami, bend and twist have opposing influences on the Gauss curvature. Fig.\;\ref{Fig:MiuraEggboxExample}(e) displays an Eggbox mode that is bending dominated on the left of the sample and twist dominated on the right, leading to pronounced regions of positive and negative Gauss curvature in a single sample. 

\subsection{Outlook}\label{ssec:Outlook}

This paper derived an effective plate theory for parallelogram origami patterns from bar and hinge elasticity, and demonstrated the accompanying surface theory for the examples of Miura and Eggbox origami (Fig.\;\ref{Fig:MiuraEggboxExample}). These two examples are a ``drop in the bucket" as far as the general theory is concerned. Future work is needed on the numerical front to address generic patterns and nontrivial PDE solutions to the surface theory in Eq.\;(\ref{eq:GenPDESurf}). Some ideas in this direction can be found in \cite{marazzato2022mixed,marazzato2023computation,marazzato2023H2}, inspired by the results of Leb\'{e}e et al.\;\cite{lebee2018fitting} on Miura origami and the results of the last two authors on planar kirigami. Another idea is to relax the PDE by introducing an associated elastic energy penalty; we developed this approach for planar kirigami where it proved successful at capturing general soft modes  \cite{zheng2023modelling}. Beyond solving the surface theory, an accompanying effort is needed to develop a simulation framework for the full effective plate theory in Eq.\;(\ref{eq:finalEnergyForm}) subject to physically relevant boundary conditions. Importantly, this would allow the theory developed in this paper to be  compared  with experiments of parallelogram origami patterns \textit{under loads} --- a true measure of the success and utility of a mechanical theory. Purely mathematical questions remain as well, such as the technical matter of the analyticity hypothesis in the case of a negative Poisson's ratio, and the deeper question of a fully ansatz-free derivation using the technique of $\Gamma$-convergence. 
Open questions aside, our examples and discussion have highlighted the richness of our theory and the insights it provides for parallelogram origami.

\section*{Acknowledgments}
\noindent  H.X. and P.P. acknowledge support from the National Science Foundation (CMMI-CAREER-2237243).  I.T. acknowledges support from the National Science Foundation (DMS-CAREER-2350161). All authors acknowledge support from the Army Research Office (ARO-W911NF2310137).

\appendix

\section{Analysis of mechanism kinematics}\label{sec:MechProofs}

This section proves Propositions \ref{mechKinCellProp} and \ref{MechProp} and  Lemma \ref{SignsLemma}, thus characterizing the mechanism kinematics of the cell and the pattern. 

\subsection{Mechanism kinematics of a single cell}\label{ssec:MechProofs1}
Recall that the mechanism set of a unit cell is $\mathcal{M} := \big\{ ( \mathbf{t}_1^d, \ldots, \mathbf{t}_4^d) \colon \mathbb{R}^3 \times \ldots \times \mathbb{R}^3 \colon \text{subject to Eq.\;(\ref{eq:compatCreasesZero})}  \big\}$. We now parameterize this set.

Without loss of generality, fix two adjacent deformed vectors by setting 
\begin{equation}
\begin{aligned}\label{eq:fixVectors}
\mathbf{t}_1^d = \mathbf{t}_1^r, \quad \mathbf{t}_4^d = \mathbf{t}_4^r
\end{aligned}
\end{equation}
to eliminate an overall rigid rotation of the unit cell. Then, the other two deformed vectors satisfy 
\begin{equation}
\begin{aligned}\label{eq:getDefCreases}
\mathbf{t}_2^d(\gamma) = \mathbf{R}_{\mathbf{t}_1^r}(\gamma) \mathbf{t}_2^r, \quad \mathbf{t}_3^d(\theta) = \mathbf{R}_{\mathbf{t}_4^r}(\theta) \mathbf{t}_3^r \quad \text{ for  angles $\gamma, \theta \in (-\pi, \pi)$,}
\end{aligned}
\end{equation}
where $\mathbf{R}_{\mathbf{t}}(\varphi):=|\mathbf{t}|^{-2} \mathbf{t}\otimes\mathbf{t} +\cos\varphi\big(\mathbf{I}-|\mathbf{t}|^{-2}\mathbf{t}\otimes\mathbf{t}\big)+|\mathbf{t}|^{-1}\sin\varphi \big(\mathbf{t}\times \big)$ denotes a counterclockwise rotation about a  non-zero axis vector $\mathbf{t}$ through an angle $\varphi$. The angle between the two vectors $\mathbf{t}_2^d(\gamma)$ and $\mathbf{t}_3^d(\theta)$ must be the same as that of their reference vectors, 
\begin{equation}
\begin{aligned}\label{eq:fcompCompat}
f_{\text{comp}}(\gamma, \theta) :=  \mathbf{t}_2^d(\gamma) \cdot \mathbf{t}_3^d(\theta)  - \mathbf{t}_2^r \cdot \mathbf{t}_3^r = 0. 
\end{aligned}
\end{equation}
Finally, the requirement that the mountain-valley assignment of the deformed cell be the same as the reference cell is an inequality of the form
\begin{equation}
\begin{aligned}\label{eq:MVAssign}
\mathbf{f}_{\text{mv}}(\gamma, \theta) := \begin{pmatrix}  [\mathbf{t}_1^d \cdot (\mathbf{t}_2^d(\gamma) \times \mathbf{t}_3^d(\theta))][\mathbf{t}_1^r \cdot (\mathbf{t}_2^r \times \mathbf{t}_3^r)]  \\ [\mathbf{t}_2^d(\gamma) \cdot (\mathbf{t}_3^d(\theta) \times \mathbf{t}_4^d) ][\mathbf{t}_2^r \cdot (\mathbf{t}_3^r \times \mathbf{t}_4^r)]  \\  [\mathbf{t}_3^d(\theta) \cdot (\mathbf{t}_4^d \times \mathbf{t}_1^d) ][\mathbf{t}_3^r \cdot (\mathbf{t}_4^r \times \mathbf{t}_1^r)] \\  [\mathbf{t}_4^d \cdot (\mathbf{t}_1^d \times \mathbf{t}_2^d(\gamma)) ][\mathbf{t}_4^r \cdot (\mathbf{t}_1^r \times \mathbf{t}_2^r)] \end{pmatrix}  > \mathbf{0},
\end{aligned}
\end{equation}
enforced element-wise. Eqs.\;(\ref{eq:fixVectors}-\ref{eq:MVAssign}) parameterize the rigid kinematics of the unit cell in Eq.\;(\ref{eq:compatCreasesZero}) up to an overall rigid rotation.

Next, we show that there exists a unique mechanism motion of the unit cell that contains the reference configuration and does not change the mountain-valley assignment. 
\begin{lemma}\label{ImplicitLemma}
There exists an open interval $(\theta^-, \theta^+)$ containing $0$ and a unique continuously differentiable function $\gamma \colon (\theta^{-}, \theta^+) \rightarrow \mathbb{R}$ such that $\gamma(0) = 0$ and 
\begin{equation}
\begin{aligned}\label{eq:firstCompResult}
f_{\emph{comp}}(\gamma(\theta), \theta) = 0, \quad \mathbf{f}_{\emph{mv}}(\gamma(\theta), \theta)  >  \mathbf{0} \quad \text{for all } \theta \in (\theta^{-}, \theta^+).
\end{aligned}
\end{equation}
In addition, $\gamma(\theta)$ is  analytic and its derivative satisfies
\begin{equation}
    \begin{aligned}\label{eq:gammaPrime}
        \gamma'(\theta) = \frac{|\mathbf{t}_1^r|}{|\mathbf{t}_4^r|}\frac{\mathbf{t}_2^d(\gamma(\theta)) \cdot (\mathbf{t}_3^d(\theta) \times \mathbf{t}_4^d)}{\mathbf{t}^d_1 \cdot ( \mathbf{t}_2^d(\gamma(\theta)) \times \mathbf{t}_3^d(\theta))}.
    \end{aligned}
\end{equation}
\end{lemma}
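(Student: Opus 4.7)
The plan is to apply the analytic implicit function theorem to $f_{\text{comp}}(\gamma,\theta)$ at the base point $(\gamma,\theta)=(0,0)$, and then to continue the resulting local solution as far as the mountain--valley inequalities in $\mathbf{f}_{\text{mv}}$ remain strict. First, observe that $\mathbf{t}_2^d(0)=\mathbf{t}_2^r$ and $\mathbf{t}_3^d(0)=\mathbf{t}_3^r$, so $f_{\text{comp}}(0,0)=0$. For the nondegeneracy hypothesis, recall that $\mathbf{R}_{\mathbf{t}}(\varphi)$ is a counter-clockwise rotation about $\mathbf{t}/|\mathbf{t}|$, so $\partial_\gamma \mathbf{t}_2^d(\gamma)=|\mathbf{t}_1^r|^{-1}\,\mathbf{t}_1^r\times\mathbf{t}_2^d(\gamma)$ and $\partial_\theta \mathbf{t}_3^d(\theta)=|\mathbf{t}_4^r|^{-1}\,\mathbf{t}_4^r\times\mathbf{t}_3^d(\theta)$. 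Evaluating at $(0,0)$ gives
\begin{equation*}
\partial_\gamma f_{\text{comp}}(0,0)= |\mathbf{t}_1^r|^{-1}\,\mathbf{t}_1^r\cdot(\mathbf{t}_2^r\times\mathbf{t}_3^r)\neq 0
\end{equation*}
by the nondegeneracy hypothesis in Eq.\;(\ref{eq:tangentsConstraints}). Since $f_{\text{comp}}$ is a real-analytic function of $(\gamma,\theta)$, the analytic implicit function theorem furnishes an open interval $I_0\ni 0$ and a unique analytic function $\gamma\colon I_0\to \mathbb{R}$ satisfying $\gamma(0)=0$ and $f_{\text{comp}}(\gamma(\theta),\theta)=0$ on $I_0$. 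By continuity of $\mathbf{f}_{\text{mv}}$ and the fact that $\mathbf{f}_{\text{mv}}(0,0)>\mathbf{0}$, shrinking $I_0$ if necessary ensures the mountain--valley inequalities hold on $I_0$.

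Next, I would compute the derivative formula by differentiating $f_{\text{comp}}(\gamma(\theta),\theta)=0$:
\begin{equation*}
\gamma'(\theta)=-\frac{\partial_\theta f_{\text{comp}}}{\partial_\gamma f_{\text{comp}}}=\frac{|\mathbf{t}_1^r|}{|\mathbf{t}_4^r|}\,\frac{\mathbf{t}_2^d(\gamma(\theta))\cdot(\mathbf{t}_3^d(\theta)\times\mathbf{t}_4^d)}{\mathbf{t}_1^d\cdot(\mathbf{t}_2^d(\gamma(\theta))\times\mathbf{t}_3^d(\theta))},
\end{equation*}
where I have used $\partial_\gamma \mathbf{t}_2^d\cdot\mathbf{t}_3^d=|\mathbf{t}_1^r|^{-1}\mathbf{t}_1^d\cdot(\mathbf{t}_2^d\times\mathbf{t}_3^d)$ and an analogous rewriting of $\partial_\theta f_{\text{comp}}$, along with $|\mathbf{t}_1^d|=|\mathbf{t}_1^r|$, $|\mathbf{t}_4^d|=|\mathbf{t}_4^r|$. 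This is Eq.\;(\ref{eq:gammaPrime}).

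The main subtlety, and where I expect to spend most of the work, is the extension step. The local interval $I_0$ should be enlarged to the \emph{maximal} open interval $(\theta^-,\theta^+)\ni 0$ on which an analytic branch $\gamma(\theta)$ exists with $\gamma(0)=0$, $f_{\text{comp}}(\gamma(\theta),\theta)=0$, and $\mathbf{f}_{\text{mv}}(\gamma(\theta),\theta)>\mathbf{0}$. The key observation that makes continuation possible is the identity
\begin{equation*}
\partial_\gamma f_{\text{comp}}(\gamma,\theta)=|\mathbf{t}_1^r|^{-1}\,\mathbf{t}_1^d\cdot(\mathbf{t}_2^d(\gamma)\times\mathbf{t}_3^d(\theta)),
\end{equation*}
so that the denominator in the derivative formula for $\gamma'$ is \emph{precisely} (up to a positive factor) the first triple product appearing in $\mathbf{f}_{\text{mv}}$. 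Consequently, on any sub-interval where the mountain--valley inequalities are strict, $\partial_\gamma f_{\text{comp}}$ is bounded away from zero, and the analytic implicit function theorem can be re-applied at each interior point to extend the branch. A standard continuation/Zorn-style argument then yields a maximal open interval $(\theta^-,\theta^+)$. Uniqueness on this maximal interval follows from uniqueness in the local implicit function theorem applied along the branch, together with the fact that $\gamma(\theta)$ is determined by the value $\gamma(0)=0$ and the ODE $\gamma'(\theta)=-\partial_\theta f_{\text{comp}}/\partial_\gamma f_{\text{comp}}$, whose right-hand side is analytic on the relevant region. Analyticity of $\gamma(\theta)$ on all of $(\theta^-,\theta^+)$ follows again from the analytic implicit function theorem applied locally at each $\theta\in(\theta^-,\theta^+)$.
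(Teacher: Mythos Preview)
Your proposal is correct and follows essentially the same route as the paper: verify $f_{\text{comp}}(0,0)=0$, compute $\partial_\gamma f_{\text{comp}}(0,0)=|\mathbf{t}_1^r|^{-1}\mathbf{t}_1^r\cdot(\mathbf{t}_2^r\times\mathbf{t}_3^r)\neq 0$ via the derivative of the axial rotation, apply the analytic implicit function theorem, shrink the interval so that $\mathbf{f}_{\text{mv}}>\mathbf{0}$, and read off Eq.\;(\ref{eq:gammaPrime}) from the chain rule. The one difference is that you devote a paragraph to a continuation argument establishing a \emph{maximal} interval; the lemma as stated only asserts existence of \emph{some} interval, and the paper simply declares after the proof that henceforth $(\theta^-,\theta^+)$ denotes the maximal such interval, without spelling out the continuation. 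Your observation that $\partial_\gamma f_{\text{comp}}$ coincides (up to a positive factor) with the first component of $\mathbf{f}_{\text{mv}}$, so the implicit function theorem can be re-applied wherever the mountain--valley inequalities hold, is exactly the mechanism behind that extension and is worth making explicit.
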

\begin{proof}
The proof is a standard application of the implicit function theorem. Observe that $f_{\text{comp}}(0,0) = 0$ and $f_{\text{comp}}(\gamma, \theta)$ is analytic. Also,
\begin{equation}
    \begin{aligned}\label{eq:partialIdentities}
        &\partial_{\gamma} f_{\text{comp}}(\gamma, \theta) = |\mathbf{t}_1^r|^{-1} \mathbf{R}_{\mathbf{t}_1^r}(\gamma)(\mathbf{t}_1^r \times \mathbf{t}_2^r) \cdot \mathbf{t}_3^d(\theta) = |\mathbf{t}_1^r|^{-1} (\mathbf{t}_1^d \times \mathbf{t}_2^d(\gamma) ) \cdot \mathbf{t}_3^d(\theta)  \\
        &\partial_{\theta} f_{\text{comp}}(\gamma, \theta) = |\mathbf{t}_4^r|^{-1} \mathbf{R}_{\mathbf{t}_4^r}(\theta) (\mathbf{t}_4^r \times \mathbf{t}_3^r) \cdot \mathbf{t}_2^d(\gamma)= | \mathbf{t}_4^r|^{-1} ( \mathbf{t}_4^d \times \mathbf{t}_3^d(\theta)) \cdot \mathbf{t}_2^d(\gamma)
    \end{aligned}
\end{equation}
by differentiating Eq.\;(\ref{eq:fcompCompat}) and using the identities in Eqs.\;(\ref{eq:fixVectors}-\ref{eq:getDefCreases}). It follows that 
$\partial_{\gamma} f_{\text{comp}}(0,0) = |\mathbf{t}_1^r|^{-1}(\mathbf{t}_1^r \times \mathbf{t}_2^r) \cdot \mathbf{t}_3^r >0$ by the mountain-valley assumptions on  $\mathbf{t}_1^r, \ldots, \mathbf{t}_4^r$. Thus, by the implicit function theorem, there is an open interval $(\tilde{\theta}^{-}, \tilde{\theta}^+)$ containing $0$ and a unique analytic function $\gamma \colon (\tilde{\theta}^{-}, \tilde{\theta}^+) \rightarrow \mathbb{R}$ with $\gamma(0) = 0$ and $f_{\text{comp}}(\gamma(\theta), \theta) = 0$ for all $\theta \in (\tilde{\theta}^{-}, \tilde{\theta}^+)$.  Since $\mathbf{f}_{\text{mv}}(\gamma(0), 0) =  \mathbf{f}_{\text{mv}}(0, 0) > \mathbf{0}$
trivially  and  $\mathbf{f}_{\text{mv}}(\gamma(\theta), \theta)$ is continuous in $\theta$, there  exists an open interval $(\theta^{-}, \theta^+) \subset (\tilde{\theta}^{-}, \tilde{\theta}^+)$ such that  $0 \in (\theta^{-}, \theta^+) $ and  $\mathbf{f}_{\text{mv}}(\gamma(\theta), \theta) > \mathbf{0}$ for all  $\theta \in (\theta^{-}, \theta^+)$. This proves Eq.\;(\ref{eq:firstCompResult}). For Eq.\;(\ref{eq:gammaPrime}),  the chain rule gives that $\partial_{\gamma} f_{\text{comp}}(\gamma(\theta), \theta)\gamma'(\theta) = - \partial_{\theta}  f_{\text{comp}}(\gamma(\theta), \theta)$ on $(\theta^{-}, \theta^+)$. The first identity in Eq.\;(\ref{eq:partialIdentities}) and $\mathbf{f}_{\text{mv}}(\gamma(\theta), \theta) > \mathbf{0}$ gives that $\partial_{\gamma} f_{\text{comp}}(\gamma(\theta), \theta)$ is non-zero. Dividing by it and using both identities in Eq.\;(\ref{eq:partialIdentities}) yields Eq.\;(\ref{eq:gammaPrime}).
\end{proof}

\noindent Henceforth, $(\theta^{-}, \theta^+)$ denotes the maximal open interval for which the statement of Lemma \ref{ImplicitLemma}  holds.

Proposition \ref{mechKinCellProp} is a direct consequence of Lemma \ref{ImplicitLemma}.  Consider the creases defined implicitly in Lemma \ref{ImplicitLemma} by deforming $( \mathbf{t}_1^r, \mathbf{t}_2^r, \mathbf{t}_3^r, \mathbf{t}_4^r)$ to $(\mathbf{t}_1^r, \mathbf{t}^d_2(\gamma(\theta)), \mathbf{t}_3^d(\theta), \mathbf{t}_4^r)$ for $\theta \in (\theta^{-}, \theta^+)$. It follows that 
\begin{equation}
\begin{aligned}\label{eq:MechConstruct0}
\mathcal{M} = \{ (\mathbf{Q} \mathbf{t}_1^r, \mathbf{Q} \mathbf{t}^d_2(\gamma(\theta)), \mathbf{Q} \mathbf{t}_3^d(\theta), \mathbf{Q}\mathbf{t}_4^r) \colon \mathbf{Q} \in SO(3), \theta \in (\theta^{-}, \theta^+) \big\},
\end{aligned}
\end{equation}
since the lemma parameterizes the mechanism set up to an overall rigid rotation of the vectors.  This parameterization also satisfies $(\mathbf{t}_1^r, \mathbf{t}^d_2(\gamma(0)), \mathbf{t}_3^d(0), \mathbf{t}_4^r) = (\mathbf{t}_1^r, \mathbf{t}_2^r , \mathbf{t}_3^r, \mathbf{t}_4^r).$  

Next, observe that there is a rotation field $\mathbf{R}_0 \colon (\theta^{-}, \theta^+) \rightarrow SO(3)$ with $\mathbf{R}_0(0) = \mathbf{I}$ such that the vector fields  $\mathbf{t}_i \colon (\theta^{-},\theta^+) \rightarrow \mathbb{R}^3$ defined by
\begin{equation}
\begin{aligned}\label{eq:MechConstruct1}
\mathbf{t}_1(\theta) := \mathbf{R}_0(\theta) \mathbf{t}_1^r, \quad \mathbf{t}_2(\theta) := \mathbf{R}_0(\theta) \mathbf{t}_2^d(\gamma(\theta)), \quad \mathbf{t}_3(\theta) :=  \mathbf{R}_0(\theta) \mathbf{t}_3^d(\theta), \quad \mathbf{t}_4(\theta) := \mathbf{R}_0(\theta) \mathbf{t}_4^r
\end{aligned}
\end{equation}\
satisfy the planarity conditions
\begin{equation}
\begin{aligned}\label{eq:MechConstruct2}
\mathbf{e}_3 \cdot (\mathbf{t}_1(\theta) - \mathbf{t}_3(\theta))  = 0, \quad \mathbf{e}_3 \cdot (\mathbf{t}_2(\theta) - \mathbf{t}_4(\theta))  = 0, \quad \mathbf{e}_3 \cdot \big[ (\mathbf{t}_1(\theta) - \mathbf{t}_3(\theta)) \times (\mathbf{t}_2(\theta) - \mathbf{t}_4(\theta)) \big] > 0
\end{aligned}
\end{equation}
for all $\theta \in (\theta^{-}, \theta^+)$ and the initial condition $\mathbf{t}_i(0) = \mathbf{t}_i^r$ for all $i = 1,\ldots,4.$ Furthermore, since $ \mathbf{t}_2^d(\gamma(\theta))$ and $\mathbf{t}_3^d(\theta)$ are analytic by Lemma \ref{ImplicitLemma}, $\mathbf{R}_0(\theta)$ can be chosen to be analytic without loss of generality. The vector fields $\mathbf{t}_i(\theta)$, $i = 1,\ldots, 4$, are the deformed crease vectors introduced in the proposition.
\begin{proof}[Proof of Proposition \ref{mechKinCellProp}.] $\mathbf{t}_1(\theta), \ldots, \mathbf{t}_4(\theta)$ are analytic and possess the desired initial conditions and planarity conditions by construction. To complete the proof, we simply observe from  Eqs.\;(\ref{eq:MechConstruct0}) and (\ref{eq:MechConstruct1}) that $\mathcal{M} = \{ (\mathbf{Q}\mathbf{R}_0^T(\theta) \mathbf{t}_1(\theta), \ldots, \mathbf{Q}\mathbf{R}_0^T(\theta) \mathbf{t}_4(\theta)) \colon  \mathbf{Q} \in SO(3), \theta \in (\theta^{-}, \theta^+)\} = \{ (\mathbf{R} \mathbf{t}_1(\theta), \ldots, \mathbf{R} \mathbf{t}_4(\theta)) \colon  \mathbf{R} \in SO(3), \theta \in (\theta^{-}, \theta^+)\}$, as desired.  
\end{proof}

We now turn to the proof of Lemma \ref{SignsLemma}, which establishes the monotonicity of the functions $|\mathbf{u}(\theta)|$ and $|\mathbf{v}(\theta)|$ on $(\theta^{-}, \theta^+)$ for $\mathbf{u}(\theta) := \mathbf{t}_1(\theta) - \mathbf{t}_3(\theta)$, $\mathbf{v}(\theta) := \mathbf{t}_2(\theta) - \mathbf{t}_4(\theta)$ and $\mathbf{t}_1(\theta), \ldots, \mathbf{t}_4(\theta)$ from Eq.\;(\ref{eq:MechConstruct1}).  
\begin{proof}[Proof of Lemma \ref{SignsLemma}] 
The desired monotonicity follows because we only consider mechanism deformations that preserve the mountain-valley assignment. 
In particular, the definitions above give that $\mathbf{u}(\theta)  = \mathbf{R}_0(\theta)( \mathbf{t}_1^r -  \mathbf{R}_{\mathbf{t}^r_4}(\theta) \mathbf{t}_3^r)$. Thus,  
\begin{equation}
\begin{aligned}
\mathbf{u}'(\theta) \cdot \mathbf{u}(\theta) &= \frac{1}{2}\frac{d}{d\theta}\big(  | \mathbf{u}(\theta)|^2 \big)= \frac{1}{2}\frac{d}{d\theta} \Big[ \big( \mathbf{t}_1^r -  \mathbf{R}_{\mathbf{t}^r_4}(\theta) \mathbf{t}_3^r\big) \cdot \big(\mathbf{t}_1^r -  \mathbf{R}_{\mathbf{t}^r_4}(\theta) \mathbf{t}_3^r\big)\Big] =  - \frac{d}{d \theta}\Big[ \mathbf{t}_1^r \cdot  \mathbf{R}_{\mathbf{t}^r_4}(\theta) \mathbf{t}_3^r \Big]
\\ &= -\Big[|\mathbf{t}_4^r|^{-1} \mathbf{t}_1^r \cdot  \mathbf{R}_{\mathbf{t}^r_4}(\theta) (\mathbf{t}_4^r \times  \mathbf{t}_3^r) \Big]  = -|\mathbf{t}_4^r|^{-1} \mathbf{t}_1^d \cdot ( \mathbf{t}_4^d \times \mathbf{t}_3^d(\theta)) = -|\mathbf{t}_4^r|^{-1} \mathbf{t}_1(\theta) \cdot ( \mathbf{t}_4(\theta) \times \mathbf{t}_3(\theta)) \neq 0
\end{aligned}
\end{equation}
 on $(\theta^-, \theta^+)$ due to the mountain-valley inequalities in Eq.\;(\ref{eq:MVPreserve}). For the other case, a similar set of identities furnishes $\mathbf{v}'(\theta) \cdot \mathbf{v}(\theta) = - \gamma'(\theta) |\mathbf{t}_1^r|^{-1}[ \mathbf{t}_4(\theta) \cdot ( \mathbf{t}_1(\theta) \times \mathbf{t}_2(\theta))] \neq 0$ on $(\theta^{-}, \theta^+)$, where the non-vanishing assertion follows from Eq.\;(\ref{eq:gammaPrime}) and  Eq.\;(\ref{eq:MVPreserve}).  Since $\mathbf{u}'(\theta) \cdot \mathbf{u}(\theta)$ and $\mathbf{v}'(\theta) \cdot \mathbf{v}(\theta)$ are continuous functions that do not vanish on $(\theta^{-}, \theta^+)$, they are either strictly positive or strictly negative.
\end{proof}

\subsection{Mechanism kinematics of neighboring cells}\label{ssec:MechProofs2}
We now construct a mechanism motion of two neighboring unit cells, where the motion is assumed to contain the reference state and to preserve the mountain-valley assignment. 

Clearly, each cell must   deform along its  respective mechanism motion. Thus, as a necessary condition, each cell's kinematics are described by a parameterization of the deformed creases  given by Lemma \ref{ImplicitLemma}, up to an overall rotation. Specifically, we can assume that the first cell deforms via
\begin{equation}
\begin{aligned}
(\mathbf{t}_1^r, \mathbf{t}_2^r, \mathbf{t}_3^r, \mathbf{t}_4^r) \mapsto (\mathbf{t}_1^r, \mathbf{t}_2^d(\gamma(\theta)), \mathbf{t}_3^d(\theta), \mathbf{t}_4^r)  \quad \text{ for some } \theta \in (\theta^{-}, \theta^+).
\end{aligned}
\end{equation}
Then, the second cell must deform as 
\begin{equation}
\begin{aligned}
(\mathbf{t}_1^r, \mathbf{t}_2^r, \mathbf{t}_3^r, \mathbf{t}_4^r) \mapsto \mathbf{R} (\mathbf{t}_1^r, \mathbf{t}_2^d(\gamma(\tilde{\theta})), \mathbf{t}_3^d(\tilde{\theta}), \mathbf{t}_4^r)  \quad \text{ for some } \tilde{\theta} \in (\theta^{-}, \theta^+), \mathbf{R} \in SO(3).
\end{aligned}
\end{equation}
The two cells are subject to the compatibility conditions for joining them together at their boundaries:
\begin{equation}
\begin{aligned}\label{eq:CompatNeighbor}
&(\text{$\mathbf{u}_0$-neighbors:}) &&  \mathbf{R} \mathbf{t}_2^d(\gamma(\tilde{\theta})) =  \mathbf{t}^d_2(\gamma(\theta)) \quad \text{ and } \quad  \mathbf{R} \mathbf{t}_4^r =  \mathbf{t}_4^r,  \\
&(\text{$\mathbf{v}_0$-neighbors:})&&  \mathbf{R} \mathbf{t}_3^d(\tilde{\theta}) =  \mathbf{t}_3^d(\theta) \quad \text{ and } \quad \mathbf{R} \mathbf{t}_1^r = \mathbf{t}_1^r , 
\end{aligned}
\end{equation}
where the notion of $\mathbf{u}_0$ and  $\mathbf{v}_0$-neighbors is as in Fig.\;\ref{Fig:idepat}. A necessary condition for such compatibility is obtained by dotting the two equations together, which gives that
\begin{equation}
\begin{aligned}
&(\text{$\mathbf{u}_0$-neighbors:}) &&  f_{\mathbf{u}_0}(\tilde{\theta}, \theta) :=  \mathbf{t}_2^d(\gamma(\tilde{\theta})) \cdot \mathbf{t}_4^r -  \mathbf{t}^d_2(\gamma(\theta)) \cdot \mathbf{t}_4^r = 0, \\
&(\text{$\mathbf{v}_0$-neighbors:}) &&  f_{\mathbf{v}_0}(\tilde{\theta}, \theta) :=  \mathbf{t}_3^d(\theta) \cdot \mathbf{t}_1^r -  \mathbf{t}^d_3(\theta) \cdot \mathbf{t}_1^r = 0.
\end{aligned}
\end{equation}  
\begin{lemma}\label{ImplicitLemma2}
$\tilde{\theta}(\theta) = \theta$ is the only continuously differentiable function satisfying $\tilde{\theta}(0)  = 0$ and  $f_{\mathbf{u}_0}(\tilde{\theta}(\theta), \theta) = 0$ on $(\theta^{-}, \theta^+)$.  The same result holds for  $f_{\mathbf{v}_0}(\tilde{\theta}, \theta)$.
\end{lemma}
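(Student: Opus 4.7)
The plan is to recognize that $f_{\mathbf{u}_0}$ is really a function of a single effective variable. Writing $g(\theta) := \mathbf{t}_2^d(\gamma(\theta)) \cdot \mathbf{t}_4^r$, we have $f_{\mathbf{u}_0}(\tilde{\theta}, \theta) = g(\tilde{\theta}) - g(\theta)$, so the constraint $f_{\mathbf{u}_0}(\tilde{\theta}(\theta), \theta) = 0$ reads $g(\tilde{\theta}(\theta)) = g(\theta)$. Since $\tilde{\theta}(\theta) = \theta$ obviously solves this, uniqueness reduces to showing that $g$ is injective on $(\theta^-, \theta^+)$.

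I would establish injectivity by showing $g'$ does not vanish on $(\theta^-, \theta^+)$. By the chain rule and the standard identity $\frac{d}{d\gamma}\mathbf{R}_{\mathbf{t}_1^r}(\gamma)\mathbf{v} = |\mathbf{t}_1^r|^{-1} \mathbf{t}_1^r \times \mathbf{R}_{\mathbf{t}_1^r}(\gamma)\mathbf{v}$ (the same computation used in Eq.\;(\ref{eq:partialIdentities})),
$$g'(\theta) = \gamma'(\theta)\, |\mathbf{t}_1^r|^{-1}\, \mathbf{t}_4^d \cdot \bigl(\mathbf{t}_1^d \times \mathbf{t}_2^d(\gamma(\theta))\bigr),$$
recalling that $\mathbf{t}_1^d = \mathbf{t}_1^r$ and $\mathbf{t}_4^d = \mathbf{t}_4^r$ from Eq.\;(\ref{eq:fixVectors}). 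The triple-product factor is non-zero by the last component of the mountain-valley inequality $\mathbf{f}_{\mathrm{mv}}(\gamma(\theta),\theta) > \mathbf{0}$ from Eq.\;(\ref{eq:MVAssign}), and $\gamma'(\theta) \neq 0$ follows from the explicit formula in Eq.\;(\ref{eq:gammaPrime}) combined with the first and second components of the same inequality. Thus $g'$ is non-vanishing and $g$ is strictly monotone, hence injective on $(\theta^-, \theta^+)$.

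A short continuation argument then closes the proof. Because $\tilde{\theta}$ is continuous with $\tilde{\theta}(0) = 0$, the set $I := \{\theta \in (\theta^-, \theta^+) : \tilde{\theta}(\theta) \in (\theta^-, \theta^+)\}$ is an open interval containing $0$, and on $I$ injectivity of $g$ forces $\tilde{\theta}(\theta) = \theta$. If an endpoint of $I$ were strictly inside $(\theta^-, \theta^+)$, continuity would give $\tilde{\theta} = \theta$ there as well, contradicting maximality of $I$; hence $I = (\theta^-, \theta^+)$ and $\tilde{\theta} \equiv \theta$. The identical argument, with the roles of the pairs $(\mathbf{t}_1^r, \mathbf{t}_2^r)$ and $(\mathbf{t}_3^r, \mathbf{t}_4^r)$ interchanged, handles $f_{\mathbf{v}_0}$. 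The only delicate point is keeping track of which mountain-valley inequality from Eq.\;(\ref{eq:MVAssign}) justifies each non-vanishing claim; this is mechanical rather than a genuine obstacle.
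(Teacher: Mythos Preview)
Your proof is correct and close in spirit to the paper's, though the framing differs. The paper invokes the implicit function theorem directly on $f_{\mathbf{u}_0}$: it computes $\partial_{\tilde{\theta}} f_{\mathbf{u}_0}(\tilde{\theta},\theta)$, shows it is nonzero on $(\theta^-,\theta^+)$ via the same mountain--valley inequalities and the formula for $\gamma'$ that you use, and then concludes that the IFT branch through the origin is unique and must coincide with the explicit solution $\tilde{\theta}(\theta)=\theta$. You instead exploit the separable structure $f_{\mathbf{u}_0}(\tilde{\theta},\theta)=g(\tilde{\theta})-g(\theta)$ and reduce uniqueness to strict monotonicity of $g$; the derivative you compute for $g'$ is exactly the paper's $\partial_{\tilde{\theta}} f_{\mathbf{u}_0}$, so the analytic core is identical. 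Your route is marginally more elementary (no IFT machinery), and you are a bit more explicit about the continuation step ensuring $\tilde{\theta}$ stays inside $(\theta^-,\theta^+)$, whereas the paper's phrasing ``the interval evidently contains $(\theta^-,\theta^+)$'' leaves that step implicit. One small imprecision: your set $I$ is open but not a priori an interval; the argument should run on the connected component of $I$ containing $0$, which is cosmetic.
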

\begin{proof}
We focus on $f_{\mathbf{u}_0}(\tilde{\theta}, \theta)$, since the result for $f_{\mathbf{v}_0}(\tilde{\theta}, \theta)$ follows by the same argument. Again we invoke the implicit function theorem. Observe that $f_{\mathbf{v}_0}(0,0) =0$ and  that 
\begin{equation}
\begin{aligned}
\partial_{\tilde{\theta}} f_{\mathbf{u}_0}(\tilde{\theta},\theta) = \gamma'(\tilde{\theta}) |\mathbf{t}_1^r|^{-1}\mathbf{R}_{\mathbf{t}_1^r}(\gamma(\tilde{\theta})) (\mathbf{t}_1^r \times \mathbf{t}_2^r) \cdot \mathbf{t}_4^r =  \gamma'(\tilde{\theta}) |\mathbf{t}_1^r|^{-1} \big(\mathbf{t}_1^d \times \mathbf{t}_2^d(\gamma(\tilde{\theta}))\big) \cdot \mathbf{t}_4^r.
\end{aligned}
\end{equation}
Since $\mathbf{f}_{\text{mv}}(\gamma(\tilde{\theta}),\tilde{\theta}) > \mathbf{0}$ for all $\tilde{\theta} \in (\theta^{-}, \theta^+)$ and since $\gamma'(\tilde{\theta})$ satisfies Eq.\;(\ref{eq:gammaPrime}),  $\partial_{\tilde{\theta}} f_{\mathbf{u}_0}(\tilde{\theta},\theta) \neq 0$ for all $\tilde{\theta} \in (\theta^{-}, \theta^+)$ and all $\theta$.  Applying the implicit function theorem at the origin $(\tilde{\theta}, \theta) = \mathbf{0}$ gives a unique continuously differentiable $\tilde{\theta}(\theta)$ that solves $f_{\mathbf{u}_0}(\tilde{\theta}(\theta), \theta) = 0$ and $\tilde{\theta}(0) =0$ on some open interval containing the origin. The interval evidently contains $(\theta^{-}, \theta^+)$ because $\partial_{\tilde{\theta}} f_{\mathbf{u}_0}(\tilde{\theta},\theta)$ does not vanish on the latter. Since $f_{\mathbf{u}_0}(\theta, \theta) = 0$ for all $(\theta^{-}, \theta^+)$, $\tilde{\theta}(\theta) = \theta$ is the unique solution, as desired. 
\end{proof}

This lemma shows that the crease-actuation of two neighboring unit cells is the same for any mechanism motion containing the reference configuration.  Setting $\tilde{\theta} = \theta$, the compatibility conditions in Eq.\;(\ref{eq:CompatNeighbor}) become
\begin{equation}
\begin{aligned}\label{eq:finalNeighborCompat}
&(\text{$\mathbf{u}_0$-neighbors:}) &&  \mathbf{R} \mathbf{t}_2^d(\gamma(\theta)) =  \mathbf{t}^d_2(\gamma(\theta)) \quad \text{ and } \quad  \mathbf{R} \mathbf{t}_4^r =  \mathbf{t}_4^r,  \\
&(\text{$\mathbf{v}_0$-neighbors:})&& \mathbf{R} \mathbf{t}_3^d(\theta) =  \mathbf{t}_3^d(\theta) \quad \text{ and } \quad \mathbf{R} \mathbf{t}_1^r = \mathbf{t}_1^r. 
\end{aligned}
\end{equation}
Next, we show that the rotation in these conditions is the identity. This means that the motions of the two unit cells are one and the same, and can be parameterized by the actuation of a single crease. 
\begin{lemma}\label{ImplicitLemma3}
Eq.\;(\ref{eq:finalNeighborCompat}) holds  for $\theta \in (\theta^{-}, \theta^+)$, $\mathbf{R} \in SO(3)$ if and only if $\mathbf{R} = \mathbf{I}$. 
\end{lemma}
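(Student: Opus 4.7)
The plan is to reduce the lemma to the standard fact that a rotation $\mathbf{R}\in SO(3)$ fixing two linearly independent vectors must equal $\mathbf{I}$ (equivalently, its axis of rotation would have to be two-dimensional). The converse implication $\mathbf{R}=\mathbf{I}\Rightarrow$ Eq.\;(\ref{eq:finalNeighborCompat}) is trivial, so I will focus only on the forward direction. In each of the two cases in Eq.\;(\ref{eq:finalNeighborCompat}), the rotation $\mathbf{R}$ is required to fix two specific vectors, so it suffices to verify that those two vectors are linearly independent for every $\theta\in(\theta^-,\theta^+)$.

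For the $\mathbf{u}_0$-neighbor case, the fixed vectors are $\mathbf{t}_2^d(\gamma(\theta))$ and $\mathbf{t}_4^r$. I would invoke the mountain-valley preservation condition $\mathbf{f}_{\mathrm{mv}}(\gamma(\theta),\theta)>\mathbf{0}$ from Eq.\;(\ref{eq:MVAssign}), whose fourth component implies
\[
  \mathbf{t}_4^r\cdot\bigl(\mathbf{t}_1^r\times\mathbf{t}_2^d(\gamma(\theta))\bigr)\neq 0
  \quad\text{for all }\theta\in(\theta^-,\theta^+),
\]
since $\mathbf{t}_4^d=\mathbf{t}_4^r$ and $\mathbf{t}_1^d=\mathbf{t}_1^r$ in the parameterization fixed by Eq.\;(\ref{eq:fixVectors}). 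A non-vanishing triple product immediately forces $\mathbf{t}_2^d(\gamma(\theta))$ and $\mathbf{t}_4^r$ to be linearly independent, so $\mathbf{R}=\mathbf{I}$.

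For the $\mathbf{v}_0$-neighbor case, the fixed vectors are $\mathbf{t}_3^d(\theta)$ and $\mathbf{t}_1^r$. The analogous entry of $\mathbf{f}_{\mathrm{mv}}(\gamma(\theta),\theta)>\mathbf{0}$ yields $\mathbf{t}_3^d(\theta)\cdot(\mathbf{t}_4^r\times\mathbf{t}_1^r)\neq 0$, which again implies linear independence of the two fixed vectors and hence $\mathbf{R}=\mathbf{I}$.

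I expect no serious obstacle: the proof is essentially a one-line application of the elementary fact about $SO(3)$, with the only content being the observation that the mountain-valley inequalities in Eq.\;(\ref{eq:MVAssign}) — already known to hold on $(\theta^-,\theta^+)$ thanks to Lemma~\ref{ImplicitLemma} — supply exactly the non-degenerate triple products needed to rule out collinearity of the two fixed vectors in each case. The only mild subtlety is being careful that the relevant entries of $\mathbf{f}_{\mathrm{mv}}$ are the ones that involve the pair of vectors fixed by $\mathbf{R}$; matching indices correctly is all that is required.
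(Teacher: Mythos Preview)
Your proposal is correct and takes essentially the same approach as the paper: show that the two vectors fixed by $\mathbf{R}$ are linearly independent via a non-vanishing triple product from the mountain-valley inequalities, then conclude $\mathbf{R}=\mathbf{I}$. The only cosmetic difference is which entry of $\mathbf{f}_{\mathrm{mv}}$ is invoked --- the paper uses $\mathbf{t}_2^d(\gamma(\theta))\cdot(\mathbf{t}_3^d(\theta)\times\mathbf{t}_4^r)\neq 0$ for the $\mathbf{u}_0$-case while you use the fourth component --- but either triple product suffices.
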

\begin{proof}
We focus on the $\mathbf{u}_0$-neighbor case; the other case has an identical proof. Since $\theta \in (\theta^{-}, \theta^+)$, the deformed creases satisfy $\mathbf{t}_2^d(\gamma(\theta)) \cdot (\mathbf{t}_3^d(\theta) \times \mathbf{t}_4^r ) \neq 0$, i.e., they correspond to partly folded mountain-valley assignments. In particular, $\mathbf{t}_4^r$ and $\mathbf{t}_2^d(\gamma(\theta))$ are linearly independent. The only rotation in $SO(3)$ that maps two linearly independent vectors to themselves is $\mathbf{R} = \mathbf{I}$. 
\end{proof}

\subsection{Mechanism kinematics of the full pattern}\label{ssec:MechProofs3}

We now prove Proposition \ref{MechProp}. Specifically, we  construct the mechanism motion based on the crease vectors $\mathbf{t}_1(\theta) = \mathbf{R}_1(\theta) \mathbf{t}_i^r, \ldots, \mathbf{t}_4(\theta) = \mathbf{R}_4(\theta) \mathbf{t}_4^r$,  $\Omega_{\text{cell}}^{\theta}$ and  $\mathcal{T}_{\text{ori}}^{\theta}$ introduced in Eqs.\;(\ref{eq:MechConstruct3}-\ref{eq:MechConstructFinal}). We  then use the uniqueness arguments in Lemma \ref{ImplicitLemma2} and \ref{ImplicitLemma3} to conclude that this is the only mechanism motion that preserves the mountain-valley assignment of the reference configuration. 
\begin{proof}[Proof of Proposition \ref{MechProp}]
We first verify the existence of a continuous and rigid deformation. Note that, for any $\theta \in (\theta^{-}, \theta^+)$, the deformation $\mathbf{y}_{\theta}(\mathbf{x}) = \mathbf{R}_i(\theta) \mathbf{x}$ for $\mathbf{x} \in  \mathcal{P}_i$, $i = 1,\ldots 4$ is a well-defined, rigid and continuous deformation of $\Omega_{\text{cell}}$ with $\mathbf{y}_{\theta}(\Omega_{\text{cell}}) = \Omega_{\text{cell}}^{\theta}$  by Lemma \ref{ImplicitLemma} and the definitions above.   Let $\mathbf{y}^{R}_{\theta}(\mathbf{x}) := \mathbf{y}_{\theta}(\mathbf{x} - \mathbf{u}_0) + \mathbf{u}(\theta)$ for all $\mathbf{x} \in \Omega_{\text{cell}} + \mathbf{u}_0$.  Observe that $\mathbf{t}_1^r$ is on the boundary of  both $\Omega_{\text{cell}}$ and  $\Omega_{\text{cell}} + \mathbf{u}_0$. We  have $\mathbf{y}_{\theta}^R(\mathbf{t}_1^r) = \mathbf{y}_{\theta}( \mathbf{t}_3^r) + \mathbf{u}(\theta) = \mathbf{t}_3(\theta) + \mathbf{u}(\theta) = \mathbf{t}_1(\theta)$ and $\mathbf{y}_{\theta}(\mathbf{t}_1^r) = \mathbf{t}_1(\theta)$, verifying continuity at this boundary point.  We can verify continuity at the entire boundary $ (\Omega_{\text{cell}} + \mathbf{u}_0) \cap \Omega_{\text{cell}}$ in a similar fashion. We also clearly have $\mathbf{y}^R_{\theta}(\Omega_{\text{cell}}+ \mathbf{u}_0) = \Omega_{\text{cell}}^{\theta} + \mathbf{u}(\theta)$. Using the same ideas, we can verify that $\mathbf{y}_{\theta}^{U}(\mathbf{x}) := \mathbf{y}_{\theta}(\mathbf{x} -  \mathbf{v}_0) + \mathbf{v}(\theta)$, $\mathbf{x} \in \Omega_{\text{cell}} + \mathbf{v}_0$ connects perfectly with $\mathbf{y}_{\theta}(\mathbf{x})$ at the boundary $\Omega_{\text{cell}} \cap ( \Omega_{\text{cell}} + \mathbf{v}_0)$ and that $\mathbf{y}_{\theta}^{U}(\Omega_{\text{cell}} + \mathbf{v}_0) = \Omega_{\text{cell}}^{\theta} + \mathbf{v}(\theta)$. By these results, there is a continuous function $\mathbf{y}_{\theta} \colon \mathcal{T}_{\text{cell}} \rightarrow \mathbb{R}^3$ that satisfies 
\begin{equation}
\begin{aligned}\label{eq:MechConstructFinalFinal}
\mathbf{y}_{\theta}(\mathbf{x}) = \mathbf{y}_{\theta}(\mathbf{x} - i \mathbf{u}_0 - j \mathbf{v}_0) + i \mathbf{u}(\theta) + j \mathbf{v}(\theta), \quad  \mathbf{x} \in \Omega_{\text{cell}} + i \mathbf{u}_0 + j \mathbf{v}_0
\end{aligned}
\end{equation}
for all $i, j \in \mathbb{Z}$ with $\mathbf{y}_{\theta}(\mathcal{T}_{\text{cell}}) = \mathcal{T}_{\text{cell}}^{\theta}$, as desired.  

Next, we show that the family of deformations $\mathbf{y}_{\theta}(\mathcal{T}_{\text{cell}}), \theta \in (\theta^{-}, \theta^+)$ is a mechanism motion. Indeed, each $\theta$-dependent deformation  here is a rigid deformation since it is built by appropriately repeating a rigid deformation of $\Omega_{\text{cell}}$. Furthermore, the parameterization is analytic in $\theta$ since it is analytic in $\mathbf{R}_0(\theta)$ and $(\mathbf{t}_1^r, \mathbf{t}^d_2(\gamma(\theta)), \mathbf{t}_3^d(\theta), \mathbf{t}_4^r)$, which are themselves analytic in $\theta$ by Lemma \ref{ImplicitLemma} and the definitions in Eq.\;(\ref{eq:MechConstruct1}-\ref{eq:MechConstructFinalFinal}). 

Finally, it follows from by iteratively applying Lemmas \ref{ImplicitLemma2} and \ref{ImplicitLemma3} that this mechanism motion  is the unique one that contains the reference configuration and does not change the mountain-valley assignment. 
\end{proof}

\section{Solving the auxiliary PDE}\label{sec:ExistencePDE}

We prove the following result:
\begin{proposition}\label{AuxPDEProp}
There are vector and scalar fields $\boldsymbol{\omega}(\mathbf{x})$ and $ \xi(\mathbf{x})$ smoothly defined on a neighborhood of $\overline{\Omega}$ and solving the PDE system 
 \begin{equation}
\begin{aligned}\label{eq:theAuxPDEFinal}
&\partial_{\mathbf{u}_0} \Big(\mathbf{R}_{\emph{eff}}(\mathbf{x}) \big[ \boldsymbol{\omega}(\mathbf{x}) \times \mathbf{v}(\theta(\mathbf{x})) + \xi(\mathbf{x}) \mathbf{v}'(\theta(\mathbf{x})) \big] \Big)  + \frac{1}{2} \partial_{\mathbf{u}_0} \partial_{\mathbf{v}_0} \Big( \mathbf{R}_{\emph{eff}}(\mathbf{x}) \big[ \mathbf{t}_1(\theta(\mathbf{x})) + \mathbf{t}_3(\theta(\mathbf{x})) \big]\Big)  \\
&\quad = \partial_{\mathbf{v}_0} \Big(\mathbf{R}_{\emph{eff}}(\mathbf{x}) \big[ \boldsymbol{\omega}(\mathbf{x}) \times \mathbf{u}(\theta(\mathbf{x})) + \xi(\mathbf{x}) \mathbf{u}'(\theta(\mathbf{x})) \big] \Big)  + \frac{1}{2} \partial_{\mathbf{u}_0} \partial_{\mathbf{v}_0} \Big( \mathbf{R}_{\emph{eff}}(\mathbf{x}) \big[ \mathbf{t}_2(\theta(\mathbf{x})) + \mathbf{t}_4(\theta(\mathbf{x})) \big]\Big)
\end{aligned}
\end{equation}
on $\Omega$, under the assumptions of Theorem \ref{MainTheorem}.
\end{proposition}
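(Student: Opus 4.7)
The plan is to view Eq.\;(\ref{eq:theAuxPDEFinal}) as a linear first-order PDE for the $\mathbb{R}^4$-valued unknown $(\boldsymbol{\omega}(\mathbf{x}), \xi(\mathbf{x}))$ with known, smoothly varying coefficients, then solve it by standard PDE techniques whose flavor is dictated by the sign of the Poisson ratio $\nu(\theta)$. The system is underdetermined (three scalar equations for four scalar unknowns), and this one-parameter gauge freedom will be exploited to reduce to a determined system.

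First I would multiply the equation by $\mathbf{R}_{\text{eff}}^T(\mathbf{x})$ and expand all derivatives of $\mathbf{R}_{\text{eff}}$ using the Pfaff system in Eq.\;(\ref{eq:firstSurface}). Collecting terms yields a system of the form $\mathcal{L}(\mathbf{x})(\boldsymbol{\omega},\xi)^T = \mathbf{f}(\mathbf{x})$ whose coefficients and right-hand side are smooth (respectively analytic) on a neighborhood of $\overline{\Omega}$. The principal symbol of $\mathcal{L}(\mathbf{x})$ at Fourier variable $\eta = (\eta_1,\eta_2)$ is the linear map $(\boldsymbol{\omega},\xi) \mapsto \boldsymbol{\omega}\times\mathbf{a}(\eta) + \xi\,\mathbf{b}(\eta)$, where $\mathbf{a}(\eta) := \eta_1\mathbf{v}(\theta(\mathbf{x})) - \eta_2\mathbf{u}(\theta(\mathbf{x}))$ and $\mathbf{b}(\eta) := \eta_1\mathbf{v}'(\theta(\mathbf{x})) - \eta_2\mathbf{u}'(\theta(\mathbf{x}))$. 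Its image is $\mathbf{a}(\eta)^{\perp} + \mathrm{span}\,\mathbf{b}(\eta)$, so the symbol is surjective precisely when $\mathbf{a}(\eta)\cdot\mathbf{b}(\eta)\neq 0$. Using $\mathbf{u}\cdot\mathbf{v} = \mathbf{u}_0\cdot\mathbf{v}_0$ (equivalently $\mathbf{u}\cdot\mathbf{v}' + \mathbf{u}'\cdot\mathbf{v} = 0$), a direct computation gives
\begin{equation*}
\mathbf{a}(\eta)\cdot\mathbf{b}(\eta) \;=\; \eta_1^2\,\bigl(\mathbf{v}(\theta)\cdot\mathbf{v}'(\theta)\bigr) \;+\; \eta_2^2\,\bigl(\mathbf{u}(\theta)\cdot\mathbf{u}'(\theta)\bigr).
\end{equation*}
By Lemma \ref{SignsLemma} both coefficients are non-zero, and from the definition of $\nu(\theta)$ in Eq.\;(\ref{eq:PoissonsRatioDesign}) their signs agree for one sign of $\nu$ (so the form is definite and the symbol is surjective for every $\eta\neq 0$) and disagree for the other (so real characteristic directions appear). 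This sign dichotomy is the algebraic origin of the analyticity hypothesis.

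With the classification in hand I would proceed case by case. In the elliptic regime, where the symbol is surjective for all $\eta\neq 0$, I would impose a single scalar gauge condition (for example, project $\boldsymbol{\omega}$ onto a suitable $\theta$-dependent direction) to convert $\mathcal{L}(\mathbf{x})$ into an invertible first-order elliptic operator, then invoke standard existence/regularity theory on a smooth domain $\Omega'\supset\overline{\Omega}$ --- either via pseudo-differential right-inversion of the symbol or an appropriate Hodge-type decomposition. In the non-elliptic regime, the assumed analyticity of $\theta$, $\boldsymbol{\omega}_{\mathbf{u}_0}$ and $\boldsymbol{\omega}_{\mathbf{v}_0}$ allows me to use the gauge freedom to put the system in Cauchy-Kovalevskaya normal form along a non-characteristic analytic arc inside $\Omega$ and obtain a local analytic solution.

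The main obstacle is globalization in the non-elliptic case, where Cauchy-Kovalevskaya only yields a solution near the chosen initial arc. Extending it to a neighborhood of $\overline{\Omega}$ requires patching via analytic continuation along the characteristic flow, checking that the characteristics sweep out $\Omega$ without tangencies or focal singularities and that the gauge stays non-degenerate throughout. Simple connectivity of $\Omega$ controls multi-valuedness of the continuation, so with a careful choice of non-characteristic arc the patching can be carried out in a finite cover; a standard smooth-extension theorem then promotes the solution to one defined on a neighborhood of $\overline{\Omega}$, completing the proof.
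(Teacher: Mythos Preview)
Your symbol computation and the elliptic/hyperbolic dichotomy are correct, but you have the role of the analyticity hypothesis backwards. From $\nu(\theta) = -(|\mathbf{v}|^2/|\mathbf{u}|^2)(\mathbf{u}\cdot\mathbf{u}')/(\mathbf{v}\cdot\mathbf{v}')$, the assumption $\nu < 0$ means $\mathbf{u}\cdot\mathbf{u}'$ and $\mathbf{v}\cdot\mathbf{v}'$ have the \emph{same} sign, so your quadratic form $\eta_1^2(\mathbf{v}\cdot\mathbf{v}') + \eta_2^2(\mathbf{u}\cdot\mathbf{u}')$ is definite and the symbol is surjective for every $\eta\neq 0$: this is the \emph{elliptic} case, and it is precisely the one in which Theorem~\ref{MainTheorem} assumes analyticity. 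The hyperbolic case $\nu>0$ is handled under smoothness alone.

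This swap breaks both branches of your plan. In the elliptic branch, a gauge-fixed determined elliptic operator on a bounded domain can have nontrivial kernel and cokernel, so ``standard existence/regularity'' or ``right-inversion of the symbol'' does not by itself produce a solution to an inhomogeneous equation; the paper instead reduces (via a Helmholtz--Hodge step $\mathbf{A}=\nabla_0\mathbf{p}+\nabla_0^{\perp}\mathbf{q}$ and a further potential ansatz) to a diagonal second-order system, leaves the Dirichlet data free, and chooses them to satisfy the finitely many Fredholm solvability conditions. Verifying that these conditions can be met requires showing certain boundary traces of adjoint kernel elements are linearly independent, which is done via Holmgren's unique continuation --- and that is where analyticity of the coefficients is actually consumed. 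Your outline has no mechanism for this cokernel obstruction. In the hyperbolic branch, Cauchy--Kovalevskaya is both unnecessary and unavailable (only smoothness is assumed there): after the same reduction one obtains a genuinely hyperbolic operator $-\partial_1^2 + \partial_2(|\sigma|\partial_2)$, and a global initial/boundary value problem on a box is solved by Galerkin approximation together with a Gr\"onwall energy estimate. The globalization obstacle you identify is an artifact of applying a local analytic method to a problem that has a standard global smooth theory.
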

As a reminder, there are two cases to consider based on the Poisson's ratio of the design $\nu(\theta)$ in Eq.\;(\ref{eq:PoissonsRatioDesign}). If  $\nu(\theta)$ is positive, the fields $\theta(\mathbf{x})$, $\boldsymbol{\omega}_{\mathbf{u}_0}(\mathbf{x})$ and  $\boldsymbol{\omega}_{\mathbf{v}_0}(\mathbf{x})$ are assumed to be smooth on a neighborhood of  $\overline{\Omega}$. If $\nu(\theta)$ is negative, then $\theta(\mathbf{x})$, $\boldsymbol{\omega}_{\mathbf{u}_0}(\mathbf{x})$ and $\boldsymbol{\omega}_{\mathbf{v}_0}(\mathbf{x})$ are assumed to be analytic instead. In either case, these fields solve the PDEs in Eqs.\;(\ref{eq:effPDE1}-\ref{eq:effPDE2}) on $\Omega$ and $\mathbf{R}_{\text{eff}}(\mathbf{x})$ is the unique rotation field solving the Pfaff system in Proposition \ref{firstProp}. We break the proof of Proposition \ref{AuxPDEProp} into several steps.

\subsection{Helmholtz--Hodge reformulation}

First, we make some definitions to rewrite the PDE in Eq.\;(\ref{eq:theAuxPDEFinal}) in a helpful way. Label the inhomogeneous terms from the PDE as 
\begin{equation}
\begin{aligned}
\mathbf{f}(\mathbf{x}) :=  \frac{1}{2} \partial_{\mathbf{u}_0} \partial_{\mathbf{v}_0} \Big( \mathbf{R}_{\text{eff}}(\mathbf{x}) \big[ \mathbf{t}_1(\theta(\mathbf{x})) + \mathbf{t}_3(\theta(\mathbf{x})) \big]\Big)  - \frac{1}{2} \partial_{\mathbf{u}_0} \partial_{\mathbf{v}_0} \Big( \mathbf{R}_{\text{eff}}(\mathbf{x}) \big[ \mathbf{t}_2(\theta(\mathbf{x})) + \mathbf{t}_4(\theta(\mathbf{x})) \big]\Big)
\end{aligned}
\end{equation}
and note $\mathbf{f}(\mathbf{x})$ is smooth by assumption. Next, for any tensor field $\mathbf{A}(\mathbf{x})\in \mathbb{R}^{3\times2}$, define the divergence and curl-like differential operators  
\begin{equation}
\begin{aligned}
\nabla_0 \cdot \mathbf{A}(\mathbf{x}) := \partial_{\mathbf{u}_0} \mathbf{A}(\mathbf{x}) \tilde{\mathbf{e}}_1 +  \partial_{\mathbf{v}_0} \mathbf{A}(\mathbf{x}) \tilde{\mathbf{e}}_2, \quad \nabla_0^{\perp} \cdot \mathbf{A}(\mathbf{x}) :=  \partial_{\mathbf{v}_0} \mathbf{A}(\mathbf{x}) \tilde{\mathbf{e}}_1 -  \partial_{\mathbf{u}_0} \mathbf{A}(\mathbf{x}) \tilde{\mathbf{e}}_2,
\end{aligned}
\end{equation}
where $\tilde{\mathbf{e}}_{1,2}$ are the standard basis of $\mathbb{R}^2$.
Likewise, for any vector and scalar fields $\mathbf{w}(\mathbf{x}) \in \mathbb{R}^{3}$ and $\zeta(\mathbf{x}) \in \mathbb{R}$, define the analogous gradient and rotated gradient-like differential operators 
\begin{equation}
\begin{aligned}\label{eq:scalarVect}
&\nabla_0 \mathbf{w}(\mathbf{x}) : = \partial_{\mathbf{u}_0} \mathbf{w}(\mathbf{x}) \otimes \tilde{\mathbf{e}}_1+ \partial_{\mathbf{v}_0}  \mathbf{w}(\mathbf{x}) \otimes  \tilde{\mathbf{e}}_2,  &&\nabla^{\perp}_0 \mathbf{w}(\mathbf{x}) : = \partial_{\mathbf{v}_0} \mathbf{w}(\mathbf{x}) \otimes \tilde{\mathbf{e}}_1- \partial_{\mathbf{u}_0}  \mathbf{w}(\mathbf{x}) \otimes  \tilde{\mathbf{e}}_2, \\
&\nabla_0 \zeta(\mathbf{x}) : =\partial_{\mathbf{u}_0} \zeta(\mathbf{x}) \tilde{\mathbf{e}}_1 + \partial_{\mathbf{v}_0} \zeta(\mathbf{x}) \tilde{\mathbf{e}}_2, && \nabla_0^{\perp} \zeta(\mathbf{x}) : =\partial_{\mathbf{u}_0} \zeta(\mathbf{x}) \tilde{\mathbf{e}}_1 + \partial_{\mathbf{v}_0} \zeta(\mathbf{x}) \tilde{\mathbf{e}}_2.
\end{aligned}
\end{equation}
With these definitions, solving the PDE in Eq.\;(\ref{eq:theAuxPDEFinal}) for $\boldsymbol{\omega}(\mathbf{x})$ and $\xi(\mathbf{x})$ is equivalent to solving the system
\begin{equation}
\begin{aligned}\label{eq:theAuxPDE1}
\begin{cases}
\nabla_0^{\perp} \cdot \mathbf{A}(\mathbf{x}) = \mathbf{f}(\mathbf{x}) \\
\mathbf{A}(\mathbf{x})\tilde{\mathbf{e}}_1= \mathbf{R}_{\text{eff}}(\mathbf{x}) \big[ \boldsymbol{\omega}(\mathbf{x}) \times \mathbf{u}(\theta(\mathbf{x})) + \xi(\mathbf{x}) \mathbf{u}'(\theta(\mathbf{x})) \big] \\
\mathbf{A}(\mathbf{x}) \tilde{\mathbf{e}}_2 =  \mathbf{R}_{\text{eff}}(\mathbf{x})  \big[ \boldsymbol{\omega}(\mathbf{x}) \times \mathbf{v}(\theta(\mathbf{x})) + \xi(\mathbf{x}) \mathbf{v}'(\theta(\mathbf{x})) \big] 
\end{cases}
\end{aligned}
\end{equation}
for $\mathbf{A}(\mathbf{x})$, $\boldsymbol{\omega}(\mathbf{x})$ and $\xi(\mathbf{x})$. This allows us to view Eq.\;(\ref{eq:theAuxPDEFinal})  as a PDE with linear algebraic constraints. We  deal with the PDE part of Eq.\;(\ref{eq:theAuxPDE1}) first, and then discuss the constraints.

Motivated by the  Helmholtz--Hodge decomposition for vector fields, we seek a solution to $\nabla_0^{\perp} \cdot \mathbf{A}(\mathbf{x}) = \mathbf{f}(\mathbf{x})$ of the form  
\begin{equation}
\begin{aligned}
\mathbf{A}(\mathbf{x}) = \nabla_0 \mathbf{p}(\mathbf{x}) + \nabla_0^{\perp} \mathbf{q}(\mathbf{x}),
\end{aligned}
\end{equation}
where $\mathbf{p}, \mathbf{q} \colon \Omega \rightarrow \mathbb{R}^3$ are the unknowns. Eventually, we shall choose the vector fields $\mathbf{p}(\mathbf{x})$, $\mathbf{q}(\mathbf{x})$  to solve Eq.\;(\ref{eq:theAuxPDE1}). For now, observe the following fact. 
\begin{lemma}\label{LemmaPDE1} 
For any smooth $\mathbf{p}(\mathbf{x})$, there is a smooth $\mathbf{q}(\mathbf{x})$ such that $\mathbf{A}(\mathbf{x}) = \nabla_0 \mathbf{p}(\mathbf{x}) + \nabla_0^{\perp} \mathbf{q}(\mathbf{x}) $ satisfies 
\begin{equation}
\begin{aligned}\label{eq:firstStepPDE}
\nabla_0^{\perp} \cdot \mathbf{A}(\mathbf{x}) = \mathbf{f}(\mathbf{x}) \quad \text{ on $\Omega$}.
\end{aligned}
\end{equation}
\end{lemma}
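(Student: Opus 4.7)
The plan is to reduce Eq.\;(\ref{eq:firstStepPDE}) to Poisson's equation and invoke standard elliptic theory. First I would compute the operator $\nabla_0^{\perp}\cdot$ applied to the proposed ansatz $\mathbf{A}(\mathbf{x}) = \nabla_0 \mathbf{p}(\mathbf{x}) + \nabla_0^{\perp} \mathbf{q}(\mathbf{x})$ component by component. Using the definitions in Eq.\;(\ref{eq:scalarVect}) together with equality of mixed partials, the $\nabla_0 \mathbf{p}$ contribution cancels,
\begin{equation*}
\nabla_0^{\perp} \cdot \nabla_0 \mathbf{p}(\mathbf{x}) = \partial_{\mathbf{v}_0}\partial_{\mathbf{u}_0}\mathbf{p}(\mathbf{x}) - \partial_{\mathbf{u}_0}\partial_{\mathbf{v}_0}\mathbf{p}(\mathbf{x}) = \mathbf{0},
\end{equation*}
and the $\nabla_0^{\perp}\mathbf{q}$ contribution yields a Laplace-type operator in the directional derivatives:
\begin{equation*}
\nabla_0^{\perp} \cdot \nabla_0^{\perp} \mathbf{q}(\mathbf{x}) =  \big(\partial_{\mathbf{u}_0}^2 + \partial_{\mathbf{v}_0}^2\big) \mathbf{q}(\mathbf{x}) =: \mathcal{L}\mathbf{q}(\mathbf{x}).
\end{equation*}
Thus solving Eq.\;(\ref{eq:firstStepPDE}) is equivalent to finding a smooth vector field $\mathbf{q}(\mathbf{x})$ with $\mathcal{L}\mathbf{q}(\mathbf{x}) = \mathbf{f}(\mathbf{x})$ on $\Omega$, independently of the choice of $\mathbf{p}$.

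Next I would verify that $\mathcal{L}$ is a genuinely elliptic constant-coefficient operator. Writing $\partial_{\mathbf{u}_0} = \tilde{\mathbf{u}}_0 \cdot \nabla$ and $\partial_{\mathbf{v}_0} = \tilde{\mathbf{v}}_0 \cdot \nabla$ in the standard Cartesian basis, the principal symbol of $\mathcal{L}$ at $\boldsymbol{\xi} \in \mathbb{R}^2 \setminus \{\mathbf{0}\}$ is $(\tilde{\mathbf{u}}_0 \cdot \boldsymbol{\xi})^2 + (\tilde{\mathbf{v}}_0 \cdot \boldsymbol{\xi})^2 = \boldsymbol{\xi} \cdot (\tilde{\mathbf{u}}_0 \otimes \tilde{\mathbf{u}}_0 + \tilde{\mathbf{v}}_0 \otimes \tilde{\mathbf{v}}_0)\boldsymbol{\xi}$. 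Since $\{\tilde{\mathbf{u}}_0, \tilde{\mathbf{v}}_0\}$ is a basis of $\mathbb{R}^2$ by construction (recall Eq.\;(\ref{eq:bravaisConstraints})), the matrix $\tilde{\mathbf{u}}_0 \otimes \tilde{\mathbf{u}}_0 + \tilde{\mathbf{v}}_0 \otimes \tilde{\mathbf{v}}_0$ is positive definite, so $\mathcal{L}$ is uniformly elliptic. Equivalently, a linear change of coordinates carrying $(\tilde{\mathbf{u}}_0,\tilde{\mathbf{v}}_0)$ to the standard orthonormal frame turns $\mathcal{L}$ into the usual Laplacian $\Delta$ on a smooth simply connected domain.

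Finally, I would apply standard existence theory for Poisson's equation. Specifically, since $\mathbf{f}(\mathbf{x})$ is smooth on a neighborhood of $\overline{\Omega}$ and $\Omega$ is simply connected with smooth boundary, the Dirichlet problem
\begin{equation*}
\mathcal{L}\mathbf{q}(\mathbf{x}) = \mathbf{f}(\mathbf{x}) \text{ in } \Omega, \quad \mathbf{q}(\mathbf{x}) = \mathbf{0} \text{ on } \partial\Omega
\end{equation*}
has a unique weak solution, and classical interior plus boundary elliptic regularity (e.g., Theorem 6 of Chapter 6 in \cite{evans2022partial}, applied componentwise) guarantees that $\mathbf{q}$ is smooth on $\overline{\Omega}$. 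Any such solution can then be extended smoothly to a neighborhood of $\overline{\Omega}$ using a smooth extension theorem.

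The only non-routine step is recognizing the correct elliptic operator behind the $\nabla_0^{\perp}\cdot\nabla_0^{\perp}$ combination, after which everything reduces to the textbook Poisson problem; no deep obstacle is expected.
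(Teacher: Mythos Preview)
Your proposal is correct and follows essentially the same route as the paper: compute that $\nabla_0^{\perp}\cdot\nabla_0\mathbf{p}=\mathbf{0}$ and $\nabla_0^{\perp}\cdot\nabla_0^{\perp}\mathbf{q}=(\partial_{\mathbf{u}_0}^2+\partial_{\mathbf{v}_0}^2)\mathbf{q}$, note this is Poisson's equation after a linear change of coordinates since $\tilde{\mathbf{u}}_0,\tilde{\mathbf{v}}_0$ span $\mathbb{R}^2$, and solve the Dirichlet problem with zero boundary data by standard elliptic theory. Your explicit principal-symbol check and the remark on smooth extension add a little more detail than the paper's version but do not change the argument.
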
 
\begin{proof}
The definitions above give that
\begin{equation}
\begin{aligned}
&\nabla_0^{\perp} \cdot \nabla_0  \mathbf{p}(\mathbf{x}) = \nabla_0^{\perp} \cdot \Big(\partial_{\mathbf{u}_0} \mathbf{p}(\mathbf{x}) \otimes \tilde{\mathbf{e}}_1+ \partial_{\mathbf{v}_0}  \mathbf{p}(\mathbf{x}) \otimes  \tilde{\mathbf{e}}_2 \Big)  = \partial_{\mathbf{v}_0} \partial_{\mathbf{u}_0}  \mathbf{p}(\mathbf{x}) - \partial_{\mathbf{u}_0} \partial_{\mathbf{v}_0} \mathbf{p}(\mathbf{x}) =  \mathbf{0}, \\
&\nabla_0^{\perp} \cdot \nabla_0^{\perp} \mathbf{q}(\mathbf{x})  =   \nabla_0^{\perp} \cdot \Big(\partial_{\mathbf{v}_0} \mathbf{q}(\mathbf{x}) \otimes \tilde{\mathbf{e}}_1 -\partial_{\mathbf{u}_0}  \mathbf{q}(\mathbf{x}) \otimes  \tilde{\mathbf{e}}_2 \Big) = \partial_{\mathbf{u}_0} \partial_{\mathbf{u}_0} \mathbf{q}(\mathbf{x}) + \partial_{\mathbf{v}_0} \partial_{\mathbf{v}_0} \mathbf{q}(\mathbf{x}) = \Delta_0 \mathbf{q}(\mathbf{x}),
\end{aligned}
\end{equation}
where $\Delta_0 = \partial^2_{\mathbf{u}_0} + \partial_{\mathbf{v}_0}^2$.
It follows that $\nabla_0^{\perp} \cdot \mathbf{A}(\mathbf{x}) = \Delta_0 \mathbf{q}(\mathbf{x})$ regardless of the choice of $\mathbf{p}(\mathbf{x})$. To solve Eq.\;(\ref{eq:firstStepPDE}), we seek $\mathbf{q}(\mathbf{x})$ such that   $\Delta_0 \mathbf{q}(\mathbf{x}) = \mathbf{f}(\mathbf{x})$. Since $\mathbf{u}_0$ and $\mathbf{v}_0$ are not parallel, this equation becomes the classical Poisson's equation after a linear change of coordinates. Therefore, by standard PDE theory (see, e.g., \cite{evans2022partial}), we can  assert the existence of a unique and smooth solution $\mathbf{q}(\mathbf{x})$ to $\Delta_0 \mathbf{q}(\mathbf{x}) = \mathbf{f}(\mathbf{x})$ subject to $\mathbf{q}(\mathbf{x}) = \mathbf{0}$ at $\partial \Omega$, since $\mathbf{f}(\mathbf{x})$ and $\partial\Omega$ are smooth.  
\end{proof}

The problem now is to find a $\mathbf{p}(\mathbf{x})$ enforcing the linear constraints in Eq.\;(\ref{eq:theAuxPDE1}). First, we characterize these constraints using linear algebra. Note $\mathbf{A} \colon \mathbf{B} : = \text{Tr}( \mathbf{A}^T \mathbf{B})$ for  $\mathbf{A}, \mathbf{B} \in \mathbb{R}^{n\times m}$. 

\begin{lemma}\label{LemmaPDE2} 
The following two statements are equivalent:
\begin{enumerate}
\item  $\mathbf{A}(\mathbf{x})$ satisfies $\mathbf{L}_i(\mathbf{x})  \colon \mathbf{A}(\mathbf{x}) = 0$, $i = 1,2$, for 
\begin{equation}
\begin{aligned}\label{eq:L1L2Def}
&\mathbf{L}_{1}(\mathbf{x}) := \mathbf{R}_{\emph{eff}}(\mathbf{x}) \big[ \mathbf{v}(\theta(\mathbf{x}) )\otimes \tilde{\mathbf{e}}_1 + \mathbf{u}(\theta(\mathbf{x}) )\otimes \tilde{\mathbf{e}}_2\big] , \\
&\mathbf{L}_{2}(\mathbf{x}) := \mathbf{R}_{\emph{eff}}(\mathbf{x}) \big[ \{\mathbf{v}(\theta(\mathbf{x})) \cdot \mathbf{v}'(\theta(\mathbf{x}))\}  \mathbf{u}(\theta(\mathbf{x}) )\otimes \tilde{\mathbf{e}}_1 -  \{\mathbf{u}(\theta(\mathbf{x})) \cdot \mathbf{u}'(\theta(\mathbf{x}))\}  \mathbf{v}(\theta(\mathbf{x}) )\otimes \tilde{\mathbf{e}}_2\big] .
\end{aligned}
\end{equation}
\item $\mathbf{A}(\mathbf{x})$ is given by 
\begin{equation}
\begin{aligned}
\mathbf{A}(\mathbf{x})\tilde{\mathbf{e}}_1= \mathbf{R}_{\emph{eff}}(\mathbf{x}) \big[ \boldsymbol{\omega}(\mathbf{x}) \times \mathbf{u}(\theta(\mathbf{x})) + \xi(\mathbf{x}) \mathbf{u}'(\theta(\mathbf{x})) \big], \\
\mathbf{A}(\mathbf{x}) \tilde{\mathbf{e}}_2 =  \mathbf{R}_{\emph{eff}}(\mathbf{x})  \big[ \boldsymbol{\omega}(\mathbf{x}) \times \mathbf{v}(\theta(\mathbf{x})) + \xi(\mathbf{x}) \mathbf{v}'(\theta(\mathbf{x})) \big],
\end{aligned}
\end{equation}
for some $\boldsymbol{\omega}(\mathbf{x})$ and $\xi(\mathbf{x})$ on $\Omega$. 
\end{enumerate}
\end{lemma}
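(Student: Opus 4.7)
The statement is a pointwise linear-algebraic equivalence at each fixed $\mathbf{x}$, so the plan is to verify the two inclusions of subspaces of $\mathbb{R}^{3\times 2}$ and close the argument by a dimension count. The forward direction (2)\,$\Rightarrow$\,(1) is a direct substitution, while the reverse direction (1)\,$\Rightarrow$\,(2) requires showing that the two linear functionals $\mathbf{L}_1(\mathbf{x}):\cdot$ and $\mathbf{L}_2(\mathbf{x}):\cdot$ are independent, so that their common kernel has the same dimension as the image of the parameterization in (2).

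For the forward direction, I would compute $\mathbf{L}_i(\mathbf{x}):\mathbf{A}(\mathbf{x})$ by pairing the columns of $\mathbf{L}_i$ against those of $\mathbf{A}$ and exploiting $\mathbf{R}_{\text{eff}}^T\mathbf{R}_{\text{eff}}=\mathbf{I}$. For $\mathbf{L}_1$ the result collapses to
\begin{equation*}
\mathbf{v}(\theta)\cdot\bigl(\boldsymbol{\omega}\times\mathbf{u}(\theta)\bigr)+\mathbf{u}(\theta)\cdot\bigl(\boldsymbol{\omega}\times\mathbf{v}(\theta)\bigr)+\xi\bigl(\mathbf{v}(\theta)\cdot\mathbf{u}'(\theta)+\mathbf{u}(\theta)\cdot\mathbf{v}'(\theta)\bigr),
\end{equation*}
which vanishes by the cyclic property of the scalar triple product and because $\mathbf{u}(\theta)\cdot\mathbf{v}(\theta)$ is constant along the mechanism (see Section~\ref{ssec:DesignKin}). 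For $\mathbf{L}_2$, the two cross-product contributions vanish pointwise from $\mathbf{u}\cdot(\boldsymbol{\omega}\times\mathbf{u})=\mathbf{v}\cdot(\boldsymbol{\omega}\times\mathbf{v})=0$, and the remaining $\xi$-terms cancel algebraically. No nontrivial computation is involved.

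For the reverse direction, I would recast (2) as a linear map $\Phi_{\mathbf{x}}\colon(\boldsymbol{\omega},\xi)\in\mathbb{R}^{4}\to\mathbb{R}^{3\times 2}$ and establish two facts. First, $\Phi_{\mathbf{x}}$ is injective: since $\mathbf{R}_{\text{eff}}$ is invertible, its kernel is the set of $(\boldsymbol{\omega},\xi)$ such that $\boldsymbol{\omega}\times\mathbf{u}(\theta)+\xi\mathbf{u}'(\theta)=\mathbf{0}$ and $\boldsymbol{\omega}\times\mathbf{v}(\theta)+\xi\mathbf{v}'(\theta)=\mathbf{0}$, and this is shown to be trivial in Eqs.\;(\ref{eq:TrivialLemmaBend1})--(\ref{eq:TrivialLemmaBend2}) and the lines immediately after in Section~\ref{ssec:LocalBend}. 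Hence $\mathrm{image}(\Phi_{\mathbf{x}})$ is $4$-dimensional. Second, $\mathbf{L}_1(\mathbf{x})$ and $\mathbf{L}_2(\mathbf{x})$ are linearly independent in $\mathbb{R}^{3\times 2}$: any relation $c\mathbf{L}_1(\mathbf{x})=\mathbf{L}_2(\mathbf{x})$, upon factoring out $\mathbf{R}_{\text{eff}}$ and reading off the first column, forces $c\,\mathbf{v}(\theta)=[\mathbf{v}(\theta)\cdot\mathbf{v}'(\theta)]\,\mathbf{u}(\theta)$, which is impossible because $\mathbf{u}(\theta)$ and $\mathbf{v}(\theta)$ span the plane normal to $\mathbf{e}_3$ and $\mathbf{v}(\theta)\cdot\mathbf{v}'(\theta)\neq 0$ by Lemma~\ref{SignsLemma}. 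Hence the common kernel of the two functionals is exactly $6-2=4$-dimensional. Combined with the forward inclusion, this yields $\mathrm{image}(\Phi_{\mathbf{x}})=\ker(\mathbf{L}_1(\mathbf{x}):\cdot)\cap\ker(\mathbf{L}_2(\mathbf{x}):\cdot)$, which is the desired equivalence.

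The main obstacle is essentially conceptual rather than computational: one has to recognize that (1) is exactly a cut-down of $\mathbb{R}^{3\times 2}$ by two codimension-one conditions, and that (2) is an injection from $\mathbb{R}^{4}$ into the same space. Once this is in hand, the proof reduces to tracking dimensions, with all the design-dependent hypotheses entering solely through the nondegeneracies $\mathbf{u}(\theta)\nparallel\mathbf{v}(\theta)$ and $\mathbf{v}'(\theta)\cdot\mathbf{v}(\theta)\neq 0$, both guaranteed by the analysis in Section~\ref{sec:LinAlgebraSec}.
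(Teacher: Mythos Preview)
Your proof is correct and takes a genuinely different route from the paper's. The paper argues constructively: it parameterizes a general $\mathbf{A}\in\mathbb{R}^{3\times 2}$ by writing each column as $\mathbf{R}_{\text{eff}}[\boldsymbol{\omega}_a\times\mathbf{u}(\theta)+\xi_a\mathbf{u}'(\theta)]$ and $\mathbf{R}_{\text{eff}}[\boldsymbol{\omega}_b\times\mathbf{v}(\theta)+\xi_b\mathbf{v}'(\theta)]$ with separate $(\boldsymbol{\omega}_a,\xi_a)$ and $(\boldsymbol{\omega}_b,\xi_b)$, expands $\boldsymbol{\omega}_a,\boldsymbol{\omega}_b$ in the $\{\mathbf{u}(\theta),\mathbf{v}(\theta),\mathbf{e}_3\}$ basis, removes redundant components, then computes $\mathbf{L}_1:\mathbf{A}$ and $\mathbf{L}_2:\mathbf{A}$ explicitly in these coordinates and reads off that the two constraints force $\boldsymbol{\omega}_a=\boldsymbol{\omega}_b$ and $\xi_a=\xi_b$. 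Your argument instead verifies the inclusion $(2)\Rightarrow(1)$ by direct substitution (invoking the constancy of $\mathbf{u}(\theta)\cdot\mathbf{v}(\theta)$ for the $\mathbf{L}_1$ case), and then closes by pure dimension counting: the image of $(\boldsymbol{\omega},\xi)\mapsto\mathbf{A}$ is $4$-dimensional by the injectivity already established in Section~\ref{ssec:LocalBend}, and the common kernel of $\mathbf{L}_1,\mathbf{L}_2$ is $4$-dimensional because the two tensors are linearly independent. The paper's approach has the advantage of explicitly exhibiting $(\boldsymbol{\omega},\xi)$ from a given $\mathbf{A}$, while yours is shorter, avoids the coordinate expansion in the $\{\mathbf{u},\mathbf{v},\mathbf{e}_3\}$ basis, and makes clearer which nondegeneracy hypotheses are actually being used (namely $\mathbf{u}(\theta)\nparallel\mathbf{v}(\theta)$ and $\mathbf{v}(\theta)\cdot\mathbf{v}'(\theta)\neq 0$ from Lemma~\ref{SignsLemma}).
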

\begin{proof}
 Suppressing the $\mathbf{x}$ dependence,  observe that a general parameterization for $\mathbf{A} \in \mathbb{R}^{3 \times2}$ is $\mathbf{A} \tilde{\mathbf{e}}_1 = \mathbf{R}_{\text{eff}} \mathbf{a}$ and $\mathbf{A} \tilde{\mathbf{e}}_2 := \mathbf{R}_{\text{eff}} \mathbf{b}$ for some $\mathbf{a}, \mathbf{b} \in \mathbb{R}^3$, since  $\mathbf{R}_{\text{eff}}$ is a rotation.  Next observe that general parameterizations for $\mathbf{a}$ and $\mathbf{b}$ are   
 \begin{equation}
     \begin{aligned}\label{eq:getABParam1}
        \mathbf{a} =  \boldsymbol{\omega}_a \times \mathbf{u}(\theta) + \xi_a \mathbf{u}'(\theta), \quad \mathbf{b} = \boldsymbol{\omega}_b \times \mathbf{v}(\theta) + \xi_b \mathbf{v}'(\theta) 
     \end{aligned}
 \end{equation}
 for  some $\boldsymbol{\omega}_{a,b} \in \mathbb{R}^3$ and $\xi_{a,b} \in \mathbb{R}$, since by Lemma \ref{SignsLemma} both $\mathbf{u}'(\theta)$ and $\mathbf{v}'(\theta)$ have non-zero components in the $\mathbf{u}(\theta)$ and $\mathbf{v}(\theta)$ directions, respectively. In fact, $\boldsymbol{\omega}_{a,b}$ can be further constrained, as there are redundancies in the parameterization.  Write 
 \begin{equation}
     \begin{aligned}\label{eq:getABParam2}
       \boldsymbol{\omega}_a := \alpha_a \mathbf{u}(\theta) + \beta_a \mathbf{v}(\theta) + \gamma_a \mathbf{e}_3, \quad  \boldsymbol{\omega}_b := \alpha_b \mathbf{u}(\theta) + \beta_b \mathbf{v}(\theta) + \gamma_b \mathbf{e}_3  
     \end{aligned}
 \end{equation}
and observe from Eqs.\;(\ref{eq:getABParam1}-\ref{eq:getABParam2}) that  
\begin{equation}
\begin{aligned}
\mathbf{a} = \beta_a \mathbf{v}(\theta) \times \mathbf{u}(\theta) + \gamma_a \mathbf{e}_3 \times \mathbf{u}(\theta) + \xi_a \mathbf{u}'(\theta), \quad 
\mathbf{b} &=  \alpha_b \mathbf{u}(\theta) \times \mathbf{v}(\theta) + \gamma_b \mathbf{e}_3 \times \mathbf{v}(\theta) + \xi_b \mathbf{v}'(\theta).
\end{aligned}
\end{equation}
Since $\alpha_a$ is not present in the first equation and  $\beta_b$ is not present in the second, we can without loss of generality set 
\begin{equation}
\begin{aligned}\label{eq:getDegeneracies}
\alpha_a = \alpha_b = \alpha, \quad \beta_a =  \beta_b = \beta 
\end{aligned}
\end{equation}
for some $\alpha$ and $\beta$. Taking $\mathbf{L}_1 := \mathbf{R}_{\text{eff}}[\mathbf{v}(\theta) \otimes \tilde{\mathbf{e}}_1 + \mathbf{u}(\theta) \otimes \tilde{\mathbf{e}}_2]$ and $\mathbf{L}_2 := \mathbf{R}_{\text{eff}}[\{ \mathbf{v}(\theta) \cdot \mathbf{v}'(\theta)\} \mathbf{u}(\theta) \otimes \tilde{\mathbf{e}}_1 - \{\mathbf{u}(\theta) \cdot \mathbf{u}'(\theta)\} \mathbf{v}(\theta) \otimes \tilde{\mathbf{e}}_2]$ as in Eq.\;(\ref{eq:L1L2Def}), we find that the above general parameterization for $\mathbf{A}$ satisfies  
\begin{equation}
\begin{aligned}
\mathbf{L}_1 \colon \mathbf{A} &= \mathbf{v}(\theta) \cdot  \big[ \boldsymbol{\omega}_a \times \mathbf{u}(\theta)   + \xi_a \mathbf{u}'(\theta)\big] + \mathbf{u}(\theta) \cdot  \big[ \boldsymbol{\omega}_b \times \mathbf{v}(\theta)   + \xi_b \mathbf{v}'(\theta)\big]    \\
& =  [\boldsymbol{\omega}_a - \boldsymbol{\omega}_b \big]  \cdot (\mathbf{u}(\theta) \times \mathbf{v}(\theta)) + ( \xi_a - \xi_b)  \mathbf{v}(\theta) \cdot \mathbf{u}'(\theta) \\
& = (\gamma_a - \gamma_b) \big[ \mathbf{e}_3 \cdot (  \mathbf{u}(\theta) \times \mathbf{v}(\theta))  \big] +  ( \xi_a - \xi_b)  \mathbf{v}(\theta) \cdot \mathbf{u}'(\theta)  , \\
\mathbf{L}_2 \colon \mathbf{A} &= \{ \mathbf{v}(\theta) \cdot \mathbf{v}'(\theta)\}  \mathbf{u}(\theta)  \cdot  \big[ \boldsymbol{\omega}_a \times \mathbf{u}(\theta)   + \xi_a \mathbf{u}'(\theta)\big] -  \{ \mathbf{u}(\theta) \cdot \mathbf{u}'(\theta)\}  \mathbf{v}(\theta) \cdot  \big[ \boldsymbol{\omega}_b \times \mathbf{v}(\theta)   + \xi_b \mathbf{v}'(\theta)\big]   \\
& = (\xi_a - \xi_b) \{ \mathbf{v}(\theta) \cdot \mathbf{v}'(\theta)\}  \{ \mathbf{u}(\theta) \cdot \mathbf{u}'(\theta)\}.
\end{aligned}
\end{equation}
Setting both of these to zero gives that $\xi_a = \xi_b = \xi$  for some $\xi$ and $\gamma_a = \gamma_b = \gamma$ for some $\gamma$. By Eq.\;(\ref{eq:getDegeneracies}),  $\boldsymbol{\omega}_a = \boldsymbol{\omega}_b = \alpha \mathbf{u}(\theta)  + \beta \mathbf{v}(\theta) + \gamma \mathbf{e}_3 := \boldsymbol{\omega}$ for a general $\boldsymbol{\omega}$. It follows that $\mathbf{A} \tilde{\mathbf{e}}_1 = \mathbf{R} \big[ \boldsymbol{\omega} \times \mathbf{u}(\theta) + \xi \mathbf{u}'(\theta)  \big] $ and $\mathbf{A} \tilde{\mathbf{e}}_2 = \mathbf{R} \big[ \boldsymbol{\omega} \times \mathbf{v}(\theta) + \xi \mathbf{v}'(\theta)  \big] $.
\end{proof}

Lemmas \ref{LemmaPDE1} and \ref{LemmaPDE2} combine to simplify Eq.\;(\ref{eq:theAuxPDE1}) as follows. Write $\mathbf{A}(\mathbf{x}) =\nabla_0 \mathbf{p}(\mathbf{x}) + \nabla_0^{\perp} \mathbf{q}(\mathbf{x})$ for a smooth $\mathbf{q}(\mathbf{x})$ as in Lemma \ref{LemmaPDE1}, so that $\nabla_0^{\perp} \cdot \mathbf{A}(\mathbf{x}) = \mathbf{f}(\mathbf{x})$. Then, use Lemma \ref{LemmaPDE2} to rewrite the algebraic constraints in Eq.\;(\ref{eq:theAuxPDE1}) equivalently via the inner product constraints involving  $\mathbf{L}_i(\mathbf{x})$. This manipulation results in a PDE in the unknown vector field $\mathbf{p}(\mathbf{x})$ of the form
\begin{equation}
\begin{aligned}\label{eq:needToSolveP}
\mathbf{L}_i(\mathbf{x}) \colon \nabla_0 \mathbf{p}(\mathbf{x}) = q_i(\mathbf{x}), \quad i = 1,2,
\end{aligned}
\end{equation}
where $q_i(\mathbf{x}) := - \mathbf{L}_i(\mathbf{x}) \colon \nabla_0^{\perp} \mathbf{q}(\mathbf{x})$. Solving Eq.\;(\ref{eq:needToSolveP}) is equivalent to solving (\ref{eq:theAuxPDE1}), which itself is a rewrite of  Eq.\;(\ref{eq:theAuxPDEFinal}).  
We turn to the question of finding $\mathbf{p}(\mathbf{x})$.

\subsection{A linear second-order PDE in two variables}
Here we reduce the PDE in Eq.\;(\ref{eq:needToSolveP}) for the unknown vector field $\mathbf{p}(\mathbf{x})$ to a second order system in two scalar fields that can be solved using standard PDE tools. 

Start with an ansatz of the form
\begin{equation}
\begin{aligned}\label{eq:parameterizeP}
\mathbf{p}(\mathbf{x}) := \mathbf{R}_{\text{eff}}(\mathbf{x}) \Big(  \lambda_u(\mathbf{x})  \mathbf{u}^r(\theta(\mathbf{x})) + \lambda_v(\mathbf{x})   \mathbf{v}^r(\theta(\mathbf{x}))\Big) 
\end{aligned}
\end{equation}
where $\{  \mathbf{u}^r(\theta), \mathbf{v}^r(\theta), \mathbf{e}_3\}$ is the reciprocal basis to $\{ \mathbf{u}(\theta), \mathbf{v}(\theta), \mathbf{e}_3\}$. Substituting Eq.\;(\ref{eq:parameterizeP}) into Eq.\;(\ref{eq:needToSolveP}) and using the definitions of $\mathbf{L}_i(\mathbf{x})$, $i = 1,2$, in Eq.\;(\ref{eq:L1L2Def}) furnishes  PDEs for the scalar fields $\lambda_{u}(\mathbf{x})$ and $\lambda_v(\mathbf{x})$:
\begin{equation}
\begin{aligned}\label{eq:firstOrderToSolve}
\begin{cases}
\partial_{\mathbf{v}_0} \lambda_{u}(\mathbf{x}) + \partial_{\mathbf{u}_0} \lambda_v(\mathbf{x})  + a_u(\mathbf{x}) \lambda_u(\mathbf{x}) + a_v(\mathbf{x}) \lambda_v(\mathbf{x})   = q_1(\mathbf{x}) \\
\big[\mathbf{v}(\theta(\mathbf{x})) \cdot \mathbf{v}'(\theta(\mathbf{x})) \big]  \partial_{\mathbf{u}_0}\lambda_u(\mathbf{x})  - \big[\mathbf{u}(\theta(\mathbf{x})) \cdot \mathbf{u}'(\theta(\mathbf{x})) \big]  \partial_{\mathbf{v}_0}\lambda_v(\mathbf{x}) + b_u(\mathbf{x}) \lambda_u(\mathbf{x}) + b_v(\mathbf{x}) \lambda_v(\mathbf{x})  = q_2(\mathbf{x}).
\end{cases}&
\end{aligned}
\end{equation}
The coefficients in these PDEs satisfy 
\begin{equation}
\begin{aligned}\label{eq:theCoeff0}
a_u(\mathbf{x}) := \mathbf{L}_1(\mathbf{x}) \colon \nabla_0 \big[\mathbf{R}_{\text{eff}}(\mathbf{x}) \mathbf{u}^r(\theta(\mathbf{x}))  \big], \quad a_v(\mathbf{x}) := \mathbf{L}_1(\mathbf{x}) \colon \nabla_0 \big[\mathbf{R}_{\text{eff}}(\mathbf{x}) \mathbf{v}^r(\theta(\mathbf{x}))  \big], \\
b_u(\mathbf{x}) := \mathbf{L}_2(\mathbf{x}) \colon \nabla_0 \big[\mathbf{R}_{\text{eff}}(\mathbf{x}) \mathbf{u}^r(\theta(\mathbf{x}))  \big], \quad b_v(\mathbf{x}) := \mathbf{L}_2(\mathbf{x}) \colon \nabla_0 \big[\mathbf{R}_{\text{eff}}(\mathbf{x}) \mathbf{v}^r(\theta(\mathbf{x}))  \big].
\end{aligned} 
\end{equation}
In fact, the explicit dependence on $\mathbf{R}_{\text{eff}}(\mathbf{x})$  cancels out in Eq.\;(\ref{eq:theCoeff0}) by expanding out the $\nabla_0$ terms and using that  $\mathbf{L}_i(\mathbf{x})$, $i=1,2$, and $\mathbf{R}_{\text{eff}}(\mathbf{x})$ satisfy Eq.\;(\ref{eq:L1L2Def}) and Eq.\;(\ref{eq:firstSurface1}), respectively.  Thus, the smoothness/analyticity of $a_u(\mathbf{x}), \ldots, b_v(\mathbf{x})$ on $\overline{\Omega}$ follows from the same for $\theta(\mathbf{x})$, $\boldsymbol{\omega}_{\mathbf{u}_0}(\mathbf{x})$ and $\boldsymbol{\omega}_{\mathbf{v}_0}(\mathbf{x})$. 

Next, consider $\lambda_{u}(\mathbf{x})$ and $\lambda_{v}(\mathbf{x})$ of the form 
\begin{equation}
\begin{aligned}\label{eq:theLambdasAnsatz}
&\lambda_u(\mathbf{x}) = c_{uu}(\mathbf{x}) \partial_{\mathbf{u}_0} \psi(\mathbf{x}) + c_{uv}(\mathbf{x}) \partial_{\mathbf{v}_0} \psi(\mathbf{x})  +  d_{uu}(\mathbf{x}) \partial_{\mathbf{u}_0} \varphi(\mathbf{x}) + d_{uv}(\mathbf{x}) \partial_{\mathbf{v}_0} \varphi(\mathbf{x})  , \\
&\lambda_v(\mathbf{x}) = c_{vu}(\mathbf{x}) \partial_{\mathbf{u}_0} \psi(\mathbf{x}) + c_{vv}(\mathbf{x}) \partial_{\mathbf{v}_0} \psi(\mathbf{x})  +  d_{vu}(\mathbf{x}) \partial_{\mathbf{u}_0} \varphi(\mathbf{x}) + d_{vv}(\mathbf{x}) \partial_{\mathbf{v}_0} \varphi(\mathbf{x})  
\end{aligned}
\end{equation}
for $\psi, \varphi \colon \Omega \rightarrow \mathbb{R}$. We seek a convenient choice of coefficients $c_{uu}(\mathbf{x}), \ldots, d_{vv}(\mathbf{x})$ to diagonalize the PDE in Eq.\;(\ref{eq:firstOrderToSolve}) at highest order. We succeed as follows. 
\begin{lemma}
Making the choices
\begin{equation}
\begin{aligned}\label{eq:chooseCoeff}
&c_{uu}(\mathbf{x}) = -1, \quad c_{vv}(\mathbf{x}) = 1, \quad c_{vu}(\mathbf{x}) = c_{uv}(\mathbf{x}) =0, \\ 
& d_{uu}(\mathbf{x}) = d_{vv}(\mathbf{x}) =0, \quad d_{vu}(\mathbf{x}) = -1, \quad d_{uv}(\mathbf{x}) = -\tfrac{\mathbf{u}(\theta(\mathbf{x})) \cdot \mathbf{u}'(\theta(\mathbf{x}))}{\mathbf{v}(\theta(\mathbf{x})) \cdot \mathbf{v}'(\theta(\mathbf{x}))} 
\end{aligned}
\end{equation}
in Eq.\;(\ref{eq:theLambdasAnsatz}) brings the first order system in Eq.\;(\ref{eq:firstOrderToSolve}) into the  second order system
\begin{equation}
\begin{aligned}\label{eq:L1L2L}
\begin{cases}
\mathcal{L}(\mathbf{x}) \psi(\mathbf{x}) + \tilde{\mathbf{m}}_1(\mathbf{x}) \cdot \nabla_0 \psi(\mathbf{x}) + \tilde{\mathbf{n}}_1(\mathbf{x}) \cdot \nabla_0 \varphi(\mathbf{x}) =[\mathbf{v}(\theta(\mathbf{x})) \cdot \mathbf{v}'(\theta(\mathbf{x}))] q_1(\mathbf{x}) \\
\mathcal{L}(\mathbf{x}) \varphi(\mathbf{x}) + \tilde{\mathbf{m}}_2(\mathbf{x}) \cdot \nabla_0 \psi(\mathbf{x}) + \tilde{\mathbf{n}}_2(\mathbf{x}) \cdot \nabla_0 \varphi(\mathbf{x}) = q_2(\mathbf{x}) 
\end{cases}
\end{aligned}
\end{equation}
where $\mathcal{L}(\mathbf{x}) := -[\mathbf{v}(\theta(\mathbf{x})) \cdot \mathbf{v}'(\theta(\mathbf{x}))] \partial^2_{\mathbf{u}_0}  -[\mathbf{u}(\theta(\mathbf{x})) \cdot \mathbf{u}'(\theta(\mathbf{x}))] \partial^2_{\mathbf{v}_0}$. The coefficients $\tilde{\mathbf{m}}_{1,2}(\mathbf{x})$ and $\tilde{\mathbf{n}}_{1,2}(\mathbf{x})$ have the same smoothness/analyticity properties as $\theta(\mathbf{x}), \boldsymbol{\omega}_{\mathbf{u}_0}(\mathbf{x})$ and $\boldsymbol{\omega}_{\mathbf{v}_0}(\mathbf{x})$.
\end{lemma}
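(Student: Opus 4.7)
The plan is a direct substitution followed by careful bookkeeping. I would plug the ansatz Eq.~(\ref{eq:theLambdasAnsatz}) with the specific coefficient choices in Eq.~(\ref{eq:chooseCoeff}) into the two equations of the first-order system Eq.~(\ref{eq:firstOrderToSolve}), expand every derivative by the product rule, and then sort the resulting terms by order of differentiation of $\psi$ and $\varphi$. The design of the ansatz is such that $\lambda_u = -\partial_{\mathbf{u}_0}\psi - \tfrac{\mathbf{u}\cdot\mathbf{u}'}{\mathbf{v}\cdot\mathbf{v}'}\partial_{\mathbf{v}_0}\varphi$ and $\lambda_v = \partial_{\mathbf{v}_0}\psi - \partial_{\mathbf{u}_0}\varphi$; everything that follows is driven by this specific choice.

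For the first equation $\partial_{\mathbf{v}_0}\lambda_u + \partial_{\mathbf{u}_0}\lambda_v + a_u\lambda_u + a_v\lambda_v = q_1$, the two mixed second partials of $\psi$ produced by $\partial_{\mathbf{v}_0}\lambda_u$ and $\partial_{\mathbf{u}_0}\lambda_v$ enter with opposite signs and cancel by equality of mixed partials, leaving at principal order only $-\partial_{\mathbf{u}_0}^2\varphi - \tfrac{\mathbf{u}\cdot\mathbf{u}'}{\mathbf{v}\cdot\mathbf{v}'}\partial_{\mathbf{v}_0}^2\varphi$, which after multiplication by $\mathbf{v}(\theta)\cdot\mathbf{v}'(\theta)$ becomes $\mathcal{L}\varphi$ by the definition of $\mathcal{L}$. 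For the second equation, the coefficient $d_{uv} = -\tfrac{\mathbf{u}\cdot\mathbf{u}'}{\mathbf{v}\cdot\mathbf{v}'}$ is tuned precisely so that the $\varphi$-mixed partials coming from $[\mathbf{v}\cdot\mathbf{v}']\partial_{\mathbf{u}_0}\lambda_u$ cancel those coming from $-[\mathbf{u}\cdot\mathbf{u}']\partial_{\mathbf{v}_0}\lambda_v$, and what survives is $-[\mathbf{v}\cdot\mathbf{v}']\partial_{\mathbf{u}_0}^2\psi - [\mathbf{u}\cdot\mathbf{u}']\partial_{\mathbf{v}_0}^2\psi = \mathcal{L}\psi$. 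After identifying each equation with the appropriate line of Eq.~(\ref{eq:L1L2L}), the claimed principal parts are obtained.

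The remaining, lower-order contributions originate from (i) derivatives of the coefficient $\tfrac{\mathbf{u}\cdot\mathbf{u}'}{\mathbf{v}\cdot\mathbf{v}'}$ hitting $\partial_{\mathbf{v}_0}\varphi$, producing first-order terms in $\varphi$ with smooth prefactors depending on $\theta(\mathbf{x})$ and $\nabla_0\theta(\mathbf{x})$, and (ii) the algebraic terms $a_u\lambda_u + a_v\lambda_v$ and $b_u\lambda_u + b_v\lambda_v$, each of which is linear in $\nabla_0\psi$ and $\nabla_0\varphi$. Collecting all such terms as coefficient vectors of $\nabla_0\psi$ and $\nabla_0\varphi$ produces $\tilde{\mathbf{m}}_{1,2}(\mathbf{x})$ and $\tilde{\mathbf{n}}_{1,2}(\mathbf{x})$ explicitly. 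The regularity assertion is then automatic: these coefficients are rational expressions in $\mathbf{u}(\theta(\mathbf{x}))$, $\mathbf{v}(\theta(\mathbf{x}))$, their $\theta$-derivatives, and $a_u, a_v, b_u, b_v$, divided by the non-vanishing factor $\mathbf{v}(\theta)\cdot\mathbf{v}'(\theta)$, whose non-vanishing follows from Lemma~\ref{SignsLemma} combined with Remark~\ref{IntervalRemark}. Each ingredient inherits the smoothness or analyticity of $\theta(\mathbf{x}), \boldsymbol{\omega}_{\mathbf{u}_0}(\mathbf{x}), \boldsymbol{\omega}_{\mathbf{v}_0}(\mathbf{x})$, as already observed after Eq.~(\ref{eq:theCoeff0}).

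The only real obstacle is clerical: the product-rule expansion must be carried out with enough care to confirm that all second-order cross-coupling between $\psi$ and $\varphi$ genuinely cancels, and that derivatives of the ratio $\tfrac{\mathbf{u}\cdot\mathbf{u}'}{\mathbf{v}\cdot\mathbf{v}'}$ do not reintroduce hidden second-derivative couplings. Once the cancellations are verified, matching the surviving first-order terms to the template of Eq.~(\ref{eq:L1L2L}) is mechanical and yields the explicit (though unilluminating) formulas for $\tilde{\mathbf{m}}_{1,2}$ and $\tilde{\mathbf{n}}_{1,2}$.
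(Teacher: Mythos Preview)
Your proposal is correct and follows essentially the same route as the paper's own proof: direct substitution of the ansatz with the specified coefficients into the two first-order equations, verification that the mixed second partials cancel (of $\psi$ in the first equation, of $\varphi$ in the second), and identification of the surviving principal parts as $\mathcal{L}\varphi$ and $\mathcal{L}\psi$ respectively, with the lower-order terms absorbed into $\tilde{\mathbf{m}}_{1,2}$, $\tilde{\mathbf{n}}_{1,2}$. Your treatment of the regularity of the lower-order coefficients is, if anything, slightly more explicit than the paper's.
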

\begin{proof}
Substituting Eqs.\;(\ref{eq:theLambdasAnsatz}) and (\ref{eq:chooseCoeff}) into the first PDE in Eq.\;(\ref{eq:firstOrderToSolve}) gives that 
\begin{equation}
\begin{aligned}\label{eq:almostDoneRewrite}
q_1(\mathbf{x}) &= \partial_{\mathbf{v}_0} \big[  -\partial_{\mathbf{u}_0} \psi(\mathbf{x})  - \tfrac{\mathbf{u}(\theta(\mathbf{x})) \cdot \mathbf{u}'(\theta(\mathbf{x}))}{\mathbf{v}(\theta(\mathbf{x})) \cdot \mathbf{v}'(\theta(\mathbf{x}))} \partial_{\mathbf{v}_0} \varphi(\mathbf{x}) \big] + \partial_{\mathbf{u}_0} \big[ \partial_{\mathbf{v}_0} \psi(\mathbf{x})  - \partial_{\mathbf{u}_0} \varphi(\mathbf{x})  \big]  + \text{l.o.t.}   \\
&= -\Big(  \partial^2_{\mathbf{u}_0}   + \tfrac{\mathbf{u}(\theta(\mathbf{x})) \cdot \mathbf{u}'(\theta(\mathbf{x})) }{\mathbf{v}(\theta(\mathbf{x})) \cdot \mathbf{v}'(\theta(\mathbf{x}))} \partial^2_{\mathbf{v}_0} \Big)  \varphi(\mathbf{x})  + \text{l.o.t.}
\end{aligned}
\end{equation}
where l.o.t.\;denotes the ``lower order terms" that are linear in $\nabla_0 \psi(\mathbf{x})$ and $\nabla_0 \varphi(\mathbf{x})$. 
Making the same substitutions into the second PDE gives that 
\begin{equation}
\begin{aligned}
q_2(\mathbf{x}) &= \big[\mathbf{v}(\theta(\mathbf{x})) \cdot \mathbf{v}'(\theta(\mathbf{x})) \big]  \partial_{\mathbf{u}_0} \big[-  \partial_{\mathbf{u}_0} \psi(\mathbf{x})  - \tfrac{\mathbf{u}(\theta(\mathbf{x})) \cdot \mathbf{u}'(\theta(\mathbf{x}))}{\mathbf{v}(\theta(\mathbf{x})) \cdot \mathbf{v}'(\theta(\mathbf{x}))} \partial_{\mathbf{v}_0} \varphi(\mathbf{x})\big]  \\
&\qquad - \big[\mathbf{u}(\theta(\mathbf{x})) \cdot \mathbf{u}'(\theta(\mathbf{x})) \big] \partial_{\mathbf{v}_0} \big[  \partial_{\mathbf{v}_0} \psi(\mathbf{x})  -  \partial_{\mathbf{u}_0} \varphi(\mathbf{x})   \big] + \text{l.o.t.} \\
&= -\Big(  [\mathbf{v}(\theta(\mathbf{x})) \cdot \mathbf{v}'(\theta(\mathbf{x}))] \partial^2_{\mathbf{u}_0}   + [\mathbf{u}(\theta(\mathbf{x})) \cdot \mathbf{u}'(\theta(\mathbf{x}))] \partial^2_{\mathbf{v}_0} \Big)  \psi(\mathbf{x})  + \text{l.o.t.}
\end{aligned}
\end{equation}
with analogous lower order terms. 
We have derived Eq.\;(\ref{eq:L1L2L}). The smoothness/analyticity of $\tilde{\mathbf{m}}_{1,2}(\mathbf{x})$ and $\tilde{\mathbf{n}}_{1,2}(\mathbf{x})$ follow since the l.o.t.\;only involve $\theta(\mathbf{x})$, its derivatives and the  coefficients $a_u(\mathbf{x}), \ldots, b_v(\mathbf{x})$. 
\end{proof}

 As a final step in this section, we rewrite Eq.\;(\ref{eq:L1L2L}) to involve derivatives along Cartesian axes through a simple change of variables. Let $\mathbf{x}(\eta_1, \eta_2) := \eta_1 \tilde{\mathbf{u}}_0 + \eta_2 \tilde{\mathbf{v}}_0$, $U := \{ (\eta_1, \eta_2) \colon \mathbf{x}(\eta_1, \eta_2)  \in \Omega \}$, and write $\boldsymbol{\eta} = (\eta_1, \eta_2)$ for short. Define $\boldsymbol{\Phi} \colon U \rightarrow \mathbb{R}^2$ and $\sigma \colon U \rightarrow \mathbb{R}$ such that 
\begin{equation}
\begin{aligned}
\boldsymbol{\Phi}(\boldsymbol{\eta})  = \begin{pmatrix} \varphi(\mathbf{x}(\boldsymbol{\eta})) \\  \psi(\mathbf{x}(\boldsymbol{\eta})) \end{pmatrix},  \quad  \sigma(\boldsymbol{\eta}) =  \frac{\mathbf{u}(\theta(\mathbf{x}(\boldsymbol{\eta}))) \cdot \mathbf{u}'(\theta(\mathbf{x}(\boldsymbol{\eta})))}{\mathbf{v}(\theta(\mathbf{x}(\boldsymbol{\eta}))) \cdot \mathbf{v}'(\theta(\mathbf{x}(\boldsymbol{\eta})))}.
\end{aligned}
\end{equation}
By the chain rule, the PDE system in Eq.\;(\ref{eq:L1L2L}) becomes in these new coordinates
\begin{equation}
\begin{aligned}\label{eq:finalManipPDE}
-\Big( \partial_1^2 + \sigma(\boldsymbol{\eta}) \partial_2^2 \Big) \boldsymbol{\Phi}(\boldsymbol{\eta})  + \mathcal{M}(\boldsymbol{\eta}) \colon \nabla \boldsymbol{\Phi}(\boldsymbol{\eta})   = \tilde{\mathbf{q}}(\boldsymbol{\eta})
\end{aligned}
\end{equation} 
where $\partial_i$ is the $\eta_i$-derivative, and $\mathcal{M}(\boldsymbol{\eta}) \in \mathbb{R}^{2\times2 \times2}$ satisfies $[\mathcal{M}(\boldsymbol{\eta}) \colon \nabla \boldsymbol{\Phi}(\boldsymbol{\eta})]_{\alpha} =[ \mathcal{M}(\boldsymbol{\eta})]_{\alpha \beta \gamma} [ \nabla \boldsymbol{\Phi}(\boldsymbol{\eta})]_{\beta \gamma}$ with summation implied. Again, the coefficients  $\sigma(\boldsymbol{\eta})$ and $[ \mathcal{M}(\boldsymbol{\eta})]_{\alpha \beta \gamma}$ are smooth/analytic on $\overline{U}$ and $\tilde{\mathbf{q}}(\boldsymbol{\eta})$ is a smooth two-dimensional vector field. 

Solving Eq.\;(\ref{eq:finalManipPDE}) on $U$ produces a solution to Eq.\;(\ref{eq:L1L2L}) on $\Omega$, and with it a proof of Proposition \ref{AuxPDEProp}. There are  two cases to consider, depending on the sign of $\sigma(\boldsymbol{\eta})$. The PDE is elliptic when $\sigma(\boldsymbol{\eta}) >0$ on $U$. Alternatively, it is hyperbolic when $\sigma(\boldsymbol{\eta}) <0$ on $U$.  Our assumption in Eq.\;(\ref{eq:MVPreserve}) that the deformations of the origami be restricted to a fixed mountain-valley assignment implies that $\sigma(\boldsymbol{\eta})$ is either always positive or always negative, per Lemma \ref{SignsLemma}. So,  Eq.\;(\ref{eq:finalManipPDE}) is either  elliptic or  hyperbolic and cannot change its type. Indeed,  there exists a $\sigma_{\theta} >0$ depending only on $\theta(\mathbf{x})$ such that
\begin{equation}
\begin{aligned}\label{eq:sigmaThetaProp}
|\sigma(\boldsymbol{\eta})| \geq \sigma_{\theta} > 0 \quad \text{ on $\overline{U}$},
\end{aligned}
\end{equation}
since $\theta(\mathbf{x})$ is continuous and belongs to the interval $(\theta^{-}, \theta^+)$ on $\overline{\Omega}$  per Remark \ref{IntervalRemark}. We handle the elliptic case in Section \ref{ssec:TheEllipCase} and the hyperbolic case in Section \ref{ssec:TheHyperCase}. Note the sign of $\sigma(\boldsymbol{\eta})$ is opposite that of the Poisson's ratio $\nu(\theta(\boldsymbol{\eta}))$ in Eq.\;(\ref{eq:PoissonsRatioDesign}). Hence, the elliptic case addresses Miura origami and all other auxetic parallelogram origami patterns, while the hyperbolic case addresses Eggbox origami  all other non-auxetic patterns.

\subsection{Existence in the elliptic case}\label{ssec:TheEllipCase}
This section proves the existence of a solution to the PDE in Eq.\;(\ref{eq:finalManipPDE}) in the elliptic case where $\sigma(\boldsymbol{\eta}) \geq \sigma_\theta > 0$. The proof  invokes the Fredholm alternative for linear elliptic PDEs, an important technical result from PDE theory \cite{evans2022partial,mclean2000strongly}. The idea is to  view the linear operator $\mathcal{L}_{\text{e}}(\boldsymbol{\eta}) := -\big[ \partial_1^2 + \sigma(\boldsymbol{\eta}) \partial_2^2 +  \mathcal{M}(\boldsymbol{\eta}) \colon \nabla  \big]$ as a matrix (of infinite dimension) and the problem of solving the PDE in Eq.\;(\ref{eq:finalManipPDE}) as a question of linear algebra. In particular, by making a good choice of boundary data, we will show that   $\tilde{\mathbf{q}}(\boldsymbol{\eta})$ is in the range space  of  $\mathcal{L}_{\text{e}}(\boldsymbol{\eta})$. We make crucial use of the analyticity hypothesis here.

Start by rewriting the operator in divergence form:
\begin{equation}
\begin{aligned}
\mathcal{L}_{\text{e}}(\boldsymbol{\eta}) := -\big[ \partial_1^2 + \sigma(\boldsymbol{\eta}) \partial_2^2 +  \mathcal{M}(\boldsymbol{\eta}) \colon \nabla  \big] =- \nabla\cdot \big[\mathbb{C}(\boldsymbol{\eta}\big) \colon \nabla ] + \mathcal{M}_{\text{e}}(\boldsymbol{\eta})  \colon \nabla 
\end{aligned}
\end{equation}
for appropriate higher-order tensors $\mathbb{C}(\boldsymbol{\eta}),\mathcal{M}_{\text{e}}(\boldsymbol{\eta}) \in \mathbb{R}^{2\times2\times2\times2}$. To make this formulation explicit, observe that  $\big[\partial_1^2 + \sigma(\boldsymbol{\eta}) \partial_2^2 \big]\boldsymbol{\Phi}(\boldsymbol{\eta}) = \nabla\cdot \big[ \partial_1 \boldsymbol{\Phi}(\boldsymbol{\eta})  \otimes \tilde{\mathbf{e}}_1 + \sigma(\boldsymbol{\eta})  \partial_2 \boldsymbol{\Phi}(\boldsymbol{\eta})  \otimes \tilde{\mathbf{e}}_2\big] + \text{l.o.t.}$ Thus, writing the bracketed term in index notation as $[ \mathbb{C}(\boldsymbol{\eta}) \colon \nabla \boldsymbol{\Phi}(\boldsymbol{\eta})]_{\alpha \beta} = [ \mathbb{C}(\boldsymbol{\eta})]_{\alpha \beta \gamma \delta} [ \nabla \boldsymbol{\Phi}(\boldsymbol{\eta})]_{\gamma \delta}$ gives that
\begin{equation}
\begin{aligned}\label{eq:getCtensor}
&[ \mathbb{C}(\boldsymbol{\eta})]_{1111} = [ \mathbb{C}(\boldsymbol{\eta})]_{2121}   = 1, \quad [ \mathbb{C}(\boldsymbol{\eta})]_{1212} = [ \mathbb{C}(\boldsymbol{\eta})]_{2222}   =  \sigma(\boldsymbol{\eta}), \\
&[ \mathbb{C}(\boldsymbol{\eta})]_{\alpha \beta \gamma \delta} = 0 \quad \text{ for } \alpha \beta \gamma \delta \text{ not as above.} 
\end{aligned}
\end{equation}
The tensor $\mathcal{M}_{\text{e}}(\boldsymbol{\eta})$ is then $\mathcal{M}(\boldsymbol{\eta})$ plus lower order terms. Since these lower order terms depend on $\boldsymbol{\eta}$ only through derivatives of $\sigma(\boldsymbol{\eta})$, $\mathcal{M}_{\text{e}}(\boldsymbol{\eta})$ is analytic on $\overline{U}$. Finally, we note using Eqs.\;(\ref{eq:sigmaThetaProp}) and (\ref{eq:getCtensor}) that $\mathbb{C}(\boldsymbol{\eta})$ is strongly elliptic, and in fact satisfies
\begin{equation}
\begin{aligned}\label{eq:StrongEllipticity}
\mathbf{F} \colon \boldsymbol{\mathbb{C}}(\boldsymbol{\eta}) \colon \mathbf{F} &:= [ \boldsymbol{\mathbb{C}}(\boldsymbol{\eta})]_{\alpha \beta \gamma \delta}  [\mathbf{F}]_{\alpha \beta}  [\mathbf{F}]_{\gamma \delta} \\
&  \;= ([\mathbf{F}]_{11})^2 + ([\mathbf{F}]_{12})^2 + \boldsymbol{\sigma}(\boldsymbol{\eta}) \big\{  ([\mathbf{F}]_{21})^2 + ([\mathbf{F}]_{22})^2\big\}  \geq \min\{ 1, \sigma_\theta \} |\mathbf{F}|^2
\end{aligned}
\end{equation}
for all $\mathbf{F}\in\mathbb{R}^{2\times2}$. It also has the  major symmetry   $\mathbf{A} \colon \mathbb{C}(\boldsymbol{\eta}) \colon \mathbf{B} = \mathbf{B} \colon \mathbb{C}(\boldsymbol{\eta})\colon \mathbf{A}$ for all $\mathbf{A},\mathbf{B} \in \mathbb{R}^{2\times2}$.

We will deduce the existence of a solution $\boldsymbol{\Phi}(\boldsymbol{\eta})$ to Eq.\;(\ref{eq:finalManipPDE}) by reviewing the well-posedness theory of the 
class of Dirichlet problems
\begin{equation}
\begin{aligned}\label{eq:TheDirichletProblem}
\begin{cases}
\mathcal{L}_{e}(\boldsymbol{\eta}) \boldsymbol{\Phi}(\boldsymbol{\eta}) = \tilde{\mathbf{q}}(\boldsymbol{\eta})  & \text{ in } U \\
 \boldsymbol{\Phi}(\boldsymbol{\eta})  = \boldsymbol{\Psi}(\boldsymbol{\eta}) & \text{ at } \partial U,
\end{cases}
\end{aligned}
\end{equation}
parameterized by the choice of Dirichlet boundary data $\boldsymbol{\Psi}(\boldsymbol{\eta})$.  The question is: for which boundary data does there exist a solution $\boldsymbol{\Phi}(\boldsymbol{\eta})$?  To answer this question, we pose the weak formulation of Eq.\;(\ref{eq:TheDirichletProblem}), which requires a brief digression on Sobolev spaces (see, e.g.,  \cite{evans2022partial}, Chapter 5,  for details). Recall
\begin{equation}
\begin{aligned}
H^1(U; \mathbb{R}^2)  :=\big\{ \boldsymbol{\Phi} \in L^2(U, \mathbb{R}^2) \colon \nabla \boldsymbol{\Phi} \in L^2(U;\mathbb{R}^2) \big\} 
\end{aligned}
\end{equation}
is the Sobolev space of square integrable maps with square integrable first derivatives, and $H_0^{1}(U;\mathbb{R}^2)$ is the subspace of maps $\boldsymbol{\Phi}\in H^1(U;\mathbb{R}^2)$ with $\boldsymbol{\Phi}|_{\partial U}=\boldsymbol{0}$ ``in the trace sense”. 
Correspondingly, we enforce the Dirichlet boundary condition from Eq.\;(\ref{eq:TheDirichletProblem}) by working with maps in
\begin{equation}
\begin{aligned}
H_{\boldsymbol{\Psi}}^1(U; \mathbb{R}^2) :=  \big\{ \boldsymbol{\Phi} \in H^1(U, \mathbb{R}^2) \colon \boldsymbol{\Phi}|_{\partial U} = \boldsymbol{\Psi} \big\}
\end{aligned}.
\end{equation}
With these definitions, a weak solution of Eq.\;(\ref{eq:TheDirichletProblem}) is any $\boldsymbol{\Phi} \in H_{\boldsymbol{\Psi}}^1(U; \mathbb{R}^2) $ such that 
\begin{equation}
\begin{aligned}\label{eq:WeakForm}
\int_{U} \Big\{    \big[ \mathbb{C}(\boldsymbol{\eta}) \colon \nabla \boldsymbol{\Phi}(\boldsymbol{\eta}) \big] \colon \nabla \boldsymbol{\Theta}(\boldsymbol{\eta})   + \big[ \mathcal{M}_{\text{e}}(\boldsymbol{\eta})  \colon \nabla \boldsymbol{\Phi}(\boldsymbol{\eta}) \big] \cdot \boldsymbol{\Theta}(\boldsymbol{\eta}) \Big\}   dA = \int_{U} \tilde{\mathbf{q}}(\boldsymbol{\eta}) \cdot   \boldsymbol{\Theta}(\boldsymbol{\eta}) dA \quad \forall \; \boldsymbol{\Theta} \in H^1_0(U;\mathbb{R}^2).
\end{aligned}
\end{equation}
Although we analyze weak solutions, by standard elliptic regularity results any weak solution of Eq.\;(\ref{eq:TheDirichletProblem}) will automatically be smooth on $U$, due to the strong ellipticity guaranteed by Eq.\;(\ref{eq:StrongEllipticity}), and because the coefficients of $\mathcal{L}_{e}(\boldsymbol{\eta})$ and the map $\tilde{\mathbf{q}}(\boldsymbol{\eta})$ are smooth. Actually, for our purposes in the main text and per the statement of Proposition \ref{AuxPDEProp}, we require that the solution is smooth on a neighborhood of $\overline{U}$. This can be achieved by replacing $U$ with a slightly larger domain, which again is possible due to the regularity properties of the coefficients of $\mathcal{L}_{e}(\boldsymbol{\eta})$ and $\tilde{\mathbf{q}}(\boldsymbol{\eta})$.
In the rest of this section we tacitly assume (with a slight abuse of notation) that $U$ is this slightly larger domain.

We desire a choice of $\boldsymbol{\Psi}(\boldsymbol{\eta})$ such that Eq.\;(\ref{eq:TheDirichletProblem}) has a weak solution. This question is completely resolved by the Fredholm alternative; Theorem 4.10 in  \cite{mclean2000strongly} gives a convenient statement, which we summarize now. There are two cases: either (i) the only weak solution of the homogeneous problem
\begin{equation}
\begin{aligned}\label{eq:TheHomogeneousProblem}
\begin{cases}
\mathcal{L}_{e}(\boldsymbol{\eta}) \boldsymbol{\Phi}(\boldsymbol{\eta}) = \mathbf{0}   & \text{ in } U \\
 \boldsymbol{\Phi}(\boldsymbol{\eta})  = \mathbf{0} & \text{ at } \partial U
\end{cases}
\end{aligned}
\end{equation}
is the trivial one $\boldsymbol{\Phi}(\boldsymbol{\eta}) = \mathbf{0}$, in which case the Fredholm alternative states that the inhomogeneous problem Eq.\;(\ref{eq:TheDirichletProblem}) is uniquely solvable for any $\boldsymbol{\Psi}(\boldsymbol{\eta})$; or (ii) the homogeneous problem in Eq.\;(\ref{eq:TheHomogeneousProblem}) admits non-zero solutions, in which case the alternative allows for Eq.\;(\ref{eq:TheDirichletProblem})  to have multiple solutions or no solutions at all. 

Now if $\mathcal{L}_{e}(\boldsymbol{\eta})$ belongs to  case (i), we are done (take $\boldsymbol{\Psi}=\boldsymbol{0}$). So, we focus on case (ii). For this purpose, we introduce the homogeneous adjoint problem to Eq.\;(\ref{eq:TheHomogeneousProblem}):
\begin{equation}
\begin{aligned}
\begin{cases}\label{eq:TheHomoAdjointProblem}
\mathcal{L}^{\star}_{e}(\boldsymbol{\eta}) \boldsymbol{\Theta}(\boldsymbol{\eta}) = \mathbf{0}   & \text{ in } U \\
 \boldsymbol{\Theta}(\boldsymbol{\eta})  = \mathbf{0} & \text{ at } \partial U
\end{cases}
\end{aligned}
\end{equation}
where $\mathcal{L}^{\star}_{e}(\boldsymbol{\eta}) \boldsymbol{\Theta}(\boldsymbol{\eta}) := -\nabla\cdot \big[ \mathbb{C}(\boldsymbol{\eta}) \colon  \nabla \boldsymbol{\Theta}(\boldsymbol{\eta}) + \boldsymbol{\Theta}(\boldsymbol{\eta}) \cdot \mathcal{M}_e(\boldsymbol{\eta})\big]$.
 Solving case (ii) turns out to be  a delicate matter of choosing the boundary data $\boldsymbol{\Psi}(\boldsymbol{\eta})$ to interact consistently with solutions of the adjoint problem. Specifically, in case (ii), the Fredholm alternative asserts that the solutions $\boldsymbol{\Theta}(\boldsymbol{\eta})$ of Eq.\;(\ref{eq:TheHomoAdjointProblem}) form a finite dimensional family, and that the original problem in Eq.\;(\ref{eq:TheDirichletProblem}) is solvable for $\boldsymbol{\Phi} \in H^1_{\boldsymbol{\Psi}}(U; \mathbb{R}^2)$  if and only if
\begin{equation}
\begin{aligned}\label{eq:keyIdentityFredholm}
\int_{U} \tilde{\mathbf{q}}(\boldsymbol{\eta}) \cdot \boldsymbol{\Theta}(\boldsymbol{\eta}) dA &= \int_{\partial U} \boldsymbol{\Psi}(\boldsymbol{\eta}) \cdot \big[ \mathbb{C}(\boldsymbol{\eta})  \colon \nabla \boldsymbol{\Theta}(\boldsymbol{\eta}) \big] \mathbf{n}_U( \boldsymbol{\eta})  d\Gamma
\end{aligned}
\end{equation} 
for all such $\boldsymbol{\Theta}(\boldsymbol{\eta})$. We now choose $\boldsymbol{\Psi}(\boldsymbol{\eta})$ to guarantee this identity and thus produce a solution to Eq.\;(\ref{eq:TheDirichletProblem}). 

By the Fredholm alternative, we are free to assume that  $\{ \boldsymbol{\Theta}_1, \ldots, \boldsymbol{\Theta}_N\} \subset H_0^1(U; \mathbb{R}^2)$ is a basis for the solutions of the homogeneous adjoint problem, for some $N \geq 1$. We construct $\boldsymbol{\Psi}(\boldsymbol{\eta})$ such that 
\begin{equation}
\begin{aligned}\label{eq:toProvekeyIdentityFredholm}
\int_{U} \tilde{\mathbf{q}}(\boldsymbol{\eta}) \cdot \boldsymbol{\Theta}_i(\boldsymbol{\eta}) dA &= \int_{\partial U} \boldsymbol{\Psi}(\boldsymbol{\eta}) \cdot \big[ \mathbb{C}(\boldsymbol{\eta})  \colon \nabla \boldsymbol{\Theta}_i(\boldsymbol{\eta}) \big] \mathbf{n}_U( \boldsymbol{\eta})  d\Gamma\quad  \text{for all } i =1,\ldots,N
\end{aligned}
\end{equation}
 by proving  that the collection of vector fields $\big\{ \big[ \mathbb{C}(\boldsymbol{\eta})  \colon \nabla \boldsymbol{\Theta}_i(\boldsymbol{\eta}) \big] \mathbf{n}_U( \boldsymbol{\eta}) \big\}_{i=1}^N$  is linearly independent on $\partial U$.  
 Indeed, this linear independence  allows us to introduce a dual basis $\{ \boldsymbol{\Psi}_i(\boldsymbol{\eta}) \}_{i=1}^{N} $ satisfying
\begin{equation}
\begin{aligned}
\int_{\partial U} \boldsymbol{\Psi}_j(\boldsymbol{\eta}) \cdot \big[ \mathbb{C}(\boldsymbol{\eta})  \colon \nabla \boldsymbol{\Theta}_i(\boldsymbol{\eta}) \big] \mathbf{n}_U( \boldsymbol{\eta}) d\Gamma = \delta_{ij} \quad \text{for all }  i, j =1,\ldots, N.
\end{aligned}
\end{equation}
Then, defining 
\begin{equation}
\begin{aligned}
\boldsymbol{\Psi}(\boldsymbol{\eta}):= \sum_{j =1}^N\Big( \int_{U} \tilde{\mathbf{q}}(\boldsymbol{\eta}) \cdot \boldsymbol{\Theta}_j(\boldsymbol{\eta}) dA  \Big)  \boldsymbol{\Psi}_j(\boldsymbol{\eta})
\end{aligned}
\end{equation}
yields the desired identity in Eq.\;(\ref{eq:toProvekeyIdentityFredholm}). 

That $\big\{ \big[ \mathbb{C}(\boldsymbol{\eta})  \colon \nabla \boldsymbol{\Theta}_i(\boldsymbol{\eta}) \big] \mathbf{n}_U( \boldsymbol{\eta})\big\}_{i=1}^{N}$ is linearly independent on $\partial U$ follows from the definition of $\{ \boldsymbol{\Theta}_i(\boldsymbol{\eta})\}_{i=1}^{N}$  as a basis for the space of solutions of the homogeneous adjoint problem, combined with a “unique continuation” result stating that the only solution of the overdetermined elliptic system
\begin{equation}\label{eq:overdetermined-system}
\begin{aligned}
\begin{cases}
\mathcal{L}_e^{\star}(\boldsymbol{\eta})  \boldsymbol{\Theta}(\boldsymbol{\eta}) = \mathbf{0} & \text{ in } U \\
\boldsymbol{\Theta}(\boldsymbol{\eta}) = \mathbf{0}  & \text{ at } \partial U  \\
\big[\mathbb{C}(\boldsymbol{\eta})  \colon \nabla \boldsymbol{\Theta}(\boldsymbol{\eta}) \big] \mathbf{n}_U( \boldsymbol{\eta}) = \mathbf{0} & \text{ at } \partial U
\end{cases}
\end{aligned}
\end{equation}
is the trivial one $\boldsymbol{\Theta}(\boldsymbol{\eta}) = \mathbf{0}$. To see this, note from the formula for $\mathcal{L}_{e}^{\star}(\boldsymbol{\eta})$ after Eq.\;(\ref{eq:TheHomoAdjointProblem}) that its coefficients are analytic on $\overline{U}$.
Thus, Holmgren’s uniqueness theorem gives the required unique continuation result: the version of Holmgren's theorem we use states that any $C^2$ solution of Eq.\;(\ref{eq:overdetermined-system}) must vanish identically (see \cite{john1991partial}, Chapter 3.5). 
To apply it, extend the domain $U$  to $U^{\text{ext}} := \{ \boldsymbol{\eta} \in \mathbb{R}^2 \colon \text{dist}(\boldsymbol{\eta}, U) < \epsilon\}$ for $\epsilon >0$ such that the coefficients of $\mathcal{L}^{\star}_{\text{e}}(\boldsymbol{\eta})$ remain analytic on $U^{\text{ext}}$,  take arbitrary constants $\alpha_1, \ldots, \alpha_N$, and define
\begin{equation}
\begin{aligned}\label{eq:SumAlphaTheta}
\boldsymbol{\Theta}(\boldsymbol{\eta}) = \begin{cases}
\sum_{i=1}^N \alpha_i \boldsymbol{\Theta}_i(\boldsymbol{\eta})  &\text{ if } \boldsymbol{\eta} \in U\\
\mathbf{0} & \text{ if } \boldsymbol{\eta} \in U^{\text{ext}} \setminus U
\end{cases}
\end{aligned}
\end{equation}
where $\{ \boldsymbol{\Theta}_1, \ldots,   \boldsymbol{\Theta}_N\} \subset H_0^1(U; \mathbb{R}^2)$ is the aforementioned basis. If
\begin{equation}
\begin{aligned}\label{eq:testLI}
 \big[ \mathbb{C}(\boldsymbol{\eta})  \colon \nabla \boldsymbol{\Theta}(\boldsymbol{\eta})  \big] \mathbf{n}_U( \boldsymbol{\eta}) = \sum_{i=1}^N \alpha_i  \big[ \mathbb{C}(\boldsymbol{\eta})  \colon \nabla \boldsymbol{\Theta}_i(\boldsymbol{\eta})  \big] \mathbf{n}_U( \boldsymbol{\eta})  = \mathbf{0} \quad \text{ on } \partial U,
\end{aligned}
\end{equation}
we must show that $\alpha_1=\cdots=\alpha_N = 0$. 
Given any $\boldsymbol{\Phi} \in H^1(U^{\text{ext}};\mathbb{R}^2)$, we can write that
\begin{equation}
\begin{aligned}\label{eq:intByParts}
&\int_{U^{\text{ext}}} \Big\{    \big[ \mathbb{C}(\boldsymbol{\eta}) \colon \nabla \boldsymbol{\Theta}(\boldsymbol{\eta}) \big] \colon \nabla \boldsymbol{\Phi}(\boldsymbol{\eta}) - \nabla\cdot \big[ \boldsymbol{\Theta}(\boldsymbol{\eta}) \cdot \mathcal{M}_e(\boldsymbol{\eta})\big]\cdot \boldsymbol{\Phi}(\boldsymbol{\eta}) \Big\}   dA \\
&\quad=\int_{U} \Big\{    \big[ \mathbb{C}(\boldsymbol{\eta}) \colon \nabla \boldsymbol{\Theta}(\boldsymbol{\eta}) \big] \colon \nabla \boldsymbol{\Phi}(\boldsymbol{\eta}) - \nabla\cdot \big[ \boldsymbol{\Theta}(\boldsymbol{\eta}) \cdot \mathcal{M}_e(\boldsymbol{\eta})\big]\cdot \boldsymbol{\Phi}(\boldsymbol{\eta}) \Big\}   dA \\
&\quad= \int_{\partial U}  \big\{ \big[ \mathbb{C}(\boldsymbol{\eta})  \colon \nabla \boldsymbol{\Theta}(\boldsymbol{\eta}) \big] \mathbf{n}_U( \boldsymbol{\eta}) \big\} \cdot \boldsymbol{\Phi}(\boldsymbol{\eta}) d\Gamma
\end{aligned}
\end{equation}
using integration by parts, since $\boldsymbol{\Theta}(\boldsymbol{\eta})$ solves Eq.\;(\ref{eq:TheHomoAdjointProblem}) and vanishes on $U^{\text{ext}}\setminus U$. Thus, if Eq.\;(\ref{eq:testLI}) holds, it follows from Eq.\;(\ref{eq:intByParts}) that $\mathcal{L}_{e}^{\star}(\boldsymbol{\eta})\boldsymbol{\Theta}(\boldsymbol{\eta})=\mathbf{0}$ weakly on  $U^\text{ext}$ (c.f.\ Eq.\;(\ref{eq:WeakForm})). Then, $\boldsymbol{\Theta}(\boldsymbol{\eta})$ must be smooth by elliptic regularity, and hence $\boldsymbol{\Theta}(\boldsymbol{\eta}) = \mathbf{0}$ by Holmgren's uniqueness theorem.  Looking back at  Eq.\;(\ref{eq:SumAlphaTheta}), we conclude that  $\alpha_1=\cdots=\alpha_N=0$ since $\{ \boldsymbol{\Theta}_1(\boldsymbol{\eta}), \ldots,  \boldsymbol{\Theta}_N(\boldsymbol{\eta})\}$ were chosen to be linearly independent. This completes the proof of existence of solutions to the PDE in Eq.\;(\ref{eq:finalManipPDE}) in the elliptic case.

The above argument is the only part of the proof of Theorem~\ref{MainTheorem} where we need $\theta(\mathbf{x})$, $\boldsymbol{\omega}_{\mathbf{u}_0}(\mathbf{x})$ and $\boldsymbol{\omega}_{\mathbf{v}_0}(\mathbf{x})$ to be analytic, rather than just smooth. Presumably, our assumption of analyticity is suboptimal and can be gotten rid of by  a stronger unique continuation result.

\subsection{Existence in the hyperbolic case}\label{ssec:TheHyperCase}

This section proves the existence of a solution to the PDE in Eq.\;(\ref{eq:finalManipPDE}) in the hyperbolic case where $\sigma(\boldsymbol{\eta}) \leq -\sigma_{\theta} < 0$. Start by writing the PDE as 
\begin{equation}
\begin{aligned}
\square(\boldsymbol{\eta}) \boldsymbol{\Phi}(\boldsymbol{\eta}) + \mathcal{M}_{h}(\boldsymbol{\eta}) \colon \nabla \boldsymbol{\Phi}(\boldsymbol{\eta})  = \tilde{\mathbf{q}}(\boldsymbol{\eta}), 
\end{aligned}
\end{equation}
using the second-order hyperbolic operator $\square(\boldsymbol{\eta}) := -\partial_1^2 + \partial_2 ( |\sigma(\boldsymbol{\eta})| \partial_2)$. The tensor $\mathcal{M}_h(\boldsymbol{\eta}) \in \mathbb{R}^{2\times2\times2}$ collects the sum of $\mathcal{M}(\boldsymbol{\eta})$ and the lower order terms from the calculation $-\partial_1^2 \boldsymbol{\Phi}(\boldsymbol{\eta})  - \sigma(\boldsymbol{\eta})   \partial^2_2 \boldsymbol{\Phi}(\boldsymbol{\eta}) = -\partial_1^2 \boldsymbol{\Phi}(\boldsymbol{\eta})  +\partial_2 ( |\sigma(\boldsymbol{\eta})| \partial_2 \boldsymbol{\Phi}(\boldsymbol{\eta})) + \text{l.o.t.}$ Since $\sigma(\boldsymbol{\eta})$, $\mathcal{M}_{h}(\boldsymbol{\eta})$ and  $ \tilde{\mathbf{q}}(\boldsymbol{\eta})$ are smooth on a neighborhood of $\overline{U}$, they can be extended smoothly to the closure of a square domain $(-r, r)^2$ containing $\overline{U}$, while preserving the strict negativity of $\sigma(\boldsymbol{\eta})$.  
We seek a solution to the  initial/boundary value problem 
\begin{equation}
\begin{aligned}
\begin{cases}\label{eq:solveForHyperCase}
\square(\boldsymbol{\eta}) \boldsymbol{\Phi}(\boldsymbol{\eta}) + \mathcal{M}_{h}(\boldsymbol{\eta}) \colon \nabla \boldsymbol{\Phi}(\boldsymbol{\eta})  = \tilde{\mathbf{q}}(\boldsymbol{\eta})  &  -r < \eta_1, \eta_2 < r \\
\boldsymbol{\Phi}(\boldsymbol{\eta}) = \mathbf{0} & -r < \eta_1< r,\eta_2 = \pm r\\
\boldsymbol{\Phi}(\boldsymbol{\eta}) = \mathbf{0}, \partial_1\boldsymbol{\Phi}(\boldsymbol{\eta})= \mathbf{0}  &  \eta_1 = -r,  -r  < \eta_2 < r
\end{cases}
\end{aligned}
\end{equation}
on the extended domain to obtain the desired solution of Eq.\;(\ref{eq:finalManipPDE}). 

There are several techniques available to prove that Eq.\;(\ref{eq:solveForHyperCase}) admits a unique smooth solution, such as the Galerkin truncation-based argument presented in \cite{evans2022partial}, Section 7.2, which we briefly summarize. The idea is to view  Eq.\;(\ref{eq:solveForHyperCase}) as an ODE (in $\eta_1$) through a suitable function space (in $\eta_2$), and to find its solution $\boldsymbol{\Phi}(\boldsymbol{\eta})$ by solving a sequence of successively larger yet finite dimensional ODE problems built to approximate Eq.\;(\ref{eq:solveForHyperCase}). Smoothness of the resulting solution follows from the regularity theory of hyperbolic PDEs.

In a bit more detail, suppose we expand  $\boldsymbol{\Phi}(\boldsymbol{\eta})$ in a Fourier basis adapted to the zero Dirichlet boundary conditions by writing
\begin{equation}
\begin{aligned}
\boldsymbol{\Phi}(\boldsymbol{\eta}) = \sum_{k = 1}^{\infty} \boldsymbol{\Phi}_k(\eta_1) s_k(\eta_2)
\end{aligned}
\end{equation}
where $\{ s_k(\eta_2)\}_{k = 1}^{\infty}$ is the intended basis of $H_0^{1}((-r, r))$. Projecting the PDE part of Eq.\;(\ref{eq:solveForHyperCase}) to the span of the first $N$ basis functions gives a way of approximating the unknown $\{\boldsymbol{\Phi}_k(\eta_1)\}$. For each fixed $N$, we seek a vector field $\boldsymbol{\Phi}^{(N)}(\boldsymbol{\eta})$ satisfying the  integral constraints 
\begin{equation}
\begin{aligned}\label{eq:ApproxHyperbolic}
&\int_{-r}^r \Big( \big\{  \partial_1^2 \boldsymbol{\Phi}^{(N)}(\boldsymbol{\eta}) + \mathcal{M}_h(\boldsymbol{\eta}) \colon \nabla \boldsymbol{\Phi}^{(N)}(\boldsymbol{\eta})  - \tilde{\mathbf{q}}(\boldsymbol{\eta}) \big\}   s_k(\eta_2)  + |\sigma(\boldsymbol{\eta})| \partial_2 \boldsymbol{\Phi}^{(N)}(\boldsymbol{\eta})  s_k'(\eta_2) \Big) d\eta_2  = \mathbf{0}
\end{aligned}
\end{equation} 
for all $\eta_1 \in (-r,r)$ and $k = 1,\ldots, N$, along with the initial and boundary conditions. Writing
\begin{equation}
\begin{aligned}
\boldsymbol{\Phi}^{(N)}(\boldsymbol{\eta}) := \sum_{k = 1}^N \boldsymbol{\Phi}^{(N)}_k(\eta_1) s_k(\eta_2)
\end{aligned}
\end{equation}
yields the following system of ODEs on substitution into Eq.\;(\ref{eq:ApproxHyperbolic}): 
\begin{equation}
\begin{aligned}\label{eq:theODESystemProof}
\frac{d^2}{d \eta_1^2} \begin{pmatrix} \boldsymbol{\Phi}^{(N)}_1(\eta_1)  \\ \vdots \\  \boldsymbol{\Phi}^{(N)}_N(\eta_1)  \end{pmatrix}  + \mathbf{M}_1(\eta_1)  \frac{d}{d \eta_1} \begin{pmatrix} \boldsymbol{\Phi}^{(N)}_1(\eta_1)  \\ \vdots \\  \boldsymbol{\Phi}^{(N)}_N(\eta_1)  \end{pmatrix}  + \mathbf{M}_0(\eta_1)  \begin{pmatrix} \boldsymbol{\Phi}^{(N)}_1(\eta_1)  \\ \vdots \\  \boldsymbol{\Phi}^{(N)}_N(\eta_1)  \end{pmatrix}  = \begin{pmatrix} \tilde{\mathbf{q}}_1(\eta_1) \\ \vdots \\  \tilde{\mathbf{q}}_N(\eta_1) \end{pmatrix}.
\end{aligned} 
\end{equation}
Note $\tilde{\mathbf{q}}_k(\eta_1) := \int_{-r}^{r} \tilde{\mathbf{q}}(\boldsymbol{\eta}) s_k(\eta_2) d\eta_2$, and $\mathbf{M}_{0,1} (\eta_1)\in \mathbb{R}^{2N \times 2N}$ are the matrices that arise from integrating out the $\eta_2$-dependence in the $N$ integral equations above and organizing the calculation around their linearity in $(\boldsymbol{\Phi}^{(N)}_1(\eta_1), \ldots,\boldsymbol{\Phi}^{(N)}_N(\eta_1))$. The boundary conditions are taken care of by the basis functions $\{s_k(\eta_2)\}$. For the initial conditions, we impose 
\begin{equation}\label{eq:theODESystem-initconds}
\begin{aligned}
\boldsymbol{\Phi}^{(N)}_k(-r) = \mathbf{0},  \quad \frac{d}{d\eta_1} \boldsymbol{\Phi}^{(N)}_k(-r) = \mathbf{0}, \quad \text{for all } k =1,\ldots,N.
\end{aligned}
\end{equation}
A unique solution to the initial value problem in Eqs.\;(\ref{eq:theODESystemProof}-\ref{eq:theODESystem-initconds}) exists. Thus, we produce an $N$-dependent sequence of approximate solutions to Eq.\;(\ref{eq:solveForHyperCase}). The desired solution $\boldsymbol{\Phi}(\boldsymbol{\eta})$ is obtained by taking $N \rightarrow \infty$.

Justifying this last step is standard fare in the theory of PDEs, but for completeness we give an overall picture of the argument here. The goal is to apply a compactness theorem to extract a convergent sub-sequence of approximate solutions; the key is to check that the maps $\{ \boldsymbol{\Phi}^{(N)}(\boldsymbol{\eta}) \}$ remain bounded in the sense that their “energies”
\begin{equation}
\begin{aligned}
E^{(N)}(\eta_1) := \frac{1}{2} \int_{-r}^{r} \big\{ | \partial_1 \boldsymbol{\Phi}^{(N)}(\boldsymbol{\eta})|^2 + |\sigma(\boldsymbol{\eta})|  | \partial_2 \boldsymbol{\Phi}^{(N)}(\boldsymbol{\eta})|^2 \big\} d\eta_2 
\end{aligned}
\end{equation}
do not go to infinity as $N \rightarrow \infty$. This boundedness is possible due to the estimate 
\begin{equation}\label{eq:bound-to-prove-N}
\begin{aligned}
\max_{-r \leq \eta_1 \leq r} E^{(N)}( \eta_1) \leq C \sum_{k=1}^N \int_{(-r,r)^2}  |\tilde{\mathbf{q}}_k(\boldsymbol{\eta})|^2 dA,
\end{aligned}
\end{equation}
which follows by the same reasoning leading to the analogous estimate for the original problem in Eq.\;(\ref{eq:solveForHyperCase}). To prove the latter, let  $\boldsymbol{\Phi}(\boldsymbol{\eta})$ be any solution of Eq.\;(\ref{eq:solveForHyperCase}) and introduce its energy 
\begin{equation}
\begin{aligned}
E(\eta_1) := \frac{1}{2} \int_{-r}^{r} \big\{ | \partial_1 \boldsymbol{\Phi}(\boldsymbol{\eta})|^2 + |\sigma(\boldsymbol{\eta})|  | \partial_2 \boldsymbol{\Phi}(\boldsymbol{\eta})|^2 \big\} d\eta_1.
\end{aligned}
\end{equation}
This energy remains bounded, per the following observations:
\begin{equation}
\begin{aligned}\label{eq:chainInequalities}
\Big|\frac{d E(\eta_1)}{d\eta_1} \Big|  &= \Big| \int_{-r}^r \Big\{  \partial_1 \boldsymbol{\Phi}(\boldsymbol{\eta}) \cdot \partial_1^2 \boldsymbol{\Phi}(\boldsymbol{\eta})  +  |\sigma(\boldsymbol{\eta})|   \partial_2 \boldsymbol{\Phi}(\boldsymbol{\eta}) \cdot \partial_1 \partial_2  \boldsymbol{\Phi}(\boldsymbol{\eta}) + \partial_1( |\sigma(\boldsymbol{\eta})|) |\partial_2 \boldsymbol{\Phi}(\boldsymbol{\eta})|^2  \Big\}  d\eta_2   \Big| \\
& = \Big| \int_{-r}^r \Big\{  -\partial_1 \boldsymbol{\Phi}(\boldsymbol{\eta}) \cdot  \square(\boldsymbol{\eta}) \boldsymbol{\Phi}(\boldsymbol{\eta})  + \partial_1( |\sigma(\boldsymbol{\eta})|) |\partial_2 \boldsymbol{\Phi}(\boldsymbol{\eta})|^2  \Big\}  d\eta_2   \Big| \\
&\leq \int_{-r}^r \Big\{  |\partial_1 \boldsymbol{\Phi}(\boldsymbol{\eta})|^2 + \frac{1}{2} |\tilde{\mathbf{q}}(\boldsymbol{\eta})|^2 + \frac{1}{2} | \mathcal{M}_h(\boldsymbol{\eta}) \colon \nabla \boldsymbol{\Phi}(\boldsymbol{\eta})|^2  + |\partial_1 \sigma(\boldsymbol{\eta})||\partial_2 \boldsymbol{\Phi}(\boldsymbol{\eta})|^2 \Big\} d\eta_2.
\end{aligned}
\end{equation}
In the second step, we integrated by parts with the boundary conditions in Eq.\;(\ref{eq:solveForHyperCase}) and the definition of the  operator $\square(\boldsymbol{\eta})$ (hyperbolicity is used here). In the third step, we applied the arithmetic–geometric inequality twice in the form $2 \tilde{\mathbf{a}} \cdot \tilde{\mathbf{b}} \leq  |\tilde{\mathbf{a}}|^2 + |\tilde{\mathbf{b}}|^2$. Since $\mathcal{M}_h(\boldsymbol{\eta})$ and $\sigma(\boldsymbol{\eta})$ are smooth on $[-r,r]^2$, the terms in the inequality are bounded as
\begin{equation}
\begin{aligned}
| \mathcal{M}_h(\boldsymbol{\eta}) \colon \nabla \boldsymbol{\Phi}(\boldsymbol{\eta})|^2 &\leq |\mathcal{M}_h(\boldsymbol{\eta})|^2 |\nabla  \boldsymbol{\Phi}(\boldsymbol{\eta})|^2 \\
 & \leq \frac{C_{\theta,\boldsymbol{\omega}_{\mathbf{u}_0},\boldsymbol{\omega}_{\mathbf{v}_0}}}{\min\{ \sigma_\theta ,1\}}   \Big( |\partial_1 \boldsymbol{\Phi}(\boldsymbol{\eta})|^2 + |\sigma(\boldsymbol{\eta})| |\partial_2 \boldsymbol{\Phi}(\boldsymbol{\eta})|^2\Big),  \\
 |\partial_1 \sigma(\boldsymbol{\eta})|&
 \leq \frac{C_{\theta}}{\sigma_{\theta}} |\sigma(\boldsymbol{\eta})| 
\end{aligned}
\end{equation}
for constants $C_{\theta,\boldsymbol{\omega}_{\mathbf{u}_0},\boldsymbol{\omega}_{\mathbf{v}_0}}, C_{\theta} >0$ depending only on $\theta(\mathbf{x})$, 
$\boldsymbol{\omega}_{\mathbf{u}_0}(\mathbf{x})$ and $\boldsymbol{\omega}_{\mathbf{v}_0}(\mathbf{x})$, and for $\sigma_{\theta}$ from Eq.\;(\ref{eq:sigmaThetaProp}). Applying these estimates to Eq.\;(\ref{eq:chainInequalities}) yields 
\begin{equation}
\begin{aligned}
\frac{d E(\eta_1)}{d\eta_1} \leq   \frac{C^{\text{Tot}}_{\theta,\boldsymbol{\omega}_{\mathbf{u}_0},\boldsymbol{\omega}_{\mathbf{v}_0}}}{\min\{ \sigma_\theta ,1\}} E(\eta_1) + \frac{1}{2}\int_{-r}^{r} |\tilde{\mathbf{q}}(\boldsymbol{\eta})|^2  d \eta_2
\end{aligned}
\end{equation}
for a constant $C^{\text{Tot}}_{\theta,\boldsymbol{\omega}_{\mathbf{u}_0},\boldsymbol{\omega}_{\mathbf{v}_0}} > 0$. Since $E(0) = 0$ by the initial data in Eq.\;(\ref{eq:solveForHyperCase}), Gr\"onwall’s inequality gives the desired bound:
\begin{equation}
\begin{aligned}\label{eq:theKeyHypEst}
\max_{-r \leq  \eta_1 \leq r} E(\eta_1) \leq C\int_{(-r,r)^2} |\tilde{\mathbf{q}}(\boldsymbol{\eta})|^2  dA.
\end{aligned}
\end{equation}

The above inequality bounds the energy of a solution $\boldsymbol{\Phi}(\boldsymbol{\eta})$ to Eq.\;(\ref{eq:solveForHyperCase}). The analogous bound in Eq.\;(\ref{eq:bound-to-prove-N}) for the approximate solutions $\{\boldsymbol{\Phi}^{(N)}(\boldsymbol{\eta})\}$ follows from a similar argument using the ODE system in Eq.\;(\ref{eq:theODESystemProof}) in place of the PDE.  
Having eliminated the possibility that the sequence  $\{\boldsymbol{\Phi}^{(N)}(\boldsymbol{\eta})\}$ blows up, one finally defines $\boldsymbol{\Phi}(\boldsymbol{\eta}) := \lim_{N \rightarrow \infty} \boldsymbol{\Phi}^{(N)}(\boldsymbol{\eta})$ and checks that it solves the desired PDE. See \cite{evans2022partial} for details. This completes our construction of solutions to the PDE in Eq.\;(\ref{eq:finalManipPDE}) in the hyperbolic case, and finishes the proof of Proposition \ref{AuxPDEProp}.

\section{One-dimensional solutions of the surface theory}\label{sec:1DSolve}

This appendix shows how to solve the surface theory in Eq.\;(\ref{eq:GenPDESurf}) under the assumption that the actuation, bend and twist fields vary along a single direction. 
Consider a one-dimensional ansatz  for $\theta(\mathbf{x})$, $\kappa(\mathbf{x})$ and $\tau(\mathbf{x})$ of the form 
\begin{equation}\label{eq:1d-ansatz-def}
\begin{aligned}
\theta(\mathbf{x}) \equiv \theta(\mathbf{x} \cdot \tilde{\mathbf{s}}), \quad \kappa (\mathbf{x})  \equiv \kappa ( \mathbf{x} \cdot \tilde{\mathbf{s}}), \quad  \tau(\mathbf{x})  \equiv  \tau(\mathbf{x} \cdot \tilde{\mathbf{s}} ) 
\end{aligned}
\end{equation}
for a non-zero $\tilde{\mathbf{s}} \in \mathbb{R}^3$. 
Substituting the ansatz into Eq.\;(\ref{eq:GenPDESurf}) gives the system of first order ODEs
\begin{equation}
\begin{aligned}
\begin{cases}\label{eq:theODEForExamples}
\theta(s) \in (\theta^{-}, \theta^+)  \\ 
\frac{d}{ds} \Bigg\{ \theta'(s) \begin{pmatrix} \tilde{\mathbf{v}}_0 \cdot \tilde{\mathbf{s}} \\ -\tilde{\mathbf{u}}_0 \cdot \tilde{\mathbf{s}}  \end{pmatrix} \cdot  \boldsymbol{\Gamma}(\theta(s)) \begin{pmatrix} \tilde{\mathbf{u}}_0 \cdot \tilde{\mathbf{s}} \\ \tilde{\mathbf{v}}_0 \cdot \tilde{\mathbf{s}}  \end{pmatrix} \Bigg\}   =  \begin{pmatrix} \kappa(s)  \\ \tau(s) \end{pmatrix} \cdot \boldsymbol{\Lambda}(\theta(s))   \begin{pmatrix} \kappa(s)  \\ \tau(s) \end{pmatrix}  \\
\Big[(\tilde{\mathbf{s}} \cdot \tilde{\mathbf{v}}_0) \mathbf{M}_{\mathbf{v}_0}(\theta(s))  - (\tilde{\mathbf{s}} \cdot \tilde{\mathbf{u}}_0 ) \mathbf{M}_{\mathbf{u}_0}(\theta(s))    \Big]\begin{pmatrix} \kappa'(s)  \\ \tau'(s) \end{pmatrix}  =   \theta'(s) \mathbf{M}\Big(\theta(s),  \begin{pmatrix} \tilde{\mathbf{u}}_0 \cdot \tilde{\mathbf{s}} \\  \tilde{\mathbf{v}}_0 \cdot \tilde{\mathbf{s}} \end{pmatrix}  \Big)  \begin{pmatrix} \kappa(s)  \\ \tau(s) \end{pmatrix}  
\end{cases}
\end{aligned}
\end{equation}
where $s := \mathbf{x} \cdot \tilde{\mathbf{s}}$. This ODE can be solved using the initial conditions $(\theta'(0), \theta(0), \kappa(0), \tau(0)) = (\bar{\zeta}, \bar{\theta}, \bar{\kappa}, \bar{\tau})$ under the assumption that $\bar{\theta}$ satisfies 
\begin{equation}
\begin{aligned}
\bar{\theta} \in (\theta^{-}, \theta^+), \quad \det \Big[ (\tilde{\mathbf{s}} \cdot \tilde{\mathbf{v}}_0) \mathbf{M}_{\mathbf{v}_0}(\bar{\theta})  - (\tilde{\mathbf{s}} \cdot \tilde{\mathbf{u}}_0 ) \mathbf{M}_{\mathbf{u}_0}(\bar{\theta})  \Big] \neq 0, \quad \begin{pmatrix} \tilde{\mathbf{v}}_0 \cdot \tilde{\mathbf{s}} \\ -\tilde{\mathbf{u}}_0 \cdot \tilde{\mathbf{s}}  \end{pmatrix} \cdot   \boldsymbol{\Gamma}(\bar{\theta}) \begin{pmatrix} \tilde{\mathbf{u}}_0 \cdot \tilde{\mathbf{s}} \\ \tilde{\mathbf{v}}_0 \cdot \tilde{\mathbf{s}}  \end{pmatrix} \neq 0.
\end{aligned}
\end{equation} 
When these conditions hold,  Eq.\;(\ref{eq:theODEForExamples}) can be written in the standard  first order form $\dot{\boldsymbol{\Phi}}(s) = \mathbf{f}(\boldsymbol{\Phi}(s))$ through a change of variables. A  solution to the initial value problem therefore exists and is unique  on some interval $s \in (0, \bar{s})$,  $\bar{s} >0$, whose extent may depend on  $(\bar{\zeta}, \bar{\theta}, \bar{\kappa}, \bar{\tau})$.  Such solutions can be constructed numerically using a standard ODE solver. 

Fig.\;\ref{Fig:MiuraEggboxExample} from the main text shows examples of one-dimensional solutions for the Miura and Eggbox patterns. In these cases, the lattice vectors $\mathbf{u}(\theta)$ and $\mathbf{v}(\theta)$ are orthogonal and can be written as
\begin{equation}
\begin{aligned}\label{eq:orthogonalForSolving}
\mathbf{u}(\theta) = \lambda_u(\theta) \mathbf{e}_1, \quad \mathbf{v}(\theta) = \lambda_v(\theta) \mathbf{e}_2
\end{aligned}
\end{equation}
for scalar functions $\lambda_u(\theta), \lambda_v(\theta) >0$. These scalars are strictly monotonic in $\theta$, per Lemma \ref{SignsLemma}. Noting that $\mathbf{u}_0 = \mathbf{u}(0) = \lambda_u(0) \mathbf{e}_1$ and $\mathbf{v}_0 = \mathbf{v}(0) = \lambda_v(0) \mathbf{e}_2$,  we apply the one-dimensional ansatz in Eq.\;(\ref{eq:1d-ansatz-def}) with
\begin{equation}
\begin{aligned}
\tilde{\mathbf{s}} = \tilde{\mathbf{u}}^r_0 =(\lambda_u(0))^{-1}\tilde{\mathbf{e}}_1 \quad \text{ or } \quad \tilde{\mathbf{s}} = \tilde{\mathbf{v}}^r_0 = (\lambda_v(0))^{-1} \tilde{\mathbf{e}}_2.
\end{aligned}
\end{equation}
This produces the following ODEs for $(\theta(s), \kappa(s), \tau(s))$, where again  $s := \mathbf{x} \cdot \tilde{\mathbf{s}}$. Focusing on the  $\tilde{\mathbf{s}} = \tilde{\mathbf{v}}^r_0$ case --- the $\tilde{\mathbf{s}} = \tilde{\mathbf{u}}_0^r$ case is similar and omitted --- we substitute Eq.\;(\ref{eq:orthogonalForSolving}) into Eq.\;(\ref{eq:theODEForExamples}) to obtain 
\begin{equation}
\begin{aligned}
&\frac{d}{ds} \Big[ \frac{\lambda_u'(\theta(s))}{\lambda_v(\theta(s))}  \theta'(s) \Big] = \lambda_u(\theta(s)) \lambda_v(\theta(s)) \Big[ \tau^2(s) + \kappa^2(s) \lambda_u(\theta(s)) \lambda_v(\theta(s))  \lambda'_u(\theta(s)) \lambda'_v(\theta(s))\Big],\\
&\kappa'(s) = \theta'(s)\Big[-2   \frac{ \lambda_v'(\theta(s))}{\lambda_v(\theta(s))  } - \frac{\tfrac{d}{d\theta}\{(\lambda_u(\theta(s)) \lambda'_u(\theta(s))\}}{[\lambda_u(\theta(s))  \lambda_u'(\theta(s)) ] } \Big] \kappa(s),  \\
& \tau'(s)   = -2 \theta'(s)  \frac{\lambda_u'(\theta(s))}{\lambda_u(\theta(s))} \tau(s).
\end{aligned}
\end{equation}
The general solutions to the last two equations are  
\begin{equation}
\begin{aligned}\label{eq:tauKappaAnalytical}
\kappa(s) = \frac{c_{\kappa}}{\lambda_u(\theta(s)) \lambda^{2}_{v}(\theta(s))\lambda_{u}'(\theta(s))},\quad \tau(s) = \frac{c_{\tau}}{\lambda^2_u(\theta(s))}
\end{aligned}
\end{equation}
for arbitrary constants $c_{\kappa}, c_{\tau}$. The first equation becomes 
\begin{equation}
\begin{aligned}\label{eq:finalODE}
\frac{d}{ds}\Big[ \frac{\lambda_u'(\theta(s))}{\lambda_v(\theta(s))}  \theta'(s) \Big] = c_{\tau}^2 \frac{\lambda_v(\theta(s))}{\lambda_u^3(\theta(s))} + c_{\kappa}^2  \frac{ \lambda_v'(\theta(s))}{\lambda_v^2(\theta(s)) \lambda_u'(\theta(s))} . 
\end{aligned}
\end{equation} 
This ODE can be solved given choices of $c_{\kappa}, c_{\tau}$ and initial conditions $(\theta(0), \theta'(0)) = (\bar{\theta}, \bar{\zeta})$. 

We compute the soft modes in Fig.\;\ref{Fig:MiuraEggboxExample} using this ansatz as follows. First, we parameterize $\lambda_{u}(\theta)$ and $\lambda_v(\theta)$ according the mechanism of a given design.  Then, we substitute this parameterization  into the ODE in Eq.\;(\ref{eq:finalODE}) and solve it numerically for suitable choices of $c_{\kappa}, c_{\tau}$ and $(\theta(0), \theta'(0)) = (\bar{\theta}, \bar{\zeta})$. Next, we substitute the bend $\kappa(\mathbf{x}) \equiv \kappa(\mathbf{x} \cdot \tilde{\mathbf{v}}_0^r)$, twist $\tau(\mathbf{x}) \equiv \tau(\mathbf{x} \cdot \tilde{\mathbf{v}}_0^r$) and actuation  $\theta(\mathbf{x}) \equiv \theta(\mathbf{x} \cdot \tilde{\mathbf{v}}_0^r)$ fields back into Eq.\;(\ref{eq:omegaParamsSolveAlg}) to produce compatible $\boldsymbol{\omega}_{\mathbf{u}_0}(\mathbf{x})$ and $\boldsymbol{\omega}_{\mathbf{v}_0}(\mathbf{x})$. From these two vector fields,  we build the rotation field $\mathbf{R}_{\text{eff}}(\mathbf{x})$ numerically by  first solving the ODE 
 \begin{equation}
 \begin{aligned}\label{eq:firstODERotSolve}
 \frac{d}{d\eta_1} \mathbf{R}_{\text{eff}}(\eta_1 \tilde{\mathbf{u}}_0)  =  \mathbf{R}_{\text{eff}}(\eta_1 \tilde{\mathbf{u}}_0)\big[ \boldsymbol{\omega}_{\mathbf{u}_0}( \eta_1 \tilde{\mathbf{u}}_0) \times \big]  \quad \text{subject to }  \mathbf{R}_{\text{eff}}(\mathbf{0}) = \mathbf{I},
 \end{aligned}
 \end{equation}
and then by solving the complementary set of ODEs
 \begin{equation}
 \begin{aligned}\label{eq:secondODERotSolve}
& \frac{d}{d\eta_2} \mathbf{R}_{\text{eff}}(\bar{\eta}_1 \tilde{\mathbf{u}}_0 + \eta_2 \tilde{\mathbf{v}}_0)  =  \mathbf{R}_{\text{eff}}( \bar{\eta}_1 \tilde{\mathbf{u}}_0 + \eta_2 \tilde{\mathbf{v}}_0)\big[ \boldsymbol{\omega}_{\mathbf{v}_0}( \bar{\eta}_1 \tilde{\mathbf{u}}_0 + \eta_2 \tilde{\mathbf{v}}_0) \times \big]  
 \end{aligned}
 \end{equation}
using the initial condition  $\mathbf{R}_{\text{eff}}(\bar{\eta}_1 \tilde{\mathbf{u}}_0) $ supplied by the solution to Eq.\;(\ref{eq:firstODERotSolve}). This two-step approach produces the desired rotation fields because $\boldsymbol{\omega}_{\mathbf{u}_0}(\mathbf{x}), \boldsymbol{\omega}_{\mathbf{v}_0}(\mathbf{x})$ and $\theta(\mathbf{x})$ are compatible per Eq.\;(\ref{eq:SurfaceTheory}).
We construct the effective deformation $\mathbf{y}_{\text{eff}}(\mathbf{x})$ numerically in the same fashion, using the solved for $\theta(\mathbf{x})$ and $\mathbf{R}_{\text{eff}}(\mathbf{x})$. Finally, we substitute these effective fields into the construction in Section \ref{ssec:GlobalOrigami} to obtain the origami deformations. 


\bibliographystyle{unsrt}

\end{document}